\documentclass[11pt]{article}
\usepackage[margin=0.8in]{geometry}
\usepackage{booktabs}
\usepackage[ruled,linesnumbered]{algorithm2e}

\SetAlFnt{\small}
\SetAlCapFnt{\small}
\SetAlCapNameFnt{\small}
\SetAlCapHSkip{0pt}
\IncMargin{-\parindent}

\usepackage{bbm}
\usepackage{framed}
\usepackage{url}
\usepackage{complexity}
\usepackage[T1]{fontenc}
\usepackage{amsmath}
\usepackage{amssymb}
\usepackage{amsfonts}
\usepackage{amsthm}
\usepackage{bm}
\usepackage{thmtools}
\usepackage{thm-restate}
\usepackage{nicefrac}
\usepackage{calc}
\usepackage{enumerate}
\usepackage{enumitem}
\usepackage{xcolor}
\usepackage{natbib}
\renewcommand{\cite}{\citep}
\usepackage{setspace}
\allowdisplaybreaks

\usepackage{graphicx}
\usepackage[font=footnotesize,labelfont=bf]{subcaption}
\usepackage[font=footnotesize,labelfont=bf]{caption}
\usepackage[nobreak=true,skipabove=7pt]{mdframed}

\usepackage{xr}
\usepackage{array}
\usepackage{xspace}

\usepackage{chngcntr}
\usepackage{mathtools}
\usepackage{pifont}

\usepackage{tikz}
\usetikzlibrary{patterns}
\usetikzlibrary{positioning}

\DeclareMathOperator*{\expectation}{\mathbb{E}}
\renewcommand{\E}[2]{\mathbb{E}_{#1}\left[#2\right]}

\let\poly\relax
\DeclareMathOperator*{\poly}{poly}
\DeclareMathOperator*{\cov}{cov}
\DeclareMathOperator*{\probability}{\Pr}

\DeclareMathOperator*{\dtv}{d_{TV}}
\DeclareMathOperator*{\dkl}{D_{KL}}

\newcommand\eps{\epsilon}

\newcommand\abs[1]{\lvert #1 \rvert}

\newcommand{\prob}{\probability\probarg}
\DeclarePairedDelimiterX{\probarg}[1]{(}{)}{%
	\ifnum\currentgrouptype=16 \else\begingroup\fi
	\activatebar#1
	\ifnum\currentgrouptype=16 \else\endgroup\fi
}
\newcommand{\probover}[2]{\probability_{#1}\left( #2 \right)}
\newcommand{\expect}{\expectation\expectarg}
\DeclarePairedDelimiterX{\expectarg}[1]{[}{]}{%
	\ifnum\currentgrouptype=16 \else\begingroup\fi
	\activatebar#1
	\ifnum\currentgrouptype=16 \else\endgroup\fi
}

\DeclarePairedDelimiterX{\wklarg}[1]{(}{)}{%
	\ifnum\currentgrouptype=16 \else\begingroup\fi
	\activatebars#1
	\ifnum\currentgrouptype=16 \else\endgroup\fi
}
\newcommand{\innermid}{\nonscript\;\delimsize\vert\nonscript\;}
\newcommand{\activatebar}{%
	\begingroup\lccode`\~=`\|
	\lowercase{\endgroup\let~}\innermid
	\mathcode`|=\string"8000
}
\newcommand{\innermids}{\nonscript\;\delimsize\vert\delimsize\vert\nonscript\;}
\newcommand{\activatebars}{%
	\begingroup\lccode`\~=`\|
	\lowercase{\endgroup\let~}\innermids
	\mathcode`|=\string"8000
}

\newcommand{\defeq}{\coloneq}

\counterwithin{equation}{section}

\usepackage{placeins}
\SetKwInOut{Input}{Input}
\SetKwInOut{Output}{Output}
\DontPrintSemicolon

\usepackage[hidelinks]{hyperref}
\usepackage[nameinlink,capitalise]{cleveref} % load after algorithm2e
\crefrangelabelformat{enumi}{#3#1#4--#5#2#6}
\crefname{claim}{Claim}{Claims}
\crefname{appendix}{Appendix}{Appendices}
\Crefname{appendix}{Appendix}{Appendices}

\newcommand{\lineref}[1]{\hyperref[#1]{line~\ref*{#1}}}
\newcommand{\Lineref}[1]{\hyperref[#1]{Line~\ref*{#1}}}

\newcommand{\cA}{\mathcal{A}}
\newcommand{\cC}{\mathcal{C}}
\newcommand{\cD}{\mathcal{D}}
\newcommand{\cE}{\mathcal{E}}
\newcommand{\cF}{\mathcal{F}}
\newcommand{\cH}{\mathcal{H}}
\newcommand{\cI}{\mathcal{I}}
\newcommand{\cJ}{\mathcal{J}}

\renewcommand\cP{\mathcal{P}}

\newcommand\cQ{\mathcal{Q}}

\newcommand\cU{\mathcal{U}}
\newcommand\bu{\bm{u}}

\newcommand{\cW}{\W}
\newcommand{\cX}{\mathcal{X}}

\usepackage{dsfont}
\DeclareMathOperator{\1}{\mathds{1}}
\DeclareMathOperator{\Bern}{\mathrm{Bern}}
\DeclareMathOperator{\Bin}{\mathrm{Bin}}
\DeclareMathOperator{\Multin}{\mathrm{Multinom}}

\theoremstyle{plain}
\newtheorem{theorem}{Theorem}[section]
\newtheorem{definition}[theorem]{Definition}
\newtheorem{proposition}[theorem]{Proposition}

\newtheorem{lemma}[theorem]{Lemma}

\newtheorem{observation}[theorem]{Observation}

\newtheorem{example}[theorem]{Example}

\newtheorem{corollary}[theorem]{Corollary}

\DeclareMathOperator{\cvg}{cvg}

\DeclareMathOperator{\err}{err}
\newcommand\nf{\nicefrac}
\renewcommand\epsilon{\varepsilon}

\newcommand{\algstyle}[1]{\textsc{#1}\xspace}

\newcommand{\prof}{\cP}
\newcommand{\comms}{\cW}
\newcommand{\cands}{\cC}
\renewcommand{\epsilon}{\varepsilon}
\newcommand{\ax}{\mathrm{Ax}}

\newcommand{\psma}{p_\downarrow}
\newcommand{\pbig}{p_\uparrow}

\newcommand{\jr}{\text{JR}}
\newcommand{\ejr}{\text{EJR}}
\newcommand{\ejrp}{\text{EJR+}\xspace}
\newcommand{\psc}{\text{PSC}}

\newcommand{\jre}{\ensuremath{\text{JR}_{\eps}}\xspace}
\newcommand{\bjr}{\text{BJR}}
\newcommand{\fpjr}{\text{FPJR}}
\newcommand{\ejrpe}{\ensuremath{\text{EJR}_{\eps}+}\xspace}
\newcommand{\psce}{\ensuremath{\text{PSC}_{\eps}}\xspace}

\newcommand{\av}{\algstyle{AV}}
\newcommand{\greedy}{\algstyle{Greedy-AV}}
\newcommand{\samplegreedy}{\algstyle{Sample-Greedy-AV}}
\newcommand{\gjcr}{\algstyle{GJCR}}
\newcommand{\samplegjcr}{\algstyle{Sample-GJCR}}
\newcommand{\pav}{\algstyle{PAV}}
\newcommand{\seqpav}{\algstyle{SeqPAV}}
\newcommand{\samplepav}{\algstyle{Sample-PAV}}
\newcommand{\lspav}{\algstyle{ls-PAV}}
\newcommand{\samplelspav}{\algstyle{Sample-ls-PAV}}
\newcommand{\samplecspav}{\algstyle{Sample-Convergent-Search-PAV}}
\newcommand{\samplewslspav}{\algstyle{Sample-PAV-then-Pivot}}
\newcommand{\samplemaxkcover}{\algstyle{Sample-Max k-Cover}}
\newcommand{\chambcou}{\algstyle{CC}}
\newcommand{\mes}{\algstyle{MES}}

\newcommand{\scr}{\mathrm{sc}}
\DeclareMathOperator{\pavsc}{\scr_{\pav}}

\newcommand\mkc{\textsc{Max k-Coverage}\xspace}
\newcommand\MkC{\mkc}

\usepackage{soul}

\title{Proportionality from Sampled Approvals}
\author{
    Gregory Kehne\thanks{Washington University in St. Louis. {\tt kehne@wustl.edu}}
}
\date{}

\begin{document}

\maketitle
\begin{abstract}
    How much voter input is necessary in order to ensure representation in multiwinner elections?
    If voters are randomly selected from an underlying population, how many draws are necessary to find a proportional committee of $k$ candidates, with high probability?

    Sample-based adaptations of standard multiwinner voting rules that satisfy the justified representation (JR) proportionality axiom use $\tilde O(k^5 \log \nf{m}{\delta})$ sampled approval ballots over $m$ candidates, where $\delta$ is a probability of failure and $\tilde O$ suppresses $\polylog(k)$ factors.
    We present a rule for which the sample complexity of JR-family proportional committee selection is $\tilde O(k^{4}\log \nf{m}{\delta})$.
    This separates the sample complexity of JR from that of the natural corresponding additive approximation to the voter coverage (Chamberlin-Courant) objective, which we show requires $\Theta(k^5\log \nf{m}{\delta})$ samples.

    For lower bounds, we present a family of instances with $m, \nf{1}{\delta} \in \poly(k)$ for which
    $\Omega(k^3)$
    sampled ballots are necessary in order to identify a JR committee.
    We also show a dependence on $\log m$ is necessary.
    This lower bound is versatile, and also applies to Hare proportionality for solid coalitions (PSC) for ranked ballots.
    Unfortunately, no number of sampled ballots suffices to satisfy the slightly stronger Droop JR and Droop PSC axioms with high probability.
    But mild relaxations of JR require fewer samples, as do the beyond-worst-case domains and actual approval preferences we evaluate.
\end{abstract}

\section{Introduction}\label{sec:intro}

We set out to determine how much voter input is required in order to identify representative outcomes in multiwinner elections.
If voters are randomly selected from an electorate, how many draws are necessary to find a committee that is proportional \emph{with respect to the underlying electorate}, with high probability, and how does this requirement increase with the winning committee size?

This question is particularly well motivated in settings such as participatory budgeting and polling, where participation is costly and alternative spaces are complex.
The aims of promoting high and representative rates of participation and providing the information and context to voters sufficient to elicit meaningful preference data are often in tension with one another, and it can be difficult to know how to resolve these tradeoffs in practice.
A better understanding of how many randomly selected voters or respondents are necessary, as indexed by the aims of the social choice process, is useful for efficiently obtaining more meaningful and informed preference inputs.
This is part of the recipe for the success of citizens' assemblies in producing population-representative recommendations on complex policy issues: randomly selecting a limited number of participants enables them to be brought together to both deliberate and learn about the issues they are asked to provide recommendations on---and to be paid for the effort.

Even within high-stakes and conventional applications of multiwinner approval voting to committee elections and to participatory budgeting, sample complexity is a natural criterion by which to both differentiate between rules and guide the design of new rules.
Participatory budgeting rules that are more sample-efficient in producing proportional outcomes could be expected to be more robust to changes in turnout or to misspecification of voter preferences, as measured by the axiomatic guarantees that motivate their design and deployment \cite{peters25mathematical}.

Establishing the sample complexity of existing rules is also a theoretical step towards understanding when voter turnout in a multiwinner election becomes low enough that the proportionality guarantees of the outcome can become compromised, as a function of the committee size $k$.
In particular, if each voter has an independent probability of participation, then an election designer might hope to identify a proportional outcome with respect to the underlying population, as weighted by their ex ante turnout probabilities.
In this case, the sample complexity of the voting rule used determines necessary and sufficient bounds on the turnout required for satisfying a given proportionality axiom, as a function of the axiom and the size of the winning committee.

This shift in focus from producing the outcome of a specific rule to ensuring the outcome directly satisfies axiomatic properties is also aligned with and motivated by the emerging application domains of proportional representation.
Prominent recent applications include clustering in metric spaces \cite{chen19proportionally} \cite{caragiannis24proportional},
fairly choosing collections of matchings \cite{boehmer25proportional},
selecting central nodes in a network \cite{papasotiropoulos25proportional},
and models of online deliberation that entail choosing representative subsets from vast spaces of alternatives that contain many near-clones \cite{halpern23representation, fish24generative}.
In each of these domains, proportionality remains an important goal; but the choices a particular rule makes between (near-)clones may be inconsequential, and it may be infeasible to target the outcome of a particular rule if it requires knowing every voter's approval of every candidate in order to choose a winning committee.

Whether conducting public opinion polling, organizing a citizens' assembly, shortlisting a large slate of alternatives, or modeling stochastic voter turnout, attaining outcomes that are proportionally representative \emph{of the underlying voter population} is a natural goal.
To do so, we must understand the sample complexity of our axioms: how many sampled voter preferences are necessary and sufficient for their satisfaction?

%%%
\subsection{Our Contributions}

We initiate the study of the sample complexity of satisfying proportionality axioms in multiwinner elections, where samples are voters (or their ballots) drawn from an unknown profile.
We focus on approval-based multiwinner elections and the justified representation (\jr{}) family of proportionality axioms, and identify implications for ranked-ballot elections and proportionality axioms.

Broadly, we make three main contributions.
First, in \Cref{sec:greedyrules} we establish a novel upper bound on the sample complexity of \ejrp{}, a strengthening of extended justified representation (\ejr{}) (defined below).
%EJR+ upper bound
\begin{theorem}[Informal statement of \Cref{thm:cs-pav-correctness}]
	\label{thm:sample-complexity-ejrp-ub-informal}
	The sample complexity of \ejrp{} is $\tilde O(k^4 \log \nf{m}{\delta})$, where $\tilde O$ hides $\poly\log k$ factors.
\end{theorem}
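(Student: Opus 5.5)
The approach is to analyze a sample-based local-search variant of \pav{} (the \samplecspav{} / \samplewslspav{} rule), and to show that any committee it outputs satisfies \ejrp{} with respect to the underlying profile with probability at least $1-\delta$.

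Recall \ejrp{}: for every candidate $c \notin W$ and every $\ell$, there is no group $N'$ of voters who all approve $c$, has $|N'| \ge \ell n/k$, and has every member approving fewer than $\ell$ winners. The deterministic fact I would use is the standard \lspav{} argument. Suppose $W$ is locally optimal for the \pav{} score up to an additive slack of order $n/k^2$, in the sense that no swap of $c$ in and $w$ out raises the score by more than that slack. Then $W$ satisfies \ejrp{}. The reason is that a violating pair $(c,N')$ gives a swap gain that, averaged over $w\in W$, is at least $\frac{n}{k}\cdot\Theta(1/k)$. This follows from the usual bound $\sum_{w\in W}$ (marginal loss of removing $w$) $\le n$, weighted against the $\ge \ell n/k$ voters who each gain $\ge 1/\ell$. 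So it suffices to reach, from samples alone, a committee that is $\Theta(n/k^2)$-approximately swap-stable in population terms.

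The sampling step works as follows. Normalize by $n$, so each voter's \pav{} marginal for a swap lies in $[-1,1]$ but is nonzero only for voters approving $c$ or $w$. The population swap gain $\Delta(W,c,w)$ is then the mean of a bounded random variable. A naive approach would use Hoeffding to accuracy $\epsilon = 1/k^2$, giving $k^4\log(\#\text{events}/\delta)$ samples. The union bound must then range over the committees visited by local search. There are $\tilde O(k^2)$ improving swaps, since the \pav{} score is at most $H_k$ and each improving swap gains $\Omega(1/k^2)$, which gives $\tilde O(k)\cdot k$ steps. Each step involves $mk$ swap candidates, so the log factor is $\log(mk/\delta)+\log k$. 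To avoid adaptivity problems, I would use fresh samples per phase, or else a uniform bound over the adaptively chosen committees, paid for with extra polylog factors. The natural accounting gives $\tilde O(k^2)$ phases times $k^4$ samples, which overshoots. The key is therefore to reuse one sample set across all steps, controlling the score only on the specific swaps queried. That requires a union bound over $\binom{m}{k}$ committees, costing $k\log m$, which would give $k^5$, matching the Chamberlin-Courant bound the paper says is tight. So something sharper is needed.

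The main obstacle is getting $k^4$ rather than $k^5$. My plan is variance-sensitive concentration. A violating group has population mass $\ge \ell/k$, and the relevant swap gain is a sum over voters approving $c$ of terms of size $\le 1/\ell$. Bernstein's inequality with variance $\lesssim (\ell/k)\cdot(1/\ell^2)$ then shows that $\tilde O(k^3\cdot k/\ell)\cdot\log$ samples distinguish gain $\gtrsim 1/k^2$ from $\le 0$. For the search itself, I would run a pivot/convergent search: first take a \samplepav{} warm start, which limits remaining improvements to $\tilde O(k)$. Then each phase draws fresh samples and tests only the $mk$ swaps against the current committee. This gives $\tilde O(k)$ phases of $\tilde O(k^3\log(mk/\delta))$ samples each, for a total of $\tilde O(k^4\log(m/\delta))$. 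At termination, Bernstein guarantees that no \ejrp{} violation survives. I expect bounding the number of phases by $\tilde O(k)$ to be the delicate part, and I would handle it by a geometric halving of the target slack, so that phase $j$ uses accuracy $2^{-j}$ and correspondingly fewer samples early on.
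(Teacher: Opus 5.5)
\paragraph{Where the proposal breaks.} The step everything rests on is the claim that each phase needs only $\tilde O(k^3\log(mk/\delta))$ samples, and the argument you give for it does not work.

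First, your variance estimate is for the wrong quantity. Voters in a violating group gain $1/(\abs{A\cap W}+1)\geq 1/\ell$, not $\leq 1/\ell$, and that gain can be $1$. The statistic you actually test also contains the losses of the removed $w$'s supporters and the gains of $c$'s supporters outside the group. Even taken at face value, your own bound $k^3\cdot k/\ell$ is $k^4$ at $\ell=1$.

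More importantly, the estimate only concerns \emph{detecting} the good swap. Your phase count (warm start within $\tilde O(\nf{1}{k})$ of optimal, then $\tilde O(k)$ swaps each gaining $\Omega(\eps)$) also needs the converse, uniformly over all $\approx mk$ tested swaps: no swap with true gain below $\eps/3$ may pass the acceptance test. A priori a swap $(c,w)$ can have mean near $0$ but per-voter variance $\Theta(1)$. Then $\tilde O(k^3)$ samples pin its gain down only to about $k^{-3/2}\gg\eps=k^{-2}$, so score-decreasing swaps could be executed and neither the phase count nor termination is controlled. With Hoeffding instead you pay $\eps^{-2}=k^4$ per phase and land back at $k^5$.

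Geometric halving of the slack does not address this. On its own it would not bound the number of phases either, since single-swap stability at slack $2^{-j}$ says nothing about the distance to the \pav{} optimum. So the delicate part is not the phase count but uniform control of every test.

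\paragraph{How to repair it, and how the paper differs.} What is missing is a variance bound for \emph{every} tested swap, and your setup can supply one. Suppose the current $W$ is within $\eps_W\leq\nf{1}{k}$ of the optimal \pav{} score; this holds after the warm start and is preserved by improving swaps. Then the averaging step in the proof of \Cref{prop:pav-score-to-ejrp-guarantee}, combined with near-optimality of $W$, shows that every $c\notin W$ has $G_c\defeq\pavsc(W\cup\{c\})-\pavsc(W)\leq\nf{1}{k+1}+\eps_W$.

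For a swap $(c,w)$, the per-voter change $X\in[-1,1]$ then satisfies $\mathbb{E}[X^2]\leq\mathbb{E}\abs{X}\leq 2G_c-\mathbb{E}[X]$. So every swap with gain at least $-O(\nf{1}{k})$ has variance $O(\nf{1}{k})$, and more negative swaps would need a proportionally larger deviation to pass. Bernstein then gives $\tilde O(\eps^{-2}k^{-1}\log(mk/\delta))$ samples per phase in both directions, and your accounting yields $\tilde O(k^4\log\nf{m}{\delta})$. Balancing $\eps_W\approx k^{2/3}\eps$ against the number of phases even seems to give $\tilde O(k^{11/3})$.

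That would be a genuinely different route from the paper's, which uses no variance information at all. The paper warm-starts \samplepav{} only to accuracy $\tilde\Theta(\eps\sqrt{k})$, costing $\tilde O(\eps^{-2}\log m)$. It then runs $\log\log k$ rounds of multi-candidate pivots of size $\ell_r=\lfloor k^{2^{-r}}\rfloor$ with thresholds $\eps_r\propto\eps\,6^r k^{2^{-r}}$. By \Cref{lem:combined-pivots}, $\ell_r$ consecutive accepted pivots would merge into one pivot of size at most $\ell_{r-1}$ gaining more than $\eps_{r-1}$ (more than the warm-start gap when $r=2$), which is excluded. So each round ends within $\ell_r$ steps, each costing $\tilde O(\ell_r\eps_r^{-2}\log m)$ Hoeffding samples, i.e.\ $\tilde O(\eps^{-2}\log m)$ per round.
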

We accomplish this via a careful combination of the sample-based adaptations of known rules, and observe that it is more sample efficient than our adaptations of other known proportional rules.

Along the way, in \Cref{sec:coverage} we analyze the sample complexity of satisfying \jr{} via the Chamberlin-Courant rule, which maximizes voter coverage.
We present matching upper and lower bounds, which combine with our results in \Cref{sec:greedyrules} to separate the complexity of approximate \emph{global} voter coverage maximization from approximate \emph{local} voter coverage maximization (\jr{}).

Third, in \Cref{sec:lowerbounds} we establish a nontrivial lower bound on the sample complexity of \jr{}.
% JR lower bound
\begin{theorem}[Informal statement of \Cref{thm:k3-lower-bound-jr}]
	\label{thm:kcubed-jr-lb-informal}
	For any $k \geq 2$ there are elections with $m, \nf{1}{\delta} = \poly(k)$ for which satisfying \jr{} requires $\Omega(k^3)$ samples.
\end{theorem}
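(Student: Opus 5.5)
The plan is a two-point (or multi-point) indistinguishability argument. I would build a family of profiles $\prof_1,\dots,\prof_t$ over $m=\poly(k)$ candidates with three properties. First, no single committee of size $k$ satisfies \jr{} for two different profiles in the family. Second, any two profiles differ only by moving a mass of order $\eta \approx \nf{1}{k^2}$ of voters between ballot types. Third, each of those ballot types has mass of order $\nf{1}{k}$. Then $\dkl$ between single-sample distributions is about $\eta^2/(\nf1k)=\nf{1}{k^3}$, so $n$ samples give KL of order $n/k^3$. By Le Cam (for $t=2$) or Fano (for general $t$), any rule that outputs a \jr{} committee with probability $1-\delta$ for every profile must take $n=\Omega(k^3)$. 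Taking $t$ polynomial in $k$ and $\delta=\nf{1}{\poly(k)}$ keeps $m,\nf1\delta\in\poly(k)$.

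The construction rests on the knife-edge threshold of \jr{}: a cohesive group of mass at least $\nf1k$ is entitled to a winner, and a group of mass $\nf1k-\eta$ is not. I would use $k-1$ "forced" blocks of voters. Each block has mass at least $\nf1k$ plus slack and approves a private forced candidate $f_j$, so every \jr{} committee must spend one seat on each $f_j$ (or on an equivalent clone). The last seat is contested. There is a residual population $R$ together with candidates $c_1,\dots,c_t$ whose supporter sets inside $R$ are pairwise disjoint or nearly so. In $\prof_i$ only $c_i$'s uncovered supporters reach mass $\nf1k$; every other $c_j$ sits at $\nf1k-\eta$. So the unique way to fill the last seat is $c_i$ (or a candidate covering its supporters), and committees valid for different $i$ are disjoint in this coordinate.

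Mass accounting is the immediate difficulty. Naively $k-1$ blocks of mass $\nf1k$ plus a residual group of mass $\nf1k$ already exhaust the electorate, which leaves no room for the near-threshold decoys. I would fix this with overlaps. The supporters of each $c_j$ are drawn partly from the forced blocks, and since the forced candidates are necessarily elected, those voters are covered and do not count toward $c_j$'s \jr{} claim. The forced blocks are also layered: voters in a block approve several forced candidates, so that the blocks can jointly have smaller total mass while each still cohesively exceeds $\nf1k$. The uncovered parts of the $c_j$'s supporters lie in a small residual. The requirement to verify is that the residual masses are $\nf1k$ versus $\nf1k-\eta$, and that no committee avoiding some forced candidate can compensate by covering the residual differently.

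I expect this combinatorial gadget to be the main obstacle. It must simultaneously (a) force $k-1$ specific seats, (b) keep every ambiguous ballot type at mass $\Theta(\nf1k)$ so that the variance is large, and (c) ensure the perturbation of size $\eta=\Theta(\nf1{k^2})$ both flips the \jr{} verdict and is statistically tiny. I would first try a two-point version, $t=2$ with two candidates whose residual masses are swapped between $\nf1k\pm\eta$, and prove that $\Omega(k^3)$ samples are needed to succeed with constant probability. After that I would bootstrap to $t=\poly(k)$ via Fano, and confirm that the failure probability and $m$ stay polynomial in $k$.
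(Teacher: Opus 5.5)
\paragraph{There is a genuine gap: the gadget cannot exist.} Requirements (1) and (2) of your plan are mutually incompatible, and the mass-accounting problem you flag cannot be repaired by overlaps or layered forced blocks. Suppose $\dtv(\prof_1,\prof_2)<\nf{1}{k}$, and let $\mu(A)\defeq\max(\prof_1(A),\prof_2(A))$. This is a measure of total mass $1+\dtv(\prof_1,\prof_2)<1+\nf{1}{k}$. Build $W$ greedily: add any candidate whose approvers not yet covered by $W$ have $\mu$-mass at least $\nf{1}{k}$. The uncovered mass consumed by successive additions is disjoint, so at most $k$ additions occur. On termination every unchosen candidate has uncovered support below $\nf{1}{k}$ under $\mu$, hence under both $\prof_1$ and $\prof_2$. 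Padding $W$ to size $k$ preserves this, so $W$ satisfies \jr{} for both profiles. Consequently, pairwise-disjoint \jr{} sets force $\dtv\geq\nf{1}{k}$ between any two members of your family, contradicting the requirement that they differ by moving mass $\eta\approx\nf{1}{k^2}$. It is worse than a failed KL estimate. With pairwise $\dtv\geq\nf{1}{k}$, a Scheff\'e-type tournament identifies the true member of a family of size $M$ from $O(k^2\log\nf{M}{\delta})$ samples, and one then outputs a committee in its \jr{} set. So no disjoint-committee reduction with $M,\nf{1}{\delta}=\poly(k)$, whether via Le Cam or Fano, can certify more than $O(k^2\log k)$. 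This is exactly the wall the paper's family $\cI$ hits: its \jr{} sets are disjoint, but its members are $(k+1)\eps\approx\nf{1}{k}$ apart at the \jr{} threshold.

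\paragraph{What the paper does instead.} The missing idea is to abandon disjointness entirely and bound the Bayes error directly. The paper keeps a family with the small per-sample divergence you want: $k+1$ singleton blocs, one of mass $\frac{1}{k+1}+\eps$ and $k$ of mass $\frac{1}{k+1}-\nf{\eps}{k}$. Its \jr{} sets overlap heavily, since any committee containing the large bloc's candidate works. Under a uniform prior, a likelihood-ratio exchange argument shows the optimal rule omits the empirically least-supported candidate. Then, for $t\leq c\,k^{-1}\eps^{-2}$, the large bloc is the empirical minimum with probability $\Omega(k^2\eps^2)$. This is shown in three steps:
\begin{enumerate}
\item a Chebyshev-sum positive-correlation step;
\item a Chernoff and union bound keeping the $k$ small blocs above a threshold $\tau$;
\item binomial anticoncentration pushing the large bloc below $\tau$.
\end{enumerate}
At $\eps=\nf{1}{k(k+1)}$ this gives $\Omega(k^3)$ samples, but only at failure probability of order $\nf{1}{k^2}$. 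So the condition $\nf{1}{\delta}=\poly(k)$ in the statement is essential, not the cosmetic choice it is in your plan: a uniformly random committee already succeeds with probability $1-\nf{1}{k+1}$ on this family. Your two-point construction would have given the bound at constant $\delta$, which is more than the paper proves.
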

This is accomplished via a distribution over partition instances, and improves by a factor of $k^2$ upon the
bound obtained by reducing to multiple hypothesis testing.

We complement these main contributions with beyond-worst-case guarantees for structured instances (\Cref{sec:vcdim}) and implications for the proportionality for solid coalitions (PSC) axiom in ranked-preference domains (\cref{sec:ranked}).
Finally, we evaluate the sample complexity of ensuring proportional outcomes on a selection of approval profiles from participatory budgeting elections (\Cref{sec:experiments}).

%%%
\subsection{Related Work}
This work draws from and builds upon a few areas of research within computational social choice.

\subsubsection*{Sortition and Sampling in Social Choice}

The question of how many randomly chosen voters are needed to make collective decisions while representing the underlying population is central to sortition and the design of citizens' assemblies.
Most recent theoretical and practical work on sortition uses a formulation that is, in some sense, dual to ours.
For a fixed number of individuals, representativeness constraints are placed on the outcome of the participant sampling procedure, and the sampling procedure is made maximally uniform subject to these constraints \cite{flanigan20neutralizing,flanigan21fair}.

Other recent work models sortition as proceeding via uniformly random selection, and studies the extent to which this selection satisfies definitions of proportional representation of metric preferences \cite{ebadian25boosting} and other demographic, metric, and welfarist objectives \cite{brustle2025panel}.
The sample complexity questions \citet{brustle2025panel} ask are close to ours in spirit.
For instance, they bound the extent and likelihood that a budget allocation in the approximate core for the sampled voters is in the (less) approximate core for the population.
Materially, the most similar prior results are due to \citet[Corollary 1]{chen19proportionally}.
They work in the clustering setting with the related axiom of \emph{proportional fairness} (PF), and (informally) they show that $\tilde O(k \eps^{-2})$ samples suffice to find a $k$-subset satisfying PF with high probability, provided the proportionality quota is increased by an $\eps$ proportion of the population.

A distinct line of work proposes to find proportional committees in large-scale multiwinner approval voting instances by sampling voters i.i.d. but sequentially, and making adaptive and sparse queries of voter preferences \cite{halpern23representation,brill23robust}.
As their queries are sparse they require $\Omega(m)$ sampled voters, and obtain incomparable bounds.
This has also inspired generative approaches to constructing slates of proportionally representative opinion statements \cite{fish24generative,boehmer25generative}, where access to textual representations of all voters' preferences is assumed.
Working in a similar model to that of \citet{halpern23representation}, \citet{lindeboom25diverse} study the number of voter-candidate approvals necessary to implement polynomial-time approximations to the coverage objective.

\subsubsection*{Learning Axioms and Rules from Samples}
Prior work has studied both approval-based multiwinner voting rules and proportionality axioms from the probably-approximately correct (PAC) learning framework.
\citet{caragiannis22complexity} showed committee scoring rules and sequential Thiele rules have polynomial sample complexity, and \citet{xia25linear} both extended these results and considered the sample complexity of proportionality axioms such as \jr{}.
This was in the course of developing a unified theory of many ABC voting rules and axioms that, like us, takes profiles to be distributions over approval sets.
This line of work is distinct in that their samples are pairs $(E,W)$ from an unknown distribution over elections and committees output by an unknown rule, and the PAC guarantees are with respect to this profile distribution.

\subsubsection*{Stability and Robustness of Multiwinner Rules}
Another area of research seeks to characterize how changes to approval profiles affect the outputs of various rules, or degrade the axiomatic guarantees of rules' outputs.
Within this area, closest to our perspective is perhaps \cite[Theorem 1]{kraiczy23properties}, which observes that committees with sufficiently high \pav{} scores continue to satisfy \ejr{} even when a $\nf{1}{k^2}$ proportion of approvals in the profile are adversarially changed.
Relatedly, \citet{kehne25robust} produce randomized rules that exhibit bounded change (in distribution) under small changes to the profile.
\citet{casey25justified} conduct an in-depth analysis of Hare vs. Droop rules and axioms, and observe that committees satisfying the more stringent (Droop) proportionality axioms are empirically much more scarce.
Beyond the Droop quota, \citet{becker25minimal} study standard rules' approximation of the instance-optimal quota for which \jr{} or \ejr{} is satisfiable.
\citet{imber25approval} consider the question of determining if a given committee possibly/necessarily satisfies \jr{} or is the outcome of a rule for an election for which only a fixed subset of the voter-candidate approvals are known.
Finally, \citet{gawron19robustness} and \citet{faliszewski22robustness} study the extent to which the outputs of specific approval-based committee rules are affected by small changes to the approval profile.

\section{Preliminaries} \label{sec:prelims}

\subsection{Election Models}
\label{sec:model}

Throughout we will use \emph{candidates} and \emph{alternatives} interchangeably, as well as \emph{voters} and \emph{agents}.
We consider a finite set $\cC$ of candidates.
Our aim is to output a committee $W \subseteq \cC$, $\abs{W} = k$ for some specified $k \in \mathbb{N}$, based on the preferences of voters in a fixed electorate.

In approval-based committee (ABC) elections, each voter $v$ is identified with a set of approved candidates $A_v \subseteq \cC$.
Our focus is the setting where the voter population is large; as such, we view the population as a distribution over approvals $\prof \in \Delta(2^{\cC})$.
An \emph{election} is then fully defined by the tuple $E = (\cC, \prof, k)$.
Let $\prof(A) \defeq \Pr_{A' \sim \prof}\left[A' = A\right]$ denote the proportion of the population with approval set $A\subseteq \cC$.
Then a sample of $t$ voters $\cA = (A_1, \ldots, A_t)$ is a draw from the product distribution $\prof^t$.

For a discrete population $N$ of $n$ voters, this corresponds to uniform sampling with replacement.
Although drawing $t$ samples from $N$ with vs. without replacement results in distinct distributions, they converge as $n$ becomes large when---as is the case for our lower bound instance families---the support is small \cite{johnson2024relative}.
Furthermore, recent work in metric settings shows that samples without replacement are more informative \cite{brustle2025panel}, suggesting that our positive results are in the more challenging of the two models.

A typical ABC rule $f:  \Delta(2^{\cC}) \rightarrow \comms$ receives the full preferences of the population and outputs a committee in $\comms \defeq \binom{\cands}{k}$ for a specified committee size $k$.
By contrast, our interest is in \emph{sample-based rules} of the form $\bar f:  (2^{\cC})^t \rightarrow \comms$ which see only $t$ voters' preferences.
Our goal will be for such $\bar f$ to output committees which satisfy axioms defined with respect to the population profile $\prof \in \Delta(2^{\cC})$.
In general, a multiwinner approval voting axiom $\ax: \Delta(2^\cC) \rightarrow 2^{\comms}$ specifies a set of feasible output committees for each input preference profile.
(We will introduce specific axioms shortly.)
We can now formalize what it means for a sample-based rule to satisfy an axiom.

\begin{definition}[Axiom satisfaction w.h.p.]
\label{def:axiom-satisfaction-samples}
	We will say that a  rule $\bar f$ \emph{$(t,\delta)$-satisfies} an axiom $\ax$ if
	\[
		\Pr_{\cA \sim \prof^t} \left[ \bar f(\cA) \in \ax(\prof) \right] \geq 1 - \delta
	\]
	for all profiles $\prof \in \Delta(2^\cC)$.
	Further, $\ax$ is \emph{$(t,\delta)$-satisfiable} if there exists a rule that $(t,\delta)$-satisfies it.
\end{definition}
This means that regardless of the population profile, the approvals $\cA$ of $t$ independently sampled voters suffice for $\bar f$ to identify a committee that satisfies $\ax$ \emph{with respect to the underlying population} at least a $1-\delta$ proportion of the time.
Rules $\bar f$ are more sample-efficient for $\ax$ if, for a given $\delta$, they $(t,\delta)$-satisfy $\ax$ for smaller $t$ (or if, for a given $t$, they $(t,\delta)$-satisfy $\ax$ for smaller $\delta$).

%%%
\subsection{Sample Complexity}

At a high level, if we have an algorithmic task defined with respect to a distribution over data, then it makes sense to ask whether this task can be accomplished by an algorithm given some number of samples from this distribution, and if so, how the best-possible expected performance or likelihood of succeeding at this task improves with the number of samples.
An influential framework for understanding such problems is probably approximately correct (PAC) learning \cite{valiant84theory}, which initially considered the problem of approximately learning an approximation to a `true' classifier $h^*:\cX \rightarrow \{0,1\}$ of instances $\cX$. Given labelled samples $(x, h^*(x))$ from some unknown instance distribution $\cD$, the goal is then to choose an $h \in \cH$ from some larger class, which is likely to agree with $h^*$ on a $1-\eps$ fraction of future samples $x \sim \cD$.
This admits myriad generalizations, for instance by replacing agreement with $h^*$ by more general error metrics \cite{haussler92decision}.

A simple such problem is that of distinguishing a Bernoulli source with rate $p + \eps$ from $m-1$ bad sources with rate at most $p$, given a uniform number of samples from each of the distributions.\footnote{Note that this is distinct from best-arm identification in the bandit setting, where adaptive sampling of indices is allowed.}
The number of samples needed to identify the good source with high probability is well understood.
It coincides with that of the more general problem of selecting a source which is within $\eps$ of optimal, which will be useful to us.

\begin{restatable}[$\eps$-Maximum Bernoulli sources]{fact}{distinguishingsourcesnew}
\label{lem:separating-means-from-samples-new}
	Given probabilities $(p_1,\ldots, p_m)$, consider the product distribution $\cD \defeq \prod\Bern(p_i)$.
	If $i^* \defeq \arg \max_i p_i$ is a maximum-probability index, then $\Theta(\nf{1}{\eps^2} \log \nf{m}{\delta})$ samples from $\cD$ are necessary and sufficient to identify an $i \in [m]$ for which $p_i \geq p_{i^*} - \eps$ with probability at least $1-\delta$.
\end{restatable}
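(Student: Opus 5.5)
The plan is to prove the two directions separately: sufficiency by the empirical-maximum rule, and necessity by reduction to a testing problem. Each sample from $\cD$ is a vector $(X_1,\ldots,X_m)$ with independent coordinates $X_i\sim\Bern(p_i)$. Given $t$ samples, let $\hat p_i$ be the empirical mean of coordinate $i$, and output $\hat\imath \defeq \arg\max_i \hat p_i$.

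For sufficiency, apply Hoeffding's inequality to each coordinate: $\Pr[\abs{\hat p_i - p_i} > \eps/2] \le 2e^{-t\eps^2/2}$. A union bound over the $m$ coordinates shows that with $t = \lceil \tfrac{2}{\eps^2}\ln\tfrac{2m}{\delta}\rceil$ samples, every $\hat p_i$ is within $\eps/2$ of $p_i$, except with probability at most $\delta$. On this good event,
\[
p_{\hat\imath} \ge \hat p_{\hat\imath} - \eps/2 \ge \hat p_{i^*} - \eps/2 \ge p_{i^*} - \eps .
\]
This gives the $O(\eps^{-2}\log\nf{m}{\delta})$ upper bound. Independence across coordinates is not needed here; only the per-coordinate concentration is used.

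For necessity, I will split into two regimes and combine them, since $\log\nf{m}{\delta} = \Theta(\log m + \log\nf{1}{\delta})$. Both regimes use hard instances that are perturbations of $p_i = \nf12$, so that each coordinate's KL divergence is $\Theta(\eps^2)$ per sample.

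\emph{The $\log\nf1\delta$ term.} Take $m=2$ and compare $(\nf12+\eps,\nf12-\eps)$ with $(\nf12-\eps,\nf12+\eps)$; the entries differ by $2\eps$, so each instance has a unique $\eps$-good index. Any successful rule yields a test between these two product distributions with error at most $\delta$ under each. The Bretagnolle--Huber inequality gives $2\delta \ge \tfrac12 e^{-t\cdot \dkl}$ with per-sample divergence $O(\eps^2)$. Hence $t = \Omega(\eps^{-2}\log\nf1\delta)$.

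\emph{The $\log m$ term.} Consider the $m$ instances $\cD_j$ in which coordinate $j$ has mean $\nf12+\eps'$ for $\eps' = 2\eps$ and all other coordinates have mean $\nf12$. Each $\cD_j$ has a unique $\eps$-good index $j$, so a successful rule identifies $j$ with probability at least $1-\delta \ge \nf12$ (say $\delta\le\nf14$). Apply Fano's inequality with $j$ uniform. The mutual information between $j$ and the $t$ samples is at most the average KL divergence from $\cD_j^t$ to the all-$\nf12$ reference distribution. Because the coordinates are independent, this divergence equals the divergence on the single perturbed coordinate, namely $t\cdot O(\eps^2)$. Fano then requires $t\cdot O(\eps^2) \ge (1-\delta)\log m - \log 2$, so $t = \Omega(\eps^{-2}\log m)$.

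Taking the maximum of the two bounds gives $\Omega(\eps^{-2}\log\nf m\delta)$.

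I expect the main obstacle to be getting the joint dependence on $\log m$ and $\log\nf1\delta$ in the lower bound with correct constants. Fano's inequality alone only captures $\log m$ at constant $\delta$. If a single unified argument is preferred, I would try a change-of-measure argument instead: some index $j$ must be output with probability at most $1/m$ under the reference distribution, yet with probability at least $1-\delta$ under $\cD_j$. The data-processing inequality for binary KL then gives
\[
t\,\dkl(\cD_j\|\cD_0) \ge \mathrm{kl}(1-\delta\,\|\,\nf1m) \approx \log m .
\]
In the final writeup I will simply rely on the two-regime maximum, since the statement is a standard fact.
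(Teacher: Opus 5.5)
Your proposal is correct and follows the paper's proof essentially step for step. Sufficiency comes from the empirical maximizer with Hoeffding and a union bound. Necessity combines Fano's inequality for the $\log m$ term with a two-point KL-based bound for the $\log \nf{1}{\delta}$ term (you use Bretagnolle--Huber where the paper uses Tsybakov's bound), and takes the maximum of the two. Your choice of a $2\eps$ gap in the hard instances is a small improvement: in the paper's family the gap is exactly $\eps$, so under the non-strict condition $p_i \geq p_{i^*} - \eps$ every index is technically $\eps$-good.
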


From this perspective, we should approach axiom satisfaction (\Cref{def:axiom-satisfaction-samples}) by viewing each $k$-committee $W\subseteq C$ as a hypothesis, and from the sample identifying a $W$ that is sufficiently good in terms of an objective related to our proportionality axiom.
This is what \samplepav (\Cref{alg:sample-pav}) does, and this approach turns out to be optimal for the related coverage problem we consider in \Cref{sec:coverage}.

But for axioms in the \jr{} family, we can do better by engaging with the structure of our space of possible outputs $\binom{C}{k}$ and its relationship to our objective (axiom satisfaction) directly.
In particular, our objective is not characterized by approximately minimizing any linear function over the voters.
This situates us in a broader line of work on combinatorial optimization from samples, which includes finding approximately influence-maximizing sets of nodes on networks \cite{sadeh20sample} and sample-based submodular function maximization
\cite{rosenfeld18learning}.

%%%
\subsubsection{Sampling with vs. without Replacement}
We consider sampling with replacement, even though sampling without replacement is more closely aligned to our motivation \cite{brustle2025panel}.
This is for convenience, but we argue sample complexity in the two models is asymptotically equivalent for our parameter regime of interest.
For a finite population, sampling with and without replacement coincide when the number of samples is bounded away from the total population size.

\begin{restatable}[]{observation}{withvswithoutreplacement}
\label{claim:with-vs-without-replacement}
	Suppose voter distribution $\cD$ is uniform over a finite population $[n]$.
	If $t$ and $[n]$ are known, then $t$ samples without replacement (w.o.r.) can simulate $t$ samples with replacement (w.r.).
	If the $i \in [n]$ are distinguishable, then for any $c > 0$ and $\eps >0$,  if $t \leq (1 - \eps) n$ then $t' = t/\eps + O(c \cdot \log n)$ samples w.r. can simulate $t$ samples w.o.r. with failure probability $n^{-c}$.
\end{restatable}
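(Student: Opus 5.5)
The plan is to prove the two simulation claims separately. The first is an exact coupling. The second reduces to a tail bound on a sum of independent geometric random variables, which I will split into two regimes of $t$.

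\textbf{Simulating w.r.\ from w.o.r.} Given $t$ draws without replacement $(x_1,\ldots,x_t)$, first draw an auxiliary sequence $(y_1,\ldots,y_t)$ uniformly from $[n]^t$. This is possible because $n$ and $t$ are known, and it uses no voter information. Let $\sigma$ map the distinct values of $y$, in order of first appearance, to $x_1,x_2,\ldots$. There are at most $t$ distinct values, so the w.o.r.\ sample is long enough. Output $(\sigma(y_1),\ldots,\sigma(y_t))$.

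Conditioned on the collision pattern of $y$, the labels assigned to the $d$ distinct slots form a uniformly random ordered $d$-subset of $[n]$, because $(x_1,\ldots,x_d)$ is. This matches the conditional law of an i.i.d.\ uniform sequence given its collision pattern. Hence the output is exactly $t$ i.i.d.\ uniform draws.

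\textbf{Simulating w.o.r.\ from w.r.} Because the voters are distinguishable, we draw with replacement and keep only first occurrences. The first $t$ distinct voters seen form an exact w.o.r.\ sample of size $t$. Let $T$ be the number of draws needed to see $t$ distinct voters. Then $T=\sum_{j=0}^{t-1} G_j$, where the $G_j\sim\mathrm{Geom}(p_j)$ are independent and $p_j=(n-j)/n$. Since $j<t\le(1-\eps)n$, every $p_j\ge\eps$, and so $\mathbb{E}[G_j]\le 1/\eps$. It remains to show
\[
\Pr\!\left[T> t/\eps + C c\log n\right]\le n^{-c}.
\]

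\textbf{Slack versus fluctuation.} The key quantitative point is that the mean is strictly below $t/\eps$ by a useful amount. Using $n-j-\eps n\ge t-j$,
\[
\frac{1}{\eps}-\frac{n}{n-j}=\frac{n-j-\eps n}{\eps(n-j)}\ge \frac{t-j}{\eps n}.
\]
Summing over $j<t$ shows the slack $s\defeq t/\eps-\mathbb{E}T$ is at least about $t^2/(2\eps n)$.

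The variance of $T$ is $\sum_j (1-p_j)/p_j^2\le \sum_j jn/(\eps (n-j)\cdot n)\cdot\ldots$; concretely, since $(1-p_j)=j/n$ and $p_j\ge\eps$, it is at most of order $t^2/(\eps^2 n)$. Each $G_j-1$ is sub-exponential with scale $O((1-p_j)/p_j)$, so the failure probabilities $j/n$ also control the sub-exponential tail.

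\textbf{Applying Bernstein.} Bernstein's inequality for sums of sub-exponential variables gives a deviation of
\[
O\!\left(\frac{t}{\eps}\sqrt{\frac{c\log n}{n}} \;+\; \max_j \frac{1-p_j}{p_j}\, c\log n\right)
\]
with probability $1-n^{-c}$. I would then split into two cases.

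When $t\gtrsim \sqrt{c\, n\log n}$, the variance term is absorbed by the slack $s\ge t^2/(2\eps n)$.

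When $t$ is smaller, every repeat probability is at most $t/n\lesssim\sqrt{c\log n/n}$. In this regime $1/p_j=1+O(j/n)$, and I would bound the number of wasted draws $\sum_j(G_j-1)$ directly. It is stochastically dominated by a negative binomial with $t$ successes and failure probability $q=t/n\le 1/2$ (for $n$ large). A Chernoff bound then shows that it exceeds $tq + O(c\log n)$ with probability at most $n^{-c}$. The term $tq\le t^2/n$ is again at most $t(1/\eps-1)$ plus the slack, so the total stays below $t/\eps+O(c\log n)$.

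\textbf{Main obstacle.} The delicate step is making the additive term exactly $O(c\log n)$, with no $1/\eps$ factor, in the intermediate regime where $\eps$ is small. There the sub-exponential scale $(1-p_j)/p_j$ can approach $1/\eps$ for $j$ near $t$.

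My first attempt would be to charge these large-scale terms to the slack. The terms with $p_j$ near $\eps$ occur only when $t$ is close to $(1-\eps)n$, and then the slack $t^2/(2\eps n)=\Omega(n/\eps)$ dominates $(c\log n)/\eps$. I would therefore split the sum at $j\le t/2$ versus $j>t/2$ and handle each piece with the appropriate bound.
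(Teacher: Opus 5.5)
Your first part is the paper's coupling done in batch rather than sequentially, and it is fine. For the second part your route genuinely differs from the paper's, and yours is the one that actually yields the stated bound.

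The paper dominates $T$ by a sum of $t$ i.i.d.\ $\mathrm{Geom}(\eps)$ variables and invokes a tail bound with $\Delta=\Theta(\frac{c}{\eps}\log n)$. That domination discards exactly the slack you isolate. The dominating sum has mean exactly $t/\eps$ and standard deviation $\sqrt{t(1-\eps)}/\eps$. So, e.g.\ for $\eps=1/2$ and $t=n/2$, it exceeds $t/\eps+\Theta(\frac{c}{\eps}\log n)$ with probability close to $1/2$. Even where the domination argument does work, it leaves a $1/\eps$ on the additive term. Your bound $t/\eps-\mathbb{E}T\ge t(t+1)/(2\eps n)$, played against the fluctuations of $T$ itself, is what makes $t/\eps+O(c\log n)$ reachable.

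Two points still need repair.

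First, $G_j-1$ is sub-exponential with scale $O(1/p_j)$, not $O((1-p_j)/p_j)$. Its moment generating function diverges at $\lambda=\ln\frac{1}{1-p_j}$, which is far below $\frac{p_j}{1-p_j}$ when $j\ll n$. Only the variance is governed by $j/n$. This is harmless, because an additive $O(c\log n)$ is affordable whenever every $p_j\ge 1/2$.

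Second, your ``main obstacle'' is left open, and splitting the sum at $j=t/2$ does not isolate it. The largest scale is $\max_j 1/p_j=n/(n-t+1)$, which depends on $t$ alone, so split on $t$ instead.
\begin{itemize}
\item If $t\le n/2$, every $p_j\ge 1/2$, so the linear Bernstein term is $O(c\log n)$ with an absolute constant. Your two regimes (variance absorbed by slack for $t\gtrsim\sqrt{cn\log n}$, negative binomial below) then finish the argument.
\item If $t>n/2$, then $s\ge t^2/(2\eps n)> n/(8\eps)$. This absorbs both the variance term and the linear term $O(\frac{c\log n}{\eps})$, provided $c\log n\le\kappa n$ for a small absolute constant $\kappa$.
\end{itemize}

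The proviso $c\log n\le\kappa n$ is implicit in your ``for $n$ large'', and it cannot be dropped. Take $t=(1-\eps)n$. The last new voter alone needs more than $t'$ draws with probability roughly
\[
e^{-\eps t'}=e^{-(1-\eps)n-\eps Cc\log n},
\]
where $C$ is the constant in the $O(\cdot)$. This exceeds $n^{-c}$ when $\eps$ is a small multiple of $1/C$ and $c\log n$ is a large multiple of $n$. So the statement should be read with $c\log n=O(n)$.
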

So if we are interested in $t = \Omega(\log n),  t =(1 - \Omega(1)) \cdot n$ samples \emph{without} replacement, then samples \emph{with} replacement are weaker by at most a constant, and lower bounds carry over directly.

%%%
\subsection{Approval-Based Axioms and Rules}

%%%
\subsubsection{Proportionality Axioms}
\label{sec:proportionality}
We now define the axiomatic operationalizations of proportionality which we will use.
This presentation is nonstandard only in that it is for profiles $\prof \in \Delta(2^\cC)$, rather than with respect to a finite set of voters.
For a candidate $c$, let $\cA_c \defeq \{A \in 2^\cC: c \in A\}$ denote the collection of approval sets containing $c$.
We are now ready to define \emph{justified representation} (\jr{}), introduced by \citet{aziz17justified}.
Since we will be interested in different thresholds between $\nf{1}{k}$ and $\nf{1}{k+1}$, we will use the notation \jre{} to refer to \jr{} with intermediate thresholds.
\begin{definition}[\jre{}]
\label{def:jr-eps}
	For $\eps > 0$, a committee $W \in \comms$ satisfies \jre{} for an election $E$ if
	\[
	    \Pr_{A \sim \prof}\left[A \in \cA_c \: \wedge \: \abs{A \cap W} = 0 \right] < \frac{1}{k+1} + \eps
	\]
	for all candidates $c \not\in W$.
\end{definition}
The standard axiom of \emph{justified representation} corresponds to \jre{} for $\eps = \frac{1}{k} - \frac{1}{k+1} = \frac{1}{k(k+1)}$.
We will refer to \jre{} with this choice of $\eps$ as \emph{Hare \jr{}}, or simply \emph{\jr{}}.
The axiom \emph{Droop \jr{}} is then the intersection of \jre{} for all $\eps > 0$; that is, Droop \jr{} is satisfied exactly when \jre{} is satisfied for all $\eps > 0$.
Some prior work relaxes \jr{} by taking $\eps > \frac{1}{k(k+1)}$, referring to this as $\alpha$-\jr{}, where $\alpha \defeq (\frac{k}{k+1} + k\eps)^{-1}$.

% EJR+
The stronger proportionality notion we will consider is $\ejrp{}$, due to \citet{brill23robust}.
This is a refinement of \emph{extended} \jr{}, also introduced by \citet{aziz17justified} and itself stronger than \jr{}.
We again parameterize the choice of quota by $\eps$.
\begin{definition}[\ejrpe{}]
\label{def:ejrp-eps}
	For $\eps > 0$, a committee $W \in \comms$ satisfies \emph{\ejrpe{}} with respect to profile $\prof$ if
	\[
	    \Pr_{A \sim \prof}\left[A \in \cA_c \: \wedge \: \abs{A \cap W} < \ell \right] < \ell \cdot \left( \frac{1}{k+1} + \eps\right)
	\]
	for all candidates $c \not \in W$ and for all $\ell \in [k]$.
\end{definition}
Again, \ejrp{} corresponds to \ejrpe{} for $\eps = \frac{1}{k(k+1)}$.
It is easy to see that \ejrp{} strengthens \jr{}, since we recover \Cref{def:jr-eps} from \Cref{def:ejrp-eps} by specifying $\ell = 1$.
In the formulation of $\ax: \Delta(2^\cC) \rightarrow 2^{\comms}$ above, the sets of $W$ satisfying \Cref{def:jr-eps,def:ejrp-eps} define the values of \jre{}$(\prof)$ and \ejrpe{}$(\prof)$, respectively.

A large and ever-expanding family of strengthenings of \jr{} have been defined and studied for multiwinner approval voting rules and in related settings.
See \citet[Fig. 4.1]{lackner2023multi} for a family picture as of 2023.
Together with \ejrp{}, subsequent arrivals include the Monroe-esque \emph{balanced} \jr{}
(\bjr{})
\cite{fish24generative} and \emph{full proportional} \jr{}
(\fpjr{})
\cite{kalayci25full}.

Of these, \ejrp{} is strength-maximal among the axioms which are known to be satisfiable for all profiles, is satisfied by several appealing rules, and is relatively discerning on real and synthetic profiles \cite[Table 2]{boehmer2024approval} \cite[Figs. 2-5]{brill23robust}.
Our lower bounds hold for \jr{}, and hence for every proportionality axiom in the \jr{} family.
Our algorithmic focus in \Cref{sec:upperbounds} will be on satisfying \ejrp{}.

\subsubsection{Committee Voting Rules}

Several prominent multiwinner voting rules satisfy \jr{} and its strengthenings.
We will work with the following such rules.
\begin{definition}[Thiele methods  \cite{thiele95om}]
\label{def:thiele-methods}
	For a nonnegative and non-decreasing scoring vector $w = (w_1, w_2, \ldots)$, the \emph{$w$-Thiele method} chooses a committee $W \in \comms$ maximizing the score $\scr_w(W) \defeq \probover{A\sim \prof}{w_{\abs{A\cap W}}}$.
	The \emph{(utilitarian) approval voting rule} (\av{}) is given by $w_\ell = \ell$.
	The \emph{Chamberlin-Courant rule} (\chambcou) is given by $w_\ell = 1$.
	The \emph{proportional approval voting} (\pav{}) rule is given by $w_\ell = H_\ell$, where $H_\ell$ is the $\ell$'th harmonic number.
	Let $\scr_{\av}$, $\scr_{\cc}$, and $\scr_{\pav}$ denote their scores.
\end{definition}
When sample-based rules calculate scores, it is with respect to sampled approvals rather than $\prof$.
In such cases we will let $\scr(W ; \cA)$ denote the score of $W$ on the ballots $\cA$ (normalized by $\abs{\cA}$).

Thiele methods admit more computationally tractable greedy variants:
\begin{definition}[Sequential Thiele methods \cite{thiele95om}]
\label{def:sequential-thiele-methods}
	For a scoring vector $w$ as above, the \emph{sequential $w$-Thiele method} greedily constructs a winning committee $W$ according to $\scr_w$.
	The voting rule \greedy{} is the sequential variant of \chambcou{}, and \emph{Sequential PAV} (\seqpav{}) is that of \pav{}.
\end{definition}
While \seqpav{} does not enjoy the same guarantees as \pav{}, a different tractable variant does:
\begin{definition}[Local-search \pav{} \cite{aziz18complexity}]
\label{def:ls-pav}
	For fixed $\eps > 0$, and a starting committee $W$, \emph{$\eps$-local-search \pav{}} ($\eps$-\lspav{}) iteratively pivots $W \gets W \setminus \{c'\} \cup \{c\}$ so long as there are pairs of candidates $(c,c') \in \cands \times W$ for which $\scr_{\pav{}}(W \setminus \{c'\} \cup \{c\}) \geq \scr_{\pav{}}(W) + \eps$.
\end{definition}

Finally, we also consider (a slight variant of) the EJR-Exact rule Simple EJR, also known as GJCR.
\begin{definition}[\gjcr{} \cite{fernandez17committees} \cite{brill23robust}]
	\label{def:gjcr-eps}
	For fixed $\eps > 0$, the \emph{greedy justified candidate rule} (\gjcr{}$_\eps$)
	constructs $W$ by sweeping from $\ell = k$ down to $\ell = 1$, at each level iteratively taking any candidates that violate the inequality in \Cref{def:ejrp-eps} with respect to $W$.
\end{definition}

We do not work directly with \seqpav{} or \mes{} \cite{peters21proportional}, but evaluate them in \Cref{sec:experiments}.

%%%%
\section{Warmup: Sampling for Chamberlin-Courant and Voter Coverage}
\label{sec:coverage}

\jr{} is a relaxation of the coverage guarantee provided by the Chamberlin-Courant rule for approval ballots, which returns the committee of $k$ candidates which maximizes the number of voters who approve of at least one chosen candidate.
In the approval setting, it is immediate that the Chamberlin-Courant rule is equivalent to \MkC, the problem of choosing $k$ sets from a set family so as to maximize the cardinality of their union \cite{skowron17chamberlin}.

For a given committee, satisfying \jr{} can be seen as a certificate of (approximate) local optimality with respect to voter coverage, while Chamberlin-Courant-winning committees are globally optimal with respect to voter coverage.
Viewed this way, the fact that Chamberlin-Courant-winning committees satisfy \jr{} is somewhat intuitive, and can be formalized via an exchange argument.
\begin{restatable}[]{observation}{cvgdroopjr}
\label{obs:cvg-droop-jr}
    If $W$ is a Chamberlin-Courant winning committee, then $W$ satisfies Droop \jr{}.
\end{restatable}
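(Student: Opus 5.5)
We argue by contradiction via a single exchange, using only the definitions of \jre{} and $\scr_{\cc}$ over the population profile $\prof$. Suppose $W$ maximizes coverage $\scr_{\cc}(W) = \Pr_{A\sim\prof}[\abs{A\cap W}\geq 1]$ but violates Droop \jr{}. Then there is some $\eps>0$ and some $c\notin W$ with
\[
  q \defeq \Pr_{A\sim\prof}\left[A\in\cA_c \wedge \abs{A\cap W}=0\right] \geq \frac{1}{k+1}+\eps > \frac{1}{k+1}.
\]
In other words, Droop \jr{} failing means exactly that some unchosen candidate has uncovered support of at least $\nf{1}{k+1}$. (Taking the intersection over all $\eps>0$, strict $q>\nf{1}{k+1}$ is the relevant condition.)

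The plan is to swap $c$ in for the member of $W$ whose removal costs the least coverage. For each $w\in W$, let
\[
  u_w \defeq \Pr_{A\sim\prof}\left[A\cap W=\{w\}\right]
\]
be the mass covered \emph{only} by $w$. The events $\{A\cap W=\{w\}\}$ are disjoint, and each is contained in the covered event $\{\abs{A\cap W}\ge1\}$. Hence
\[
  \sum_{w\in W} u_w \le \scr_{\cc}(W) \le 1-q,
\]
where the last inequality holds because the $q$-mass voters who approve $c$ but nothing in $W$ are uncovered. Averaging over the $k$ members, some $w^*\in W$ has $u_{w^*}\le \frac{1-q}{k}$.

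Now set $W'\defeq W\setminus\{w^*\}\cup\{c\}$, which is again a $k$-committee since $c\notin W$. Removing $w^*$ uncovers at most $u_{w^*}$ mass. Adding $c$ covers all $q$ mass of voters approving $c$ with $A\cap W=\emptyset$; these voters do not approve $w^*$, so they were not counted among those lost. Therefore
\[
  \scr_{\cc}(W') \geq \scr_{\cc}(W) + q - \frac{1-q}{k},
\]
which is strictly greater than $\scr_{\cc}(W)$ exactly when $q(k+1)>1$, that is, when $q>\nf{1}{k+1}$. This contradicts the optimality of $W$, so no such $c$ exists and $W$ satisfies \jre{} for every $\eps>0$, i.e., Droop \jr{}.

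The one step needing care is the bound on the removal cost. We must charge only the uniquely-covered mass $u_{w^*}$ rather than all voters approving $w^*$, and use the disjointness of these events to get $\sum_w u_w\le 1-q$ rather than the weaker bound $\sum_w u_w\le 1$. The weaker bound would only yield the threshold $\nf{1}{k}$, i.e., Hare \jr{}, so the sharper accounting is what gives the Droop quota.
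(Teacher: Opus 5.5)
Your proof is correct and is essentially the paper's exchange argument. The paper likewise bounds $\sum_{c'\in W}\cvg(c'\mid W\setminus\{c'\})$ (your $u_w$) by $\cvg(W)\le 1-\cvg(c\mid W)<\nf{k}{k+1}$, picks a member whose exclusive coverage is below $\nf{1}{k+1}$, and swaps it for $c$. You instead carry the explicit bound $(1-q)/k$, which gives the same threshold $q>\nf{1}{k+1}$.
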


In the same sense, (Hare) \jr{} is a relaxation of approximately optimal global voter coverage:
\begin{restatable}[]{observation}{cvgharejr}
\label{obs:approx-cvg-hare-jr}
    Let $W^*$ be a Chamberlin-Courant winner for the approval election $E$.
    For any committee $W$, if $\cvg(W) > \cvg(W^*) - \nf{1}{k^2}$, then $W$ satisfies \jr{} for $E$.
\end{restatable}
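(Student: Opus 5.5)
We argue by contraposition and use the same exchange argument that underlies \Cref{obs:cvg-droop-jr}. Suppose $W$ violates \jr{}. Then there is a candidate $c \notin W$ with
\[
\Pr_{A\sim\prof}\left[A\in\cA_c \wedge \abs{A\cap W}=0\right] \ge \frac{1}{k+1}+\frac{1}{k(k+1)} = \frac{1}{k}.
\]
Our goal is to exhibit a committee $W'$ of size $k$ with $\cvg(W') \ge \cvg(W) + \nf{1}{k^2}$. Since $W^*$ is optimal, $\cvg(W^*) \ge \cvg(W')$, and this contradicts $\cvg(W) > \cvg(W^*) - \nf{1}{k^2}$.

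To build $W'$, we swap $c$ in for one member of $W$. For each $w\in W$, let $u_w \defeq \Pr_{A\sim\prof}\left[A\cap W = \{w\}\right]$ be the mass of voters covered uniquely by $w$. These events are disjoint and lie within the covered mass $\cvg(W)$. Hence $\sum_{w\in W} u_w \le \cvg(W)$, so some $w'\in W$ has $u_{w'} \le \cvg(W)/k$.

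Now set $W' \defeq W\setminus\{w'\}\cup\{c\}$. Removing $w'$ uncovers exactly mass $u_{w'}$. Adding $c$ covers every voter who approves $c$ but no member of $W$, which is mass at least $\nf{1}{k}$, since none of these voters were counted in $u_{w'}$ or covered by $W\setminus\{w'\}$. Therefore
\[
\cvg(W') \ge \cvg(W) - \frac{\cvg(W)}{k} + \frac{1}{k}.
\]

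The delicate step is showing this gain is at least $\nf{1}{k^2}$. The coverage of $W$ and the unjustly unrepresented group are disjoint, so $\cvg(W) \le 1 - \nf{1}{k}$. Substituting gives
\[
\frac{1 - \cvg(W)}{k} \ge \frac{1}{k^2},
\]
so $\cvg(W') \ge \cvg(W)+\nf{1}{k^2} > \cvg(W^*)$, a contradiction.

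Two points need checking. The first is that the strict inequality in \Cref{def:jr-eps} makes a violation a non-strict $\ge \nf{1}{k}$; this is exactly what yields the tight $\nf{1}{k^2}$ and the strict contradiction. The second is the degenerate case $\abs{W}<k$, where we simply add $c$ without removing anything.
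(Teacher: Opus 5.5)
Your proof is correct and follows the paper's own exchange argument essentially verbatim. Your unique coverage $u_w$ is exactly the paper's marginal coverage $\cvg(w \mid W\setminus\{w\})$, and both arguments average over $W$ to find a member with marginal coverage at most $\cvg(W)/k \le \nf{1}{k} - \nf{1}{k^2}$. Both then swap in the violating candidate for a net coverage gain of at least $\nf{1}{k^2}$.
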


One benefit of this relaxation from coverage to \jr{} is computational tractability.
The sequential greedy algorithm efficiently finds a (Droop) \jr{} committee in polynomial time, while obtaining better than a multiplicative $(1-\nf{1}{e})$-approximation to the \mkc (i.e. the Chamberlin-Courant) objective is NP-hard \cite{feige98threshold}, and under suitable assumptions this problem remains inapproximable even for small $k$ \cite{manurangsi20tight}.

Computational complexity aside, in light of \Cref{obs:approx-cvg-hare-jr} it is natural to ask whether the sample complexity of \jr{} and the corresponding approximate maximum voter coverage problem coincide or come apart.
Formally, the corresponding additively approximate coverage problem is the following:
\begin{definition}[$\eps$-\MkC]
    \label{def:eps-approx-mkc}
    Given a set system $\mathcal{S}$ over a ground set $\mathcal{U}$ and $k \in \mathbb{N}$,
    let $\mathcal{C} \defeq \binom{\mathcal{S}}{k}$ be the family of all collections of $k$ sets in $\mathcal{S}$.
    Then the $\eps$-approximate Maximum-$k$-Coverage problem is the task of identifying a $C \in \mathcal{C}$ for which
    $\abs{\cup C} \geq \max_{C' \in \mathcal{C}} \abs{\cup C'} - \eps \cdot \abs{\mathcal{U}}$.
\end{definition}

We begin by establishing tight bounds on the sample complexity of this problem.
\begin{theorem}
\label{thm:eps-mkc-sample-complexity}
    The sample complexity of
    $\eps$-\MkC is $\Theta\left(\nf{k}{\eps^2}  \log \abs{\cS} \right)$, provided $\delta = \abs{\cS}^{- O(k)}$.
\end{theorem}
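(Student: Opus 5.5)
Both directions reduce to \Cref{lem:separating-means-from-samples-new}, with each collection $C \in \mathcal{C}$ treated as a hypothesis. Samples are elements $u$ drawn uniformly from $\mathcal{U}$, and the normalized coverage $\abs{\cup C}/\abs{\mathcal{U}}$ is the mean of the Bernoulli variable $\1[u \in \cup C]$. There are $\abs{\mathcal{C}} = \binom{\abs{\cS}}{k} \le \abs{\cS}^k$ hypotheses, so $\log(\abs{\mathcal{C}}/\delta) = \Theta(k \log \abs{\cS})$ whenever $\delta = \abs{\cS}^{-O(k)}$ (and $\delta$ is at least polynomially small in that sense).

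\textbf{Upper bound.} Take the rule that outputs the empirical coverage maximizer over $\mathcal{C}$. By Hoeffding's inequality and a union bound over all $\binom{\abs{\cS}}{k}$ collections, $t = O(\eps^{-2}\log(\abs{\mathcal{C}}/\delta))$ samples make every empirical coverage accurate to within $\eps/2$ simultaneously. Then the empirical maximizer is within $\eps$ of the true optimum. Substituting the count of hypotheses gives $O(\nf{k}{\eps^2}\log\abs{\cS})$. Note that the sources are correlated, not independent, but the union bound does not need independence. This is the sufficiency half of \Cref{lem:separating-means-from-samples-new} in a slightly generalized form.

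\textbf{Lower bound.} This is the main obstacle. The necessity half of \Cref{lem:separating-means-from-samples-new} is stated for independent sources, so I would build set systems that realize roughly $M = \abs{\cS}^{\Omega(k)}$ hypotheses behaving like nearly independent Bernoulli sources with a gap of about $\eps$.

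\emph{Construction.} Partition $\cS$ into $k$ blocks of $\abs{\cS}/k$ sets each. Give each block $j$ its own portion $\mathcal{U}_j$ of the ground set, of weight $1/k$, so coverage is additive across blocks. Inside block $j$, let the sets cover $\mathcal{U}_j$ in equal, essentially disjoint pieces, then plant one set carrying an extra $\eps/k$ of mass, taken from a padding region that no other set covers. An $\eps$-optimal collection must pick the planted set in a constant fraction of blocks. Otherwise the lost mass exceeds $\Omega(k)\cdot \eps/k$, with constants tuned so that this exceeds $\eps$.

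\emph{Information bound.} Each block is a best-arm identification problem among $\abs{\cS}/k$ arms. Only a $1/k$ fraction of samples land in block $j$, and those samples separate the planted set with signal $\eps/k$ relative to mass $1/k$, i.e.\ a relative gap of $\eps$. By \Cref{lem:separating-means-from-samples-new}, succeeding in block $j$ with probability $1-\delta'$ therefore needs $\Omega(\eps^{-2}\log(\abs{\cS}/\delta'))$ samples landing in that block, hence $\Omega(k\eps^{-2}\log\abs{\cS})$ samples in total.

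\emph{Independence and the probability requirement.} The blocks are independent because samples from different $\mathcal{U}_j$ are disjoint events. Succeeding in a constant fraction of blocks simultaneously is therefore required, and the Fano-type bound over the product space already gives $\Omega(k\eps^{-2}\log\abs{\cS})$ from the total information. This route avoids any need for $\delta$ to be tiny. Alternatively, the condition $\delta = \abs{\cS}^{-O(k)}$ can be used directly through a single-block reduction with $\log(1/\delta) = \Theta(k\log\abs{\cS})$.

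\emph{What remains to check.} I would verify the Kullback--Leibler computation in the construction, namely that the per-sample information about block $j$ is $O(\eps^2/k)$ times the $\log$ of the number of arms. I would also handle the requirement that the sets be an honest set system, so that the padding region does not distort the optimum.
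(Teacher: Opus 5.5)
The gap is in your block construction: it makes each block far easier than the hard instance of \Cref{lem:separating-means-from-samples-new}, so the per-block bound you invoke fails. Your upper bound is the paper's argument (empirical maximizer, Hoeffding, union bound over the $\binom{\abs{\mathcal{S}}}{k}$ collections). The skeleton of your lower bound also matches the paper: $k$ disjoint blocks of weight $1/k$, one hidden planted set per block, and $\eps$-optimality forcing the planted set in a constant fraction of blocks.

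\textbf{Why your blocks are too easy.} If the sets of block $j$ cover $\mathcal{U}_j$ in equal, essentially disjoint pieces, each set has coverage about $p = 1/(km')$, where $m' = \abs{\mathcal{S}}/k$. The associated sources are then tiny-mean, low-variance and mutually exclusive, not the constant-variance independent sources of the fact. The necessity half of that fact is a worst-case statement, proved for means $\nf{1}{2}$ versus $\nf{1}{2}+\eps$ in product form. It says nothing about every family with relative gap $\eps$.

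In your instance a sampled element lies in a single set; padding elements have the same approval pattern $\{S_j^*\}$ as the planted set's own piece. So each block is a search for one over-represented category, with probability $p+\Delta$ versus $p$, where $\Delta = c\eps/k$. Swapping the planted index within a block has per-sample KL divergence $\Delta\log(1+\Delta/p)$. This is about $c^2\eps^2 m'/k$ when $\eps m' \lesssim 1$, and about $(c\eps/k)\log(c\eps m')$ otherwise. Either way it is far more than the $O(\eps^2/k)$ you planned to verify. Per-set Bernstein bounds confirm this: $O((k\eps^{-2}/m' + k\eps^{-1})\log\abs{\mathcal{S}})$ samples already recover every planted set. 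That is a factor of order $\min(m', 1/\eps)$ below the target, so your claim of $\Omega(\eps^{-2}\log(\abs{\mathcal{S}}/\delta'))$ samples per block is false for this set system.

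\textbf{What the paper does instead.} The missing idea is to make each block an exact copy of the hard instance behind the fact. Every non-planted set covers exactly half of $\mathcal{U}_j$, and the planted set covers a $\nf{1}{2}+4\eps$ fraction. The incidences have product structure: for each subfamily $T$ of the non-planted sets there are elements lying in exactly the sets of $T$, with a $\nf{1}{2}+4\eps$ share of them inside $S_j^*$. A sample landing in block $j$ is then distributed exactly as $\prod_i \Bern(p_i)$ with one coordinate raised by $4\eps$, so the $\Omega(\eps^{-2}\log m')$ per-block bound genuinely applies.

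With half-size sets you must also check that two picks in one block (covering about $\nf{3}{4}$ of it) plus an empty block loses at least $1/(4k)$. That exceeds $\eps$ whenever $\eps < 1/(4k)$, so $\eps$-optimal collections take one set per block and at least $3k/4$ planted ones.

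Aggregation can then go either of two ways. One is the paper's counting argument: with $t = kt'/3$ samples at most $k/3$ blocks see $t'$ samples, and each remaining block is solved with probability at most $\nf{1}{2}$. The other is a Hamming-ball version of Fano over the $m'^k$ product hypotheses, whose per-sample mutual information is $O(\eps^2)$ in the corrected construction.

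Your fallback via $\log(1/\delta) = \Theta(k\log\abs{\mathcal{S}})$ does not substitute for this. It only covers $\delta = \abs{\mathcal{S}}^{-\Omega(k)}$, whereas the paper proves the bound at $\delta = \nf{1}{2}$, which the hypothesis $\delta = \abs{\mathcal{S}}^{-O(k)}$ permits. It also still needs constant-variance sets.
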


We now spell out the sample complexity implications for the global coverage problem that implies \jr{}.
\Cref{obs:approx-cvg-hare-jr} shows that $\nf{1}{k^2}$-\MkC implies \jr{}, and this is tight: simple instances (such as \Cref{ex:hard-partition} below) imply that if $\eps$-\MkC implies \jr{} then $\eps = O(\nf{1}{k^2})$.
Therefore, for the easiest version of $\eps$-\MkC that implies \jr{} we have:
\begin{corollary}
\label{cor:1/k2-mkc-sample-complexity}
    For any $\delta = \abs{\cS}^{- O(k)}$, the sample complexity of $\nf{1}{k^2}$-\MkC is $\Theta\left(k^5 \log \abs{\mathcal{S}}\right)$.
\end{corollary}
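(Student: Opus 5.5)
This corollary should follow by substituting $\eps = \nf{1}{k^2}$ into \Cref{thm:eps-mkc-sample-complexity}. That gives $\Theta\left(\nf{k}{\eps^2}\log\abs{\cS}\right) = \Theta\left(k \cdot k^4 \log \abs{\cS}\right) = \Theta\left(k^5 \log\abs{\cS}\right)$, and the hypothesis $\delta = \abs{\cS}^{-O(k)}$ carries over unchanged. The only point needing care is that \Cref{thm:eps-mkc-sample-complexity} must hold uniformly for $\eps$ as small as $\nf{1}{k^2}$, i.e.\ that its constants do not depend on how $\eps$ relates to $k$. I would therefore recheck both directions of its proof in this regime.

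For the upper bound, I would treat each of the $\binom{\abs{\cS}}{k} \le \abs{\cS}^k$ collections as a hypothesis whose coverage is a Bernoulli mean. By \Cref{lem:separating-means-from-samples-new} with $m = \abs{\cS}^k$, i.e.\ Hoeffding plus a union bound over collections, $O\left(\eps^{-2}\log(\abs{\cS}^k/\delta)\right) = O\left(\eps^{-2} k\log\abs{\cS}\right)$ samples suffice when $\delta = \abs{\cS}^{-O(k)}$. Here the $\log(1/\delta)$ term is absorbed into $O(k\log\abs{\cS})$. The empirical coverage maximizer is then within $\eps$ of optimal, and this argument places no constraint relating $\eps$ and $k$.

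For the lower bound, I would confirm that the hard family in the proof of \Cref{thm:eps-mkc-sample-complexity} remains valid at $\eps = \nf{1}{k^2}$. That family should be a multiple-hypothesis-testing construction with about $\abs{\cS}^{\Theta(k)}$ hypotheses whose pairwise KL divergence per sample is $O(\eps^2/k)$, so that Fano's inequality gives $\Omega\left((k/\eps^2)\log\abs{\cS}\right)$. I expect this to be the main obstacle. The construction perturbs per-set coverage by roughly $\eps/k$, or adjusts probabilities by $\Theta(\eps)$ spread over $k$ sets, so it needs base probabilities of order $\nf{1}{k}$ with perturbations of order $\eps/k = \nf{1}{k^3}$. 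These must stay valid probabilities, and the failure probability $\delta$ must be small enough that Fano's inequality is non-vacuous. With $\eps = \nf{1}{k^2} \le \nf{1}{k}$ both conditions hold, so the theorem applies.

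Finally, I would note, as the surrounding text does, that $\nf{1}{k^2}$-\MkC implies \jr{} by \Cref{obs:approx-cvg-hare-jr}. This interpretive remark is not needed for the sample-complexity statement itself.
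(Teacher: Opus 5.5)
Your proof matches the paper's: the corollary is just \Cref{thm:eps-mkc-sample-complexity} with $\eps = \nf{1}{k^2}$, and $\log\nf{1}{\delta} = O(k\log\abs{\mathcal{S}})$ is absorbed into the Hoeffding-plus-union-bound upper bound exactly as you describe. Your optional recheck of the lower bound, however, guesses the wrong family and its numbers do not work. The paper uses $k$ disjoint subinstances, each hiding one good set among $\abs{\mathcal{S}}/k$; this needs only $\eps$ below a small constant and $\abs{\mathcal{S}} = k^{1+\Omega(1)}$, the latter being an implicit hypothesis of the corollary. Your stated parameters, $\abs{\mathcal{S}}^{\Theta(k)}$ hypotheses with pairwise per-sample KL $O(\eps^2/k)$, would give $\Omega(k^2\eps^{-2}\log\abs{\mathcal{S}})$ via Fano, which exceeds the upper bound; in fact the maximal pairwise KL is $O(\eps^2)$, and only hypotheses differing in a single subinstance are at $O(\eps^2/k)$.
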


This is therefore the dependence on $k$ to beat.
In the next section we will set out to separate the sample complexity of \jr{} from that of its global coverage analog.

%%%
\subsection{\texorpdfstring{$\eps$}{e}-\MkC from Samples}
We will split the proof of \Cref{thm:eps-mkc-sample-complexity} into two parts.
We start by showing a sample complexity upper bound via the brute-force algorithm for $\eps$-\MkC:

\begin{algorithm}
	\caption{\samplemaxkcover}
	\label{alg:eps-mkc-brute-force}
	\KwIn{Sets $\cS$, $k$, number of samples $t$}
	Sample elements $u^1, \ldots, u^t \sim \cU$ independently and u.a.r.\;
	\Return $\arg\max_{W \in \binom{\cS}{k}}\abs{\{i \in [t]: u^i \in \cup W\}}$
\end{algorithm}

\begin{restatable}[]{lemma}{epsmkcsampleub}
\label{lem:eps-mkc-sample-upper-bound}
    The empirical maximizer brute-force algorithm for $\eps$-\MkC (\cref{alg:eps-mkc-brute-force}) has sample complexity
    $O\left(\nf{1}{\eps^2} \left(k \log \abs{\mathcal{S}} + \log \nf{1}{\delta}\right)\right)$.
\end{restatable}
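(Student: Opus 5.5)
The plan is a uniform-convergence argument over the finite hypothesis class $\binom{\cS}{k}$, combined with a union bound. For each collection $W \in \binom{\cS}{k}$, let $\mu(W) \defeq \abs{\cup W}/\abs{\cU}$ denote its true coverage fraction. Let $\hat\mu(W) \defeq \frac{1}{t}\abs{\{i \in [t] : u^i \in \cup W\}}$ denote its empirical coverage on the sample. Since the $u^i$ are i.i.d. uniform on $\cU$, each indicator $\1[u^i \in \cup W]$ is a $\Bern(\mu(W))$ variable, so $\hat\mu(W)$ is an average of $t$ i.i.d. bounded variables with mean $\mu(W)$.

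First, apply Hoeffding's inequality to a single fixed $W$:
\[
\Pr\left[\abs{\hat\mu(W) - \mu(W)} \geq \nf{\eps}{2}\right] \leq 2\exp\left(-\nf{t\eps^2}{2}\right).
\]
Second, union bound over all $\binom{\abs{\cS}}{k} \leq \abs{\cS}^k$ collections. The probability that some $W$ deviates by at least $\nf{\eps}{2}$ is at most $2\abs{\cS}^k \exp(-t\eps^2/2)$. This is at most $\delta$ once
\[
t \geq \frac{2}{\eps^2}\left(k \ln \abs{\cS} + \ln \nf{2}{\delta}\right) = O\left(\nf{1}{\eps^2}\left(k \log\abs{\cS} + \log \nf{1}{\delta}\right)\right).
\]

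Third, conclude on the good event, where every $W$ satisfies $\abs{\hat\mu(W) - \mu(W)} < \nf{\eps}{2}$. Let $\hat W$ be the empirical maximizer returned by \cref{alg:eps-mkc-brute-force}, and let $W^*$ be a true maximizer. Then
\[
\mu(\hat W) > \hat\mu(\hat W) - \nf{\eps}{2} \geq \hat\mu(W^*) - \nf{\eps}{2} > \mu(W^*) - \eps.
\]
Multiplying through by $\abs{\cU}$ gives $\abs{\cup \hat W} \geq \max_{C'} \abs{\cup C'} - \eps\abs{\cU}$, which is exactly the guarantee of \Cref{def:eps-approx-mkc}. Ties in the argmax are harmless, since the middle inequality holds for any empirical maximizer.

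There is no real obstacle in this argument. The only point needing care is that the union bound runs over $\binom{\cS}{k}$ rather than over $\cS$, which is precisely what produces the factor $k\log\abs{\cS}$. I would also note that a two-sided deviation bound is not necessary: it suffices to control the upper tail for the maximizer and the lower tail for $W^*$. Using the two-sided bound only affects constants.
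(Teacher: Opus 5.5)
Your proposal is correct and is essentially the paper's argument: Hoeffding at deviation $\eps/2$ for each fixed collection, followed by a union bound over the $\binom{\abs{\mathcal{S}}}{k} \le \abs{\mathcal{S}}^k$ collections. The only difference is that the paper bounds just the lower tail of the optimal collection and the upper tail of the $\eps$-bad collections, rather than proving two-sided uniform convergence over all collections; as you observe, this changes only constants.
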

This directly follows from the upper bound portion of \Cref{lem:separating-means-from-samples-new}.

%%%
\subsection{Sample-Hard Instances of \texorpdfstring{$\eps$-\MkC}{e-Max k-Coverage}}

The lower bound is slightly more involved.
It consists of instances which divide the universe into $k$ disjoint equal parts, and hide (in the same fashion as in the lower bound portion of the proof of \Cref{lem:separating-means-from-samples-new}) a distinguished high-coverage set for each part among many sets which provide $\Theta(\nf{\eps}{k})$ less coverage.
In order to provide an $\eps$-approximate solution to $\eps$-\MkC, any sample-based algorithm must then succeed on a constant proportion of the disjoint parts.
This essentially follows the construction of the general agnostic PAC learning lower bound of \citet{simon96general} \cite[Theorem 6.8]{shwartz14understanding}.

\begin{restatable}[]{lemma}{epsmkcsamplelb}
\label{lem:eps-mkc-sample-lower-bound}
	Suppose we take $\delta = \nf{1}{2}$.
	For any $k\geq 2$ and for any $\abs{\cS}= k^{1 + \Omega(1)}$, the sample complexity of $\eps$-\MkC is $\Omega\left( \nf{k}{\eps^2} \log \abs{\cS} \right)$.
\end{restatable}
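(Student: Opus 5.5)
We follow the construction sketched before the statement: a direct-sum of $k$ independent hard instances of the $\eps$-maximum Bernoulli source problem from \Cref{lem:separating-means-from-samples-new}, one per part, combined with a Simon-style averaging argument.

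\textbf{Construction.} Partition the universe $\cU$ into $k$ disjoint blocks $\cU_1,\ldots,\cU_k$ of equal mass $\nf{1}{k}$. For each block $j$ there are $M \defeq \lfloor \abs{\cS}/k \rfloor$ candidate sets $S_{j,1},\ldots,S_{j,M} \subseteq \cU_j$. An unknown index $i_j^* \in [M]$, chosen uniformly and independently per block, marks a set covering a $\nf12 + \gamma$ fraction of $\cU_j$. Every other set covers a $\nf12$ fraction, where $\gamma \defeq c\eps$ for a suitable constant $c$.

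To make the Bernoulli coordinates within a block independent, we take $\cU_j$ to be the cube $\{0,1\}^M$ with a product measure. A point $u$ lies in $S_{j,i}$ iff $u_i = 1$, with bias $\nf12+\gamma$ on coordinate $i_j^*$ and $\nf12$ elsewhere. The universe is then large, so for the finite set-system version we discretize with fine enough granularity.

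A near-optimal committee picks exactly one set per block. Concentrating several sets in one block gains at most $\nf{1}{4k}$ of that block relative to a fresh block's $\nf{1}{2k}$, so extra sets are wasted. The optimum is $\sum_j \nf1k(\nf12+\gamma) = \nf12+\gamma$. A committee picking one set per block, with a wrong (non-distinguished) set in $b$ blocks, loses exactly $b\gamma/k$.

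\textbf{Reduction to per-block success.} To output an $\eps$-approximate solution with $\gamma = 4\eps$, the algorithm must avoid wrong choices in all but $k/4$ blocks. We first justify that committees not of the one-per-block form can be converted without loss, or lose even more. The natural bound is that the union of $r\ge2$ sets in one block covers at most $1-2^{-r}(1-O(\gamma))$ of it, which is worse than distributing the sets; this exchange step needs a short calculation.

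Hence success with probability $\nf12$ implies that the expected number of blocks where the distinguished set is identified is at least $\Omega(k)$.

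\textbf{Per-block lower bound.} With $t$ samples, block $j$ receives about $t/k$ samples; by Chernoff, at most $2t/k$ with high probability. Conditioned on the block assignments, the samples in different blocks are independent, and the priors on the $i_j^*$ are independent.

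Identifying $i_j^*$ in block $j$ is then exactly the problem of finding the $\gamma$-biased source among $M$ Bernoulli sources, given $2t/k$ samples. By the lower-bound part of \Cref{lem:separating-means-from-samples-new}, or directly by Fano's inequality, this succeeds with probability at most a small constant unless $t/k = \Omega(\gamma^{-2}\log M)$.

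By linearity of expectation, the expected number of correctly identified blocks is then at most $k/8$, contradicting the previous step. Since $\abs{\cS} = k^{1+\Omega(1)}$, we have $\log M = \Omega(\log\abs{\cS})$, which gives $t = \Omega(\nf{k}{\eps^2}\log\abs{\cS})$.

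\textbf{Main obstacle.} The main obstacle is the per-block step: the fact as stated is for high probability $1-\delta$, but we need that constant success probability already requires $\Omega(\gamma^{-2}\log M)$ samples. We would prove this via Fano's inequality. The KL divergence between the hypotheses "source $i$ is biased" and "source $i'$ is biased" is $O(t'\gamma^2)$ for $t'$ samples, so the mutual information between $i_j^*$ and the data is $O(t'\gamma^2)$. This forces success probability at most $(O(t'\gamma^2)+1)/\log M$.

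A second, minor point is ensuring the algorithm cannot exploit samples from other blocks. This holds because the blocks' hidden indices and data are independent given the sample counts.
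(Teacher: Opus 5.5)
Your proposal is correct and follows essentially the paper's proof: the same direct sum of $k$ product-structured subinstances, each with one set carrying a $4\eps$ coverage advantage, the same reduction to identifying the good set in at least $3k/4$ blocks, and the same Fano-based per-block bound, using $\log(\abs{\cS}/k) = \Omega(\log\abs{\cS})$. The differences are cosmetic. You aggregate over blocks by linearity of expectation, whereas the paper pigeonholes the sample counts and then uses an independence/median argument; your route is, if anything, cleaner. You also explicitly flag the exchange step for committees that place several sets in one block, which the paper asserts without proof.
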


Finally, it is worth asking whether these instances that are difficult from the perspective of $\nf{1}{k^2}$-\MkC are difficult for \jr{}.
For the construction that proves \Cref{lem:eps-mkc-sample-lower-bound}, the answer is no.
Trivially, this is because all sets $S \in \cS$ in this construction have size less than $\abs{\cU}/k$.
But even if we were to use only $\nf k 2$ sub-instances so that each set has size $\abs{S} \geq \abs{\cU}/k$, any choice $C$ of $\nf k 2$ subsets that includes one set from each sub-instance would satisfy \jr{}, even though all such $C$ fail $\eps$-\MkC.
This is because although globally such $C$ have poor coverage, no unchosen $S \not \in C$ have marginal coverage $\abs{\cU}/k$.

\section{Sample-Efficient Algorithms for the \jr{} Family}
\label{sec:upperbounds}

The most natural way to construct sample-efficient rules for the \jr{} family is to adapt existing rules to make their decisions based only on sampled ballots.
We begin by briefly presenting the sample complexity of the natural sample versions of known rules.

\subsection{Adaptations of Existing Rules}
\label{sec:greedyrules}

When designing a sample-based variant of a coverage-based rule, we must ensure that enough samples are used so that any inaccuracy in which candidate or set of candidates is chosen falls safely within the slack provided by the difference between the Hare and Droop quotas.

By way of illustration, suppose we implement the greedy rule for \jr{} \cite{aziz17justified}, i.e. the greedy algorithm for \MkC.
We need only choose additional candidates if the maximum marginal coverage available is at least $\nf{n}{k}$.
If, in each step of this rule, the chosen candidate (set) provides marginal coverage at least $n/(k+1)$, then the output satisfies \jr{}.
Therefore, it is sufficient to choose a candidate approximating the maximum coverage to (additive) extent $\nf n k -\nf{n}{k+1} \approx \nf{n}{k^2}$.
We denote the sample-based implementation of this rule by \samplegreedy (\Cref{alg:sample-greedy}).

This is the sequential greedy approximation to maximizing coverage (\chambcou{}), and both satisfy \jr{}.
For \ejrp{}, the \pav{} rule can be seen as analogous to \chambcou{} in that it is maximizing a related harmonic welfarist monotone submodular function, and \gjcr{} is a sequential greedy generalization of \greedy{}.
Remarkably, sample-based implementations of all of these rules have the same dependence on $k$ as the sample-based implementation of \chambcou{} (\Cref{alg:eps-mkc-brute-force}):

\begin{restatable}[]{observation}{standardrulecomplexity}
\label{obs:complexity-of-standard-rules}
    The natural sample-based implementations of \samplegreedy{} (\Cref{alg:sample-greedy}), \samplepav{} (\Cref{alg:sample-pav}), and \samplegjcr{} (\Cref{alg:sample-gjcr}) for \jr{} and \ejrp{} all use $\tilde \Theta \left(k^5 \log m \right)$ samples, where $\tilde \Theta$ suppresses $\log k$ and $\log \nf 1 \delta$ factors.
\end{restatable}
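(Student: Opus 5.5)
The observation has two directions for each of the three rules. The upper bound follows from uniform concentration plus the slack between Hare and Droop quotas. The lower bound comes from exhibiting, for each natural implementation, instances where fewer samples make it err.

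\textbf{Upper bounds.} For \samplegreedy{} I treat each of the $k$ rounds as an instance of \Cref{lem:separating-means-from-samples-new}. Conditioned on the current partial committee $W$, the marginal coverage of each $c \notin W$ is a Bernoulli mean. The rule must pick a candidate within $\nf1k-\nf1{k+1} = \Theta(\nf1{k^2})$ of the best, so take $\eps = \Theta(\nf1{k^2})$ and a union bound over $m$ candidates and $k$ rounds. Reusing one sample across rounds, I instead union bound over all pairs $(W', c)$ that can arise: at most $m^{k}$ partial committees, which contributes the extra factor $k$ inside the log. This gives $O(k^4 (k\log m + \log \nf1\delta))$, i.e.\ $\tilde O(k^5\log m)$.

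For \samplegjcr{} the same argument applies to the quantities $\Pr[A\in\cA_c \wedge \abs{A\cap W}<\ell]$ over all reachable $(W,c,\ell)$. The required additive accuracy is $\ell\cdot\Theta(\nf1{k^2}) \ge \Theta(\nf1{k^2})$, so the bound is the same.

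For \samplepav{}, I use the standard fact (as in \citet{kraiczy23properties}) that a committee whose \pav{} score is within additive $\Theta(\nf1{k^2})$ of optimal satisfies \ejrp{}. This is an exchange argument: a violating $c$ with a swap gains at least $\Theta(\nf1{k^2})$ of normalized score. \pav{} scores are bounded by $H_k = O(\log k)$, so Hoeffding over all $\binom mk$ committees with accuracy $\Theta(\nf1{k^2})$ needs $O(k^4\log^2 k\,(k\log m+\log\nf1\delta))$ samples.

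\textbf{Lower bounds.} These match the brute-force coverage lower bound (\Cref{lem:eps-mkc-sample-lower-bound}) in spirit, but must be tailored to how each rule uses the samples. The construction has about $k$ disjoint blocks, each of mass about $\nf1k$, and each block hides a planted candidate whose coverage exceeds that of $\Theta(m/k)$ decoys by $\Theta(\nf1{k^2})$. The decoys are built so that a rule which commits to a decoy in a block causes a JR violation later.

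Concretely, I would make the block structure such that choosing a decoy leaves a group of size at least $\nf1k$ uncovered. The decoys are nested with a small overlap, so that the leftover mass accumulates above the Hare quota after $\Theta(k)$ wrong choices. Each rule picks the empirical maximizer in each block, so it errs in a block with constant probability unless it has $\Omega(k^2\cdot\nf1{\eps'^2}\cdots)$ samples. More carefully: the per-block signal is a $\Theta(\nf1{k^2})$ difference in means, and each block sees only a $\nf1k$ fraction of samples. The effective gap is therefore relative $\nf1k$ on $\nf tk$ samples, which needs $t=\Omega(k\cdot k^2\cdot \ldots)$. I would calibrate so that requiring success in most of the $k$ blocks, with error $\nf1k$ per block, yields $\Omega(k^5\log m)$ up to logs.

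\textbf{Main obstacle.} The main obstacle is this lower bound for the specific natural implementations, not for all rules, since \Cref{thm:sample-complexity-ejrp-ub-informal} shows $\tilde O(k^4)$ is achievable. The argument must exploit that these rules break near-ties by empirical estimates at resolution $\nf1{k^2}$ and that errors accumulate across $k$ rounds. I would first try a single hard level: one block where the rule must distinguish candidates at gap $\nf1{k^2}$ among candidates of mass $\nf1k$, with variance $\nf1k$. This already gives $\Omega(k^3\log m)$. The additional factor $k^2$ from needing $\log(m^k)$ and from accumulating $k$ errors is where I expect to have to be careful, possibly accepting $\tilde\Theta$ slack.
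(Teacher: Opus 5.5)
Your upper-bound half is sound and is essentially the paper's proof (\Cref{prop:sample-greedy-correct,prop:sample-pav-correct,prop:sample-gjcr-correct}). It uses Hoeffding plus a union bound at accuracy $\Theta(k^{-2})$, with range $H_k$ and the single-swap lemma \Cref{prop:pav-score-to-ejrp-guarantee} for \samplepav{}. Your sample-reuse variant for the greedy rules union bounds over all $\le m^k$ reachable partial committees, and it gives the same bound, as the paper itself remarks. Strictly, though, it analyzes a different implementation from \Cref{alg:sample-greedy,alg:sample-gjcr}, which draw fresh samples each round.

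The gap is the lower-bound half. The paper proves no failure lower bound for these rules. Its $\tilde\Theta$ only records how many samples the pseudocode draws: $O(k)$ rounds of $\tilde\Theta(\eps^{-2}\log\nf{m}{\delta})$ fresh samples for \samplegreedy{} and \samplegjcr{}, and $\Theta(\eps^{-2}H_k^2(k\log m+\log\nf{1}{\delta}))$ for \samplepav{}, with $\eps=\Theta(k^{-2})$. The matching bound is therefore immediate by inspection. If you instead read the statement as ``these rules fail with $o(k^5\log m)$ samples,'' your program does not go through, and for \samplegreedy{} and \samplegjcr{} its target is false.

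\begin{itemize}
\item Your own calculation gives $\Omega(k^3\log m)$ per decision. With fresh samples per round there is no extra $\log(m^k)$ factor, so $k$ rounds give at most $k^4\log m$, and the missing factor is never produced.
\item The disjoint-blocks construction is the one the paper notes in \Cref{sec:coverage} fails to force \jr{} violations. \jr{} is local, so mistakes in separate blocks do not combine into a single unchosen candidate with uncovered support $\ge\nf{1}{k}$. The ``nested decoys'' that would make them combine are asserted, not built; such a candidate would span many blocks and compete in every greedy round.
\item The quantities compared near the decision threshold have mean $O(\nf{\ell}{k})$. Bernstein (or multiplicative Chernoff) with deviation $\Theta(\nf{\ell}{k^2})$ then shows that $O(k^3\log\nf{mk}{\delta})$ fresh samples per round already suffice. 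Hence \samplegreedy{} (for \jr{}) and \samplegjcr{} (for \ejrp{}) succeed with $\tilde O(k^4\log m)$ samples in total, so no $\Omega(k^5\log m)$ failure bound can hold for them.
\item For \samplepav{}, your sketch never engages with the global argmax at all.
\end{itemize}

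Drop the lower-bound program and read the $\Theta$ off the prescribed sample sizes.
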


Pseudocode for these sample-based rules is provided in \Cref{app:upper-bounds-known}.

This coincidence can be attributed to the fact that independent samples are used for each of the $k$ rounds of marginal coverage estimation in the greedy rules.
This is to avoid correlations due to conditioning; otherwise the error in the marginal coverage estimates in step $i+1$ may be correlated with the choice made in step $i$, and the estimates in step $i+1$ may not concentrate properly.
Sampling enough to union bound over $m$ things $k$ times then takes the same $\nf{1}{\eps^2} \cdot k \log m$ samples as sampling enough to union bound over $m^k$ things once.
This tradeoff is evocative of the design of approval-continuous randomized rules \cite{kehne25robust}, as is the apparent relative difficulty of designing sample-based implementations of share-based rules such as Phragm\'en's method or the Method of Equal Shares.

Though striking, this coincidence for welfare-based proportional rules is not total:
\begin{observation}
\label{obs:complexity-of-ls-pav}
    The natural sample-based implementation of \samplelspav{} (\Cref{alg:sample-ls-pav}) uses $\tilde \Theta \left(k^6 \log m\right)$ samples to satisfy \ejrp{}, where $\tilde \Theta$ suppresses $\log \nf{1}{\delta}$ and $\log k$ factors.
\end{observation}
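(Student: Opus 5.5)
The plan is to analyze \samplelspav{} (\Cref{alg:sample-ls-pav}) as a local search in which each pivot is decided from a fresh batch of samples. I will bound three quantities: the slack a pivot must detect, the number of pivots, and the samples each pivot needs. The natural implementation starts from an arbitrary committee $W$. In each round it draws a fresh batch of $t_0$ ballots and estimates $\scr_{\pav}(W\setminus\{c'\}\cup\{c\})-\scr_{\pav}(W)$ for all $k(m-k)$ pairs $(c,c')$. It pivots whenever the empirical gain is at least some threshold $\eps$, and otherwise halts.

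\textbf{Step 1 (slack needed for \ejrp{}).} I will adapt the standard swap argument (\citet{aziz18complexity}, \citet{brill23robust}) to profiles in $\Delta(2^\cC)$. Suppose $W$ violates \ejrp{} through a candidate $c$ and a level $\ell$. Then the mass of voters approving $c$ with $|A\cap W|<\ell$ is at least $\ell/(k+1)+\ell/(k(k+1))=\ell/k$. Each such voter gains at least $1/\ell$ from adding $c$, so adding $c$ contributes at least $1/k$. Averaging the removal losses over $c'\in W$ shows that the cheapest removal costs at most $1/k - \Omega(1/k^2)$, since the Hare/Droop gap is of order $1/k^2$. So some swap gains $\Omega(1/k^2)$. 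Conversely, if no true swap gains more than $\eps=c_0/k^2$ for a suitable constant $c_0$, then $W$ satisfies \ejrp{}. It follows that estimating all swap gains to additive accuracy $\Theta(1/k^2)$ suffices for correctness at termination.

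\textbf{Step 2 (number of iterations).} Every executed pivot has true gain at least $\eps/2=\Omega(1/k^2)$, conditioned on accurate estimates. Since $\scr_{\pav}\le H_k=O(\log k)$, there are at most $O(k^2\log k)$ pivots. \textbf{Step 3 (samples per iteration).} A voter's swap gain lies in $[-1,1]$, so by Hoeffding's inequality and a union bound over $km$ pairs and $T=O(k^2\log k)$ rounds, each round needs $t_0=O(k^4\log(kmT/\delta))$ fresh samples. Fresh batches are required because $W$ depends on earlier samples. The total is $T\cdot t_0=\tilde O(k^6\log\nf m\delta)$.

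\textbf{Step 4 (tightness), the main obstacle.} Both factors must be shown necessary for this natural implementation. For the per-round factor, I will invoke the lower bound of \Cref{lem:separating-means-from-samples-new}. I will take an instance where the only improving swap has gain $\Theta(1/k^2)$ and is hidden among $m$ near-improving swaps, so any fixed-threshold test needs $\Omega(k^4\log m)$ samples per round. For the iteration count, I will build a PAV instance, for instance $\Theta(k)$ groups with nested approvals, in which starting from a poor committee the local search performs $\Omega(k^2)$ pivots, each gaining only $\Theta(1/k^2)$ up to $\log k$ factors. The hard part is making one instance require many small-gain pivots while each round also faces many competing near-ties. I would first try a disjoint union of these two gadgets and let $\tilde\Theta$ absorb the $\log$ losses.
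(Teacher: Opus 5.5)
Your Steps 1--3 are essentially the paper's proof (\Cref{prop:sample-ls-pav-correct}). \Cref{prop:pav-score-to-ejrp-guarantee} gives the $\Theta(1/k^2)$ swap slack. The bound $\pavsc\le H_k$ caps the number of $\Omega(\eps)$-gain pivots at $3H_k/\eps=\tilde O(k^2)$. Each step draws a fresh batch and applies Hoeffding with a union bound over the at most $km$ swaps and all steps, for $\tilde O(k^4\log\nf m\delta)$ ballots per step.

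Two small points on these steps. First, in Step 1 the removal losses have to be measured in $W\cup\{c\}$. Summed over $c'\in W$ they are then at most $1-G$, where $G\ge 1/k$ is the gain from adding $c$, and this gives a cheapest loss of at most $1/k-1/k^2$. (The paper instead averages over all $k+1$ members of $W\cup\{c\}$ and gets $1/(k+1)$.) Losses measured in $W$ only average to at most $1/k$, and then your argument yields no slack when $\ell\ge 2$. Second, cap the loop at $T$ steps in advance, as \Cref{alg:sample-ls-pav} does. Otherwise your sample count is a bound that holds only on the good event, not a deterministic budget.

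Step 4 is where you leave the paper, and as a proof it is incomplete. The paper proves no instance-based necessity. Its $\tilde\Theta$ is the budget hard-wired into \Cref{alg:sample-ls-pav}: $\tilde\Theta(\eps^{-2}\log\nf m\delta)$ fresh ballots per step for up to $3(\log k+1)/\eps$ steps, which at $\eps=1/k^2$ is $\tilde\Theta(k^6\log m)$, read directly off the pseudocode.

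If you want the stronger claim that this implementation genuinely consumes or needs that many samples, your sketch does not establish it.
\begin{itemize}
\item The per-step sample size is a fixed parameter of the algorithm. Invoking \Cref{lem:separating-means-from-samples-new} would only show that a smaller per-step size makes the threshold test unreliable, which is a statement about a different algorithm.
\item That lower bound is for constant-variance Bernoulli sources. Your hidden swaps would need constant per-voter variance of the gain; otherwise a Bernstein-type analysis, as in the proof of \Cref{thm:large-covg-ub}, needs fewer samples.
\item The piece that actually matters is an instance on which the local search performs $\tilde\Omega(k^2)$ pivots before breaking. That is exactly what you leave as ``the hard part''.
\item A disjoint union does not obviously provide it. Splitting the electorate rescales every gain relative to the fixed threshold, and a hidden improving swap disappears once taken instead of recurring in each round.
\end{itemize}
Either read $\tilde\Theta$ as the paper does and drop Step 4, or treat the matching lower bound as an unproved construction rather than part of the argument.
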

Despite this strictly worse dependence on $k$, \samplelspav{} nevertheless merits mention as a source of inspiration and point of contrast for our sample-efficient rule.

%%%
\subsection{An \texorpdfstring{$\tilde O(k^4)$}{O(k\^4)}-Sample Rule}
\label{sec:cs-pav}

We now present an improvement on the standard adaptations of existing rules.
It follows from first running \samplepav{} with a specified error strictly larger than the $\eps=\nf{1}{k^2}$ necessary to guarantee \ejrp{}, then using its output committee as the starting point for a sequence of multi-candidate pivots of decreasing size, inspired by \samplelspav{}.
We have seen in \Cref{obs:complexity-of-standard-rules} and \Cref{obs:complexity-of-ls-pav} that individually \samplepav{} and \samplelspav{} require $\tilde O (k^5)$ and $\tilde O(k^6)$ samples to satisfy $\ejrp{}$, respectively.
By carefully interpolating between these rules, we can improve the sample complexity by roughly a factor of $k$.

\begin{algorithm}[h]
	\caption{\samplecspav{}}
	\label{alg:sample-cs-pav}
	\KwIn{Candidates $C$,  $k$, tolerance $\eps$, failure probability $\delta$}
	\KwOut{committee $W$ satisfying $\ejrpe{}$ w.p. $1-\delta$}
	$\eps_P \gets \eps k^{1/2}/(12\log_2^3 k ) $ \;
	Sample $\cA = (A^1, \ldots, A^t) \sim \cP^t$ i.i.d. for $t = \nf{1}{2} (\log k + 1)^2 \cdot \eps_P^{-2} \cdot (k \log \abs{C} + \log \nf{1}{\delta} + 1)$ \label{line:sample-cs-pav-first-samples} \;
	$W \gets \arg\max_{W \in \binom{C}{k}} \pavsc(W; \cA)$ \label{line:pav-step-output-committee} \;
	\For{$r \in \{2, 3, \ldots, \lceil \log_2 \log_2 k \rceil \}$ \label{line:sample-cs-pav-rounds}}{
		$\ell_r \gets \lfloor k^{2^{-r}}\rfloor $ \;
		$\eps_r \gets \eps \cdot 6^r k^{2^{-r}}/(12\log_2^3 k )$ \;
		\For{$\ell_r $ steps \label{line:sample-cs-pav-steps} }{
			Sample $\cA \sim \cP^t$ i.i.d. for $t =18 (\log k + 1)^2 \cdot \eps_r^{-2} \cdot (\ell_r \log \abs{C} + (\ell_r + 2) \log k + \log \nf{1}{\delta} + 1)$ \label{line:sample-cs-pav-second-samples} \;
			$(X^*, Y^*) \gets \arg\max_{(X,Y) \in \binom{W}{\ell_r} \times \binom{C }{\ell_r}} \pavsc(W\setminus X \cup Y; \cA) - \pavsc(W; \cA)$ \label{line:sample-cs-pav-pivot} \;
			\eIf{$\pavsc(W\setminus X^* \cup Y^*; \cA) - \pavsc(W; \cA) \geq \frac{2}{3} \cdot \eps_r$}{
			    $W \gets W\setminus X^* \cup Y^*$ \;
			}{
			\textbf{break} \label{line:sample-cs-pav-break} \;
			}
		}
	}
	\Return $W$ \;
\end{algorithm}

In \samplecspav (\Cref{alg:sample-cs-pav}) we start by calling \samplepav with a coarse but suitably sample-efficient larger tolerance $\eps_P$ in order to provide a warm start.
We then execute $\log \log k$ rounds of pivots which begin by exchanging large portions of the committee in pursuit of commensurately large score improvements, and end just short of single-candidate \samplelspav steps.
These pivots of decreasing size efficiently converge upon an approximately \pav{}-score-maximal committee.
A direct combination of \samplepav and \samplelspav is perhaps more illustrative but slightly less sample-efficient; for details see \samplewslspav in \Cref{app:upper-bounds-pav-pivot}.

We present the algorithm pseudocode and analysis in terms of more general $\eps$.
\begin{restatable}{theorem}{cs-pav-correct}
\label{thm:cs-pav-correctness}
%\begin{theorem}
	\samplecspav{} (\Cref{alg:sample-cs-pav}) satisfies \ejrpe{} with probability at least $1-\delta$.
	Furthermore, it does so using $O\left( \eps^{-2} \cdot \log^9 k \log\log k \cdot \log \nf{m}{\delta}\right)$ samples.
%\end{theorem}
\end{restatable}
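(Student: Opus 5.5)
The plan is to reduce \ejrpe{} to a certificate of approximate local optimality of the \pav{} score, and then show that the algorithm reaches such a certificate while staying within the sample budget.

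\textbf{Step 1: a violation gives a large pivot gain.} First I would prove a deterministic lemma. If $W$ violates \ejrpe{} for a candidate $c\notin W$ and a level $\ell$, then $W$ admits an $\ell'$-pivot with $\pavsc$ gain of order $\eps\, \ell'/k$ for every $\ell'\le k$. For $\ell'=1$ this is the standard swap-counting argument of \citet{aziz18complexity} and \citet{brill23robust}. One averages the change from adding $c$ and removing a random $c'\in W$. The violating group contributes at least $\ell\cdot(\frac{1}{k+1}+\eps)\cdot\frac1\ell$ from adding $c$, while removals cost on average at most $\frac{1}{k}\sum_A \frac{|A\cap W|}{|A\cap W|}$-type terms, which total at most $\frac{1}{k+1}$. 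This leaves a gain of at least $\eps$ for the best single swap, up to constants. For larger $\ell'$, I would instead compare $W$ with an (approximately) \pav{}-optimal committee $W^*$. The optimality gap $G=\pavsc(W^*)-\pavsc(W)$ can be split, via a random matching between $W\setminus W^*$ and $W^*\setminus W$ and submodularity of the harmonic score, into blocks of size $\ell'$. This yields an $\ell'$-pivot gaining at least roughly $G\ell'/(k\log k)$ and lets the gap shrink geometrically.

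\textbf{Step 2: potential argument over the schedule.} The warm start is handled by the upper bound of \Cref{lem:separating-means-from-samples-new}, applied to the $m^k$ committees. PAV scores are bounded by $H_k\le \log k+1$, and $t$ is chosen accordingly. With probability $1-\delta/2$ this gives $G\le \eps_P$. In round $r$ each sample batch is fresh, so, conditioned on the current $W$, a union bound over the at most $m^{\ell_r}k^{\ell_r}$ pivots (at most $k$ steps overall, hence the $(\ell_r+2)\log k$ term) ensures every empirical pivot gain is within $\eps_r/3$ of its true value. Consequently, every accepted pivot has true gain at least $\eps_r/3$. Since $G$ never increases and starts at most at $\eps_P$, or at most at the error left by the previous round, round $r$ accepts at most $3\eps_{r-1}/\eps_r\lesssim k^{2^{-r}}=\ell_r$ pivots. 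This is the reason for the choices $\ell_r=k^{2^{-r}}$ and $\eps_r\propto 6^r k^{2^{-r}}$: the for loop ends with a break rather than by running out of steps. At a break, no $\ell_r$-pivot has true gain above $\eps_r$. By Step 1, this forces $G\lesssim \eps_r k\log k/\ell_r$, which feeds the next round.

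\textbf{Step 3: the final round and the sample count.} In the last round $\ell_r$ is constant and $\eps_r\approx\eps\,\mathrm{polylog}^{-1}$, so the absence of a good small pivot plus Step 1 (contrapositive at $\ell'=1$, scaled) gives \ejrpe{}. For the sample count, I sum $t$ over phases: the warm start costs $\eps_P^{-2}k\log m\log^2 k=\eps^{-2}\log^8 k\log m$, and each round costs $\ell_r\cdot\eps_r^{-2}\ell_r\log(mk)\log^2k\approx \eps^{-2}\log^8k\log m$ since $\ell_r^2/\eps_r^2$ is constant in $k$. Multiplying by the $\log\log k$ rounds and absorbing $\log\frac1\delta$ gives the stated bound up to the extra $\log k$. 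The main obstacle is the multi-candidate version of Step 1. One needs a clean inequality showing that the best $\ell$-pivot captures an $\ell/(k\,\mathrm{polylog}\,k)$ fraction of the optimality gap, and the constants $6^r$ must exactly absorb the loss incurred when iterating it across rounds. I would first try the random-partition averaging argument, using the exact decrement formula for removing several elements from a harmonic score.
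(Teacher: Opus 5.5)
You have the warm start, the fresh-sample union bounds over at most $(mk)^{\ell_r}$ pivots per step, the sample count, and the final appeal to the single-swap lemma (\Cref{prop:pav-score-to-ejrp-guarantee}) as in the paper. The gap is in showing that every round $r\ge 3$ ends at the break rather than by running through all $\ell_r$ steps. You track the global gap $G=\pavsc(W^*)-\pavsc(W)$. That relies on your multi-candidate Step 1: a break certificate (no $\ell$-pivot gains $\eps_r$) is supposed to force $G$ down.

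That step is false. Take $W=\{a_1,\dots,a_k\}$ and $W^*=\{c_1,\dots,c_k\}$. Put mass $w$ on each ballot $\{a_i,c_j\}$ and mass $u$ on the ballot $W^*$. Replacing any $s$ members of $W$ by $s$ members of $W^*$ changes the \pav{} score by $H_s u-\tfrac12 s(k-s)w$. Once $H_k u<\tfrac12(k-1)w$, this is negative for every $1\le s<k$, yet $G=H_k u>0$. So no $\ell'$-pivot with $\ell'<k$ captures any fraction of the gap. Since $u$ can be arbitrarily small, this holds even for committees already within $\eps_P$ of optimal.

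Even granting your bound, the numbers do not close. $G\lesssim \eps_{r-1}k\log k/\ell_{r-1}$ allows roughly $\ell_r\cdot k\log k/\ell_{r-1}\gg\ell_r$ accepted pivots in round $r$, not the $3\eps_{r-1}/\eps_r$ you use. The global bound you do have, $G\le\eps_P$, handles round $2$. In round $3$ it only caps accepted pivots at about $k^{3/8}/72$, which exceeds $\ell_3=\lfloor k^{1/8}\rfloor$ for large $k$.

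The missing idea is that pivots compose (\Cref{lem:combined-pivots}), which makes the argument purely local. Suppose round $r$ accepted $\ell_r$ pivots, each of size at most $\ell_r$ with true gain at least $\eps_r/3$. Relative to the committee at the start of the round, they form a single pivot of size at most $\ell_r^2\le\ell_{r-1}$ and gain at least $\ell_r\eps_r/3>\eps_{r-1}$, using $2\lfloor k^{2^{-r}}\rfloor>k^{2^{-r}}$.

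This contradicts the break certificate of round $r-1$, which rules out every pivot of size at most $\ell_{r-1}$ gaining $\eps_{r-1}$. The pivot search allows $Y$ to intersect $W$, so smaller pivots are covered too. For $r=2$ the contradiction is instead with the $\eps_P$-optimality of the warm start.

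This is why the schedule is $\ell_r=\lfloor k^{2^{-r}}\rfloor$: the condition $\ell_r^2\le\ell_{r-1}$ makes the composed pivot fit inside the previous round's certificate. With it, you need no global gap bound after the warm start and no multi-candidate analogue of the swap lemma.
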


Recalling that \ejrp{} corresponds to \ejrpe{} for $\eps = \nf{1}{k^2}$, we have our main result (\Cref{thm:sample-complexity-ejrp-ub-informal}).

To prove this result we will make use of the following lemma, which is a modest generalization of the argument that \pav{} satisfies \ejr{} \cite[Theorem 1]{aziz18complexity} and indeed also \ejrp{} \cite[Proposition 5]{brill23robust}.
A proof appears in \Cref{app:upper-bounds}.
\begin{restatable}[]{proposition}{pavtojerp}
\label{prop:pav-score-to-ejrp-guarantee}
	Consider an election $E$ and a committee $W$.
	If for all $c' \in W$ and $c \not \in W$ we have $\pavsc(W) \geq \pavsc(W \setminus \{c'\} \cup \{c\}) - \eps$, then $W$ satisfies \ejrpe{}.
\end{restatable}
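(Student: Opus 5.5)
We argue by contraposition, adapting the standard swap-averaging argument that \pav{} satisfies \ejr{}. Suppose $W$ violates \ejrpe{}. Then there are a candidate $c \notin W$ and an $\ell \in [k]$ such that the group $S \defeq \{A \in \cA_c : \abs{A\cap W} < \ell\}$ has mass $\prof(S) \geq \ell\left(\frac{1}{k+1}+\eps\right)$. We will show that some swap $W \setminus \{c'\} \cup \{c\}$ with $c' \in W$ raises the \pav{} score by more than $\eps$, which contradicts the hypothesis.

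\textbf{Step 1 (gain from adding $c$).} Adding $c$ to $W$ gives each voter in $S$ marginal gain $\nf{1}{(\abs{A\cap W}+1)} \geq \nf{1}{\ell}$, because $\abs{A\cap W}\le \ell-1$. Voters outside $S$ gain nothing negative. So the gain is at least $\prof(S)/\ell \geq \frac{1}{k+1}+\eps$.

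\textbf{Step 2 (average loss from removing $c'$).} For the committee $W\cup\{c\}$, the loss from removing $c' \in W$ is $\Pr_A\left[c' \in A\right]\cdot\nf{1}{\abs{A\cap(W\cup\{c\})}}$, taken in expectation. Summed over $c' \in W$, each voter $A$ contributes $\abs{A\cap W}/\abs{A\cap (W\cup\{c\})} \le 1$. Voters in $S$ approve $c$, so each of them contributes $\abs{A\cap W}/(\abs{A\cap W}+1) \le \frac{\ell-1}{\ell}$. Hence the total loss over all $c'$ is at most
\[
1 - \prof(S)\cdot\tfrac{1}{\ell}.
\]
Averaging over the $k$ choices of $c'$, the average loss is at most $\frac{1}{k}\left(1-\prof(S)/\ell\right)$.

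\textbf{Step 3 (combine).} Let $g \defeq \prof(S)/\ell \ge \frac{1}{k+1}+\eps$. The swap $W\setminus\{c'\}\cup\{c\}$ changes the score by the gain of adding $c$ minus the loss of removing $c'$ from $W\cup\{c\}$. So the average net change over $c'$ is at least
\[
g - \frac{1-g}{k} = \frac{(k+1)g - 1}{k} \ge \frac{(k+1)\eps}{k} > \eps.
\]
Some $c'$ therefore achieves net gain $>\eps$. This gives $\pavsc(W\setminus\{c'\}\cup\{c\}) > \pavsc(W)+\eps$, contradicting the hypothesis. The strict inequality in \Cref{def:ejrp-eps} is consistent with this, since a violation means mass $\ge \ell(\frac{1}{k+1}+\eps)$ and the hypothesis permits gains of at most $\eps$.

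\textbf{Main obstacle.} The delicate point is Step 2: the loss bound must be computed with respect to $W\cup\{c\}$, not $W$, so that voters in $S$ pick up the $\frac{\ell-1}{\ell}$ savings. The bookkeeping must also treat $\prof$ as a distribution rather than counting voters. Everything else is linearity of expectation.
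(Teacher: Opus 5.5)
Your proof is correct and is essentially the paper's argument: argue by contraposition, add the violating candidate $c$ to gain at least $\frac{1}{k+1}+\eps$, bound the removal loss by averaging, and combine. The one difference is that the paper averages the loss over all $k+1$ members of $W\cup\{c\}$ and so only obtains a swap with gain $\ge \eps$, which does not strictly contradict the hypothesis; your average over the $k$ members of $W$, with the $\frac{\ell-1}{\ell}$ savings for voters in $S$, gives gain at least $\frac{k+1}{k}\eps > \eps$, which is exactly the strict inequality the contrapositive needs.
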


\begin{lemma}[Pivot combination]
\label{lem:combined-pivots}
	Given a committee $W=W_0$, if there exists a sequence of $t$ pivots $\{(X_i, Y_i)\}_{i \in [t]}$ which each improve the \pav{} score from $W_{i-1}$ to $W_i$ by $\eps_i$, then the pivot $(X, Y)$ improves the score of $W$ by $\eps = \sum_i \eps_i$, where $X \defeq W_0 \setminus W_t$ and $Y \defeq W_t \setminus W_0$.
	Moreover, $|X| = |Y| \leq \sum_i |X_i|$.
\end{lemma}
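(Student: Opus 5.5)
The argument is a telescoping sum together with a set-difference count.

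First I fix notation. A pivot $(X_i,Y_i)$ with $X_i \subseteq W_{i-1}$ and $Y_i \cap W_{i-1} = \emptyset$, $|X_i|=|Y_i|$, produces $W_i = W_{i-1}\setminus X_i \cup Y_i$, and by hypothesis $\pavsc(W_i) - \pavsc(W_{i-1}) = \eps_i$. The \pav{} score is a function of the committee alone. Summing over $i$ therefore telescopes:
\[
\pavsc(W_t) - \pavsc(W_0) = \sum_{i\in[t]} \eps_i = \eps .
\]
With $X \defeq W_0\setminus W_t$ and $Y \defeq W_t\setminus W_0$, I check the identity $W_0\setminus X \cup Y = (W_0\cap W_t)\cup(W_t\setminus W_0) = W_t$. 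Hence the single pivot $(X,Y)$ sends $W$ to exactly $W_t$, and it improves the score by $\eps$. If some $\eps_i$ are negative, the statement still holds with $\eps$ as the signed sum.

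For the cardinality claims, note that every pivot preserves committee size, so $|W_0|=|W_t|=k$. Then
\[
|X| = k - |W_0\cap W_t| = |Y| .
\]
To bound $|X|$, I take an element $c\in X$, so $c\in W_0$ and $c\notin W_t$. Let $i$ be the first index with $c\notin W_i$. Then $c\in W_{i-1}\setminus W_i$, and $W_{i-1}\setminus W_i \subseteq X_i$. So $X \subseteq \bigcup_i X_i$, and a union bound gives $|X|\le \sum_i |X_i|$.

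There is no real obstacle here. The only care needed is the set identity $W_0\setminus X\cup Y=W_t$ and the observation that candidates removed and later re-added cancel. That cancellation only makes $|X|$ smaller, which is consistent with the inequality.
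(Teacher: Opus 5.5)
Your proof is correct and follows the paper's argument: the total improvement comes from summing (telescoping) the per-step score changes, and $|X| \le \sum_i |X_i|$ because every candidate in $X$ leaves the committee at some step $i$ and so lies in $X_i$. You are slightly more explicit than the paper, which attributes $|X| = |Y|$ simply to the definition of a pivot, since you verify $W_0 \setminus X \cup Y = W_t$ and derive $|X| = |Y|$ from $|W_0| = |W_t| = k$.
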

\begin{proof}
	This follows from the definition of a pivot.
	For pivots $(X, Y)$ and $(X',Y')$, suppose that
	\begin{align*}
		\pavsc(W\setminus X \cup Y) - \pavsc(W) &\geq \eps, \\
		\pavsc(W\setminus X \cup Y\setminus X' \cup Y') - \pavsc(W\setminus X \cup Y) &\geq \eps'.
	\end{align*}
	Then directly summing these two inequalities yields
	\begin{align*}
		\pavsc(W\setminus X \cup Y\setminus X' \cup Y') - \pavsc(W) &\geq \eps + \eps'.
	\end{align*}
	Applying this fact inductively over all $i \in [t]$ gives the improvement guarantee.

	The fact that $|X|=|Y|$ follows from the definition of a pivot, and $|X| \leq \sum_i |X_i|$ because each candidate in $X$ must appear in some $X_i$.
\end{proof}

With this in hand, we are ready to analyze the correctness and sample-efficiency of \samplecspav.
\begin{proof}[Proof of \Cref{thm:cs-pav-correctness}]
	The sample complexity claim follows directly from the way in which \Cref{alg:sample-cs-pav} is structured.
	By our choice of $\eps_P= \eps k^{1/2}/(12\log_2^3 k )$, the \pav{} step uses $\tilde O(\eps^{-2})$ samples (\lineref{line:sample-cs-pav-first-samples}).
	How many samples are used by each of the $\log\log k$ rounds of converging pivot steps?
	Each of the $\ell_r$ steps within the round takes $\tilde O(\eps_r^{-2}) = \tilde O(\eps^{-2} k^{-2^{1-r}} \cdot \ell_r)$ samples (\lineref{line:sample-cs-pav-second-samples}).
	By our choice of $\ell_r$, this is $\tilde O(\eps^{-2})$ samples per round overall.
	Summing over the \pav{} start and the $\log\log k$ rounds implies the stated sample bound.

	The main task is to prove correctness.
	This amounts to showing two things: first, that with high probability both the approximate \samplepav{} step and all of the subsequent pivot steps are sufficiently good with respect to their impact on $\pavsc(W)$ for the purposes of constructing the running committee $W$.
	Second, that if all steps are sufficiently good, then \samplecspav terminates with a committee that satisfies \ejrpe{}.
%	We will tackle these in order.

	We begin by defining our failure events.
	Let the event $\cE_{\pav{}}$ denote the failure event for the first step, and let $\{\cE_{r, s}\}$ denote the failure events for the pivot we consider in the $s$th step of round $r$.
	In particular, let $\cE_{\pav{}}$ be the event that the warm start committee $W$ (\lineref{line:pav-step-output-committee}) has true \pav{} score $\pavsc(W) < \max_{W'} \pavsc(W') - \eps_P$, where again we choose $\eps_P \defeq \eps \cdot k^{1/2}/(12\log_2^3 k )$.
	For the per-step failures, we take $\{\cE_{r, s}\}$ to be the event that either (a) a pivot $(X,Y)$ exists that would yield a \pav{} score improvement of at least $\eps_r$, but we do not make a pivot; or (b) we make a pivot but the pivot $(X,Y)$ we choose yields \pav{} score improvement less than $\eps_r / 3$.
	Here we will take $\eps_r \defeq \eps \cdot 6^r\cdot k^{2^{-r}} / (12\log_2^3 k)$.

	Our goal  will be to show that the probability of a failure over the course of the algorithm is at most $\delta$.
	Upper bounding this event will follow the analysis for \samplepav (\Cref{prop:sample-pav-correct}) and \samplelspav (\Cref{prop:sample-ls-pav-correct}).

	Provided this failure probability is appropriately bounded, the second step is to show that if neither 	$\cE_{\pav{}}$ nor any of the $\cE_{r,s}$ occur, then \samplecspav{} satisfies \ejrpe{}.
	To this end, suppose that neither $\cE_{\pav{}}$ nor any of the $\cE_{r,s}$ hold.
	Then first, the committee $W$ chosen by the initial \pav{} warm start (\lineref{line:pav-step-output-committee}) satisfies $s_W \geq s_{W^*} - \eps_P$, and so the sum total of all \pav{} score improvements from subsequent pivots must be at most $\eps_P = \eps\cdot k^{1/2}/(12\log_2^3 k ) $. (Here we use the shorthand $s_W = \pavsc(W)$.)

	Let us now go round-by-round (\lineref{line:sample-cs-pav-rounds}).
	Assuming no failures, we will use \Cref{lem:combined-pivots} to prove inductively that at the end of round $r$, no $\eps_r$-good pivots of size $\ell_r$ are available.
	We can exit the round either via the break condition (\lineref{line:sample-cs-pav-break}) or by exhausting the loop (\lineref{line:sample-cs-pav-steps}).
	By our non-failure assumption, if we exit via the break condition then it is because no $\eps_r$-improving pivots of size $\ell_r$ are possible for our current $W$.
	We will argue that (assuming non-failure) we necessarily exit the round via the break condition.
	For how many consecutive steps can a good pivot exist?
	Consider the base case of $r=2$.
	Provided $\cE_{\pav{}}$ does not hold, $\pavsc(W)$ is within $\eps_P = \eps k^{1/2}/(12\log_2^3 k ) $ of optimal.
	If we then make $\ell_2$ consecutive pivots of size $\ell_2$, then by the non-failure assumption and by \Cref{lem:combined-pivots} they will comprise a single pivot $(X,Y)$ of $\ell_2^2 \leq \sqrt{k}$ candidates and true \pav{} score improvement at least
	\[
		\ell_2 \cdot \nf{1}{3} \cdot \eps_2
		= \lfloor k^{2^{-2}} \rfloor \cdot \nf{1}{3} \cdot  \eps \cdot 6^2 \cdot  k^{2^{-2}}/(12\log_2^3 k )
		> \eps \cdot  k^{1/2} /(12\log_2^3 k )
		= \eps_P.
	\]
	Since the starting committee $W$ was within $\eps_P$ of optimal, such a combined pivot cannot exist.
	Hence we exit round $r=2$ via the break condition (\lineref{line:sample-cs-pav-break}), with a committee for which no $ \eps_2$, $\ell_2$ pivots are possible.

	For general $r$, again suppose that the algorithm does not exit round $r$ via the break condition but rather exhausts the loop; by the non-failure assumption and by \Cref{lem:combined-pivots} the resulting $\ell_r$ pivots together comprise a single pivot $(X,Y)$
	of at most $\ell_r^2 = \lfloor k^{2^{-r}} \rfloor^2 \leq \lfloor k^{2^{1-r}} \rfloor = \ell_{r-1}$ candidates and true \pav{} score improvement at least
	\[
		\ell_r \cdot \nf{1}{3} \cdot \eps_r
		= \lfloor k^{2^{-r}} \rfloor \cdot \nf{1}{3} \cdot \eps \cdot  6^r \cdot  k^{2^{-r}}/(12\log_2^3 k )
		> \eps \cdot 6^{r-1} \cdot k^{2^{-(r-1)}} /(12\log_2^3 k )
		= \eps_{r-1},
	\]
	because $2\lfloor k^{2^{-r}} \rfloor > k^{2^{-r}}$ for all $r$.
	Since the starting committee $W$ admitted no $\eps_{r-1}$-improving pivots of at most $\ell_{r-1}$ candidates by the inductive hypothesis, such a combined pivot cannot exist.
	Hence the algorithm exits every round $r$ via the break (\lineref{line:sample-cs-pav-break}).

	What does this imply for the last round?
	At the end of round $r = \lceil \log_2 \log_2 k \rceil$, the resulting committee $W$ does not admit any pivot of size at most $\ell_r = 2$ that offers an $\eps_r$ improvement to the \pav{} score.
	Since
	\[
		\eps_r = \eps \cdot 6^{\lceil \log_2 \log_2 k \rceil} \cdot  k^{2^{-\lceil \log_2 \log_2 k \rceil}}/(12\log_2^3 k )
		\leq \eps \cdot 6^{1 + \log_2 \log_2 k} \cdot  k^{2^{-\log_2 \log_2 k }}/(12\log_2^3 k )
		=\eps,
	\]
	in particular no two-candidate pivots offer an $\eps$-improvement to the \pav{} score of $W$.
	Since \lineref{line:sample-cs-pav-pivot} considers pivots $(X,Y)$ for which $Y$ intersects $W$, it also quantifies over smaller pivots; in particular, at this point no one-candidate pivots offer an $\eps$-improvement to the \pav{} score of $W$ either.
	By \Cref{prop:pav-score-to-ejrp-guarantee}, the resulting $W$ therefore satisfies \ejrpe{}.

	It remains only to bound the likelihood of failure.
	We begin with $\cE_{\pav{}}$.
	Observe that if $\cE_{\pav{}}$ holds then either (i) the $\pavsc$-maximizing committee $W^*$ has sample score $\hat s_{W^*} \leq \pavsc(W^*) - \eps_P$, or (ii) at least one of the low-scoring committees $W$ with true score $\pavsc(W) \leq \pavsc(W^*) - 2\eps_P$ has sample score $\hat s_{W} \geq \pavsc(W^*) - \eps_P$.
	Let $s^* \defeq \max_{W} \pavsc(W)$ for convenience.
	Then by Hoeffding, the probability of (i) over the samples $\cA = (A^1, \ldots, A^t)$ is
	\begin{equation}
		\probover{\cA \sim \cP^t}{\hat s_{W^*} \leq s^* - \eps_P}
		\leq \prob{t \cdot \hat s_{W^*} \leq t\cdot \expect{\hat s_{W^*}} - t\cdot \eps_P}
		\leq \exp\left(- \frac{2\eps_P^2 t^2 }{t H_k^2 } \right) = \exp\left(- \frac{2 \eps_P^2 t }{H_k^2 } \right),		 \notag
	\end{equation}
	where $H_k \defeq \sum_{j=1}^k \nf{1}{j}$ is the $k$th harmonic number, and a single sample's impact on $t \cdot \hat s_W$ is between $-H_k$ and $H_k$.
	Similarly, for any low-score $W$ the probability of (ii) is at most
	\begin{equation}
		\probover{\cA\sim \cP^t}{\hat s_{W} \geq s^*  - \eps_P}
		\leq \prob{t \cdot \hat s_{W} \geq t \cdot \expect{\hat s_W} + t\cdot \eps_P}
		\leq \exp\left(- \frac{2\eps_P^2 t}{H_k^2} \right).		 \notag
	\end{equation}
	Union bounding over all $W$ and introducing a failure probability of at most $\nf\delta 2$ yields
	\[
		\probover{\cA\sim \cP^t}{\cE_{\pav{}}}
		\leq \binom{m}{k} \cdot \exp\left(-\frac{ 2\eps_P^2 t}{H_k^2}\right)
		\leq \exp\left(k \log m - \frac{ 2\eps_P^2 t}{H_k^2}\right)
		\leq \nf{\delta}{2}.
	\]
	Substituting
	$H_k \leq \log k + 1$ and rearranging, we find this is satisfied if
	\[
		t \geq \nf{1}{2} (\log k + 1)^2 \cdot \eps_P^{-2} \cdot \left( k \log m + \log \nf{1}{\delta} + 1 \right).
	\]
	This holds for our chosen number of samples (\lineref{line:sample-cs-pav-first-samples}).

	With our remaining failure probability budget of $\nf{\delta}{2}$, we now turn to the events $\{\cE_{r,s}\}$.
	Within each round $r$, for each of the $\ell_r$ constituent steps $s$, the number of pivots under consideration is $\binom{m}{\ell_r} \times \binom{k}{\ell_r} \leq (mk)^{\ell_r}$.

	What is the failure probability within each step?
	According to our goals we can identify the sufficient conditions for success that (a) no $\eps_r$-good pivots appear worse than $\nf{2\eps_r}{3}$, and (b) no $\nf{\eps_r}{3}$-bad pivots appear better than $\nf{2\eps_r}{3}$.
	For fixed $W$ and a pivot $q=(X,Y)$, as in \lineref{line:sample-cs-pav-pivot} let
	\[
		\hat s_{XY} \defeq \pavsc(W\setminus X \cup Y; \cA) - \pavsc(W; \cA)
	\]
	denote the empirical estimate of the pivot's $\pavsc$ impact on $W$.
	In the first case, for a $\eps_r$-good pivot we can use Hoeffding to bound the probability of the failure-necessary event (a):
	\begin{equation}
		\probover{\cA\sim \cP^t}{\hat s_{XY} \leq \eps_r - \nf{\eps_r}{3} }
		\leq  \prob{t \cdot \hat s_{XY} \leq t \cdot \expect{\hat s_{XY}} - t \cdot \nf{\eps_r}{3} }
		\leq \exp\left(- \frac{2 \eps_r^2 t^2/9}{t (2H_k)^2} \right)
		= \exp\left(- \frac{ \eps_r^2 t}{18 H_k^2} \right).		 \notag
	\end{equation}
	We can similarly use Hoeffding to upper bound the probability of a $\nf{\eps_r}{3}$-bad pivot appearing at least $\nf{2\eps_r}{3}$-good.
	Union bounding over both this and the $\nf{\eps_r}{3}$-bad pivots, we have that the probability of failure within a step is at most
	\[
		\sum_{\eps_r\text{-good } q} \prob{q \text{ seems } \nf{2\eps_r}{3} \text{ bad}} + \sum_{\nf{\eps_r}{3}\text{-bad } q} \prob{q \text{ seems } \nf{2\eps_r}{3} \text{ good}}
		\leq  (mk)^{\ell_r} \cdot \exp\left(- \frac{\eps_r^2 t}{18 H_k^2} \right)
		\leq \nf{\delta'}{\ell_r},
	\]
	where $\delta'$ is a per-round failure probability.

	To make sure our failure is at most $\nf{\delta}{2}$ over all $\log_2\log_2 k$ rounds, we set $\delta' = \delta / (2\log_2\log_2 k )$, recall that $H_k \leq (\log k + 1)$, and rearrange to obtain
	\[
		t
		\geq 18 \cdot \eps_r^{-2} \cdot (\log k + 1)^2 \left( \ell_r \log mk + \log\nf 1 \delta + \log( 2  \ell_r \log_2 \log_2 k )\right).
	\]
	Since the number of samples we choose on \lineref{line:sample-cs-pav-second-samples} satisfies this bound, our combined failure probability over all rounds is at most $\nf{\delta}{2}$.

	This shows the total failure probability of \Cref{alg:sample-cs-pav} is at most $\delta$ and concludes the proof.
\end{proof}

It is also worth noting the guarantee provided by \Cref{thm:cs-pav-correctness} for relaxations of \jr{}.
Recall (\Cref{def:jr-eps}) that for $\alpha < 1$, the relaxed representation axiom $\alpha$-\ejrp{} equals \ejrpe{} for $\alpha \defeq (\frac{k}{k+1} + k\eps)^{-1}$.

\begin{restatable}[]{corollary}{relaxedejrpub}
	\label{cor:sample-complexity-ejrp-relaxed-ub}
	For any $\alpha < 1$, the sample complexity of $\alpha$-\ejrp{} is $\tilde O(k^{2})$.
\end{restatable}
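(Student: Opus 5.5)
The plan is to instantiate \Cref{thm:cs-pav-correctness} with the $\eps$ that corresponds to $\alpha$. By definition, $\alpha$-\ejrp{} coincides with \ejrpe{} for $\alpha = (\frac{k}{k+1} + k\eps)^{-1}$. Solving gives $\eps = \frac{1}{k}\left(\frac{1}{\alpha} - \frac{k}{k+1}\right)$. For fixed $\alpha<1$ this equals $\frac{1}{k}\left(\frac{1}{\alpha} - 1 + \frac{1}{k+1}\right)$, which is at least $\frac{1}{k}\left(\frac{1}{\alpha}-1\right) = \Omega(\nf{1}{k})$, with the constant depending only on $\alpha$. Here $\nf{1}{\alpha}-1>0$ is treated as a constant, as the statement's $\tilde O$ implicitly does.

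With this $\eps$, run \samplecspav{}. \Cref{thm:cs-pav-correctness} guarantees \ejrpe{}, and hence $\alpha$-\ejrp{}, with probability at least $1-\delta$. It uses $O(\eps^{-2}\log^9 k \log\log k \log\nf{m}{\delta})$ samples. Substituting $\eps^{-2} = O\!\left(k^2/(\nf{1}{\alpha}-1)^2\right)$ yields $\tilde O(k^2 \log\nf{m}{\delta})$, which is the claimed $\tilde O(k^2)$ once polylogarithmic factors in $k$, the dependence on $\log\nf m\delta$, and the $\alpha$-dependent constant are suppressed.

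The only step needing care is the algebra relating $\alpha$ and $\eps$, together with checking that the sample bound of \Cref{thm:cs-pav-correctness} holds for general $\eps$ and not just $\eps=\nf{1}{k^2}$. That theorem is explicitly stated for general $\eps$. Its internal parameter choices ($\eps_P$, $\eps_r$) scale linearly in $\eps$, so no further argument is required.

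One could also note a simpler alternative. With $\eps=\Theta(\nf1k)$, even \samplepav{} alone would need roughly $\tilde O(k\cdot k^2)$ samples. \samplecspav{} therefore gives the sharper $\tilde O(k^2)$, and that is the rule I would cite.
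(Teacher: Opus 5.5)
Your proposal is correct and takes essentially the same route as the paper. The paper writes $\alpha = \frac{1}{1+\beta}$ to get $\eps = \frac{1}{k}\bigl(\beta + \frac{1}{k+1}\bigr) = \Omega(k^{-1})$, which is your expression $\eps = \frac{1}{k}\bigl(\frac{1}{\alpha} - \frac{k}{k+1}\bigr)$; both proofs then substitute this $\eps$ into the $O(\eps^{-2}\log^9 k \log\log k \log\nf{m}{\delta})$ bound of \Cref{thm:cs-pav-correctness}.
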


To put this in context, consider the distinct problem of distinguishing a \emph{single} candidate with support $\probover{A \sim \cP}{c\in A} = \nf{2}{k}$ from a candidate $c'$ with support $\probover{A \sim \cP}{c'\in A} = \nf{1}{2k}$.
By \Cref{lem:separating-means-from-samples-new}, this has sample complexity $\Theta(k^2\log \nf{m}{\delta})$---nearly as large as our upper bound on that of $\alpha$-\jr{}.

As we will shortly see, this suffices to establish that $\alpha$-\jr{} is strictly easier than \jr{} from the perspective of sample complexity.

\section{Lower Bounds}
\label{sec:lowerbounds}

We approach our main lower bound construction indirectly.
We begin by demonstrating that \jre{}, and therefore the Droop-proximate versions of all axioms in the \jr{} family, require many samples to satisfy in the limit as $\eps$ approaches $0$.
For this, we will require a few definitions.
\begin{definition}[Total Variation distance]
	\label{def:tv-dist}
	For two distributions $\cP$ and $\cQ$ over a common ground set $\Omega$ and event $\sigma$-algebra $\cF$, the \emph{total variation distance} is $\dtv(\cP, \cQ) \defeq \sup_{F \in \cF} \abs{\cP(F) - \cQ(F)}$.
\end{definition}

\begin{definition}[Kullback-Leibler divergence]
	\label{def:kl-divergence}
	For two distributions $\cP$ and $\cQ$ over a common discrete sample space  $\Omega$, the \emph{Kullback-Leibler divergence} is given by $\dkl(\cP, \cQ) \defeq \sum_{x \in \Omega} \cP(x) \log\frac{\cP(x)}{\cQ(x)}$.
\end{definition}

A standard approach for establishing sample complexity lower bounds is to identify a family of instances for which the set of feasible algorithmic outputs is pairwise disjoint, but for which the task of distinguishing between the instances in the family requires many samples.
To this end, consider a family of \emph{partition profiles} $\cI =\{\cI_1, \ldots, \cI_{k+1}\} $.
For each partition profile $\cI_i$, we have sufficiently large support $\cI_i(\{j\}) = \frac{1}{k+1} + \eps$ for all $j \neq i$ to guarantee representation according to \jre{}, and $\cI_i(\{i\}) = \frac{1}{k+1} - k \eps$.
The profile $\cI_i$ is depicted in \Cref{fig:partition-all-large}.

This family of profiles is useful for us because on the one hand, the TV distance between each pair is small: we have $d_{TV}(\cI_i, \cI_j) = (k+1)\eps$.
But on the other hand, their committees satisfying \jre{} are disjoint: \jre{}$(\cI_i) = \{ [k+1] \setminus \{i\}\}$, while \jre{}$(\cI_j) = \{ [k+1] \setminus \{j\}\}$.
Therefore for this family distinguishing which $\cI_i$ the samples are coming from reduces to finding a \jre{} committee for the underlying unknown population, and so hypothesis testing lower bounds similar to \Cref{lem:separating-means-from-samples-new} apply.

\begin{figure}
\centering

\def\blockheight{.65 cm}
\def\largeblockwidth{1.2 cm}
\def\smallblockwidth{.9 cm}
\def\blockgap{-0.2 cm}

\begin{tikzpicture}[
    block/.style={
        draw,
        line width=0.8pt,
        minimum height=\blockheight,
        align=center,
        text width=\largeblockwidth,
        outer sep=0pt,
        fill=blue!10,
        rounded corners=2mm
    },
    smallblock/.style={
        draw,
        line width=0.8pt,
        minimum height=\blockheight,
        align=center,
        text width=\smallblockwidth,
        outer sep=0pt,
        rounded corners=2mm
    },
    labelstyle/.style={
        below=0.2cm of #1,
        anchor=north,
        font=\footnotesize,
        align=center,
    }
]

	% Block A_1
	\node[block] (A1) at (0,0) {$A_1$};
	\node[labelstyle=A1, text width=\largeblockwidth] {$ \frac{1}{k+1} + \epsilon $};

	% Block A_2
	\node[block] (A2) at ([xshift=\largeblockwidth+\blockgap]A1.east) {$A_2$};
	\node[labelstyle=A2, text width=\largeblockwidth] {$ \frac{1}{k+1} + \epsilon $};

	% Ellipses 1
	\node at ([xshift=0.7cm]A2.east) {\dots};

	% Block A_i
	\node[smallblock] (Ai) at ([xshift=\largeblockwidth+\blockgap+\largeblockwidth+\blockgap]A2.east) {$A_i$};
	\node[labelstyle=Ai, text width=\largeblockwidth] {${\frac{1}{k+1} - k \epsilon} $};

	% Ellipses 2
	\node at ([xshift=0.6 cm]Ai.east) {\dots};

	% Block A_{k+1}
	\node[block] (Akplus1) at ([xshift=\smallblockwidth+\largeblockwidth+\blockgap]Ai.east) {$A_{k+1}$};
	\node[labelstyle=Akplus1, text width=\largeblockwidth] {$ \frac{1}{k+1} + \epsilon $};
\end{tikzpicture}

\caption{The partition profile $\cI_i$. Here $\cC = [k+1]$, each approval set is $A_{j} = \{j\}$, and $\prof_i(A_j)$ is below each bloc.}
\label{fig:partition-all-large}
\end{figure}

\begin{restatable}[]{theorem}{smallepslb}
	\label{thm:small-eps-lower-bound}
	There exists a constant $c$ such that for all $k \geq 2$, and $\delta \leq \nf{1}{3}$, and $\eps \leq k^{-2}/8$, no sample-based rule $(t,\delta)$-satisfies \jre{} for $t \leq c\eps^{-2}k^{-3}$.
	That is, the sample complexity of \jre{} is $\Omega(\eps^{-2}k^{-3})$.
\end{restatable}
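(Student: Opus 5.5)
The plan is a two-point (Le Cam) argument on the partition profiles $\cI_1,\cI_2$ from \Cref{fig:partition-all-large}. The argument has four steps: show the feasible outputs are disjoint, reduce to a testing lower bound, bound the KL divergence, and combine.

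\textbf{Disjoint feasible sets.} I first verify that \jre{}$(\cI_i)=\{[k+1]\setminus\{i\}\}$. Every committee has size $k$ out of $k+1$ candidates, so it omits exactly one candidate $j$. The only voters approving $j$ are the bloc $\{j\}$, and none of them are covered.
\begin{itemize}
\item If $j\neq i$, this bloc has mass $\frac1{k+1}+\eps$, which is not $<\frac1{k+1}+\eps$. So \Cref{def:jr-eps} is violated.
\item If $j=i$, the uncovered mass is $\frac1{k+1}-k\eps<\frac1{k+1}+\eps$, so the committee is feasible.
\end{itemize}
(The profile is valid since $k\eps\le \frac1{8k}<\frac1{k+1}$.) Hence the feasible sets for $\cI_1$ and $\cI_2$ are disjoint.

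\textbf{Reduction to testing.} Suppose a rule $\bar f$ $(t,\delta)$-satisfies \jre{} with $\delta\le\nf13$. Take the event $F=\{\cA:\bar f(\cA)=[k+1]\setminus\{1\}\}$. Then $\cI_1^t(F)\ge\nf23$, and by disjointness $\cI_2^t(F)\le\nf13$. Therefore $\dtv(\cI_1^t,\cI_2^t)\ge\nf13$.

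\textbf{KL bound.} Next I upper bound this distance via Pinsker and tensorization:
\[
\dtv(\cI_1^t,\cI_2^t)\le\sqrt{\tfrac{t}{2}\dkl(\cI_1,\cI_2)}.
\]
The two profiles agree except on the blocs $\{1\}$ and $\{2\}$, where the masses $p=\frac1{k+1}-k\eps$ and $q=\frac1{k+1}+\eps$ are swapped. So
\[
\dkl(\cI_1,\cI_2)=(p-q)\log\frac pq\le\frac{(p-q)^2}{p},
\]
using $\log x\le x-1$ (applied to $q/p$, with $p<q$). Here $|p-q|=(k+1)\eps$. The hypothesis $\eps\le k^{-2}/8$ gives $k\eps\le\frac1{8k}$, so $p\ge\frac1{k+1}-\frac1{8k}\ge\frac{1}{2(k+1)}$ for $k\ge2$. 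Hence
\[
\dkl(\cI_1,\cI_2)\le 2(k+1)^3\eps^2=O(k^3\eps^2).
\]

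\textbf{Conclusion.} Combining the two bounds gives $\nf13\le\sqrt{t\,(k+1)^3\eps^2}$, i.e.\ $t\ge \frac{1}{9(k+1)^3\eps^2}\ge c\,\eps^{-2}k^{-3}$ for a suitable constant $c$ (e.g.\ $c=\nf1{244}$, using $(k+1)^3\le\frac{27}{8}k^3$ for $k\ge2$). This contradicts $t\le c\eps^{-2}k^{-3}$ for smaller $c$.

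The main obstacle is the KL estimate: the bloc masses must stay bounded below by $\Omega(1/k)$, which is exactly what the assumption $\eps\le k^{-2}/8$ ensures. One could use Fano over all $k+1$ profiles to gain an extra $\log k$ factor, but the two-point argument already suffices for the stated bound.
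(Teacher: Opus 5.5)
Your proof is correct and takes essentially the same route as the paper: a two-point argument on $\cI_1,\cI_2$ with disjoint feasible committees, then Pinsker's inequality with KL tensorization, then the bound $\dkl(\cI_1\|\cI_2)\le 2(k+1)^3\eps^2$. The differences are cosmetic: you get $\dtv(\cI_1^t,\cI_2^t)\ge \nf13$ directly from the worst-case guarantee instead of bounding the Bayes error under a uniform prior on $\{1,2\}$, and you bound the KL term via $\log x\le x-1$ rather than $\log\frac{1}{1-x}\le 2x$, which yields explicit constants.
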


This instance family and proof approach suffice to show that the sample complexity of \jre{} increases at least quadratically in $\nf{1}{\eps}$, and that therefore Droop \jr{} is unboundedly large.
However it gives a particularly uninspiring lower bound for (Hare) \jr{}.
This is because for $\eps = (k(k+1))^{-1}$, the size of the small group in $\cI_i$ is $\cI_i(A_i) = \frac{1}{k + 1} - \frac{k}{k(k+1)} = 0$, meaning candidate $i$ has no support.
In this case, the coupon collector's problem tells us that distinguishing $\cI_i$ and $\cI_j$ requires only $\tilde O(k)$ samples.

Can this construction be saved?
If we decrease the gap between the largest and smallest blocs from $(k+1) \eps$ down to $\approx \eps$, then we can still constrain the behavior of a \jr{} rule by enforcing that a \emph{single} distinguished candidate be part of any \jr{} committee.
This new instance family $\cJ$ is illustrated in \Cref{fig:partition-all-small}.
Retracing the proof of \Cref{thm:small-eps-lower-bound} with this smaller gap yields an $\Omega(\eps^{-2}k^{-1})$ lower bound on the number of samples required to distinguish $\cJ_i$ from $\cJ_j$, which is $\Omega(k^3)$ for \jr{}.

However in replacing instance family $\cI$ with $\cJ$, we have introduced a new problem: producing a \jr{} committee for unknown $\cJ_i$ no longer requires distinguishing $\cJ_i$ from $\cJ_j$; each subset of instances $\cJ_1, \ldots, \cJ_\ell$ has $k-\ell$ \jr{} committees in common.
Worse yet, a uniformly random committee satisfies \jr{} for $\cJ_i$ with probability $1-\nf{1}{k+1}$.
Nevertheless, we can overcome these obstacles and derive a meaningful lower bound by taking a different proof approach.

%%%%
\subsection{An \texorpdfstring{$\tilde\Omega(k^3)$}{Omega(k\^3)} Lower Bound for the JR Family}
\label{sec:lowerbounds-k3}

We proceed by instead making all but one bloc too small to receive a guarantee.
We let these partition profiles be denoted $\cJ_1, \ldots \cJ_{k+1}$, and construct them as in \Cref{fig:partition-all-small}, and denote this family of instances by $\cJ$.
For \jr{}, this instance family is described in \Cref{ex:hard-partition}.

\begin{example}
\label{ex:hard-partition}
	Given $\eps$ and $k$, partition the support distribution $\cP$ into one bloc $B_i$ of size $\frac{1}{k+1} +\eps$ and $k$ blocs of $B_j$ of size $\frac{1}{k+1} - \nf{\eps}{k}$.
	The difference between these support sizes is $\eps \cdot \frac{k+1}{k}$.
	Note that any collection of $k$ candidates that satisfies \jr{} must include one supported by $B_i$.
	Let this instance be $\cJ_i$, and choose $i \sim [k+1]$ uniformly at random to form the randomized instance $\cJ$.
\end{example}

\begin{figure} %[h]
\centering
\def\blockheight{.65 cm}
\def\largeblockwidth{1.05 cm}
\def\smallblockwidth{1.3 cm}
\def\blockgap{-0.15 cm}

\begin{tikzpicture}[
	    block/.style={
	        draw,
	        line width=0.8pt,
	        minimum height=\blockheight,
	        align=center,
	        text width=\largeblockwidth,
	        outer sep=0pt,
	        rounded corners=2mm
	    },
	    smallblock/.style={
	        draw,
	        line width=0.8pt,
	        fill=blue!10,
	        minimum height=\blockheight,
	        align=center,
	        text width=\smallblockwidth,
	        outer sep=0pt,
	        rounded corners=2mm
	    },
	    labelstyle/.style={
	        below=0.2cm of #1,
	        anchor=north,
	        font=\footnotesize,
	        align=center,
	    }
	]

	% Block A_1
	\node[block] (A1) at (0,0) {$A_1$};
	\node[labelstyle=A1, text width=\smallblockwidth] {${\frac{1}{k+1} - \nf{\eps}{k}}$};

	% Block A_2
	\node[block] (A2) at ([xshift=\smallblockwidth+\blockgap]A1.east) {$A_2$};
	\node[labelstyle=A2, text width=\largeblockwidth] {${\frac{1}{k+1} - \nf{\eps}{k}}$};

	% Ellipses 1
	\node at ([xshift=0.6cm]A2.east) {\dots};

	% Block A_i
	\node[smallblock] (Ai) at ([xshift=\largeblockwidth+\largeblockwidth]A2.east) {$A_i$};
	\node[labelstyle=Ai, text width=\largeblockwidth] {${\frac{1}{k+1} + \eps} $};

	% Ellipses 2
	\node at ([xshift=0.6 cm]Ai.east) {\dots};

	% Block A_{k+1}
	\node[block] (Akplus1) at ([xshift=\smallblockwidth+\blockgap+\largeblockwidth+\blockgap]Ai.east) {$A_{k+1}$};
	\node[labelstyle=Akplus1, text width=\largeblockwidth] {${\frac{1}{k+1} - \nf{\eps}{k}}$};
\end{tikzpicture}

\caption{The partition profile $\cJ_i$. Here $\cC = [k+1]$, each approval set is $A_j = \{j\}$, and $\prof_i(A_j)$ is below each bloc.}
\label{fig:partition-all-small}
\end{figure}

We derive a nontrivial lower bound from $\cJ$ in two stages.
As before, we consider the uniform distribution $\cJ$ over the partition profiles $\cJ$; this will comprise our sample-hard randomized instance.
We first characterize the behavior of the optimal sample-based rule $f^*$ which faces an unknown choice of $\cJ_i \sim \cJ$, for any number of samples $t$.
A proof appears in \Cref{app:lower-bounds}.

\begin{restatable}[]{lemma}{optimalrulestructure}
	\label{lem:optimal-part-rule-structure}
	For $\cJ_i \sim \cJ$ u.a.r. and any number of samples $t$ from unknown $\cP_i$, the sample-optimal rule $f^*$ chooses the top $k$ candidates according to empirical support.
\end{restatable}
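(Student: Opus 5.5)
We reduce the structure of the optimal rule to a Bayesian decision problem with a uniform prior over $i \in [k+1]$. Since $\cC = [k+1]$ and every ballot is a singleton $\{j\}$, the only relevant statistic of a sample $\cA$ is the vector of counts $(n_1,\ldots,n_{k+1})$, where $n_j$ is the number of samples equal to $\{j\}$; this vector is sufficient. A committee $W$ of size $k$ omits exactly one candidate $j_W$. By \Cref{ex:hard-partition}, $W$ satisfies \jr{} for $\cJ_i$ if and only if $i \in W$, that is, if and only if $j_W \neq i$. The success probability of any (possibly randomized) rule is therefore $1 - \Pr[\text{omitted candidate} = i]$. Minimizing this reduces to a pointwise problem: for each count vector, the optimal rule omits the candidate $j$ minimizing the posterior $\Pr[i = j \mid \cA]$. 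Equivalently, it keeps the top $k$ candidates by posterior.

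We then compute the posterior. Under $\cJ_i$, the likelihood of a sequence with counts $(n_j)$ is
\[
p^{n_i} q^{t - n_i}, \qquad p \defeq \tfrac{1}{k+1}+\eps, \quad q \defeq \tfrac{1}{k+1}-\nf{\eps}{k}.
\]
The prior is uniform, so $\Pr[i=j \mid \cA] \propto (p/q)^{n_j}\, q^{t}$. Because $p > q$, this is strictly increasing in $n_j$. The candidate with minimum posterior is thus exactly the one with minimum empirical support, and keeping the top $k$ by posterior is the same as keeping the top $k$ by empirical support. Ties in $n_j$ give equal posteriors, so any tie-breaking rule is equally optimal. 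This also covers randomized rules, since any randomization over equally optimal omissions achieves the same value and none can do better.

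The main point needing care is to state precisely what ``sample-optimal'' means: we take it to mean maximizing $\Pr_{i\sim[k+1],\,\cA\sim\cP_i^t}[f(\cA)\in\jr(\cP_i)]$. The pointwise Bayes argument handles this by summing over sample sequences, which is routine. There is also a mild issue: the claim ``satisfies \jr{} iff $i\in W$'' needs $q < \frac1k$ and $p \ge \frac1k$, which holds for the parameter choice in \Cref{ex:hard-partition}. That example already notes this, so we can invoke it directly.
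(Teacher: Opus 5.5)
Your argument is correct and rests on the same computation as the paper's proof: the likelihoods of a sample under $\cJ_j$ and $\cJ_{j'}$ differ by the factor $(p/q)^{n_j-n_{j'}}$. The paper packages this as symmetry reductions (to rules depending only on the ordering of the empirical supports), followed by an exchange-by-contradiction on a single histogram; your pointwise Bayes formulation reaches the same conclusion more directly and handles ties and randomized rules without extra work.

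One small correction to your side condition $p \ge 1/k$: the lemma is applied to \jre{} at the instance's own $\eps$. For that axiom, ``$W$ is feasible iff $i \in W$'' holds for every $\eps > 0$ with no further assumption, whereas $p \ge 1/k$ is not implied by the example's parameters in general.
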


We now lower bound the probability that $f^*$ errs when facing a draw from $\cD_\cJ$, as a function of $t$.
This proceeds first by observing that the sample mean of the large bloc support and the minimum sample mean of all other blocs are negatively correlated.
Once this is established, we use a combination of concentration and anticoncentration inequalities to lower bound the failure event.
\begin{restatable}[]{lemma}{optimalrulemistakelb}
	\label{lem:optimal-part-rule-mistake-LB}
	There exist constants $c_1, c_2, c_3$ such that for any given $k \geq 2$, any given $\eps \leq c_1 /k$, and any $\delta \leq c_2 k^2 \eps^2$,
	if $t \leq c_3 k^{-1} \eps^{-2}$ then over $t$ samples the large bloc $i$ in the sampled profile $\cJ_i$ exhibits the smallest empirical support with probability at least $\delta$.
\end{restatable}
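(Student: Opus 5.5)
We work with the multinomial counts $(N_1,\ldots,N_{k+1}) \sim \Multin(t; p)$, where $p_i = \frac{1}{k+1}+\eps$ is the large bloc and $p_j = \frac{1}{k+1}-\nf{\eps}{k}$ for $j \neq i$. Our target is the event $F \defeq \{N_i < \min_{j\neq i} N_j\}$. We want to show $\Pr[F] \geq \delta$ whenever $t \leq c_3 k^{-1}\eps^{-2}$ and $\delta \leq c_2 k^2\eps^2$.

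The plan is to split $F$ into two sub-events:
\begin{itemize}
\item $L \defeq \{N_i \leq t p_i - a\}$, where the large bloc comes out low by a margin $a$;
\item $M \defeq \{\min_{j\neq i} N_j > t p_i - a\}$, where every small bloc comes out above that threshold.
\end{itemize}
Clearly $L \cap M \subseteq F$.

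To decouple the two sub-events, condition on $N_i = n$. Given this, $(N_j)_{j\neq i}$ is $\Multin(t-n; \text{uniform on } k)$, whose coordinates are stochastically increasing as $n$ decreases. So $\Pr[M \mid N_i = n]$ is non-increasing in $n$. Since $L$ consists exactly of the small values of $n$, we get $\Pr[L\cap M] \geq \Pr[L]\cdot \Pr[M \mid N_i = \lfloor tp_i - a\rfloor]$. This is the negative-correlation observation the paper alludes to, and it only needs this monotone coupling. Alternatively, one could use negative association of multinomials applied to the increasing functions $-\1[N_i\le\cdot]$ and $\1[\min N_j > \cdot]$.

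Next we choose the margin. Set $\sigma \defeq \sqrt{t/(k+1)}$ as the common standard deviation scale, and take $a = 2\sigma$ plus the mean gap $t(\eps + \nf{\eps}{k})$. The hypothesis $t \leq c_3 k^{-1}\eps^{-2}$ gives $t\eps \leq \sqrt{c_3}\,\sqrt{t/k} = O(\sqrt{c_3})\,\sigma$, so the mean gap is at most a small multiple of $\sigma$. Hence $a = \Theta(\sigma)$.

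For $\Pr[L]$ we need anticoncentration of a binomial in its lower tail at $O(1)$ standard deviations. A Berry--Esseen estimate, or a direct binomial lower-tail bound valid when $tp_i \gtrsim 1$, gives $\Pr[L] \geq c > 0$. If $t$ is so small that $tp_i < 1$, then $\Pr[N_i = 0]$ is already constant, and the event $M$ is handled separately below.

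For the conditional probability of $M$, set $n = tp_i - a$. Each $N_j$ for $j \neq i$ is $\Bin(t-n, 1/k)$ with mean $(t-n)/k$. This mean exceeds the threshold $tp_i - a$ by roughly $a(1+1/k) - O(t\eps)$, i.e.\ by $\Theta(\sigma)$. We need all $k$ coordinates to lie above the threshold simultaneously. By a union bound with Chernoff, each coordinate falls below its mean by $\Theta(\sigma)$ with probability at most $e^{-\Theta(1)}$. With $k$ coordinates that is not small, so a crude union bound fails.

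\textbf{This is the main obstacle.} The minimum of $k$ near-independent binomials sits about $\sigma\sqrt{2\log k}$ below the mean, so with a constant margin $M$ essentially never holds. The fix is to take $a$ of order $\sigma\sqrt{\log k}$ rather than a constant multiple of $\sigma$. Then $M$ has probability at least $1/2$, by a union bound over $k$ Chernoff tails of size $k^{-2}$, taking $a = C\sigma\sqrt{\log k}$. Meanwhile $\Pr[L]$ becomes a Gaussian-type lower tail at $\Theta(\sqrt{\log k})$ standard deviations, which is $k^{-O(1)}$ by the reverse (anticoncentration) binomial tail bound, e.g.\ via Slud's inequality.

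It remains to match this against the $\delta$ scale. The product is $\Pr[F] \geq \tfrac12 k^{-O(1)}$, and we need it to be at least $c_2 k^2\eps^2$. Since $\eps \leq c_1/k$, we have $k^2\eps^2 \leq c_1^2$. I would tune the constant $C$ together with $c_1$ and $c_2$, or, if necessary, use the slack that $t\eps^2 k \leq c_3$, so that the lower-tail estimate $\Pr[L] \gtrsim \exp(-\Theta(C^2 \log k))$ dominates $c_2k^2\eps^2$. I expect this constant bookkeeping, making the polynomial-in-$k$ loss from the $\sqrt{\log k}$ margin compatible with the stated $\delta$ threshold, to be the delicate part. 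If it does not close, I would use the exact binomial lower-tail estimate and keep the $t\eps^2$ term explicit, rather than bounding it by constants.
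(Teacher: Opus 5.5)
Your setup is sound, and at the crucial point it is sharper than the paper's. The split $L\cap M\subseteq F$ is correct. So is the monotone-coupling bound $\Pr[L\cap M]\ge\Pr[L]\cdot\Pr[M\mid N_i=\lfloor tp_i-a\rfloor]$, which plays the role of the paper's Chebyshev-sum step. Your observation that $M$ forces a margin of order $\sigma\sqrt{\log k}$ is also correct.

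The gap is the last step, and it is not constant bookkeeping. With that margin $\Pr[L]$ is polynomially small (about $1/k$ at best), whereas $c_2k^2\eps^2$ can be as large as $c_2c_1^2$. In fact the inequality you are aiming for is false for intermediate $\eps$. Take $\eps=k^{-5/4}$ (admissible for large $k$), $t=\lfloor c_3k^{-1}\eps^{-2}\rfloor=\lfloor c_3k^{3/2}\rfloor$ and $\delta=c_2k^2\eps^2=c_2k^{-1/2}$, so every bloc has mean count about $c_3\sqrt{k}$. Choose $\theta$ with $\frac{\log k}{k}\le\Pr[N_j\le\theta]\le\frac{2\log k}{k}$ for a small bloc $j$; this is possible since near this level consecutive CDF values differ by a factor $1+o(1)$. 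Since $N_i$ stochastically dominates each $N_j$ and multinomial counts are negatively associated,
\[
\Pr\Bigl[N_i\le\min_{j\neq i}N_j\Bigr]\le\Pr[N_i\le\theta]+\Pr\Bigl[\min_{j\neq i}N_j>\theta\Bigr]\le\frac{2\log k}{k}+\Bigl(1-\frac{\log k}{k}\Bigr)^{k}\le\frac{3\log k}{k}.
\]
This is far below $c_2k^{-1/2}$ for large $k$, under any tie convention. So the statement as written fails for large $k$ whatever $c_1,c_2,c_3$ are, and no tuning can close your last step.

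The paper's proof does not get past this point either. It uses the same decomposition with $\tau=t(\frac{1}{k+1}-c\eps)$ for a constant $c$, but condition \eqref{eq:tau-first-inequality} has its inequality reversed. Making $\exp(-(\mathbb{E}X-\tau)^2/2\mathbb{E}X)\le\frac{1}{2k}$ requires $\tau\le t\psma-\sqrt{2t\psma\log 2k}$. For constant $c$ that is a \emph{lower} bound $t\gtrsim\log k/(c^2k\eps^2)$, so $\Pr[\Lambda_t]\ge\frac12$ is not established in the claimed range; this is exactly your ``main obstacle.''

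What your route can prove is a failure probability of order $k^{-1-o(1)}$ at the largest admissible $t$. For that you need a lower-tail anticoncentration bound with the sharp Gaussian exponent. Slud's inequality concerns the upper tail. A single point mass with the $\chi^2$ relaxation, as in the paper, loses a factor of order $\sqrt{\tau/\log k}$ and doubles the exponent. At that $t$ we have $k^2\eps^2\approx c_3k/t$, so the bound beats $c_2k^2\eps^2$ once $t/(k+1)\ge k^{1+o(1)}$, i.e.\ for $\eps\le k^{-3/2-o(1)}$. This covers JR ($\eps=\frac{1}{k(k+1)}$, $c_2k^2\eps^2=\Theta(k^{-2})$), and hence the intended $\Omega(k^3)$ bound with $1/\delta$ polynomial in $k$. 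Because the Bayes risk is non-increasing in $t$, arguing only at the largest $t$ is enough. This also removes the small-$t$ case, where your strict event $F$ is actually empty (for $t<k$ some small bloc has count $0$), so it could not have been ``handled separately.''
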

\begin{proof}
	As we observed in the proof of \Cref{thm:small-eps-lower-bound}, for a given $\cJ_i \in \cJ$ the distribution of $\bar n$, the histogram of the supporters of candidates $j \in [k+1]$ from the sample $\cA=(A^1, \ldots, A^t)$, follows a multinomial distribution $M_{k,\eps, i,t} = \Multin(t; \psma, \ldots, \pbig, \ldots, \psma)$, where for ease of notation we let $\psma \defeq \nf{1}{k+1} - \nf{\eps}{k}$ and $\pbig \defeq \nf{1}{k+1} + \eps$.
	The failure event for $f^*$ is precisely that $\bar n_i \leq \min_{j \neq i} \bar n_j$ when the unknown distribution is $\cJ_i$.
	We will pick a threshold $\tau$ as a function of $k$ and $\eps$, and lower bound the failure probability by
	\begin{align}
		 \probover{\substack{i \sim [k+1] \\ \cA\sim \prof_i^t}}{ f^*(\cA) \not \in \jr{}_\eps(\cJ_i)}
		 &=  \frac{1}{k+1} \sum_i \probover{\cA\sim \prof_i^t}{ \bar n_i \leq \min_{j \neq i} \bar n_j} \notag \\
		 &=  \probover{\cA\sim \prof_1^t}{ \bar n_1 \leq \min_{j \neq 1} \bar n_j} \notag \\
		 &\geq \probover{\cA\sim \prof_1^t}{ \bar n_1 \leq \tau \: \land \: \tau \leq \min_{j \geq 2} \bar n_j}. \label{eq:kcubed-lb-threshold-bound}
	\end{align}
	This proceeds in two steps.
	First, we show that these events are positively correlated; therefore our failure event is lower bounded by the product of their individual probabilities.
	Then we lower bound the probability of each event.

	Positive correlation can be shown via Chebyshev's sum inequality.
	For fixed $\tau$, let $\Gamma_t$ denote the event that $\bar n_1 < \tau$ over $t$ samples $\cA$, and let $\Lambda_t$ denote the event that $\min_{j \geq 2} \bar n_j \geq \tau$ over $\cA$.
	Then
	\begin{align*}
		\probover{\cA\sim \prof_1^t}{\Gamma_t \: \land \: \Lambda_t }
		&= \sum_{t' =0}^t \prob{\Lambda_t \:\land \: t' \leq \tau \:\vert\: \bar n_1 = t' } \cdot \prob{\bar n_1 = t' } \\
		&= \sum_{t'} \prob{\Lambda_t \:\vert\: \bar n_1 = t' } \cdot \1\{t' \leq \tau \} \cdot \prob{\bar n_1 = t' } \\
		&\geq \sum_{t'} \prob{\Lambda_t \:\vert\: \bar n_1 = t' }\cdot \prob{\bar n_1 = t' }
		\cdot \sum_{t'} \1\{t' \leq \tau \} \cdot \prob{\bar n_1 = t' } \\
		&= \prob{\Lambda_t} \cdot \prob{\Gamma_t}.
	\end{align*}
	Here our invocation of Chebyshev's sum inequality required that both $\prob{\Lambda_t \:\vert\: \bar n_1 = t' }$ and $\1\{t' \leq \tau \}$ are decreasing in $t'$.
	This is clearly true for the latter; for the former it holds because for fixed $t$, the remaining $t -t'$ samples are partitioned into $\bar n_2, \ldots, \bar n_{k+1}$ according to $\Multin(t-t'; \psma, \ldots, \psma)$.
	As $t'$ increases the number of samples to distribute decreases, and so $\min_{j \geq 2} \bar n_j$ will decrease also.

	With this in hand, we turn to lower bounding each probability.
	Recall (or observe) that for multinomials, each individual coordinate follows a binomial distribution; for $(\bar n_1, \ldots, \bar n_{k+1}) \sim \Multin(t; p_1, \ldots, p_{k+1})$, each $\bar n_j$ is distributed according to $\Bin(t,p_j)$.
	This will permit us to use binomial anticoncentration to lower bound $\prob{\Gamma_t}$.

	Before doing this, we address $\prob{\Lambda_t}$.
	Unfortunately, the negative association of multinomial coefficients works against us in our effort to lower bound $\prob{\Lambda_t}$ in terms of the behavior of its constituent binomial parts.
	We instead resort to a union bound, and find
	\begin{align}
		\prob{\Lambda_t} = \prob{\land_{j \geq 2} \bar n_j \geq \tau } = 1-\prob{\lor_{j \geq 2} \bar n_j < \tau } \geq 1 - \sum_{j \geq 2} \prob{\bar n_2 \leq \tau } = 1 - k\cdot \prob{X \leq \tau}, \label{eq:kcubed-union-bound-on-min}
	\end{align}
	where $X \sim \Bin(t, \psma)$.
	From the standard Chernoff lower tail bound, we can derive that $\prob{X \leq \tau} \leq \exp\left( - \frac{(\E{}{X} - \tau)^2}{2 \E{}{X}} \right)$.
	Upper bounding this by $\nf{1}{2k}$ and plugging it into \eqref{eq:kcubed-union-bound-on-min}, we find that $\prob{\Lambda_t} \geq \frac{1}{2}$ so long as
	\begin{equation}
	\label{eq:tau-first-inequality}
		\tau \geq t \cdot \psma - \left(2 t \cdot \psma \cdot \log 2k \right)^{1/2}.
	\end{equation}

	We now turn to lower bounding $\prob{\Gamma_t}$.
	For this we will use standard forms of binomial anti-concentration in the lower tail e.g. \cite[Equation 5]{zhu22nearly}.
	In particular, for $B \sim \Bin\left(t, \pbig \right)$ we have
	\begin{align}
		\prob{\Gamma_t} > \prob{B = \tau}
		&\geq \frac{\sqrt{t}}{\sqrt{8 \tau (t-\tau)}} \cdot \exp\left(-t \dkl \left(\tau/t \:\|\: \pbig \right) \right) \notag
	\end{align}
	From here, we seek to upper bound this KL divergence.
	Since $\dkl$ is upper bounded by the $\chi^2$-divergence, we can start with
	\begin{align*}
		\dkl(\tau/t \| p_\uparrow) &\leq \chi^2(\tau/t, p_\uparrow) = \frac{\left( \frac{\tau}{t} - p_\uparrow \right)^2}{\frac{\tau}{t}(1 - \frac{\tau}{t})},
	\end{align*}
	and so in order for $\prob{\Gamma_t} \geq \delta$ a sufficient condition is
	\begin{align}
		\frac{\sqrt{t}}{\sqrt{8 \tau (t-\tau)}} \cdot \exp\left(-t \cdot \frac{\left( \frac{\tau}{t} - p_\uparrow \right)^2}{\frac{\tau}{t}(1 - \frac{\tau}{t})} \right) \geq \delta. \label{eq:tau-second-inequality}
	\end{align}

	Our goal therefore reduces to simultaneously satisfying \eqref{eq:tau-first-inequality} and \eqref{eq:tau-second-inequality} for the largest possible $t$.
	To make this easier, suppose that $\tau = t \cdot \left(\frac{1}{k+1} - c \eps\right)$ for some $c$.
	Then \eqref{eq:tau-first-inequality} is true so long as
	\begin{equation}
		t \leq  \frac{ \log 2k }{2 c^2}
		\cdot (k+1)^{-1}\cdot \eps^{-2},
	\end{equation}
	provided $\eps \leq \nf{1}{4}$.
	Meanwhile \eqref{eq:tau-second-inequality} is true so long as
	\begin{equation}
		t \leq  \frac{\left(\log \nf{1}{\delta} - \nf{1}{2} \log 8 \tau \right)}{4 (c+1)^2}
		(k+1)^{-1}\cdot \eps^{-2},
	\end{equation}
	provided that $c\eps \leq (k+1)^{-1}/2$.

	Choosing $c = \log^{1/4} k$ to balance terms gives a slightly larger final $t$ but makes the parameter bounds more cumbersome.
	Therefore we fix $c$ to be a constant, and conclude that there are constants $c_1, c_2, c_3$ such that for all $k \geq 2$, for all $\eps \leq c_1\cdot k^{-1}$, for all $\delta \leq c_2 \cdot k^2 \cdot  \eps^2$, if $t \leq c_3 \cdot k^{-1} \cdot \eps^{-2}$ then $f^*$ with $t$ samples fails \jre{} with probability strictly more than $\delta$.
	This concludes the proof.
\end{proof}

We then combine \Cref{lem:optimal-part-rule-structure,lem:optimal-part-rule-mistake-LB} to argue about all sample-based rules:
\begin{theorem}
	\label{thm:k3-lower-bound-jr}
	There exist constants $c_1, c_2, c_3$ such that for all $k \geq 2$ and all $\eps \leq c_1 k^{-1}$, if $\delta \leq c_2 k^2 \eps^2$ then no sample-based rule $(t,\delta)$-satisfies \jre{} for $t \leq c_3\eps^{-2}k^{-1}$.
	That is, for all $k$ and sufficiently small $\eps$, there are instances with $m, \nf{1}{\delta} = \poly(k)$ for which the sample complexity of \jre{} is $\Omega(\eps^{-2}k^{-1})$.
\end{theorem}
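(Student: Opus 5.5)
The argument is Yao's minimax principle applied to the randomized partition instance $\cJ$ of \Cref{ex:hard-partition}. I will show that no sample-based rule has failure probability at most $\delta$ on the uniform mixture over $\cJ_1,\ldots,\cJ_{k+1}$. Once that holds, every rule, deterministic or randomized, must fail with probability more than $\delta$ on some fixed $\cJ_i$. So it cannot $(t,\delta)$-satisfy \jre{} in the sense of \Cref{def:axiom-satisfaction-samples}.

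First I check the feasible outputs. In $\cJ_i$ every small bloc has mass $\frac{1}{k+1}-\nf{\eps}{k} < \frac{1}{k+1}+\eps$, so omitting any of those candidates violates nothing. The large bloc has mass exactly $\frac{1}{k+1}+\eps$, which fails the strict inequality in \Cref{def:jr-eps}. Hence $W \in \jre{}(\cJ_i)$ if and only if $i \in W$. Since $\abs{\cC} = k+1$ and $\abs{W}=k$, each committee is determined by the single candidate it omits. The task therefore reduces exactly to not omitting the unknown $i$.

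Next I invoke \Cref{lem:optimal-part-rule-structure}. For the uniform prior over $i$, the Bayes-optimal rule $f^*$ omits a candidate with smallest empirical support, with ties broken arbitrarily or uniformly. Every other rule, including any randomized one (a mixture of deterministic rules), has average failure probability at least that of $f^*$. So it suffices to lower bound the failure probability of $f^*$.

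For that I use \Cref{lem:optimal-part-rule-mistake-LB} with its constants $c_1, c_2, c_3$. Whenever $\eps \le c_1/k$, $\delta \le c_2 k^2\eps^2$, and $t \le c_3 k^{-1}\eps^{-2}$, the large bloc has the (weakly) smallest empirical count with probability more than $\delta$. I have to handle ties carefully, because the event in the lemma is $\bar n_i \le \min_{j\ne i}\bar n_j$. One fix is to let $f^*$ break ties adversarially for itself, which only helps the rule; the other is to strengthen the threshold argument to a strict inequality, by taking $\bar n_1 < \tau \le \min_{j\geq 2}\bar n_j$ in \eqref{eq:kcubed-lb-threshold-bound}. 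This is the step I expect to need the most care. The $\tau$-argument already gives strict separation up to changing $\tau$ by one, which only alters constants. Under strict separation, $f^*$ omits $i$ and fails.

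Averaging over $i$ (the instances are symmetric), the Bayes risk exceeds $\delta$. So some $\cJ_i$ forces any rule to fail with probability more than $\delta$. Finally, I instantiate the statement's parameters. Setting $\eps = \frac{1}{k(k+1)}$ meets the condition $\eps \le c_1/k$ for large $k$; small $k$ are absorbed into the constants. It also gives $\nf{1}{\delta} = \Theta(k^2)$ and $m = k+1$, both polynomial in $k$, and the lower bound $\Omega(\eps^{-2}k^{-1}) = \Omega(k^3)$.
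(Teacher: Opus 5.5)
Your proposal is correct and follows the paper's route. Under the uniform prior over $\cJ_1,\ldots,\cJ_{k+1}$, the Bayes-optimal rule omits the empirically least-supported candidate (\Cref{lem:optimal-part-rule-structure}), \Cref{lem:optimal-part-rule-mistake-LB} lower bounds its failure probability, and averaging transfers this bound to some fixed $\cJ_i$ for every rule.

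Your attention to ties genuinely tightens the paper, which asserts that failure is exactly the event $\bar n_i \le \min_{j \neq i} \bar n_j$. Of your two fixes, only the strict-threshold one (shifting $\tau$ by one) works on its own. Tie-breaking in the rule's favor just restates the need for strict separation, and tie-breaking against the rule is not a valid sample-based rule.
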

\begin{proof} %[Proof of \Cref{thm:k3-lower-bound-jr}]
	Since $f^*$ is the optimal rule for this hard instance (\Cref{lem:optimal-part-rule-structure}), the lower bound on the failure probability of $f^*$ established in \Cref{lem:optimal-part-rule-mistake-LB} also holds for any rule when faced with $\cJ$.
\end{proof}

%%%
\subsection{Dependence on \texorpdfstring{$\log m$}{log m} is Necessary}
\label{sec:lowerbound-m-dependence}

Since our lower bound profiles have $m = k+1$ candidates, one might ask whether the sample complexity dependence on $\log m$ in our upper bounds is necessary.
Is this problem truly harder when $m$ is extremely large relative to $k$?
We can also show that a dependence on $m$ in the sample complexity is necessary.
To do this we will use \Cref{ex:sample-complexity-m-dependence}.

\begin{example}[Sampling \jr{} is hard for large $m$]
\label{ex:sample-complexity-m-dependence}
	For any $k \geq 2$ and $d \geq 1$ we will construct an election $E = (\cands, \prof, k)$.
	Consider a ground set of $k^d$ types of voters, and the collection of candidates $\cands$ consisting of one candidate supported by \emph{every} subset of $k^{d-1}$ such types.
	The profile $\cP$ is uniform over the voter types.
	Now consider $t$ and samples $\cA \sim \cP^t$.
	If only $\abs{\cup \cA} \leq k^d/2$ voter types are sampled, let $W = f^*(\cA)$ be the optimal committee chosen based on $\cA$.
	Even if the chosen $W$ perfectly partitions $\cup \cA$ by support, it will intersect only approximately $k^d(\nf{1}{2} + \nf{1}{2}(1-\nf{1}{e})) = k^d(1 - \Omega(1))$ types.
	Hence for $k \geq 10$ there will be some $c \in \cands$ with all $k$ of its types uncovered by $W$, witnessing a \jr{} violation.
\end{example}

\begin{restatable}[]{observation}{complexitymdependence}
	\label{thm:sample-complexity-depends-on-m}
	For sufficiently large $k$ and given $\delta \leq \nf{1}{2}$, for all $d \in \mathbb{N}$ there exist elections $E_d$ for which the sample complexity of \jr{} is $\Omega(k^d)$.
\end{restatable}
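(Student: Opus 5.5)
We will make \Cref{ex:sample-complexity-m-dependence} rigorous and derive the bound from it. Fix $d$ and $k$ large, and let $n \defeq k^d$ be the number of voter types. Let $\cP$ be uniform over the $n$ types, where each type's approval set is the set of candidates whose support contains it. There is one candidate $c_S$ for each $S \in \binom{[n]}{k^{d-1}}$, so each candidate has support exactly $\nf{1}{k} > \nf{1}{k+1} + \nf{1}{k(k+1)}$. Hence $W$ violates \jr{} as soon as some $c_S \notin W$ has $S$ disjoint from the covered set $U(W) \defeq \bigcup_{c \in W} S_c$. It therefore suffices to show that, with probability more than $\delta$ over the sample, every committee the rule can output leaves some $k^{d-1}$-subset of types uncovered. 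Uncovered means the number of uncovered types is at least $k^{d-1}$ and they are not all of $W$'s support. In fact any committee covers at most $k \cdot k^{d-1} = n$ types, so we need the rule to output a committee covering at most $n - k^{d-1}$ types.

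\textbf{Step 1: few samples reveal few types.} If $t \leq n/2$, then the number of distinct sampled types is at most $n/2$ deterministically. More usefully, we can take $t \leq c\,n$ for a small constant $c$.

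\textbf{Step 2: the rule cannot align with unseen types.} Condition on the sample $\cA$ and its set $T$ of observed types. The crucial symmetry is that $\cP$ is invariant under permutations of the types, and the rule only observes the approval sets of sampled voters. Candidate labels can be viewed as revealing the subsets $S_c$, so the sampled types are identified, but the unobserved types are exchangeable. Concretely, we randomize the instance by applying a uniformly random permutation $\pi$ of $[n]$ to the types, equivalently relabeling candidates. Conditioned on the observations, $\pi$ restricted to the unseen types $[n]\setminus T$ is uniform. Each $S_c$ for $c \in W$ contains some number $a_c$ of unseen type-slots, with $\sum_c a_c \leq n$. Its intersection with the truly unseen types is a uniform random subset of the unseen region of size $a_c$. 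Roughly, the rule can at best cover all of $T$ and pick $W$ whose remaining mass is placed "blindly" on the $n-|T| \geq n/2$ unseen types.

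The expected number of unseen types that remain uncovered is then at least the coupon-collector-type quantity $(n-|T|)\prod_c(1-\nf{a_c}{n-|T|})$. Under the constraint that the $a_c$ sum to roughly $n-|T|$, this is $\Omega(n)$, i.e.\ about an $e^{-1}$ fraction of the unseen region, since product forms with random placement give $(1-1/k)^k$ scaling. A bounded-differences (McDiarmid) or negative-association concentration argument over the random placement then shows that, with probability at least $1/2$, at least $n/20 \geq k^{d-1}$ types are uncovered once $k \geq 20$. Any $k^{d-1}$ of them form some $S$, and $c_S \notin W$ because $c_S$ is not covered by itself, which witnesses a \jr{} violation. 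By Yao's principle this randomized instance lower bounds every rule, giving failure probability at least $1/2 \geq \delta$ whenever $t \leq n/2$. Hence the sample complexity is $\Omega(k^d)$, and since $m = \binom{k^d}{k^{d-1}}$, this is unbounded in $m$.

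\textbf{Main obstacle.} The delicate step is Step 2: arguing rigorously that the rule's choice is independent of how unseen types are arranged, and handling the fact that the $S_c$ overlap both each other and $T$. The exchangeability argument via a random relabeling is what we would try first. We would bound the uncovered count from below by treating each $S_c \setminus T$ as a uniformly random $a_c$-subset of the unseen types, with the sets independent across $c$. That independence needs care; negative association of sampling without replacement should suffice for the concentration step.
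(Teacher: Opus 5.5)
Your outline matches the paper's argument: few samples leave at least half the types unseen, the rule cannot align its committee with the unseen types, a coupon-collector count leaves at least $k^{d-1}$ unseen types uncovered, and any such set is the full support (of mass exactly $\nf{1}{k}$) of an unchosen candidate. However, the randomization you use to formalize Step 2 does not produce a hard instance. The profile $\cP$ is invariant under permutations of the types. If you apply $\pi$ and keep the candidate labels, a voter of uniform type $v$ approves $\{c_S : \pi^{-1}(v) \in S\}$, and $\pi^{-1}(v)$ is again uniform, so your ``random'' instance is the single fixed profile $\cP$.

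Concretely, the induced relabeling $c_S \mapsto c_{\pi(S)}$ preserves every intersection relation among supports. A rule can therefore output $k$ candidates whose label-sets partition $[n]$; their true supports $\pi(S_1), \ldots, \pi(S_k)$ also partition the types, and \jr{} holds with zero samples. This is exactly where the independence you need fails. Each $\pi(S_c) \setminus T$ is marginally a uniform $a_c$-subset of the unseen types, but jointly they are images of fixed sets under one bijection, so the rule can make them disjoint and the uncovered count is $0$. Negative association cannot repair this, because the dependence is total coordination rather than mild correlation.

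The usable symmetry is the one the paper invokes when it says the rule cannot distinguish committees that share the same observed incidences. Randomize the bijection between candidate names and $\binom{[n]}{k^{d-1}}$ uniformly, i.e., permute candidates, not types. Given the sample, the posterior is then uniform over bijections consistent with the observed incidences $S_c \cap T$. Because \emph{every} $k^{d-1}$-subset is a candidate, each chosen $c$ has $S_c \setminus T$ uniform over $a_c$-subsets of $\bar T = [n]\setminus T$. These are independent across candidates with different observed incidence, and distinct within one incidence class, which is a negligible correction.

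Your coupon-collector and concentration steps then go through, with one adjustment. Nothing forces $\sum_c a_c \approx n - \abs{T}$; you only have $\sum_c a_c \leq k^d \leq 2\abs{\bar T}$ and $a_c/\abs{\bar T} \leq 2/k$. This still leaves a constant fraction of $\bar T$ uncovered in expectation (roughly $e^{-2}$ rather than $e^{-1}$). Also, $\nf{1}{k}$ equals $\frac{1}{k+1} + \frac{1}{k(k+1)}$ rather than exceeding it. This is harmless, since \jr{} demands strict inequality.
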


This does not contradict our positive results in \Cref{sec:upperbounds} precisely because no dependence on $m$ is specified.
Our task is now to determine the lower bound dependence on $m$ that \Cref{ex:sample-complexity-m-dependence} implies.
\begin{proposition}
	\label{prop:jr-complexity-log-m-dependence}
	For any $n$ and any $\delta < 1$, the sample complexity of \jr{} is at least $\Omega \left( k \cdot \frac{\log m}{\log \log m}\right)$.
\end{proposition}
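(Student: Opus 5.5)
The plan is to instantiate \Cref{ex:sample-complexity-m-dependence}, express its parameters in terms of $m$, and strengthen the failure bound so that it holds for every $\delta < 1$.

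\textbf{Parameters.} Fix $k$ and $d$. The ground set has $n_d \defeq k^d$ voter types. There is one candidate for every $k^{d-1}$-subset of types, so $m = \binom{k^d}{k^{d-1}}$. By the standard estimates $(a/b)^b \le \binom{a}{b} \le (ea/b)^b$,
\[
\log m = \Theta\!\left(k^{d-1}\log k\right) \quad\text{and}\quad \log\log m = (d-1)\log k + O(\log\log k) \;\ge\; \log k .
\]
Hence
\[
k^d = k\cdot k^{d-1} = \Theta\!\left(\frac{k\log m}{\log k}\right) \;\ge\; \Omega\!\left(\frac{k\log m}{\log\log m}\right).
\]
For an arbitrary $m$, I would pick the largest $d$ with $\binom{k^d}{k^{d-1}} \le m$ and add dummy candidates approved by nobody. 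Dummies never enter a violation and never help a committee. Since consecutive values of $\log m$ differ by a factor $O(k)$, this costs only a constant factor if I also allow the block size to interpolate; otherwise I would state the bound along the subsequence $m_d$. So the target is: with $t \le k^d/2$ samples, every rule fails \jr{} with probability exceeding $\delta$.

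\textbf{Failure for any $\delta<1$.} Each sample reveals exactly one type, so $t \le k^d/2$ samples deterministically leave at least $k^d/2$ types unseen. A committee $W$ has $k$ members, each covering $k^{d-1}$ types. I want to show that, with probability close to $1$, at least $k^{d-1}$ types remain uncovered by $W$. Then the candidate $c$ supported exactly by those uncovered types has $\Pr[A\in\cA_c \wedge A\cap W=\emptyset] = k^{d-1}/k^d = 1/k$. This is not below the Hare quota $\frac{1}{k+1}+\frac{1}{k(k+1)} = \frac1k$, so $c$ witnesses a \jr{} violation.

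To handle arbitrary rules, I would use Yao's principle. Apply a uniformly random permutation $\pi$ of the types; this permutes the candidates and is a symmetry of the instance. Conditioned on the sample, the rule's committee is determined by what it has seen. The images under $\pi$ of the unseen types then form a uniformly random set $U$ of size at least $k^d/2$, independent of how $W$ treats the seen types.

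\textbf{Main obstacle.} The hardest step is showing that $W$, chosen without knowledge of $U$, leaves at least $k^{d-1}$ types of $U$ uncovered with probability close to $1$. The budget is tight: $k$ sets of size $k^{d-1}$ can cover all $k^d$ types in total. The point is that an optimal rule will spend coverage on seen types that it knows exist, so it cannot cover $U$ systematically. Concretely:
\begin{enumerate}
\item If the $k$ members of $W$ are nearly disjoint, then $W$ covers about $k^d$ types. Roughly half of this coverage is wasted on seen types, so a constant fraction of $U$ stays uncovered.
\item If instead $W$ spends its coverage on a fixed set of types, then by the symmetry of $U$ the number of uncovered types in $U$ is hypergeometric.
\end{enumerate}
In either case the expected number of uncovered types is $\Omega(k^d)$, which exceeds $k^{d-1}$ by a factor $\Omega(k)$. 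Hypergeometric (or negatively associated Chernoff) concentration then makes the failure probability $1-e^{-\Omega(k^{d-1})}$, which exceeds any fixed $\delta<1$ once $k$ is large enough.

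I expect the delicate part to be making the first case rigorous for adversarial $W$, which may correlate its choices with the seen types. I would first try to bound the number of covered types in $U$ by $|U|\cdot(\text{coverage of }W)/k^d$ in expectation. That leaves at least $|U| - |U\cap\cup W|$ uncovered, and since $\cup W$ must overlap the seen types whenever they are numerous, this should give the $\Omega(k^d)$ bound on uncovered types.
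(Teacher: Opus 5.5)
Your instance and parameter accounting match the paper. Its proof of this proposition is only the computation comparing $t \le k^d/2$ with $\log m = \Theta(k^{d-1}\log k)$ on \Cref{ex:sample-complexity-m-dependence}, stated for constant $d$, i.e.\ along the subsequence $m_d$ as you propose.

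The gap is in the failure argument, which carries all the content. A uniformly random permutation $\pi$ of the types is an automorphism of this instance. In the permuted instance, a type-$\tau$ voter approves exactly the candidates whose original support contains $\pi^{-1}(\tau)$, and $\pi^{-1}(\tau)$ is again a uniform type. So the profile $\prof$ is literally unchanged, and your Yao distribution is a point mass. What matters is the position of the unseen types relative to the supports of $W$, and under $\pi$ both move together. Concretely, a rule that knows this fixed instance can output, with zero samples, $k$ candidates whose supports partition the $k^d$ types. Every voter is then covered and \jr{} holds, so your claim that roughly half of $W$'s coverage is ``wasted on seen types'' does not follow.

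What has to be hidden is the correspondence between candidate names and $k^{d-1}$-subsets. The paper's argument (proof of \Cref{thm:sample-complexity-depends-on-m}) uses that candidates with the same observed incidence $S_c$ on the seen types $S$ are indistinguishable. Under a uniformly random labeling, each chosen $c$ then has unseen support that is a uniform random subset of $\bar S$ of size $r_c = k^{d-1} - \abs{S_c}$.

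Even after this fix, your planned first-moment bound on covered types in $U$ is too weak. The rule may pick $k$ candidates with no observed support at all, so $\sum_c r_c = k^d \ge \abs{\bar S}$ and no coverage is spent on seen types. Any linear bound on the number of covered unseen types is then vacuous. The uncovered mass comes from collisions among the randomly placed unseen supports, so you need a per-type bound: for each $\tau \in \bar S$,
\[
\Pr[\tau \text{ uncovered by } W] \gtrsim \prod_{c \in W}\Bigl(1 - \frac{r_c}{\abs{\bar S}}\Bigr) \ge \prod_{c \in W}\Bigl(1 - \frac{2 r_c}{k^d}\Bigr) \ge \exp\Bigl(-\frac{4}{k^d}\sum_{c\in W} r_c\Bigr) \ge e^{-4}.
\]
This uses $\abs{\bar S} \ge k^d/2$, $2r_c/k^d \le 2/k \le 1/2$, $1 - x \ge e^{-2x}$ on $[0,1/2]$, and $\sum_c r_c \le k^d$. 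The $\gtrsim$ absorbs the negligible without-replacement correction for chosen candidates that share an observed incidence.

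This gives at least $e^{-4}k^d/2 > k^{d-1}$ expected uncovered unseen types once $k > 2e^4$. A concentration bound then makes the failure probability tend to $1$, which handles every fixed $\delta < 1$. Either negative association of the per-type indicators or bounded differences over the $k$ random supports suffices.
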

\begin{proof} %[Proof of \Cref{prop:jr-complexity-log-m-dependence}]
	In \Cref{ex:sample-complexity-m-dependence} the number of samples necessary to satisfy \jr{} for any $\delta <1$ is strictly more than $k^d/2$, while the number of candidates is $m = \binom{k^d}{k^{d-1}}$.
	Therefore $\log m = \Theta(dk^{d-1} \log k)$, so for any constant $d\geq 2$ we have for some constants $c$, $c'$ that the number of necessary samples is
	\[
		t \geq k^d/2 \geq c\cdot  k \cdot \frac{\log m}{\log k} \geq c'\cdot  k \cdot \frac{\log m}{\log \log m}. \qedhere
	\]
\end{proof}

%%%%
\subsection{Impossibility for Droop JR}
\label{sec:lowerbounds-droop}

The analyses of $\eps$-\mkc and \samplepav suffice to show that the sample complexity of \jre{} increases boundedly as $\eps$ decreases.
Conversely, our lower bound instances can be modified to show that in the limit as $\eps$ approaches $0$ and we approach Droop \jr{}, \jre{} becomes unboundedly hard to satisfy.
This unlearnability of Droop \jr{} can be made precise in the following sense:
\begin{corollary}[Droop \jr{} is unlearnable]
\label{thm:droop-jr-unlearnable}
	For any $\delta \leq \nf{1}{k+1}$ and any function $f(k, m, \delta)$, there exists an instance and an $\eps > 0$ for which \jre{} (and therefore Droop \jr{}) is not $(f(k, m, \delta), \delta)$-satisfiable.
\end{corollary}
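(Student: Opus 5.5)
The plan is to obtain the corollary directly from one of the two $\eps$-dependent lower bounds already proved. \Cref{thm:small-eps-lower-bound} is the cleaner tool: it shows that for all $k\ge 2$, all $\delta\le \nf13$, and all $\eps\le k^{-2}/8$, no sample-based rule $(t,\delta)$-satisfies \jre{} when $t\le c\,\eps^{-2}k^{-3}$. Its instances $\cI_i$ have $m=k+1$ candidates. The hypothesis $\delta\le \nf{1}{k+1}$ gives $\delta\le \nf13$ for $k\ge 2$, so the theorem applies.

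Given an arbitrary function $f(k,m,\delta)$, fix $k$ and $\delta$ and set $m=k+1$. Then $T\defeq f(k,k+1,\delta)$ is a fixed finite number. Choose
\[
	\eps \defeq \min\left\{ \frac{1}{8k^2},\ \left(\frac{c}{k^3 (T+1)}\right)^{1/2} \right\}.
\]
This makes $c\,\eps^{-2}k^{-3}\ge T+1> T$ while still satisfying $\eps\le k^{-2}/8$. By \Cref{thm:small-eps-lower-bound}, some partition profile $\cI_i$ then witnesses that no rule $(T,\delta)$-satisfies \jre{}. Because the failure probability of a rule can only be worse with fewer samples, the same holds for every $t\le T$; this also covers non-integer $f$, by taking $\lfloor f\rfloor$. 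If one wanted a single instance rather than a distribution, I would check that the proof of \Cref{thm:small-eps-lower-bound} goes through a uniform prior over $\{\cI_i\}$. Failure probability above $\delta$ on average then yields one fixed $\cI_i$ on which the rule fails with probability above $\delta$.

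Next, transfer the conclusion to Droop \jr{}. The Droop \jr{} committees of $\prof$ are, by definition, $\bigcap_{\eps'>0}\jre{}'(\prof)\subseteq \jre{}(\prof)$. A rule whose output lies in Droop \jr{}$(\prof)$ with probability $\ge 1-\delta$ therefore lies in \jre{}$(\prof)$ with at least that probability. So non-satisfiability of \jre{} at $T$ samples implies non-satisfiability of Droop \jr{}.

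I expect the main subtlety to be the order of quantifiers and the role of $\delta\le\nf1{k+1}$, rather than any calculation. The instance and $\eps$ must be chosen after $f$, and $f$ must be evaluated at the $m$ of the instance. That is why fixing $m=k+1$ in advance, and using instances of that size for every $\eps$, is essential. The threshold $\nf1{k+1}$ suggests the author may instead use the $\cJ$ family, where a random committee already succeeds with probability $1-\nf1{k+1}$. Using \Cref{thm:k3-lower-bound-jr} would be awkward there, because its requirement $\delta\le c_2k^2\eps^2$ shrinks as $\eps\to 0$. So I would rely on \Cref{thm:small-eps-lower-bound}, whose $\delta\le\nf13$ condition is independent of $\eps$, and remark that $\nf1{k+1}\le\nf13$.
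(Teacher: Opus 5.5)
Your proposal is correct and matches the paper's proof, which fixes $\delta$ and $t = f(k,m,\delta)$ and invokes \Cref{thm:small-eps-lower-bound} (or \Cref{thm:k3-lower-bound-jr}) with $\eps$ small enough. Your explicit choice of $\eps$, your point that $m=k+1$ stays fixed as $\eps\to 0$, and the inclusion of Droop \jr{} committees among \jre{} committees only fill in details the paper leaves implicit, and your preference for \Cref{thm:small-eps-lower-bound} is justified: for fixed $\delta$, the hypothesis $\delta\le c_2k^2\eps^2$ of \Cref{thm:k3-lower-bound-jr} fails as $\eps\to 0$, so only the first of the paper's two routes works as stated.
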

\begin{proof}
	Fix $\delta$ and $t = f(k,m,\delta)$ and apply \Cref{thm:small-eps-lower-bound} or \Cref{thm:k3-lower-bound-jr} with small enough $\eps$.
\end{proof}

This can be illustrated more vividly in the case of uniform samples from a finite population $[n]$.
\begin{observation}
\label{claim:droop-jr-unlearnable-finite-population}
	Even sampling without replacement, for $\delta < 1/(k+1)$ Droop \jr{} requires $t=n$.
\end{observation}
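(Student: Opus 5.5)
We will prove the observation with a finite-population version of the partition construction of \Cref{ex:hard-partition}, exploiting the fact that Droop \jr{} is violated by an arbitrarily small excess over the quota $\nf{n}{k+1}$. Take $n$ voters and $\cC=[k+1]$, with each voter approving exactly one candidate. Fix a bloc sizing in which every bloc has size $\lceil n/(k+1)\rceil$ or $\lfloor n/(k+1)\rfloor$. Choose $n$ so that exactly one bloc $B_i$ has strictly more than $\nf{n}{k+1}$ voters and all others have at most $\nf{n}{k+1}$ (for instance $n \equiv 1 \pmod{k+1}$). Then the unique Droop \jr{} committee is one that contains $i$; more precisely, every committee omitting $i$ violates \jre{} for all small enough $\eps>0$. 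Let $\cJ^{(n)}_i$ denote this profile, and draw $i$ uniformly from $[k+1]$.

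Next, suppose a sample-based rule sees $t \le n-1$ voters drawn without replacement. The key observation is that the profiles are related by relabeling candidates, and the sampled multiset can be consistent with several choices of $i$. The plan is to show that, with $t<n$, the unsampled voter can hide which bloc is large. Fix any realized sample of $t=n-1$ voters. The missing voter belongs to some bloc $j$, and conditioned on the sample, the posterior over $i$ must be computed. I would argue the following: given the observed counts, the rule must exclude exactly one candidate, and it fails whenever it excludes $i$. Consider the $k+1$ instances jointly, coupling them by the relabeling $i \leftrightarrow j$ that swaps the large bloc with a small bloc. If the unseen voter lies in the large bloc, the observed counts are exactly equal across all blocs, since each bloc shows $\lfloor n/(k+1)\rfloor$. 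This case has probability $\nf{|B_i|}{n} \approx \nf{1}{k+1}$. In it the sample carries no information about $i$, so by symmetry any rule excludes $i$ with probability $\nf{1}{k+1}$, and the conditional failure probability is $\nf{k}{k+1}$. Overall, failure occurs with probability at least roughly $\frac{1}{k+1}\cdot\frac{k}{k+1}$. To obtain failure exceeding the exact $\delta<1/(k+1)$, I would improve this by taking $n\equiv 1$ and noting that more voters can be left out when $t<n-1$, since leaving out more voters only loses information (sampling fewer voters can be simulated from more). So $t=n-1$ is the binding case.

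The main obstacle is matching the threshold $\delta<1/(k+1)$ exactly rather than a constant multiple of it. The first thing I would try is to choose the construction so that, at $t=n-1$, the uninformative event has probability close to $1$ rather than $\nf{1}{k+1}$. One way is to make most voters duplicates across blocs: add voters approving all candidates, which affects no \jr{} constraint. Then the distinguishing voters are few (the $k+1$ marginal voters), and the missing voter is more likely to be a critical one. Alternatively, I would argue via the posterior computed directly: conditioned on counts where one bloc is visibly large, $i$ is determined. Failure then comes only from the tie event, of probability $|B_i|/n \cdot k/(k+1)$, and tuning $n$ (small $n$, e.g.\ $n=k+2$, with $|B_i|=2$) makes this at least $\frac{2}{k+2}\cdot\frac{k}{k+1}\ge \frac{1}{k+1}$ for $k\ge 2$. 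This exceeds $\delta$, so $t=n$ is required.
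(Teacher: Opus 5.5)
Your construction is sound: the $n\equiv 1 \pmod{k+1}$ partition, or $n=k+2$ with $|B_i|=2$, works as well as the paper's $k$ blocs of $k+1$ voters plus one bloc of $k+2$. You also identify the right event: the unsampled voter lies in $B_i$, so all sampled counts tie, and this has probability $|B_i|/n>\frac{1}{k+1}$. Your reduction from $t<n-1$ to $t=n-1$ is correct.

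The gap is the final probability computation. A $k$-committee from $[k+1]$ omits exactly one candidate, and it violates Droop JR exactly when that candidate is $i$. On the tie event your posterior on $i$ is uniform, so every rule fails with conditional probability $\frac{1}{k+1}$, not $\frac{k}{k+1}$. You state the former and then use the latter. With the correct factor, averaging over $i$ gives failure exactly $\frac{|B_i|}{n(k+1)}$, which is $\frac{2}{(k+1)(k+2)}$ for $n=k+2$. This is of order $\frac{1}{(k+1)^2}$, far below the threshold. So the inequality $\frac{2}{k+2}\cdot\frac{k}{k+1}\ge\frac{1}{k+1}$ bounds nothing, and no uniform-prior argument on this family can reach $\frac{1}{k+1}$. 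The padding idea also goes the wrong way. Voters approving every candidate raise $n$, and hence the Droop quota $\frac{n}{k+1}$, which pushes $B_i$ below quota. They also make the unsampled voter \emph{less} likely to be critical.

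What is missing is a worst-case choice of instance rather than an average, which is what the paper's ``indistinguishable'' remark amounts to. The tied sample has the same counts under all $k+1$ instances. So if the rule answers it with a fixed committee omitting some candidate $g$, take the instance whose large bloc is $B_g$. The rule then fails whenever the unsampled voter lies in $B_g$, with probability $\frac{|B_g|}{n}>\frac{1}{k+1}>\delta$, precisely because $B_g$ exceeds the Droop quota. This is where the threshold $\frac{1}{k+1}$ in the statement comes from.

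This step uses that the rule's response to the tie is a fixed committee, for instance a rule that depends on the ballots only through their counts. A rule that breaks the tie using the order of the sampled ballots behaves differently. For example, omitting the candidate on the first sampled ballot fails on these families only with probability $\frac{|B_i|}{n(k+1)}$ for every $i$. So the observation, and the paper's sketch, should be read with that restriction in mind.
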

\begin{proof}[Proof of \Cref{claim:droop-jr-unlearnable-finite-population}]
	For any fixed $k$, consider an instance as in \Cref{fig:partition-all-small} with $n = (k+1)^2$ and $k$ blocs of size $k+1$, and one of size $k+2$.
	The last bloc alone reaches the Droop quota, and after $n-1$ samples all blocs will have the same number of sampled supporters and be indistinguishable.
\end{proof}
This is outside the regime for which we established the asymptotic equivalence of sampling with and without replacement (\Cref{claim:with-vs-without-replacement}), and so the requirements for sampling distinguishable ballots with replacement is still worse: it requires $t = \Omega(n \log n)$ by the coupon collector's problem.

%%%%
\subsection{Certifying Proportional Committees}
\label{sec:lowerbounds-certifying}

Consider the task of \emph{certifying} that a specific choice of $k$ candidates $W \subseteq \cands$ satisfies \jr{}.
That is, given a committee $W$, how many samples are necessary to decide whether or not $W$ satisfies \jr{}, with success probability at least $1-\delta$?
This is more in the spirit of \emph{property testing}, wherein we are given a combinatorial object and would like to efficiently decide, given sample access to some data, whether it has a property of interest \cite{goldreich98property} \cite{goldreich17introduction}.

Perhaps surprisingly, this is impossible even for \jr{}, as well as for $\alpha$-\jr{} for $\alpha < 1$.
This is for the same reason as we saw in \Cref{sec:lowerbounds-droop}.
For a given committee $W$, there is no way to know how many samples will be necessary to decide whether the maximum marginal coverage of an unchosen candidate is $<1/k$ or $\geq 1/k$, without first knowing how \emph{far} $W$ is from satisfying \jre{}.

If instead we have a committee which either satisfies Droop \jr{} or doesn't satisfy Hare \jr{}, then distinguishing the two requires $O(\eps^{-2} \log \nf{m}{\delta})$ samples by implementing a single greedy step.
Is this optimal via \Cref{ex:hard-partition}?
Although parameterizing the difficulty of testing whether a committee satisfies even $\alpha$-\jr{} by its distance from feasibility is perhaps unsatisfying, gaps of this nature are often necessary in the formulations of property testing problems.

This perhaps offers us a distinct way to understand how \jr{} differs from $\nf{1}{k^2}-\MkC$; our lower bound in \Cref{sec:coverage} apparently demonstrates that testing whether a given $k$-subset is coverage-optimal or $\eps$-far from coverage-optimal requires $\tilde\Omega(k \eps^{-2})$ samples.

%%%
\section{Improved Bounds for Structured Domains}
\label{sec:vcdim}

Having established worst-case upper and lower bounds on the sample complexity of satisfying multiwinner proportionality axioms, it is natural to ask whether the landscape changes if we restrict our attention to structured families of election instances, and if so, how.

A common beyond-worst-case assumption for approval elections is that they arise from \emph{candidate interval} (CI) domains.
This means the approval matrix exhibits the consecutive ones property (C1P): there is a global ordering $\pi$ of $\cands$ such that any $A \sim \prof$ forms a contiguous subsequence with respect to $\pi$.
This can be understood as the approval analog of the property of single-peakedness for ranked preference profiles \cite{elkind15structure}.
Our lower bound instances (\Cref{fig:partition-all-small}) are in fact partition instances and hence satisfy CI, so the sample complexity of satisfying \jre{} for even partition and CI profiles remains $\Omega(k^{-1} \eps^{-2})$.

Conversely, and more generally, we might hope that the PAC learning view of approximate coverage or $\pavsc$ maximization (\Cref{alg:eps-mkc-brute-force} and \Cref{alg:sample-pav}) might offer complexity improvements when the space of solutions is constrained.
For instance, an approval profile is \emph{$d$-Euclidean} if there is an embedding $\rho: \cands \rightarrow \mathbb{R}^d$ such that each approval set $A \sim \cP$ takes the form $A = \{c \in \cands: \|a - \rho(c)\|_2 \leq r\}$ for some choice of $a$ and $r$ \cite{elkind15structure}.
Here the VC dimension of the space of individual candidates is $d+1$ \cite{har2016towards}, and so the space of unions of $k$ candidates has VC dimension $O(dk \log k)$ \cite{blumer89learnability}.
As a result, for $d$-Euclidean domains we may replace the $\log m$ term in our coverage and $\pavsc$ guarantees (\Cref{thm:eps-mkc-sample-complexity} and \Cref{prop:sample-pav-correct}) with $\log k$; an uninspiring improvement unless $m$ is very large.

%%%%
\subsection{Popular Candidates and High-Coverage Committees}
Fortunately, revisiting our sample-efficient approach (\Cref{alg:sample-ws-ls-pav}) allows us to make improved claims for a distinct class of instances: those containing one or more high-support candidates.
\begin{restatable}[JR for popular-candidate instances]{theorem}{bwclargecandub}
\label{thm:large-candidate-ub}
	For any $\eps>0$ and any constant $\beta >0$, for any election $E$, if there is a candidate $c$ for which $\probover{A \sim \prof}{c \in A} \geq \beta$, then only $\tilde O(k^{2}) $ samples are necessary to identify a \jre{} committee for $E$.
\end{restatable}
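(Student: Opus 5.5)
Here is how I would prove \Cref{thm:large-candidate-ub}: include the popular candidate $c$, then fill the remaining seats with a warm-started pivot procedure, as in \samplewslspav{} (\Cref{alg:sample-ws-ls-pav}). The key observation concerns voters who approve $c$. They are covered as soon as $c\in W$, so a \jre{} violation can only come from the residual population $\prof'$, meaning $\prof$ restricted to ballots not containing $c$. This population has mass at most $1-\beta$. With $c$ fixed in the committee, the remaining $k-1$ seats only need to give representation to groups of size $\frac{1}{k+1}+\eps$ inside a population of mass $1-\beta$. Relative to $\prof'$, the required quota is about $\frac{1}{(1-\beta)(k+1)}$, while the $k-1$ free seats only need to handle $(1-\beta)(k+1)$ such groups. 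For constant $\beta$ and large $k$, $(1-\beta)(k+1)\le k-1-\Omega(\beta k)$. So the effective election on $\prof'$ has slack of order $\beta/k$ per group rather than $1/k^2$. This matches the $\alpha$-\jr{} regime of \Cref{cor:sample-complexity-ejrp-relaxed-ub}, with $\alpha$ bounded away from $1$.

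The steps, in order, are as follows.

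\textbf{Step 1 (find a popular candidate).} Find a candidate with support at least $\beta/2$. By \Cref{lem:separating-means-from-samples-new}, $O(\beta^{-2}\log\nf{m}{\delta})$ samples give an empirical top-support candidate $\hat c$ whose true support is at least $\beta-\beta/2$. Set $\beta':=\beta/2$.

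\textbf{Step 2 (greedy fill).} Run \samplegreedy{} for the remaining $k-1$ seats, with threshold $\frac{1}{k+1}$ and tolerance $\eps'=\Theta(\beta/k)$ instead of $\Theta(1/k^2)$. At each round, add the candidate with largest empirical marginal coverage, but only if that coverage exceeds $\frac{1}{k+1}+\eps'/2$; otherwise stop. Each round uses fresh samples, $O(\eps'^{-2}\log\nf{mk}{\delta})=\tilde O(k^2\beta^{-2}\log\nf{m}{\delta})$ of them. That gives $\tilde O(k^3)$ in total, which is too many. To reach $\tilde O(k^2)$, I would instead reuse one batch of samples for all rounds, since each round adds only one candidate. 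Union bounding over all committees built so far is too costly, so the plan is to bound only the marginal estimates: any committee of size at most $k$ has at most $m^k$ possibilities. With $t=\tilde O(k^2\cdot k\log m)$ that union bound is again $k^3$. The budget-efficient route is therefore the \samplecspav{} analysis with tolerance $\eps'=\Theta(\beta/k)$. Substituting into \Cref{thm:cs-pav-correctness} gives $O(\eps'^{-2}\polylog k\log\nf{m}{\delta})=\tilde O(k^2)$ samples. The output then satisfies $\mathrm{EJR}_{\eps'}$+ on the instance.

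\textbf{Step 3 (strengthen to \jre{}).} Show that the presence of $\hat c$ upgrades $\mathrm{JR}_{\eps'}$ to \jre{}. This is the main obstacle. Whether the large quota buys anything depends on the popular candidate's voters absorbing seats. My attempt is a PAV-based counting argument. At a local optimum, the standard swap-averaging proof of \Cref{prop:pav-score-to-ejrp-guarantee} sums the marginal losses of removing each $w\in W$. That sum is bounded by $\Pr_{A\sim\prof}[A\cap W\ne\emptyset]$. The supporters of $\hat c$ contribute to this sum, but if $\hat c$ is the only approved winner for many of them, removing $\hat c$ is costly. The averaging argument then picks a cheaper removal among the other $k-1$ members. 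Concretely, I would average the removal loss over $W\setminus\{\hat c\}$ only. The total loss over those $k-1$ seats is at most $1-\beta'$ plus lower-order terms, because voters whose only approved winner is $\hat c$ contribute nothing to it. So some $w\ne\hat c$ has removal loss at most $\frac{1-\beta'}{k-1}$. For $k\gtrsim 2/\beta$, this is below $\frac{1}{k+1}+\eps-\eps'$ by a margin $\Omega(\beta/k)$. Swapping that $w$ for an uncovered candidate with support at least $\frac{1}{k+1}+\eps$ would then improve the PAV score by $\Omega(\beta/k)$, contradicting approximate local optimality at tolerance $\eps'$.

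For small $k$, meaning $k=O(1/\beta)$, the bound $\tilde O(k^2)$ already follows from the general $\tilde O(k^4)$ result with constants depending on $\beta$. One caveat remains: I must ensure $\hat c$ stays in $W$, or else run the swap argument directly over the $k-1$ free seats while keeping $\hat c$ fixed.
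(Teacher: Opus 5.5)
Your opening paragraph states the right reduction, and Step 1 is fine; it even supplies a detail the paper skips. The gap is that Steps 2--3 never carry the reduction out. You run \samplecspav{} on the full profile, and in Step 3 you claim that the removal losses over $W\setminus\{\hat c\}$ sum to at most $1-\beta'$, because voters whose only approved winner is $\hat c$ contribute nothing. That observation only covers voters with $A\cap W=\{\hat c\}$. A supporter of $\hat c$ who approves $j\ge 1$ further winners contributes $j/(j+1)\ge 1/2$, and on the full profile PAV will usually give $\hat c$'s supporters many seats.

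Here is a counterexample. Let $\rho=\frac{1}{k+1}-\frac{\eta}{k}$, and take a party-list profile with three kinds of blocs:
\begin{itemize}
\item one bloc of mass $s_0\rho\ge\beta$ approving $\hat c$ and $s_0-1$ clones;
\item $k-s_0$ blocs of mass $\rho$, each approving one private candidate;
\item one bloc of mass $\frac{1}{k+1}+\eta$ approving only $c^*$.
\end{itemize}
Let $W$ consist of the $s_0$ clones of $\hat c$ and the $k-s_0$ private candidates. Then:
\begin{itemize}
\item $W$ contains $\hat c$, and every seat has removal loss exactly $\rho$, so the sum over $W\setminus\{\hat c\}$ is $(k-1)\rho\approx 1-\frac{2}{k+1}$, not $1-\beta'$.
\item The best pivot, swapping any seat for $c^*$, gains only $\eta\frac{k+1}{k}$.
\item With $\eta=\eps'/2=\Theta(\beta/k)$, $W$ is within $\eps'$ of the global PAV optimum. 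It therefore meets everything your Step 2 guarantees, whether or not $\hat c$ is pinned.
\item Yet the bloc of size $\frac{1}{k+1}+\eta>\frac1k$ is unrepresented, so $W$ violates even Hare \jr{} once $k$ is large relative to $1/\beta$.
\end{itemize}
So approximate PAV optimality on the full profile at tolerance $\Theta(\beta/k)$ gives only $\mathrm{JR}_{\eps'}$, and no reweighting of the swap-averaging argument can upgrade it.

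The missing step is to remove $\hat c$'s supporters from the objective entirely, which is what the paper does:
\begin{itemize}
\item Treat the ballots not containing $\hat c$, of mass $\mu=\Pr[\hat c\notin A]\le 1-\beta/2$, as a standalone profile. Implement this by discarding or zeroing out sampled ballots that contain $\hat c$.
\item Run the relaxed-quota algorithm of \Cref{cor:sample-complexity-ejrp-relaxed-ub} on it with $k-1$ seats, with $\eps'$ defined by $\frac1k+\eps'=\frac1\mu(\frac{1}{k+1}+\eps)$. This gives $\eps'=\Omega(\beta/k)$.
\item Output $W'\cup\{\hat c\}$.
\end{itemize}
Any \jre{} violation of the final committee is a group not approving $\hat c$. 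It is therefore a $\mathrm{JR}_{\eps'}$ violation for $W'$ on the residual profile, and $\eps'=\Omega(\beta/k)$ yields $\tilde O(k^2)$ samples. Equivalently, your Step 3 averaging becomes correct once PAV scores count only ballots with $\hat c\notin A$. Then the removal losses over the $k$ members of $W'\cup\{c^*\}$ sum to at most $\mu\le 1-\beta/2$, whatever else $\hat c$'s supporters approve.

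Separately, your small-$k$ fallback fails for arbitrary $\eps$. The general bound is $\tilde O(\eps^{-2})$, not $\tilde O(k^4)$. Moreover, when $\beta$ is below roughly $\frac{1}{k+1}$, the instances of \Cref{ex:hard-partition} already contain a candidate of support at least $\beta$. The statement should be read for $k$ large relative to $1/\beta$, as the paper's proof implicitly assumes.
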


Beyond-worst-case guarantees on the sample complexity of \jr{} are also possible for any instance where there exists a committee with high overall voter coverage.
\begin{restatable}[JR for high-coverage instances]{theorem}{bwclargecovgub}
	\label{thm:large-covg-ub}
	For any $\eps>0$ and any constant $\gamma < 1$, for any election $E$, if there is a committee $W$ for which $\probover{A \sim \prof}{A \cap W \neq \emptyset} \geq 1 - \nf{\gamma}{k+1}$, then only $\tilde O(k^{2}) $ samples are necessary to identify a \jre{} committee for $E$.
\end{restatable}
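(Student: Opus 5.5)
Here is the plan. If some committee $W_0$ covers all but a $\nf{\gamma}{k+1}$ fraction of voters, then only a small mass of voters can be uncovered by any good committee. A \jre{} violation for a committee $W$ requires some $c\notin W$ whose uncovered supporters have mass at least $\frac{1}{k+1}+\eps$. That mass is at least a $\nf{1}{\gamma}>1$ factor larger than what $W_0$ leaves uncovered. The natural algorithm is therefore a sampled ``cover-then-patch'' rule in the spirit of \samplegreedy{}, run with a constant-factor (rather than $\nf{1}{k^2}$-additive) slack.

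\textbf{Step 1: coarse coverage.} Run sample-based greedy coverage (the \samplegreedy{} estimator) to pick $k' \le k$ candidates. The target is total uncovered mass at most $\frac{\gamma'}{k+1}$ for some constant $\gamma'$ with $\gamma<\gamma'<1$. The natural tool is a sampled version of greedy for \mkc, with each marginal estimated to additive accuracy $\Theta(\nf{(1-\gamma)}{k^2})$ or so. That accuracy seems too costly, so I would argue differently. Greedy's standard analysis says each step reduces the uncovered mass $U$ by a $\nf{1}{k}$ fraction of the gap to $W_0$'s coverage. It therefore suffices to estimate marginals to \emph{multiplicative} accuracy relative to the current gap, which is of order $U/k$. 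With $U\gtrsim \nf{1}{k}$ this additive accuracy is $\Theta(\nf{1}{k^2})$, which is still too many samples.

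The alternative I would commit to sidesteps global coverage. Instead, use the high-coverage hypothesis only to bound the total support of violating coalitions, as follows.

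\textbf{Step 1$'$.} Simulate GJCR-style greedy at level $\ell=1$. Repeatedly take a candidate whose estimated uncovered support is at least $\frac{1}{k+1}+\nf{\eps}{2}$, with estimates accurate to $\pm\nf{\eps}{4}$. Then fill the remaining seats arbitrarily. The key structural point is a bound on the number of greedy rounds. Each chosen candidate covers at least $\frac{1}{k+1}$ fresh mass. Moreover, while a violation exists, the fresh mass it exposes lies largely inside $\cup W_0$. So I need to show that at most $k$ picks are needed, and the real issue is the \emph{per-round accuracy}.

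The accuracy needed is $\eps=\Theta(\nf{1}{k^2})$ for Hare \jr{}, which naively gives $k^4$ samples. To beat this, I would use the slack $1-\gamma$. A chosen set of picks that each cover at least $\frac{1-\gamma''}{k}$ fresh mass (a constant-factor relaxation) leaves at most $\gamma''$-order uncovered mass. I would then argue that with $W_0$ present, any candidate with uncovered support $\ge \frac{1}{k+1}$ must be found while the threshold is only constant-factor accurate. In other words, I would show that greedy with constant-multiplicative threshold $\frac{c}{k}$ already terminates with uncovered mass below $\frac{1}{k+1}$. This works because $W_0$'s slack means that the final uncovered mass, after greedy saturates, is at most $\gamma\cdot\frac{1}{k+1}$ plus the loss from the relaxed threshold.

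\textbf{Sample count.} Estimating supports of order $\nf{1}{k}$ to constant multiplicative accuracy takes $O(k\log\nf{mk}{\delta})$ samples per round by multiplicative Chernoff. Over $k$ rounds with fresh samples, this gives $\tilde O(k^2)$ in total, matching the claim.

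\textbf{Main obstacle.} The hard part is the structural lemma that a constant-factor-relaxed greedy still yields \jre{} under the high-coverage hypothesis. Concretely, I must show that at termination no unchosen $c$ has uncovered support $\ge\frac{1}{k+1}+\eps$. I would first try a charging argument. If such a $c$ existed, its uncovered supporters, which are at least a $(1-\gamma)$-fraction inside $\cup W_0$, would spread over the $k$ candidates of $W_0$. Some $w\in W_0$ would then have uncovered support at least $\frac{1-\gamma}{k(k+1)}$. The difficulty is that this is smaller than the threshold, which suggests the argument instead needs a budget bound: the total uncovered mass after greedy must be below $\frac{1}{k+1}$. I would obtain that bound from the $(1-(1-\nf ck)^k)$ greedy guarantee relative to $W_0$, choosing $c$ close to $1$ depending on $\gamma$.
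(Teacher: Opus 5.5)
\textbf{There is a genuine gap.} The structural lemma your plan rests on is false, and the greedy guarantee does not supply the budget bound you want.

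With $O(k\log\nf{mk}{\delta})$ samples per round, you can only certify marginals of size $\Theta(\nf1k)$ up to a constant factor. So your rule may take any candidate whose uncovered support clears $\nf ck$, or is within a factor $1-\eta$ of the maximum. Such a rule need not output a \jre{} committee, even when $\gamma=0$. Here is an instance:
\begin{itemize}
\item Take $w_1,\dots,w_k$ with disjoint supports $B_1,\dots,B_k$ of mass $\nf1k$ each, so $W_0=\{w_1,\dots,w_k\}$ covers everyone.
\item Add decoys $d_1,\dots,d_k$ with pairwise disjoint supports of mass $\frac{1-\eta}{k}$ inside $B_2\cup\dots\cup B_k$. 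This is possible once $k\ge\nf1\eta$.
\item In every round, each remaining decoy has uncovered support exactly $\frac{1-\eta}{k}$. The relaxed greedy may therefore pick $W=\{d_1,\dots,d_k\}$.
\item Then $w_1$ has uncovered support $\nf1k\ge\frac1{k+1}+\eps$, a \jr{} violation.
\end{itemize}

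The greedy guarantee cannot rescue this. A $(1-(1-\nf ck)^k)$-approximation relative to $W_0$ only bounds the uncovered mass by roughly $e^{-c}$, which is a constant. Likewise, $k$ picks that each cover $\frac{1-\gamma''}{k}$ fresh mass only guarantee uncovered mass at most $\gamma''$, far above $\frac1{k+1}$. The charging argument that makes greedy satisfy \jr{} needs each pick to cover about $\frac1{k+1}$ whenever a violator exists. That means relative accuracy $O(\nf1k)$ on supports of size $\Theta(\nf1k)$, which costs $\tilde\Theta(k^3)$ samples per round and returns you to $\tilde O(k^4)$.

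\textbf{The missing idea is to use the hypothesis globally rather than sequentially.} Any committee leaving at most $\frac1{k+1}$ of the voters uncovered satisfies \jre{} for every $\eps>0$. By hypothesis, some committee leaves at most $\frac{\gamma}{k+1}$ uncovered. So it suffices to run the brute-force \samplemaxkcover{} and separate two groups of committees: those with uncovered mass at most $\frac{\gamma}{k+1}$, and those with uncovered mass at least $\frac1{k+1}$. This is a constant-factor gap in a quantity of size $\Theta(\nf1k)$.

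The indicator that a sampled ballot misses $W$ has variance at most $1-p_W$. Bernstein's inequality with threshold $q=1-\frac{(1+\gamma)/2}{k+1}$ then bounds each committee's failure probability by $\exp(-\Omega(\nf tk))$. A union bound over the $\binom mk$ committees gives $t=O(k(k\log m+\log\nf1\delta))$.

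This is how the paper proves the theorem. The $k\log m$ cost of the union bound over committees replaces your $k$ rounds. The per-committee precision only needs to be a constant factor, which a sequential greedy cannot exploit: its guarantee shrinks the coverage gap by only a constant factor over $k$ rounds.
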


%%%%
\subsection{The Structure of Hard Instances}
In fact, we can say more.
Our positive result for relaxed \jr{} (\Cref{cor:sample-complexity-ejrp-relaxed-ub}) also tells us something about the hard instances.
Any election $E$ for which \jr{} is harder than $\tilde O(k^3)$ must `look like' our hard instance (\Cref{fig:partition-all-small}), in the sense that it must have committees that approximately equipartition the electorate according to candidate support:
\begin{restatable}[Structure of hard instances]{proposition}{hardinstancestructure}
\label{claim:hard-instance-characterization}
	Let $E=(C, \cP, k)$ be an election.
	Either a \jr{} committee can be identified for $E$ using $\tilde O(k^3)$ samples, or there
	exists a committee
	$W$ for which candidates have
	\begin{itemize}
		\item Approximately $\nf{1}{k}$-exclusive support: $\probover{A \sim \cP}{A \cap W = \{c\}} \geq \frac{1}{k} \left(1 - \frac{1}{\sqrt{k}} \right)$ for all $c \in W$, and
		\item Small pairwise overlap: $\probover{A \sim \cP}{\abs{A \cap W} \geq 2} \leq \frac{1}{\sqrt{k}}$.
	\end{itemize}
\end{restatable}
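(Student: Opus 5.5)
Run \samplecspav{} (\Cref{alg:sample-cs-pav}) with a tolerance coarser than $\nf{1}{k^2}$, namely $\eps = \nf{1}{k^{3/2}}$ up to constants and polylogarithmic factors. By \Cref{thm:cs-pav-correctness} this uses $\tilde O(\eps^{-2}\log \nf{m}{\delta}) = \tilde O(k^3 \log \nf m\delta)$ samples. With probability $1-\delta$ it returns a committee $W$ that is $\eps$-locally optimal for \pav{}: no single swap improves $\pavsc(W)$ by more than $\eps$. This is exactly the hypothesis of \Cref{prop:pav-score-to-ejrp-guarantee}. The plan is a dichotomy on such a locally optimal $W$. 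Either $W$ already satisfies Hare \jr{}, which gives the first alternative of the statement, or the failure of \jr{} forces $W$ to have the equipartition structure. Since $W$ is computed without knowing which case holds, I would phrase this as follows: if no committee has the structure, then every locally optimal $W$ satisfies \jr{}, so the rule succeeds.

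First I would write the swap inequalities. For $c'\in W$, removing $c'$ loses $\Pr[A\cap W=\{c'\}] + \sum_{\ell\ge2}\nf{1}{\ell}\Pr[c'\in A,\ \abs{A\cap W}=\ell]$. Adding a candidate $c\notin W$ gains at least $\Pr[c\in A,\ A\cap W=\emptyset]$, plus $\nf{1}{(\ell+1)}$ on voters already represented. The standard averaging argument behind \Cref{prop:pav-score-to-ejrp-guarantee} sums the removal losses over $c'\in W$; the sum of $\sum_{c'} 1/\abs{A\cap W}$ over covered voters is $\Pr[A\cap W\neq\emptyset]\le 1$. Together with local optimality this gives, for every $c\notin W$,
\[
\Pr[c\in A,\ A\cap W=\emptyset] \le \frac{1}{k}\sum_{c'\in W}\mathrm{loss}(c') + \eps \lesssim \frac{1}{k}\Pr[A\cap W\neq\emptyset] - \frac{1}{k}(\text{slack}) + \eps .
\]
Suppose \jr{} fails for $W$, so some unchosen $c$ has uncovered support at least $\nf1k$. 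Because $\eps \ll \nf1k$, every inequality in the chain must then be nearly tight, with slack only $O(\eps k)=O(k^{-1/2})$ in units of $\nf1k$.

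Next I would read off the structure from this near-tightness. \emph{Small overlap:} voters with $\abs{A\cap W}\ge 2$ contribute to the average loss at most half what singly represented voters do. Tightness therefore forces $\Pr[\abs{A\cap W}\ge2] \le O(\eps k)\cdot$ polylog, which is at most $\nf{1}{\sqrt k}$ after tuning constants. \emph{Exclusive support:} because $c$ itself has uncovered mass $\nf1k$, swapping $c$ in for the minimum-loss $c'$ must not gain, so $\min_{c'}\mathrm{loss}(c') \ge \nf1k - \eps$. Subtracting the small multi-coverage contribution leaves $\Pr[A\cap W=\{c'\}] \ge \frac1k(1-O(k^{-1/2}))$ for every $c'\in W$. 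That is the first bullet, up to constants that can be absorbed by shrinking $\eps$ by a constant factor.

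I expect the main obstacle to be the overlap bound. The per-candidate loss bound only controls the average, and multiply covered voters still contribute $\nf1\ell$ to each loss, so I must check that total mass is exhausted. The check is this: $k$ candidates each with exclusive mass near $\nf1k$, plus $c$'s uncovered mass $\nf1k$, can total at most $1$. That already forces the exclusive masses to sum to at most $1-\nf1k$. So I would redo the accounting globally: sum the exclusive supports over $c'\in W$, add the uncovered mass of $c$, add $\Pr[\abs{A\cap W}\ge2]$, and bound the total by $1$. I expect the resulting thresholds to come out as $\nf{1}{\sqrt k}$ only after choosing $\eps = \Theta(k^{-3/2})$, with polylog factors hidden in $\tilde O$. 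If the constants fail, I would instead perturb $W$ to $W\setminus\{c'\}\cup\{c\}$ and exhibit the structure for that committee.
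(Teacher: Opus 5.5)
\textbf{There is a genuine gap at your ``small overlap'' step, and it is not a matter of constants.}

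For \pav{}, your own identity $\sum_{c'\in W}\mathrm{loss}(c') = \Pr[A\cap W\neq\emptyset]$ holds because each covered voter contributes $\sum_{c'\in A\cap W}\abs{A\cap W}^{-1} = 1$, however many members of $W$ it approves. So multiply represented voters contribute exactly as much to the averaged loss as singly represented ones. Near-tightness of that chain therefore says nothing about $\Pr[\abs{A\cap W}\ge 2]$.

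Here is a concrete instance. Take $k\ge 3$ and candidates $w_1,\dots,w_k,c$. A $(1-\nf1k)$-fraction of voters approves exactly $W=\{w_1,\dots,w_k\}$, and a $\nf1k$-fraction approves only $c$. Every swap of $c$ for some $w_i$ changes the \pav{} score by
\[
\tfrac1k - \left(1-\tfrac1k\right)\tfrac1k = \tfrac{1}{k^2}.
\]
\begin{itemize}
\item So $W$ is $\eps$-swap-optimal for any $\eps\ge k^{-2}$, in particular for $\eps=\tilde\Theta(k^{-3/2})$. It is even within $\nf{1}{k^2}$ of the global \pav{} optimum.
\item $W$ violates \jr{} via $c$.
\item Every exclusive support in $W$ is $0$, and $\Pr[\abs{A\cap W}\ge2]=1-\nf1k$.
\item No $k$-committee of this instance has the structure, so your fallback of perturbing $W$ cannot help either.
\end{itemize}
This instance does not refute the proposition, since a \jr{} committee is easy to find here. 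It refutes the implication ``$\eps$-locally \pav{}-optimal and not \jr{} $\Rightarrow$ structure'' that your argument needs.

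Your exclusive-support step inherits the problem. The bound $\min_{c'}\mathrm{loss}(c')\ge\nf1k-\eps$ counts shared voters with weight $\nf1\ell$. Even a small total overlap need not be spread evenly across $W$, so it cannot be subtracted off candidate by candidate.

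\textbf{The missing idea is to certify swap-optimality for coverage rather than for \pav{}.} Then the removal loss of $c'\in W$ is exactly its exclusive support, $\cvg(W)-\cvg(W\setminus\{c'\}) = \Pr[A\cap W=\{c'\}]$. This is the paper's route.

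Let $c\notin W$ be the \jr{} violator. For every $c'\in W$:
\begin{itemize}
\item submodularity gives $\cvg(W\setminus\{c'\}\cup\{c\})-\cvg(W\setminus\{c'\})\ge\nf1k$;
\item swap-optimality gives $\cvg(W\setminus\{c'\}\cup\{c\})-\cvg(W)\le\eps = k^{-3/2}$.
\end{itemize}
Subtracting yields the first bullet directly. The second bullet is then the complement after summing over $W$: $\Pr[\abs{A\cap W}\ge2]\le 1-k(\nf1k-k^{-3/2}) = k^{-1/2}$. So the logical order is the reverse of yours: exclusive support first, overlap as a corollary, and no tightness analysis is needed.

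The paper's write-up says it runs \Cref{alg:sample-cs-pav} but then applies the no-improving-pivot property to $\cvg$. That step is only valid for the coverage analogue of the convergent pivot search, and that analogue is what you should run. Its analysis carries over essentially verbatim: \Cref{lem:combined-pivots} holds for any set function, and per-sample coverage differences lie in $[-1,1]$. It still uses $\tilde O(\eps^{-2}\log\nf m\delta)=\tilde O(k^3\log\nf m\delta)$ samples.
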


This is a striking amount of structure, and suggests that instances for which \jr{} is sample-intensive to satisfy should be rare in practice.
We will also see evidence for this in \Cref{sec:experiments}.

\section{Implications for Ranked Ballots}
\label{sec:ranked}

Although our main focus is on approval-based committee voting, our primary lower bound is versatile enough to imply sample complexity lower bounds on finding proportional committees in the setting of multiwinner voting based on ranked preferences.
In this domain, a standard criterion is that of solid coalition proportionality, due to \citet{dummett84voting}.
Like \jr{} this is defined with respect to a quota, which is generally taken to be a value between a $\nf{1}{k+1}$ and a $\nf{1}{k}$-proportion of the total electorate size.
The former is more stringent and is referred to as the Droop quota \cite{droop81methods}, and the latter as the Hare quota.

In our notation, a multiwinner election from ranked preferences is again given by $E=(\cands, \prof, k)$, where $\prof$ is now a distribution over the set of rankings $\Sigma_{\cands}$ of $\cands$, and each sampled ballot $\sigma\sim\cP$ is a (we will suppose) strict and complete ranking of $\cands$.
Let $\sigma_{\leq r} = \{\sigma(1), \ldots, \sigma(r)\}$ be the $r$-prefix of $\sigma$.
\begin{definition}[Proportionality for solid coalitions]
	For $\eps > 0$, a committee $W \in \comms$ satisfies  \psce{} with respect to an election $E$ if, for all $C' \subseteq \cands$ and for all $\ell \in [k]$,
	\[
		\text{if} \qquad \quad  \Pr_{\sigma \sim \prof}\left[\sigma_{\abs{C'}} = C' \right] \geq \ell \cdot \left(\frac{1}{k+1} + \eps \right),
		\quad \qquad \text{then} \qquad \quad
		\abs{C' \cap W} \geq \min\left(\ell, \abs{C'}\right).
	\]
\end{definition}
As usual, note that Hare \psc{} corresponds to \psce{} for $\eps = (k (k+1))^{-1}$.

Borrowing notation from \citet{brill23robust}, for a given ranked election $E$ with approval profile $\prof$, let $E^r=(\cands, \prof^r, k)$ denote the \emph{approval} election derived by taking the top $r$ ranks from each ballot: for $\sigma \sim \prof$, we will define $A^r \sim \prof^r$ by $A^r \defeq \sigma_{\leq r}$.
We need only $r=1$.
\begin{observation}
	\label{obs:psc-implies-jr-for-top}
	For any $\eps >0$ and any $E$, if $W\in\comms$ satisfies \psce{} for $E$ then $W$ satisfies \jre{} for $E^1$.
\end{observation}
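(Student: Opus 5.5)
We argue by contraposition: assuming $W$ violates \jre{} for $E^1$, we exhibit a solid coalition that $W$ fails to serve, which contradicts \psce{} for $E$. No earlier machinery is needed, only the definitions and the correspondence $A^1 = \sigma_{\leq 1} = \{\sigma(1)\}$.

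Suppose $W$ violates \jre{} for $E^1$. By \Cref{def:jr-eps}, there is then a candidate $c \notin W$ with
\[
\Pr_{A \sim \prof^1}\left[c \in A \wedge \abs{A \cap W} = 0\right] \geq \frac{1}{k+1} + \eps .
\]
Every approval set in $E^1$ is a singleton $\{\sigma(1)\}$. The event $c \in A^1$ is therefore exactly the event $\sigma(1) = c$. On this event, $A^1 \cap W = \{c\} \cap W = \emptyset$ holds automatically because $c \notin W$. So the displayed probability equals $\Pr_{\sigma \sim \prof}[\sigma(1) = c]$, and this is at least $\frac{1}{k+1} + \eps$.

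Now apply the \psce{} definition to $C' = \{c\}$ with $\ell = 1$. The solid-coalition condition $\sigma_{\leq \abs{C'}} = C'$ reads $\sigma_{\leq 1} = \{c\}$, that is, $\sigma(1) = c$. Its probability meets the threshold $1 \cdot \left(\frac{1}{k+1} + \eps\right)$. Hence \psce{} requires $\abs{\{c\} \cap W} \geq \min(1, 1) = 1$, which means $c \in W$. This contradicts $c \notin W$, so $W$ must satisfy \jre{} for $E^1$.

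The argument has no real obstacle. The one point needing care is notation: the paper writes $\sigma_{\abs{C'}}$ in the \psc{} definition, which should be read as the prefix $\sigma_{\leq \abs{C'}}$, and with $\abs{C'} = 1$ this prefix is exactly $\sigma_{\leq 1} = A^1$.
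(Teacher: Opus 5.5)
Your proof is correct and is the definitional unpacking the paper has in mind; the paper gives no argument beyond noting that the observation follows from the definitions of \jre{} and \psce{} (or from the cited results). A \jre{} violation in $E^1$ by some $c \notin W$ is exactly an unrepresented solid coalition $C' = \{c\}$ with $\ell = 1$ whose weight is at least $\frac{1}{k+1} + \eps$, and your reading of $\sigma_{\abs{C'}}$ as the prefix $\sigma_{\leq \abs{C'}}$ is the intended one.
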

This is a natural consequence of the definitions of \jre{} and \psce{}, and can also be seen as a corollary of \cite[Proposition 2]{aziz20expanding} or \cite[Section 4]{brill23robust}.

Using \Cref{obs:psc-implies-jr-for-top}, \Cref{thm:k3-lower-bound-jr} implies the same lower bound for \psce{}.
\begin{corollary}
\label{thm:ranked-psc-lower-bound}
	For all $k\geq 2$ and all sufficiently small $\eps$, there are instances with $m, \nf{1}{\delta} = \poly(k)$ for which the sample complexity of \psce{} is $\Omega(\eps^{-2}k^{-1})$.
	In particular, there exist $c_1, c_2, c_3$ such that for all $k \geq 2$ and all $\eps \leq c_1 k^{-1}$, if $\delta \leq c_2 k^2 \eps^2$ then no sample-based rule $(t,\delta)$-satisfies \psce{} for $t \leq c_3\eps^{-2}k^{-1}$.
\end{corollary}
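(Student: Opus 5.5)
The plan is to reduce from \Cref{thm:k3-lower-bound-jr} via \Cref{obs:psc-implies-jr-for-top}. Given the approval partition profiles $\cJ_1,\ldots,\cJ_{k+1}$ of \Cref{ex:hard-partition}, I will build ranked profiles $\cR_i$ with the same structure. Each voter in bloc $B_j$, whose approval set is $\{j\}$, receives a ranking $\sigma$ with $\sigma(1) = j$; the remaining candidates are ordered by some fixed rule, say increasing order. Then the top-$1$ approval election $E^1$ of $(\cands, \cR_i, k)$ is exactly $(\cands, \cJ_i, k)$. The map $\sigma \mapsto \sigma_{\leq 1}$ is a bijection between the support of $\cR_i$ and that of $\cJ_i$, and it carries $\cR_i$ to $\cJ_i$.

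The main step is a simulation argument. Suppose some sample-based rule $\bar g$ for ranked ballots $(t,\delta)$-satisfies \psce{} on every profile, and in particular on each $\cR_i$. Define an approval rule $\bar f$ as follows. On samples $A^1,\ldots,A^t$ from $\cJ_i$, each a singleton $\{j\}$, it rewrites each $A^s$ as the fixed ranking that begins with $j$. It then outputs $\bar g$ applied to these rankings.

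Because the rewriting is a deterministic bijection on supports, the simulated sample is distributed exactly as $\cR_i^t$. Hence, with probability at least $1-\delta$, the output $W$ satisfies \psce{} for $(\cands,\cR_i,k)$. By \Cref{obs:psc-implies-jr-for-top}, such a $W$ satisfies \jre{} for $E^1 = (\cands,\cJ_i,k)$.

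One subtlety is that $\bar f$ is only defined on this particular instance family, not on all profiles. So I will not invoke the theorem's ``for all profiles'' quantifier directly. Instead I will use what its proof actually establishes: \Cref{lem:optimal-part-rule-structure} and \Cref{lem:optimal-part-rule-mistake-LB} together show that, against the uniformly random instance $\cJ$, every rule fails \jre{} with probability more than $\delta$ when $t \le c_3\eps^{-2}k^{-1}$, $\eps\le c_1k^{-1}$, and $\delta\le c_2k^2\eps^2$. The rule $\bar f$ is one such rule, yet it fails with probability at most $\delta$ on each $\cJ_i$, and therefore on the mixture. This is a contradiction, so no ranked rule $(t,\delta)$-satisfies \psce{} in that regime, with the same constants.

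Finally, the instances have $m=k+1$ and, choosing $\eps = \Theta(1/k^2)$ for example, $1/\delta = \poly(k)$, which matches the statement. I expect no real obstacle. The only point needing care is the remark above: checking that the lower bound is a Bayesian, distributional statement over $\cJ$, so that it applies to the simulated rule $\bar f$ even though $\bar f$ is defined only on partition-supported samples.
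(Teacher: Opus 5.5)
Your proposal is correct and follows the paper's route. The paper obtains the corollary in one line by combining \Cref{obs:psc-implies-jr-for-top} with \Cref{thm:k3-lower-bound-jr}; your lifting of each $\cJ_i$ to a ranked profile with fixed completions, together with the simulation argument, is what that line leaves implicit, and your point that the lower bound must be used in its distributional form over the mixture $\cJ$ (since the simulated approval rule is defined only on singleton ballots) is a correct piece of care the paper does not spell out.
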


Hence the sample complexity of Hare \psc{} is also $\tilde\Omega(k^3)$.
Applying \Cref{obs:psc-implies-jr-for-top} to \Cref{thm:droop-jr-unlearnable}, it also follows that Droop \psc{} outcomes are unlearnable from samples in ranked-ballot elections.

\psc{} is satisfied by the single transferable vote (STV), as well as by the method of equal shares (MES) \cite{peters21proportional} and others in the expanding approvals rule (EAR) family \cite{aziz20expanding}.
Prior work has largely studied the tractability of recovering winning \cite{micha20can} or losing \cite{imber21probabilistic} candidates for single-winner rules from samples.
But determining the STV outcome is seemingly much more difficult than determining a \psc{} outcome, and while this is in some cases empirically tractable, beyond-worst-case assumptions appear necessary for efficient guarantees \cite{iceland24sampling}.

As mentioned in \Cref{sec:greedyrules} in regards to MES and Phragm\'en's rule, it is less clear how to construct and analyze sample-based versions of share-based rules.
Within a given round of EAR and conditioning on the prior choices of candidates, obtaining an $\eps$-accurate estimate of the number of voters supporting each remaining candidate in all top-$\ell$-votes ballot prefixes requires only $\tilde O(\eps^{-2})$ sampled ballots.
But the task of a sample-based EAR is in fact to estimate the amount of virtual credit available for each candidate, which depends on the uncertain support sizes of the previously chosen candidates.
While it seems possible this can be done, we leave this and the broader question of sample complexity upper bounds on \psce{} as a future challenge.

\section{Experiments}
\label{sec:experiments}

Worst-case bounds are all well and good, but for real instances, how quickly does the probability of satisfying \jr{} or \ejr{} increase with the number of sampled approval ballots $t$, for various $k$?
Which rules appear most sample-efficient in practice, and how many randomly sampled voters' preferences appear to be sufficient in order to identify a proportional winning committee?

%%%
\subsection{Experimental Approach}
We set out to answer these questions for the approval preferences expressed by voters in recent real-world participatory budgeting instances.
These datasets are curated and made available by PabuLib, an open library of participatory budgeting data \cite{faliszewski23participatory}.\footnote{\url{https://pabulib.org/}}
For each given participatory budgeting election, our approach is to ignore project costs and subsample voters.
\citet{szufa22how} have also used this approach to construct a statistical culture of approval elections.\footnote{\citet[Figure 4b]{szufa22how} also observe that finding \pav{} winners for even 1000-voter, 50-candidate elections takes only seconds, which comports with our experiments and bodes well for the practicality of even non-polytime sample-based rules.}

We choose a few non-representative instances to study.
We select the first three based on tests indicating they are challenging from the perspective of \jr{} and \ejrp{} for (utilitarian) approval voting (\av{}), and therefore likely not too easy for other rules.
These are ``Poland Warszawa 2017 rejon B'', a knapsack-voting PB election held in Warsaw in 2016; and ``Netherlands Amsterdam 621'' and ``Netherlands Amsterdam 631'', two knapsack-voting PB elections held in different parts of Amsterdam in 2022.
We also consider a pair of instances identified by \citet[Figure 8]{boehmer26understanding} as ones for which it is highly unlikely that a uniformly random committee satisfies \jr{} or \ejrp{}, for committee sizes in the range we consider.
These instances are ``Netherlands Amsterdam 622'', a 2022 PB election with each ballot constrained to at most 5 approvals, and ``Poland Warszawa 2022 Mokot\'ow'', a PB election with each ballot constrained to at most 15 approvals.

For each instance, we created $500$ ballot samples $\cA_t$ for $t=10, 20, \ldots, 250$ approval ballots uniformly at random and with replacement.\footnote{Our implementation uses the \texttt{abcvoting} Python package \cite{lackner23abcvoting}.}

On each sample, we ran (utilitarian) approval voting (\av{}), \pav{}, \seqpav{}, maximum coverage (\chambcou{}), and the method of equal shares (\mes{}) for even $2 \leq k \leq 30$ and evaluated whether the resulting committees satisfied \jr{} and \ejrp{}.
These rules were chosen in order to give good coverage to a range of committee voting methods and exhibit different combinations of attributes of interest.
\pav{} and \chambcou{} are two Thiele rules we provide guarantees for, which differ in terms of proportionality axioms---both satisfy \jr{}, but \pav{} satisfies \ejrp{} while \chambcou{} does not.
\mes also has strong proportionality guarantees, though we suspect it is more challenging to implement via samples; it is interesting to test this intuition.
\seqpav{} is a greedy method similar to rules we study, but it does not have the same theoretical guarantees and, in this implementation but unlike in our rules, uses all samples for every step.
Finally, \av{} is often used in practice but not at all oriented towards proportional outcomes; we include it in part to gauge the difficulty of satisfying proportionality on a given instance.

For each instance, we choose a particular value of $k$ to highlight, for instance $k=12$ in \Cref{fig:warsza_increasing_n_k12}.
Each is chosen by looking at the performance of the above rules on $n=50$ samples for even $2 \leq k \leq 30$ and choosing the committee size which appears most challenging; for instance, $k=12$ was chosen for ``Poland Warszawa 2017 rejon B'' based on \Cref{fig:warsza_increasing_k_n50}.
Plots for the other instances of success probability as a function of the number of samples for difficult committee sizes, and more in-depth views of the ``Poland Warszawa 2017 rejon B'', appear in \Cref{app:experiments}.

%%%
\begin{figure}
	\begin{subfigure}{.48\textwidth}
		\centering
		\includegraphics[width=\linewidth]{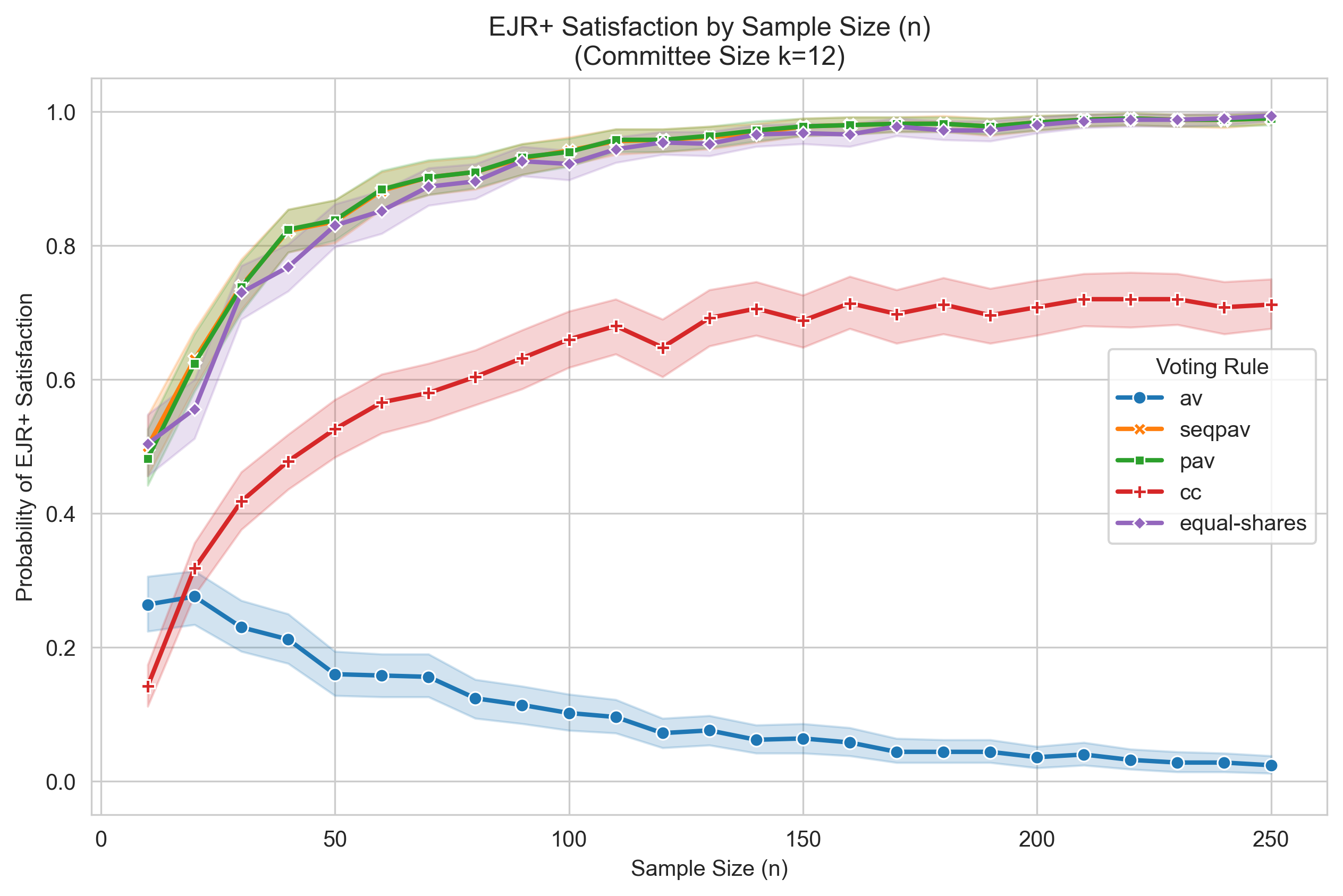}
		\caption{Probability of \ejrp{} for $k=12$}
		\label{fig:warsza_k12_ejrp}
	\end{subfigure}%
	\hfill
	\begin{subfigure}{.48\textwidth}
		\includegraphics[width=\linewidth]{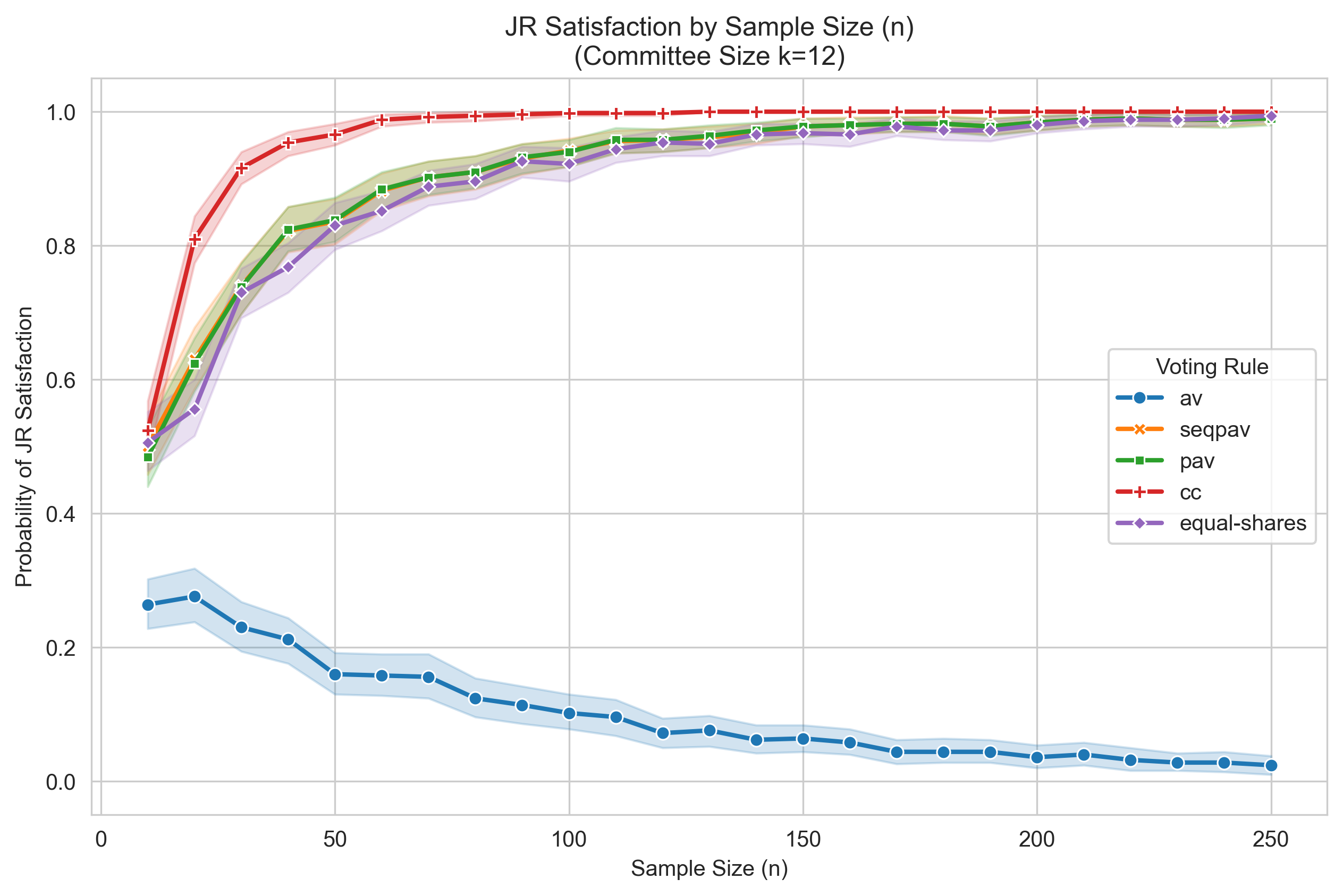}
		\caption{Probability of \jr{} for $k=12$}
		\label{fig:warsza_k12_jr}
	\end{subfigure}
	\caption{Success probability of satisfying \jr{} and \ejrp{} for various rules on approvals from the ``Poland Warszawa 2017 rejon B'' participatory budgeting election, as the number of samples increases. Empirical probabilities are out of 500 trials for various rules selecting a committee of size $k =12$, out of 53 candidates overall.}
	\label{fig:warsza_increasing_n_k12}
\end{figure}

%%%
\begin{figure}
	\begin{subfigure}{.48\textwidth}
		\centering
		\includegraphics[width=\linewidth]{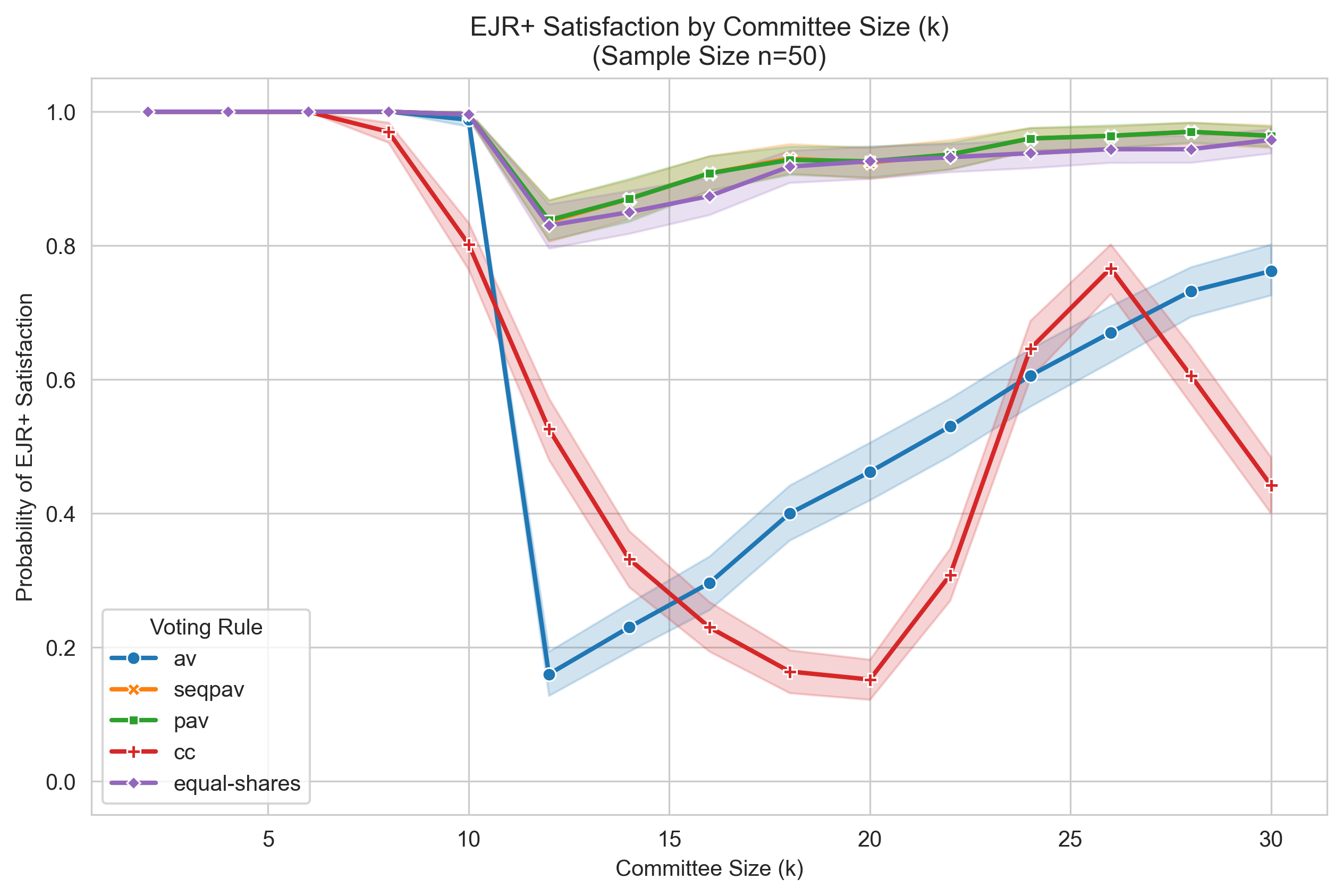}
		\caption{Probability of \ejrp{} for $n=50$}
		\label{fig:warsza_n50_ejrp}
	\end{subfigure}%
	\hfill
	\begin{subfigure}{.48\textwidth}
		\includegraphics[width=\linewidth]{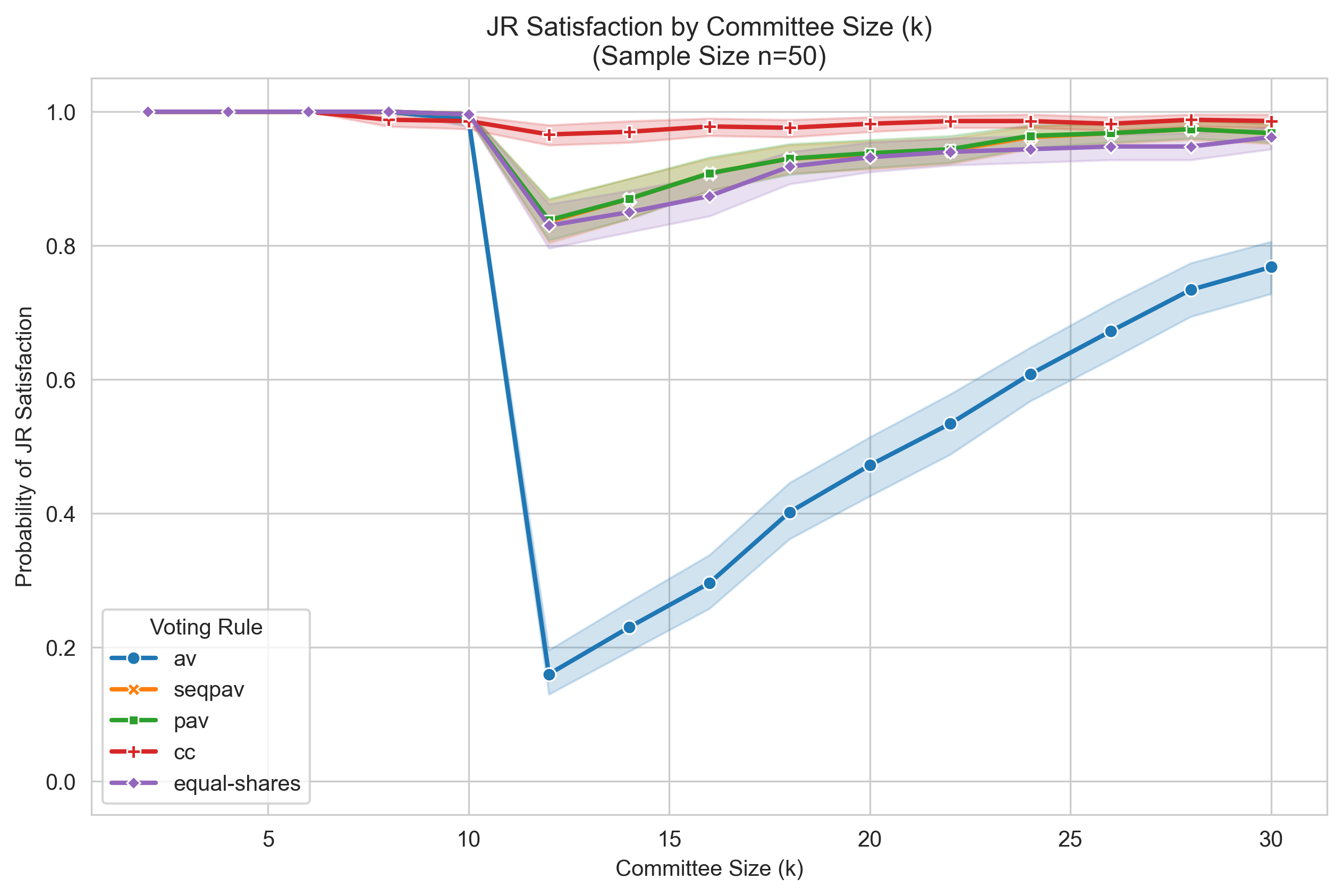}
		\caption{Probability of \jr{} for $n=50$}
		\label{fig:warsza_n50_jr}
	\end{subfigure}
	\caption{Success probability of satisfying \jr{} and \ejrp{} for various rules on approvals from the ``Poland Warszawa 2017 rejon B'' participatory budgeting election, for even $k$ up to $k=30$, out of 53 candidates overall. Empirical probabilities are out of 500 trials for various rules from $n=50$ samples.}
	\label{fig:warsza_increasing_k_n50}
\end{figure}

%%%
\subsection{Discussion of Results}

All of the rules that provably satisfy the axioms we consider appear to exhibit similar convergence rates in the number of sampled approvals needed to do so, and converge significantly more quickly than our worst-case or even beyond-worst-case guarantees would imply.
Note that since the same sets of samples were used for all rules, their absolute performances may jointly correlate with how informative the sampled ballots for a given value of $t$ are.
It is also possible that these real approval instances are not challenging enough to discern between reasonable rules.

With that said, it is interesting to note the performance of rules on axioms they are not guaranteed to satisfy.
\seqpav{} performs remarkably well (and often indistinguishably from \pav{}) for both \jr{} and \ejrp{}, while \chambcou with regards to \ejrp{} is surprisingly poor on all instances considered, even in many cases relative to \av{}.
One possible explanation is that sets of multiple candidates with high but highly correlated support are prevalent.
Finally, it bears noting that \mes{} often appears to require more samples than the score-based rules.
This could provide some indication that share-based approaches are less sample-efficient than score-based approaches to satisfying \jr{}.

The strong empirical performance of all tested proportional rules on these elections could indicate that PB and other real-world approval instances exhibit other structure which explains the tractability of identifying proportional outcomes from samples.
If so, it remains to establish whether the beyond-worst-case results in \Cref{sec:vcdim} characterize these instances, or whether other parameterizations provide better theoretical alignment to these empirical results.

\section{Conclusion}

In this work we have attempted to characterize the number of sampled ballots necessary to satisfy standard proportionality axioms in approval-based committee voting.
We have presented new lower bounds for \jr{} as well as upper bounds for significant strengthenings, and we asymptotically separate the sample complexity of these axiom satisfaction problems from the sample complexity of their natural coverage counterparts.

The most immediate outstanding question is the correct dependence on $k$, and we do not have strong reason to believe either of the present bounds is tight.
A better understanding of the correct dependence could also help establish whether the tradeoff we observe between sample and computational complexity among the rules we consider is inherent to the problem of proportional committee selection.
Structured domains also merit further study.
On the one hand, our lower bound is simple enough that it holds in many traditional restricted domains of interest.
On the other, the empirical tractability of the problem invites new characterizations of when and how our lower bound can be circumvented.

The proportionality axioms considered here enjoy a wide range of refinements, generalizations, and applications.
Within the \jr{} family, we have not established upper bounds for axioms that are incomparable with \ejrp{} such as \bjr{} and \fpjr{}.
Moving towards the most immediate applications, participatory budgeting project selection from sampled approvals and committee voting from ranked ballots can be readily studied experimentally but were unaddressed here.
Developing rules that exhibit strong sample complexity performance and guarantees for these settings could be of practical significance, and constitutes a natural direction for future work.

\bibliographystyle{plainnat}
\bibliography{refs}

\newpage
\appendix
\crefalias{section}{appendix}
\crefalias{subsection}{appendix}
\crefalias{subsubsection}{appendix}
%%%
\section{Supplementary Material and Proofs for \texorpdfstring{\Cref{sec:prelims}}{Section 2}}
\label{app:prelims}

\distinguishingsourcesnew*

This fact will be quite useful to us.
Both it and the following proof are standard; the proof is included for convenience and completeness.

\begin{proof}[Proof of \Cref{lem:separating-means-from-samples-new}]
	The upper bound is straightforward: just sample enough, take the empirical maximizer, and apply a concentration inequality together with a union bound over the $m$ indices.
	In particular, for $t$ i.i.d. samples $X = (X^1, \ldots, X^t) \sim \cD^t$, for each coordinate $j \in [m]$ compute the empirical probability estimate $\hat{p}_j \defeq \frac{1}{t} \sum_{k} X^k_j$.
	The algorithm (coordinate distinguisher) chooses $\hat{i} = \arg\max_{j} \hat{p}_j$ as its best guess for an $\eps$-approximate probability-maximizing index, breaking ties arbitrarily.

	This approach fails if $p_{\hat{i}} \leq p_{i^*}-\eps$.
	If we let $J \subset [m]$ be the indices $j$ for which $p_j \leq p_{i^*} - \eps$, then we fail if $\hat{p}_j \geq \hat{p}_{i^*}$ for any $j \in J$.
	For each $j \in J$, consider the random variable $\hat q_{i^* j} \defeq \hat p_{i^*} - \hat p_j$.
	Then failure occurs exactly when $\hat q_{i^* j}\leq 0$ for any $j \in J$.
	Since $t \cdot \hat q_{i^* j}$ is a sum of $t$ independent random variables $Q^r \defeq X^r_{i^*} - X^r_j$ for $r \in [t]$, we can apply Hoeffding:
	\[
		\probover{X \sim \cD}{ \hat q_{i^* j} \leq 0}
		\leq \prob{t\cdot  \hat q_{i^* j} \leq t\cdot \expect{ \hat q_{i^*j}} - t\cdot  \eps}
		\leq \exp \left( -\frac{2 t^2 \eps^2}{4 t} \right) = \exp\left( -\eps^2 t/2\right) ,
	\]
	where we used that $p_{i^*} \geq p_j + \eps$ for all $j \in J$ by definition.
	Union bounding over $J$ and requiring that this upper bound on the failure probability is at most $\delta$, a sufficient condition
	\[
	    \abs{J} \cdot \exp \left( -\eps^2 t/2\right) \leq
	    m \cdot \exp \left( -\eps^2 t/2\right) \leq \delta.
	\]
	Rearranging reveals this bound is satisfied for $t \geq \nf{2}{\eps^2} \log \nf{m}{\delta}$, as claimed.

	We now turn to the lower bound of $\Omega(\nf{1}{\eps^2} \log \nf{m}{\delta})$, which is slightly more involved.
	It is established in two parts, by proving first an $\Omega(\nf{1}{\eps^2} \log m)$ lower bound and then an $\Omega(\nf{1}{\eps^2} \log \nf{1}{\delta})$ lower bound, and then taking their maximum.
	To make things easier for both ourselves and the algorithm, we will additionally assume that $p_{i^*}$ is at least $\eps$ larger than all other $p_{i'}$ for $i' \neq i$, so that it is the only $\eps$-approximately-maximal index.

	Having made this additional assumption, we need to construct a family of distributions it is hard to distinguish between.
	Let this be $\mathfrak{D} =\{\cD_1, \ldots, \cD_m\}$, where each $\cD_i$ is a product distribution over Bernoullis with means $p_{i} = \nf{1}{2} + \eps$ and $p_j = \nf{1}{2}$ for $j \neq i$.
	We will imagine the process of (a) sampling a distribution $\cD_i \sim \mathfrak{D}$ uniformly at random, then (b) taking $t$ i.i.d. samples $X \sim \cD_i^t$ from it.
	Finally, let $\Psi: \{X\} \rightarrow [m]$ be an estimator that attempts to recover the distribution index $i$ from the samples.
	Our goal is to lower bound the failure probability of all such $\Psi$.
	(Note that by Yao's principle we may assume $\Psi$ is deterministic.)

	The $\log m$-dependent lower bound proceeds via Fano's inequality, which in the case of multiple hypothesis testing lower bounds the error of \emph{any} estimator of the hidden index in terms of the mutual information between the samples and the true underlying index and the number of possible indices \cite[Equation 4]{scarlett21introductory}.
	\begin{lemma}[Fano's Inequality]
		\label{lem:fano}
		For any estimator $\Psi$ of the index $i \sim [m]$ in the multiple hypothesis testing setup above,
		\[
			\probover{i \sim [m], X^t \sim \cD_i^t}{\Psi(X^t) \neq i} \geq 1 - \frac{I(i; X^t) + \log 2}{\log m},
		\]
		where $I(X; Y) \defeq H(X) - H(X \vert Y)$ is the mutual information between $X$ and $Y$ (and $H(\cdot)$ is entropy).
	\end{lemma}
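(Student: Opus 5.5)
We prove this by the standard entropy argument. Write $\hat\imath \defeq \Psi(X^t)$ for the estimate, let $E \defeq \1\{\hat\imath \neq i\}$ be the error indicator, and let $P_e \defeq \Pr[E = 1]$ be the quantity to bound. All logarithms are natural, so that $H(\Bern(q)) \leq \log 2$. Since $i$ is uniform on $[m]$ we have $H(i) = \log m$, and therefore $I(i; X^t) = \log m - H(i \mid X^t)$. It thus suffices to show
\[
H(i \mid X^t) \leq \log 2 + P_e \log m ,
\]
because rearranging $\log m - I(i;X^t) \leq \log 2 + P_e \log m$ gives exactly the claimed bound $P_e \geq 1 - \frac{I(i;X^t) + \log 2}{\log m}$.

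The first step is data processing. Since $\hat\imath$ is a (deterministic, by Yao's principle as noted above) function of $X^t$, conditioning on $X^t$ reveals at least as much as conditioning on $\hat\imath$. Concretely, $H(i \mid X^t) = H(i \mid X^t, \hat\imath) \leq H(i \mid \hat\imath)$, since conditioning does not increase entropy. If one prefers to allow a randomized $\Psi$, the same inequality follows from the Markov chain $i \to X^t \to \hat\imath$ and the data processing inequality.

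The core step is Fano's entropy bound on $H(i \mid \hat\imath)$, obtained by expanding $H(E, i \mid \hat\imath)$ with the chain rule in two ways.

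First, $E$ is determined by $(i, \hat\imath)$, so $H(E \mid i, \hat\imath) = 0$. Hence
\[
H(E, i \mid \hat\imath) = H(i \mid \hat\imath).
\]

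Second, $H(E, i \mid \hat\imath) = H(E \mid \hat\imath) + H(i \mid E, \hat\imath)$, and we bound each term.
\begin{itemize}
\item $H(E \mid \hat\imath) \leq H(E) \leq \log 2$, since $E$ is binary.
\item Given $E = 0$, the index satisfies $i = \hat\imath$, so it contributes zero entropy.
\item Given $E = 1$, the index lies in $[m] \setminus \{\hat\imath\}$, so it contributes at most $\log(m-1) \leq \log m$.
\end{itemize}
Together these give $H(i \mid E, \hat\imath) \leq P_e \log m$. Combining the two expansions yields $H(i \mid \hat\imath) \leq \log 2 + P_e \log m$, and chaining with the data processing step proves the displayed sufficient inequality, hence the lemma.

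There is no real obstacle here. The only points needing care are bookkeeping: justifying $H(E \mid i, \hat\imath) = 0$ and the conditional bound at $E = 1$, and keeping the logarithm base consistent so that the $\log 2$ term matches the statement.
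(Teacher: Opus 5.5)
Your proof is correct. It is the standard textbook derivation of Fano's inequality: data processing takes you from $H(i \mid X^t)$ to $H(i \mid \hat\imath)$, and then you expand $H(E, i \mid \hat\imath)$ by the chain rule in two ways. The bookkeeping also checks out, including $H(i) = \log m$ from the uniform prior and the final rearrangement, which implicitly needs $m \geq 2$.

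The paper gives no proof of this lemma at all; it imports the inequality from the cited survey of Scarlett and Cevher. Your proposal therefore supplies the self-contained justification that the citation stands in for, and it even gives the slightly sharper $\log(m-1)$ term along the way.
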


	Given this uniformly random choice of $i$, the mutual information from these independent samples can in turn be bounded by the Kullback-Leibler divergence (see e.g. \cite[Lemma 28]{tibshirani17minimax}) by
	\[
		I(i; X^t) \leq t \cdot I(i; X) \leq t \cdot \max_{j \neq k} \dkl(\cD_j \| \cD_k).
	\]
	Finally, standard bounds on the KL divergence for our constructed Bernoullis give
	\begin{align*}
		\dkl(\cD_j \| \cD_k) &= \dkl(\Bern(\nf{1}{2}) \| \Bern(\nf{1}{2} + \eps)) + \dkl(\Bern(\nf{1}{2} + \eps) \| \Bern(\nf{1}{2}) ) \\
		&\leq \frac{\eps^2}{\nf{1}{4}} + \frac{\eps^2}{\nf{1}{4} - \eps^2} \leq 10 \eps^2,
	\end{align*}
	provided that $\eps \leq \nf{1}{4}$.
	Combining these inequalities with \Cref{lem:fano}, for any failure probability $\delta \leq 1/3$, we obtain that the number of samples satisfies $t = \Omega(\frac{1}{\eps^2}\log m)$, as claimed.

	The $\log \nf{1}{\delta}$-dependent lower bound proceeds from considering the family of distributions constructed above in the case of $m = 2$.
	In the two-hypothesis case, for exact hypothesis testing, we may bound the estimation error in terms of the statistical distance between our two distributions: for any estimator $\Psi$ of the uniformly sampled index $i \sim [2]$ from $t$ samples from $\cD_i$,
	\[
		\delta = \probover{i \sim \{0,1\}, X^t \sim \cD_i^t}{\Psi(X^t) \neq i} \geq \frac{1}{2}\left(1 - \dtv(\cD_0, \cD_1) \right).
	\]
	(See the proof of \Cref{thm:small-eps-lower-bound} for details.)
	From here we may use either the Tsybakov or the Bretagnolle–Huber bound to relate $\dtv$ to $\dkl$ \cite{canonne22short}.
	We will opt for Tsybakov's \cite{tsybakov09introduction}, which states
	\[
		\dtv(\cD_0, \cD_1) \leq 1 - \frac{1}{2} \exp \left(- \dkl(\cD_0, \cD_1)\right).
	\]
	Combining, we have
	\begin{align*}
		\delta &\geq \frac{1}{4} \cdot \exp \left(- \dkl(\cD_0, \cD_1)\right) \\
		&= \frac{1}{4} \cdot \exp \left(- t \cdot \left(\dkl(\Bern(\nf{1}{2}) \| \Bern(\nf{1}{2} + \eps)) + \dkl(\Bern(\nf{1}{2} + \eps) \| \Bern(\nf{1}{2}) \right) \right) \\
		&\geq \frac{1}{4} \cdot \exp \left(- t \cdot 10 \eps^2 \right),
	\end{align*}
	by the same bound as above.
	Rearranging gives $t = \Omega(\frac{1}{\eps^2} \log \nf{1}{\delta})$, again as claimed.

	Taken together, $t = \Omega\left(\nf{1}{\eps^2} \max\left(\log m, \log \frac{1}{\delta}\right)\right) = \Omega\left(\nf{1}{\eps^2} \left(\log m + \log \frac{1}{\delta}\right)\right) = \Omega\left(\nf{1}{\eps^2} \log \nf{m}{\delta}\right)$.
\end{proof}

\withvswithoutreplacement*

\begin{proof}[Proof of \Cref{claim:with-vs-without-replacement}]
	The first part, simulating $t$ samples w.r. from $t$ samples w.o.r., is simple.
	Given w.o.r. samples $X = (x_1, \ldots, x_t)$, construct the w.r. output sequence $Y = (y_1, \ldots, y_t)$ by first letting $y_1 = x_1$.
	At this point, our index is $i = 2$ we will follow the following process for $t-1$ more steps: in each step, with probability $(i-1)/n$ choose the next $y_j$ to be one of $(x_1, \ldots, x_{i-1})$ uniformly at random.
	Otherwise, with probability $1-(i-1)/n$ choose the next $y_j$ to be $x_i$ and increment $i$.
	If $X$ is sampled uniformly from $[n]$ without replacement, then $Y$ is distributed according to $t$ samples from $[n]$ with replacement.

	% \medskip
	For the second part of the claim we use rejection sampling: draw $t'$ samples with replacement, and retain only the first $t$ distinct elements.
	If there are at least $t$ distinct elements then this is clearly a uniform sample without replacement; this fails only if there are fewer than $t$ distinct elements.

	For $k \in [t]$, let the random variable $Z_k$ be the number of i.i.d. draws from $[n]$ required to find a new distinct element after having already found $k$ distinct elements.
	This follows a geometric distribution with success probability $p_k = 1 - \nf{k}{n}$.
	Since we assume $t \le (1-\eps)\cdot n$, these success probabilities are lower-bounded by $p_k \ge 1 - \frac{t}{n} \ge \eps$.

	Therefore the number of samples $T$ required to find $t$ distinct items is stochastically dominated by the sum of $t$ independent geometric random variables, each with success probability $\eps$.
	Therefore the expected number of samples is $\expect{T} \le t/\eps$.
	Using a tail bound  for the sum of the dominating geometric variables (e.g. \cite[Theorem 2.3]{janson18tail}), the probability $T$ exceeds $t' = \frac{t}{\eps} + \Delta$ decays exponentially in $\Delta$.
	In particular, $\Delta = \Theta(\frac{c}{\eps} \log n)$ guarantees failure probability at most $n^{-c}$.
\end{proof}

%%%
\section{Supplementary Material and Proofs for \texorpdfstring{\Cref{sec:coverage}}{Section 3}}
\label{app:coverage}

\Citet{aziz15justified} remark in passing that their analysis of \pav{} extends to show that \chambcou{} satisfies \jr{}.
Here we provide explicit proofs of slight generalizations of this fact.

\cvgdroopjr*

\begin{proof}[Proof of \Cref{obs:cvg-droop-jr}]
	Let $W$ maximize voter coverage $\cvg(W) \defeq \probover{A\sim \prof}{A \cap W \neq \emptyset}$.
	Suppose for contradiction that there exists a candidate $c \not \in W$ such that
	\[
		\cvg(c \vert W) \defeq \probover{A\sim \prof}{c \in A \land A \cap W = \emptyset} > \nf{1}{k+1}.
	\]
	Then by averaging, there is a $c' \in W$ such that $\cvg(c' \vert W \setminus \{c'\}) < \nf{1}{k+1}$, since
	\[
		\sum_{c' \in W} \cvg(c' \vert W \setminus \{c'\}) \leq \cvg(W) \leq 1 - \cvg(c \vert W) < \nf{k}{k+1}.
	\]
	Therefore $\cvg(W \setminus \{c'\} \cup \{c\}) > \cvg(W)$, contradicting the coverage-optimality of $W$.

	Hence no such $c'$ exists, and so $W$ satisfies Droop \jr{}.
\end{proof}

\cvgharejr*
\begin{proof}[Proof of \Cref{obs:approx-cvg-hare-jr}]
	This follows the same argument as \Cref{obs:cvg-droop-jr}.

	Suppose $W$ $\eps$-approximately maximizes voter coverage, meaning $\cvg(W) \geq \cvg(W^*) - \eps$.
	Suppose also that there is a (Hare) \jr{} violation for $W$, meaning there exists a candidate $c \not \in W$ with
	\[
		\cvg(c \vert W) = \probover{A\sim \prof}{c \in A \land A \cap W = \emptyset}
		\geq \nf{1}{k}.
	\]
	Again by averaging, there is a $c' \in W$ such that $\cvg(c' \vert W \setminus \{c'\}) \leq \nf{1}{k} - \nf{1}{k^2}$, since
	\[
		\sum_{c' \in W} \cvg(c' \vert W \setminus \{c'\}) \leq \cvg(W) \leq 1 - \cvg(c \vert W)
		\leq \frac{k-1}{k}.
	\]
	Combining, we have
	\[
		\cvg(W^*) \geq \cvg(W \setminus \{c'\} \cup \{c\}) \geq \cvg(W) - \cvg(c' \vert W\setminus \{c'\}) + \cvg(c \vert W) \geq \cvg(W) + \nf{1}{k^2}.
	\]
	By assumption on $W$, this means $\eps \geq \nf{1}{k^2}$.
	Therefore if $W$ admits a \jr{} violation then $\eps \geq \nf{1}{k^2}$.
\end{proof}

We now prove the main pair of lemmas in the section.

\epsmkcsampleub*

\begin{proof}[Proof of \Cref{lem:eps-mkc-sample-upper-bound}]
	This argument directly follows the upper bound of \Cref{lem:separating-means-from-samples-new}.
	We treat each $k$-subset $C$ as a Bernoulli with mean $p_C \defeq \frac{1}{\abs{\cU}} \cvg(C)$.
	There is some k-subset $C^*$ with maximum coverage $p^* \defeq \cvg(C^*)$, and \Cref{alg:eps-mkc-brute-force} succeeds if and only if its output $C_A$ satisfies $p_{C_A} \geq p^* - \eps$.
	our coverage estimates $\hat{c}_C$ from our $t$ samples might be correlated, but the union bound doesn't mind.

	Our failure event is contained by the event that either (a) $C^*$ has unusually bad sample coverage $\hat c_{C^*} \leq p^* t - \eps t/2$ or (b) any of the `bad' $C$ with true coverage $p_C \leq p^* - \eps $ exhibit unusually good sample coverage $\hat c_{C} \geq p^* t - \eps t/2$.
	If neither (a) nor (b) transpires, then the algorithm will pick a $C_A$ with sample coverage at least $p^* t - \eps t/2$, and all such sets will not be `bad'.

	By Hoeffding, the probability of (a) over the samples $\bu = (u^1, \ldots, u^t)$ is
	\begin{equation}
		\probover{\bu \sim \cU^t}{c_{C^*} \leq p^* t - \eps t/2} = \probover{\bu \sim \cU^t}{c_{C^*} \leq \expect{c_{C^*}} - \eps t/2} \leq \exp\left(- \eps^2 t/2\right).		 \notag
	\end{equation}
	Identically, for any bad $C$, the probability of it making (b) true is at most
	\begin{equation}
		\probover{\bu \sim \cU^t}{c_{C} \geq p^* t - \eps t/2} = \probover{\bu \sim \cU^t}{c_{C} \geq \expect{c_{C}} + \eps t/2} \leq \exp\left(- \eps^2 t/2\right).		 \notag
	\end{equation}

	Since all $k$-subsets besides $C^*$ could be bad, union bounding over all $C \in \binom{\cS}{k}$ gives an upper bound on our bad event probability of $\binom{\abs{\cS}}{k} \cdot \exp\left(- \eps^2 t/2\right) \leq \exp\left(k \log \abs{\cS} - \eps^2 t/2\right)$.
	Asserting
	\[
	\exp\left(k \log \abs{\cS} - \eps^2 t/2\right) \leq \delta
	\]
	and solving for the minimum feasible $t$ yields $t = O\left(\nf{1}{\eps^2} \left( k \log \abs{\cS} + \log \nf{1}{\delta} \right)\right)$, as claimed.
\end{proof}

\epsmkcsamplelb*
\begin{proof}[Proof of \Cref{lem:eps-mkc-sample-lower-bound}]
    For this we will choose $\delta = \nf{1}{4}$.
    For given $k$, $m$, and $\eps$, we will construct a hard instance as the concatenation (disjoint union) of $k$ non-overlapping subinstances of equal size and identical structure.
    Without loss of generality let $\eps = 2^{-r}$ for some $r \in \mathbb N$.
    We will also assume that $k$ divides $m$, let $m' \defeq \abs{\cS}/k$, and use $n = kn'$ elements, where $n' \defeq \cdot 2^{r+m'-1}$ is the number of elements in each disjoint subinstance.

	Denote each subinstance by $(\cU_i, \cS_i)$ for $i \in [k]$, and let $\cS\defeq \cup_i \cS_i$ and $\cU \defeq \cup_i \cU_i$.
	Each subinstance will have $\abs{\cU_i} = n'$ elements and a unique good set $S_i^*$ of size $n' \cdot (\nf{1}{2} + 4 \eps)$, and $m' - 1$ bad sets $\{S_{ij}\}_j$ of size $n' /2$.
	The sets in $\cS_i$ are incident on $\cU_i$ in such a way as to exhibit independent/product structure with respect to the elements.
	In particular, for any subfamily of bad set indices $T \subseteq [m'-1]$ let there be $2^{r-1} + 4$ elements contained in both $S_i^*$ and exactly the sets $\{S_{ij}\}_{j \in T}$, and $2^{r-1} - 4$ elements not in $S_i^*$ and in exactly the sets $\{S_{ij}\}_{j \in T}$.
	This is $n' = 2^{r+m'-1}$ elements overall.

	Since each unique good set in each disjoint subinstance is additively $4\eps \cdot n'$ larger than the other sets in the subinstance, the instance solutions $C \in \binom{\cS}{k}$ which are additively within $\eps \cdot n$ of optimal overall coverage (and therefore satisfy $\eps$-\MkC) are exactly the $C$ which contain at least $\nf34$ of the $k$ unique good subinstance sets.

	To succeed with constant probability, we therefore need to succeed in good set identification on at least $3k/4$ subinstances.
	By \Cref{lem:separating-means-from-samples-new}, let $t' = \Theta(\nf{1}{\eps^2} \log m')$ be the number of samples \emph{from subinstance $i$} necessary to determine $S_i^*$ with probability at least $\nf12$.

	If the overall number of samples is $t = t' \cdot \nf k 3$, then at most $k/3$ subinstances $i$ see $t_i \geq t'$ samples, and at least $2k/3$ subinstances $i'$ see $t_{i'} < t'$ samples.
	Fix a sample-based algorithm $A$, and optimistically suppose it succeeds at finding the big set on all $k/3$ subinstances $i$ in this first group.
	By the definition of $t'$, \emph{any} algorithm $A$ succeeds at identifying the good set in each $i'$ in this second group independently with probability at most $\nf12$.
	Therefore at least half the time, $A$ succeeds on at most $k/3$ of the subinstances $i'$ that see few samples.
	Since $2k/3 < 3k/4$, in this event we conclude that $A$ fails to find a $k$-subset within $\eps n$ of maximium true coverage.

	We conclude that the failure probability of any algorithm $A$ on this instance is at least $\delta = \nf12$ when given $t = \Omega(kt') = \Omega\left(\nf{k}{\eps^2} \log m' \right)$ samples.
	Because $\log m' = \log \abs{\cS} - \log k = \Omega(\log \abs{\cS})$ since $\abs{\cS} = k^{1 + \Omega(1)}$ by assumption, the claim follows.
\end{proof}

%%%
\section{Supplementary Material and Proofs for \texorpdfstring{\Cref{sec:upperbounds}}{Section 4}}
\label{app:upper-bounds}

We begin with a general-purpose fact, used in the proofs of correctness for
\samplecspav (\Cref{thm:cs-pav-correctness}),
\samplepav (\Cref{prop:sample-pav-correct}),
\samplelspav (\Cref{prop:sample-ls-pav-correct}), and
\samplewslspav (\Cref{thm:ws-ls-pav-correctness}).

\pavtojerp*

\begin{proof}[Proof of \Cref{prop:pav-score-to-ejrp-guarantee}]
	We follow the proof of \citet{aziz18complexity}, and prove the contrapositive.
	Suppose that a given committee $W$ does not satisfy \ejrpe{}.
	We will show that there exists a $c' \in W$ and $c \in C \setminus W$ such that $\Delta(W, c,c') \geq \eps$.

	Since $W$ does not satisfy \ejrpe{} (\Cref{def:ejrp-eps}), there is some $\ell \in [k]$ and some candidate $c^* \in C \setminus W$ such that $\probover{A \sim \cP}{c^* \in A \land \abs{A \cap W} \leq \ell - 1} \geq \ell \cdot \left( \frac{1}{k+1} + \eps \right)$.
	Let $S\subseteq 2^C$ denote the support of this event, i.e. $S \defeq \{A \in 2^C: \: c^* \in A \land \abs{A \cap W} \leq \ell - 1\}$.
	How much does $\pavsc$ increase if we add $c^*$?
	Let $W^* \defeq W \cup \{c^*\}$.
	Then we can lower bound the score of this too-large committee by
	\begin{align}
		\pavsc(W^*) &= \probover{A \sim \cP}{\sum_{j=1}^{\abs{A \cap W^*}} \frac{1}{j}}  \notag \\
		&= \probover{A \sim \cP}{\sum_{j=1}^{\abs{A \cap W^*}} \frac{1}{j} \cdot \1\{A \not\in S\}} +  \probover{A \sim \cP}{\sum_{j=1}^{\abs{A \cap W^*}} \frac{1}{j} \cdot \1\{A \in S\}}  \notag \\
		&\geq \probover{A \sim \cP}{\sum_{j=1}^{\abs{A \cap W}} \frac{1}{j} \cdot \1\{A \not\in S\}} +  \probover{A \sim \cP}{\left( \sum_{j=1}^{\abs{A \cap W}} \frac{1}{j} + \frac{1}{\ell}\right) \cdot \1\{A \in S\}}  \notag \\
		&= \probover{A \sim \cP}{\sum_{j=1}^{\abs{A \cap W}} \frac{1}{j} } + \frac{1}{\ell} \cdot \cP(S)  \notag\\
		&\geq \pavsc(W) + \left(\frac{1}{k+1} + \eps \right). \label{eq:pav-pivot-eq1}
	\end{align}
	Next consider the expected score impact of ejecting a candidate from $W^*$ uniformly at random.
	This is given by
	\begin{align}
		\frac{1}{k+1} \sum_{c' \in W^*}\left(\pavsc(W^*) - \pavsc(W^* \setminus \{c'\}) \right) &= \frac{1}{k+1} \cdot \sum_{c' \in W^*} \probover{A\sim \cP}{\abs{A \cap W^*}^{-1} \cdot \1\{c' \in A\}} \notag\\
		&= \frac{1}{k+1} \cdot \probover{A \sim \cP}{\sum_{c' \in A \cap W^*}  \abs{A \cap W^*}^{-1} }  \notag\\
		&= \frac{1}{k+1} \cdot \probover{A \sim \cP}{ A \cap W^* \neq \emptyset }  \notag\\
		&\leq \frac{1}{k+1}. \notag
		\intertext{Therefore there is some $c' \in W^*$ for which this holds, meaning that}
		\pavsc(W^*) - \pavsc(W^* \setminus \{c'\}) &\leq \frac{1}{k+1}. \label{eq:pav-pivot-eq2}
	\end{align}
	(Note also that this $c' \neq c^*$.)
	Combining \eqref{eq:pav-pivot-eq1} and \eqref{eq:pav-pivot-eq2} yields
	\[
	\Delta(W, c^*, c') = \pavsc(W^* \setminus \{c'\}) - \pavsc(W) \geq \eps,
	\]
	meaning that a sufficiently good pivot exists.
	This concludes the proof.
\end{proof}

\relaxedejrpub*

\begin{proof}[Proof of \Cref{cor:sample-complexity-ejrp-relaxed-ub}]
	Since $\alpha = (\frac{k}{k+1} + k\eps)^{-1}$, we may write constant $\alpha < 1$ as $ \alpha = \frac{1}{1+\beta}$ for some constant $\beta > 0$.
	Then
	\[
		 \eps = \frac{1}{k}\left( \frac{1}{\alpha} - \frac{k}{k+1}\right) = \frac{1}{k}\left(\beta + \frac{1}{k+1}\right) = \Omega(k^{-1}).
	\]
	Substituting this into the sample complexity guarantee of \Cref{thm:cs-pav-correctness} yields the claim.
\end{proof}

%%%
\subsection{Sample-based adaptations of existing rules}
\label{app:upper-bounds-known}

The contents of this subsection are summarized by the following claim:

\standardrulecomplexity*

We divide the proof of \Cref{obs:complexity-of-standard-rules} into \Cref{prop:sample-greedy-correct}, \Cref{prop:sample-pav-correct}, and \Cref{prop:sample-gjcr-correct}, below.

\begin{algorithm}[H]
	\caption{\samplepav{}}
	\label{alg:sample-pav}
	\KwIn{Candidates $C$, $k$, $\eps$, failure probability $\delta$}
	\KwOut{committee $W$ satisfying \ejrpe{} w.p. $1-\delta$}
	Sample $\cA = (A^1, \ldots, A^t) \sim \cP^t$ i.i.d. for $t = \nf{2}{\eps^2} (1 + \log k)^2(k \log m + \log \nf{1}{\delta})$ \;
	\For{$W \in \binom{C}{k}$}{
			$\hat s_W \leftarrow \pavsc(W; \cA)$ \;
		}
	\Return $W \in \arg\max_{W \in \binom{C}{k}} \hat s_W$
\end{algorithm}

\begin{proposition}
\label{prop:sample-pav-correct}
	\samplepav (\Cref{alg:sample-pav}) satisfies \ejrp{} with probability $1-\delta$ and has sample complexity $\tilde O(\nf{k}{\eps^2})$.
	(For \ejrp{} = \ejrpe{}, it uses $\tilde O(k^5)$ samples.)
\end{proposition}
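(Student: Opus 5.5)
The argument has two parts: correctness through \Cref{prop:pav-score-to-ejrp-guarantee}, then a uniform-convergence bound on empirical \pav{} scores over all $\binom{m}{k}$ committees, mirroring the $\cE_{\pav{}}$ analysis in the proof of \Cref{thm:cs-pav-correctness}.

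\textbf{Reduction to approximate optimality.} Let $W^*$ maximize $\pavsc$ and let $\hat W$ be the output of \Cref{alg:sample-pav}. Suppose $\pavsc(\hat W) > \pavsc(W^*) - \eps$. Then for every $c' \in \hat W$ and $c \notin \hat W$,
\[
\pavsc(\hat W \setminus \{c'\} \cup \{c\}) \le \pavsc(W^*) < \pavsc(\hat W) + \eps .
\]
Hence no single-candidate pivot improves the score by $\eps$ or more, and \Cref{prop:pav-score-to-ejrp-guarantee} gives that $\hat W$ satisfies \ejrpe{}. So it suffices to show $\Pr[\pavsc(\hat W) \le \pavsc(W^*) - \eps] \le \delta$.

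\textbf{Concentration.} Each sampled ballot contributes a value in $[0, H_k]$ to $t\cdot\pavsc(W;\cA)$, and $H_k \le 1 + \log k$. Failure requires one of two events:
\begin{itemize}
\item[(i)] $\hat s_{W^*} \le \pavsc(W^*) - \eps/2$;
\item[(ii)] some $W$ with $\pavsc(W) \le \pavsc(W^*) - \eps$ has $\hat s_W \ge \pavsc(W^*) - \eps/2$.
\end{itemize}
By Hoeffding, each of these individual events has probability at most $\exp(-\eps^2 t / (2H_k^2))$. A union bound over at most $\binom{m}{k} \le m^k$ committees bounds the total failure probability by $\exp\bigl(k\log m - \eps^2 t/(2H_k^2)\bigr)$. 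This is at most $\delta$ when $t \ge \frac{2}{\eps^2}(1+\log k)^2(k\log m + \log\nf{1}{\delta})$, which is exactly the sample size in \Cref{alg:sample-pav}. The correlation between the committee scores computed from the same samples is harmless, since only a union bound is used.

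\textbf{Sample count.} The bound is $t = \tilde O(k\eps^{-2}\log\nf{m}{\delta})$. Substituting $\eps = \nf{1}{k(k+1)} = \Theta(k^{-2})$ gives $\tilde O(k^5)$.

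The only delicate step is getting the Hoeffding range right, using the range $H_k$ rather than $k$. Using $k$ would cost an extra factor of $k^2$ instead of the polylogarithmic $(1+\log k)^2$. Everything else is routine.
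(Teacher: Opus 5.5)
Your proposal is correct and follows the paper's proof essentially step for step. It uses the same split of the failure event into (i) $W^*$ appearing $\eps/2$ worse and (ii) some $\eps$-suboptimal committee appearing within $\eps/2$ of $\pavsc(W^*)$, the same Hoeffding bound with per-ballot range $H_k$, the same union bound over $\binom{m}{k} \le m^k$ committees (which recovers exactly the algorithm's sample size), and the same final appeal to \Cref{prop:pav-score-to-ejrp-guarantee}, with only the order of the two parts swapped.
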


\begin{proof}[Proof of \Cref{prop:sample-pav-correct}]
	The number of samples is specified by the definition of the algorithm.
	It remains to show success with high probability.

	The first step of this argument closely resembles \Cref{lem:separating-means-from-samples-new}.
	Let $\cE$ be the failure event that the set $W$ we choose has score $\pavsc(W) \leq \max_{W'} \pavsc(W') - \eps$.
	As in the proof of \Cref{lem:separating-means-from-samples-new}, observe that if $\cE$ holds then either (a) the score-maximizing committee $W^*$ has sample score at most $\pavsc(W^*) - \eps/2$ or (b) any of the bad committees $W \in \cW_B$ with $\pavsc(W) \leq \pavsc(W^*) - \eps$ has sample score at least $\pavsc(W^*) - \eps/2$.
	We apply Hoeffding and union bounds to upper bound the probability of these events, and therefore of $\cE$.

	For ease of notation, let $s^* \defeq \pavsc(W^*)$.
	By Hoeffding, the probability of (a) over the samples $\cA = (A^1, \ldots, A^t)$ is
	\begin{equation}
		\probover{\cA \sim \cP^t}{\hat s_{W^*} \leq s^* - \eps /2}
		= \prob{t \hat s_{W^*} \leq t \expect{\hat s_{W^*}} - \eps t /2}
		\leq \exp\left(- \frac{\eps^2 t^2/2 }{t H_k^2 } \right) = \exp\left(- \frac{\eps^2 t }{2 H_k^2 } \right),		 \notag
	\end{equation}
	Where $H_k$ is the $k$th harmonic number.
	Identically, for any bad $W \in \cW_B$, the probability of it making (b) true is at most
	\begin{equation}
		\probover{\cA\sim \cP^t}{\hat s_{W} \geq s^*  - \eps /2} = \prob{t \hat s_{W} \geq t \expect{\hat s_W} + \eps t/2} \leq \exp\left(- \frac{\eps^2 t}{2 H_k^2} \right).		 \notag
	\end{equation}

	Since all $k$-committees besides $W^*$ could be bad, union bounding over all $W$ gives an upper bound on our bad event probability of $\binom{\abs{C}}{k} \cdot \exp\left(- \eps^2 t/2 H_k^2\right) \leq \exp\left(k \log m - \eps^2 t/2H_k^2\right)$.
	Asserting
	\[
	\exp\left(k \log m - \eps^2 t/2 H_k^2\right) \leq \delta
	\]
	and solving for the minimum feasible $t$ yields $t \geq 2 (\log k + 1)^2\nf{1}{\eps^2} \left( k \log m + \log \nf{1}{\delta} \right)$.
	Since our number of samples satisfies this bound, our failure probability is therefore upper bounded by $\delta$.

	The second step is to argue that if the $\pavsc(W)$ for the output $W$ is sufficiently high (i.e. provided $\cE$ does not hold), then $W$ satisfies \ejrp{}.
	For this we can leverage \Cref{prop:pav-score-to-ejrp-guarantee}.
	In particular, if $\pavsc(W) > \max_{W'} \pavsc(W') - \eps$ then there is certainly no pivot $(c,c')$ such that $\Delta(W, c, c') \geq \eps$.
	Therefore by \Cref{prop:pav-score-to-ejrp-guarantee} $W$ satisfies \ejrpe{}.
\end{proof}

\begin{algorithm}[H]
	\caption{\samplelspav{}}
	\label{alg:sample-ls-pav}
	\KwIn{Candidates $C$, $\eps$,  initial committee $W_{init}$, failure probability $\delta$}
	\KwOut{committee $W$ satisfying \ejrpe{} w.p. $1-\delta$}
	$W \gets W_{init}$ \;
	\For{$\nf{3}{\eps} \cdot (\log k + 1)$ steps }{
		Sample $\cA = (A^1, \ldots, A^t) \sim \cP^t$ i.i.d. for $t = \nf{18}{\eps^2} (\log k + 1)^2(\log \nf{m}{\delta} + \log \nf{1}{\eps}  + 2(\log k + 1))$ \;
		$Q \gets \left\{ (c,c') \in (C \setminus W) \times W: \: \Delta(W, c, c'; \cA) \geq \frac{2\eps}{3} \right\}$ \;
		 \eIf{$Q \neq \emptyset$}{
			$W \gets W \cup \{c\} \setminus \{c'\}$ for arbitrary $(c, c') \in Q$ \;
		}{
			\textbf{break} \;
		}
	}
	\Return $W$ \;
\end{algorithm}

\begin{proposition}
	\label{prop:sample-ls-pav-correct}
	\samplepav (\Cref{alg:sample-ls-pav}) satisfies \ejrp{} with probability $1-\delta$ and has sample complexity $\tilde O(\nf{1}{\eps^3})$.
	(For \ejrp{} = \ejrpe{}, it uses $\tilde O(k^6)$ samples.)
\end{proposition}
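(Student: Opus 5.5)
We bound the number of pivot steps, then control the failure probability of each step, and finally argue that if no step fails the output satisfies \ejrpe{} by \Cref{prop:pav-score-to-ejrp-guarantee}. This mirrors the round-level analysis in the proof of \Cref{thm:cs-pav-correctness}, specialized to single-candidate pivots ($\ell_r = 1$) and with no warm start.

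\emph{Bounding the number of steps.} Since $\pavsc$ takes values in $[0, H_k] \subseteq [0, \log k + 1]$, at most $(\log k+1)/(\eps/3)$ steps can each raise the true score by at least $\eps/3$. This is exactly the $\nf{3}{\eps}(\log k+1)$ iteration cap in \Cref{alg:sample-ls-pav}. For each step $s$, define the failure event $\cE_s$ as the union of two cases:
\begin{itemize}
\item (a) some pivot $(c,c')$ has true improvement $\Delta(W,c,c') \geq \eps$, yet its empirical improvement is below $\nf{2\eps}{3}$;
\item (b) some pivot with true improvement below $\nf{\eps}{3}$ has empirical improvement at least $\nf{2\eps}{3}$.
\end{itemize}
Under non-failure, every pivot actually taken improves the true score by more than $\eps/3$. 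So the loop cannot exhaust all its iterations while good pivots still exist: that would push the total improvement above $H_k$. Hence the loop ends via the break. At that moment $Q=\emptyset$, so by non-failure (a) no pivot improves the score by $\eps$ or more, and \Cref{prop:pav-score-to-ejrp-guarantee} gives \ejrpe{}.

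\emph{Probability bound.} Each step draws fresh samples, so conditioned on the current $W$ the empirical estimate $\Delta(W,c,c';\cA)$ is an average of $t$ i.i.d.\ terms, each bounded in $[-1,1]$; the range $2H_k$ used in the paper also suffices. Hoeffding's inequality bounds each deviation of $\eps/3$ by $\exp(-\eps^2 t/(18H_k^2))$. A union bound over the at most $mk \leq m^2$ pivots, and then over the $\nf{3}{\eps}(\log k+1)$ steps, shows that the stated $t$, namely $\nf{18}{\eps^2}(\log k+1)^2\bigl(\log\nf{m}{\delta} + \log\nf1\eps + 2(\log k+1)\bigr)$, makes the total failure probability at most $\delta$. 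The $\log\nf1\eps$ and $\log k$ terms absorb the step count.

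\emph{Sample complexity.} The total is $\tilde O(\eps^{-1}) \cdot \tilde O(\eps^{-2}\log\nf m\delta) = \tilde O(\eps^{-3})$, which is $\tilde O(k^6)$ at $\eps = 1/k^2$.

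\emph{Main obstacle.} The delicate point is the adaptivity across steps. Estimates in step $s$ must not depend on the samples that selected $W$. Fresh samples per step handle this: condition on the history and apply Hoeffding to the new batch. The second delicate point is checking the union-bound constants, specifically that $\log(mk)$ plus the $\log$ of the step count is covered by the stated $t$.
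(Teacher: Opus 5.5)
Your proof is correct and follows essentially the same route as the paper's: cap the number of $\eps/3$-improving pivots by $H_k/(\eps/3)$, define the per-step failure events (a) and (b), apply Hoeffding to fresh samples in each step, union bound over the at most $km$ pivots and over all steps, and conclude via \Cref{prop:pav-score-to-ejrp-guarantee}. You are somewhat more careful than the paper on two points: you argue explicitly that the loop must exit through the break, since running every iteration would push the total improvement past $H_k$, and your union bound runs over exactly the $\nf{3}{\eps}(\log k+1)$ steps, whereas the paper simply asserts termination and uses a per-round budget $\delta/(\nf{2}{\eps}H_k)$ that does not quite match its own step count.
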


\begin{proof}[Proof of \Cref{prop:sample-ls-pav-correct}]
	The number of samples is specified by the definition of the algorithm.
	It remains to show success with high probability.

	This is similar to the proof of \Cref{prop:sample-pav-correct}, except that we will union bound over more events.
	In essence, we would like to guarantee that (a) whenever there is a pivot $(c,c')$ such that $\Delta(W, c, c') \geq \eps$, we choose a pivot, and that (b) any pivots $(c,c') \in Q$ increase $\pavsc(W)$ by at least $\eps/3$.
	As long as both of these things remain true, we will perform at most $\frac{\pavsc(W^*)- \pavsc(W_{init})}{\eps/3} \leq 3 H_k/\eps < 3 (\log k + 1)/\eps$ pivots before no remaining pivots confer $\pavsc$ benefit $\eps$ or more.
	At this point we will terminate, our final committee will satisfy \ejrpe{} by \Cref{prop:pav-score-to-ejrp-guarantee}.

	It remains only to bound our failure probabilities, and argue that the number of samples used is sufficient.
	For each round, the number of pivots under consideration is $\abs{(C \setminus W) \times C} \leq km$.
	As before, within a given round our conditions (a) and (b) are met so long as (i) no $\eps$-good pivots appear worse than $2\eps/3$, and (ii) no $\eps/3$-bad pivots appear better than $2\eps/3$.
	For fixed $W$ and a pivot $q = (c, c')$, let $\hat\Delta_q \defeq \Delta(W, c, c', \cA)$ denote the empirical estimate of its $\pavsc$ impact.
	In the first case, for good pivot $q$ by Hoeffding we have
	\begin{equation}
		\probover{\cA\sim \cP^t}{\hat s_{q} \leq \eps - \eps/3 }
		\leq  \prob{t \hat s_{q} \leq t \expect{\hat s_{q}} - \eps t/3}
		\leq \exp\left(- \frac{2 \eps^2 t^2/9}{t (2H_k)^2} \right)
		= \exp\left(- \frac{ \eps^2 t}{18 H_k^2} \right).		 \notag
	\end{equation}
	The probability of a $\eps/3$-bad pivot $q$ appearing at least $2\eps/3$-good can be upper bounded identically.
	Union bounding over the $km$ pivots, we have
	\[
		km \cdot \exp\left(- \frac{\eps^2 t}{18 H_k^2} \right) \leq \delta',
	\]
	for per-round failure probability $\delta'$.
	To make sure our failure is at most $\delta$ over all rounds, we set $\delta' = \delta / (\nf{2}{\eps} H_k)$ and solve to obtain
	\[
		t \geq \frac{18 H_k^2}{\eps^2} \left(\log k + \log m + \log \nf{1}{\delta} + \log \nf{1}{\eps} + H_k + \log 2 \right).
	\]
	As our per-round number of samples satisfies this inequality, our success is bounded accordingly.
\end{proof}

\begin{algorithm}[H]
	\caption{\samplegjcr{}}
	\label{alg:sample-gjcr}
	\KwIn{Candidates $C$, $k$, $\eps$, failure probability $\delta$}
	\KwOut{committee $W$}
	$W \gets \emptyset$ \;
	\For{$\ell \in \{k, k-1, \ldots, 2, 1\}$}{
		\While{\texttt{TRUE}}{
			Sample $\cA = (A^1, \ldots, A^t) \sim \cP^t$ i.i.d. for $t =\nf{2}{\eps^2} (\log \abs{C} + \log \nf{k}{\delta})$ \;
			$S \gets \left\{c \in C \setminus W : \abs{\{A \in \cA : c \in A \:\wedge\: \abs{A \cap W} < \ell\}} \ge t \cdot \ell \cdot \left(\frac{1}{k+1} + \nf{\eps}{2}\right) \right\}$ \;
			\eIf{$S \neq \emptyset$}{
				$W \gets W \cup \{c\}$ for arbitrary choice of $c \in S$\;
			}{
				\textbf{break} \;
			}
		}
	}
	\Return $W$ \;
\end{algorithm}

\begin{proposition}
	\label{prop:sample-gjcr-correct}
	\samplegjcr (\Cref{alg:sample-gjcr}) satisfies \ejrp{} with probability $1-\delta$ and has sample complexity $\tilde O(\nf{k}{\eps^2})$.
	(For \ejrp{} = \ejrpe{}, it uses $\tilde O(\nf{k}{\eps^2})$ samples.)
\end{proposition}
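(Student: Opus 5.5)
We prove \Cref{prop:sample-gjcr-correct} by analyzing each iteration of the inner loop of \Cref{alg:sample-gjcr} on its own fresh sample, in the style of the proof of \Cref{prop:sample-ls-pav-correct}. Fix the current committee $W$ at some iteration with level $\ell$. For each candidate $c \notin W$, let
\[
q_{c,\ell}(W) \defeq \probover{A\sim\cP}{c \in A \wedge \abs{A \cap W} < \ell},
\]
and let $\hat q_{c,\ell}$ be its empirical frequency on the fresh $t$ samples. Because the samples are drawn after $W$ is fixed, Hoeffding applies conditionally on the history and gives $\prob{\abs{\hat q_{c,\ell} - q_{c,\ell}} \ge \ell\eps/2} \le 2\exp(-\ell^2\eps^2 t/2)$. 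It is enough to use $\eps/2$ deviation accuracy, which is the weaker $\ell = 1$ case. Define the good event for the iteration as follows: every $c$ with $q_{c,\ell} \ge \ell(\frac{1}{k+1}+\eps)$ lands in $S$, and every $c$ placed in $S$ has $q_{c,\ell} \ge \ell\cdot\frac{1}{k+1}$. A union bound over at most $m$ candidates shows this event fails with probability at most $2m\exp(-\eps^2t/2)$.

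Next we bound the number of iterations. Every iteration either adds a candidate or triggers a break, and there are $k$ break iterations, one per level. The key claim is that, on the good event, at most $k$ candidates are ever added. We would adapt the standard GJCR counting argument. Each candidate $c$ added at level $\ell$ has at least $\ell/(k+1)$ mass of voters with $c\in A$ and fewer than $\ell$ representatives in $W$ at that time. Charge each such voter $1/\ell$ per addition. A voter is charged only while $\abs{A\cap W}<\ell$, and the levels are swept downward, so each voter's total charge is at most about $\sum_j 1/j$ over the representatives it gains. Since this bound comes with a harmonic factor, it is not clean, and this step is the main obstacle.

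The cleaner route is to rely on the proof in \citet{brill23robust} that GJCR selects at most $k$ candidates. That proof only uses the fact that each selected candidate at level $\ell$ witnesses a group of mass at least $\ell/(k+1)$ with fewer than $\ell$ representatives; strictly above the Droop quota is not required, and the Droop threshold $\ell/(k+1)$ already suffices for the $\le k$ bound. If the committee ends with fewer than $k$ members, we pad it arbitrarily, which preserves \ejrpe{}. Consequently, at most $2k$ iterations occur in total, and a union bound with per-iteration failure probability $\delta/(2k)$ requires $t = O(\eps^{-2}\log(mk/\delta))$ per iteration. This matches the algorithm's choice of $t$ up to constants; the constant in $\nf{2}{\eps^2}$ may need to absorb the factor of $2$.

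Correctness then follows on the good event for all iterations. Consider level $\ell$ at its break. No candidate $c\notin W$ has $q_{c,\ell}(W)\ge \ell(\frac1{k+1}+\eps)$. Later additions only increase $\abs{A\cap W}$, so $q_{c,\ell}$ can only decrease afterward. Therefore the final $W$ satisfies the \ejrpe{} inequality for every $\ell$ and every $c\notin W$. The total sample count is $2k\cdot t=\tilde O(k/\eps^2)$, as claimed.
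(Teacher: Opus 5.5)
Your plan is the paper's plan. You draw a fresh sample on each pass of the while-loop and apply Hoeffding plus a union bound over at most $m$ candidates and $O(k)$ passes. Correctness at each level's break follows from monotonicity of $q_{c,\ell}(W)$ in $W$, and $\abs{W}\le k$ comes from the GJCR charging argument, which the paper also only cites. Your count of up to $2k$ passes (breaks included) and your padding step are more careful than the paper's write-up, which says line 5 is invoked at most $k$ times.

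There is one false claim, and the $\abs{W}\le k$ bound rests on it, hence so do your $2k$ iteration count and your union bound. You assert that a witnessing mass of \emph{at least} $\ell/(k+1)$ suffices, and that strict exceedance of the Droop quota is not required. Counterexample: take $k+1$ disjoint blocs, each of mass exactly $1/(k+1)$ and each approving only its own candidate. The non-strict rule adds all $k+1$ candidates at level $1$.

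The argument of \citet{brill23robust} is for the Hare quota. For the Droop quota strictness is essential, which is why the paper's failure event (a) is $q^\ell(c,W)\le \ell/(k+1)$. Your own concentration event already supplies the strict version. If $\abs{\hat q_{c,\ell}-q_{c,\ell}}<\ell\eps/2$ and $c\in S$, then $\hat q_{c,\ell}\ge \ell(\frac{1}{k+1}+\frac{\eps}{2})$ forces $q_{c,\ell}>\ell/(k+1)$. So state the good event with a strict inequality.

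The charging step you flagged as an obstacle then closes with no harmonic loss, because the downward sweep gives more than you used:
\begin{itemize}
\item Suppose a voter is charged, in chronological order, at levels $\ell_1\ge\ell_2\ge\dots\ge\ell_r$. At its $r$th charge it already has $r-1$ representatives and fewer than $\ell_r$, so $\ell_r\ge r$.
\item Every earlier level satisfies $\ell_j\ge\ell_r$, so $\sum_j 1/\ell_j\le r/\ell_r\le 1$.
\item Spread a unit charge for each selected candidate uniformly over its witnessing group; the density is $<(k+1)/\ell$. Every voter then carries total charge $<k+1$.
\item The total charge equals $\abs{W}$, so $\abs{W}<k+1$, i.e.\ $\abs{W}\le k$.
\end{itemize}

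With this repair the rest of your argument goes through. As you noted, at the algorithm's stated $t$ the union bound yields a small constant multiple of $\delta$, which only affects constants.
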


\begin{proof}[Proof of \Cref{prop:sample-gjcr-correct}]
	This proof proceeds in much the same way as the others.
	On line 5, let $\hat Q^\ell(c,W) = \{A \in \cA : c \in A \:\wedge\: \abs{A \cap W} < \ell\}$ and let $\hat q^\ell(c,W) \defeq \frac{1}{t} \abs{\hat Q^\ell(c,W)}$ note that $\hat q^\ell(c,W)$ is an estimator for
	\[
		q^\ell(c,W) \defeq \probover{A\sim\prof}{c \in A \:\wedge\: \abs{A \cap W} < \ell}.
	\]

	We define our two failure events in line 5 to be that, for fixed already-chosen candidates $W$, either (a) we include a $c \in \hat Q^\ell(c,W)$ with insufficient $\ell$-demand given $W$, meaning that $q^\ell(c,W) \leq \ell \frac{1}{k+1}$, or (b), we fail to include a $c$ with high $\ell$-demand given $W$, meaning that $q^\ell(c,W) \geq \ell \left( \frac{1}{k+1} + \eps\right)$.

	Applying the same Hoeffding bounds as elsewhere suffices to union bound over the at most $m$ possible bad events per round, and over the at most $k$ times that line 5 is invoked.

	The fact that line 5 is invoked at most 5 times and that, conditioned on no failure events having transpired, the output committee satisfies \ejrpe{}, both follow from the proof of correctness of the algorithm.
	The latter is direct from the definition of the algorithm and the definition of \ejrpe{}, again provided no failure events occur.
	That the $W$ output by \Cref{alg:sample-gjcr} satisfies $\abs{W} \leq k$ follows the charging argument made in \cite{brill23robust} for the Hare quota or in \cite{casey25justified} for the Droop quota.
\end{proof}

\begin{algorithm}[H]
	\caption{\samplegreedy{}}
	\label{alg:sample-greedy}
	\KwIn{Candidates $C$, $k$, $\eps$, failure probability $\delta$}
	\KwOut{committee $W$ satisfying $\jr{}$ w.p. $1-\delta$}
	$W \gets \emptyset$ \;
	\For{$i \in [k]$}{
		Sample $\cA_i = (A^1, \ldots, A^t) \sim \cP^t$ i.i.d. for $t = \nf{2}{\eps^2} (\log m + \log \nf{1}{\delta})$ \;
		$c_i \gets \arg \max_{c \in C \setminus W} \abs{ \{A \in \cA_i: A \cap W = \emptyset  \: \wedge\: c \in A\}}$ \;
		$W \gets W \cup \{c_i\}$ \;
	}
	\Return $W$ \;
\end{algorithm}

\begin{proposition}
	\label{prop:sample-greedy-correct}
	\samplegreedy (\Cref{alg:sample-greedy}) satisfies \ejrpe{} with probability $1-\delta$ and has sample complexity $\tilde O(\nf{k}{\eps^2})$.
	(For \ejrp{}, it uses $\tilde O(k^5)$ samples.)
\end{proposition}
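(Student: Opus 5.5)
The plan splits the claim into a probabilistic part and a deterministic part. For the probabilistic part, I define the failure event of round $i$ relative to the committee $W_{i-1}$ fixed before round $i$. The samples $\cA_i$ are fresh and independent of $W_{i-1}$, so for each candidate $c\notin W_{i-1}$ the empirical marginal coverage $\hat q(c\mid W_{i-1})=\frac1t\abs{\{A\in\cA_i: A\cap W_{i-1}=\emptyset,\ c\in A\}}$ is an average of $t$ i.i.d.\ indicators. Its mean is $q(c\mid W_{i-1})\defeq\probover{A\sim\prof}{c\in A\wedge A\cap W_{i-1}=\emptyset}$. By Hoeffding, $\abs{\hat q-q}<\nf{\eps}{2}$ fails with probability at most $2e^{-\eps^2 t/2}$. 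A union bound over the at most $m$ candidates and the $k$ rounds gives total failure at most $2km\,e^{-\eps^2t/2}$. With the per-round $t$ adjusted from $\nf{2}{\eps^2}\log\nf m\delta$ to $\nf{2}{\eps^2}\log\nf{2km}{\delta}$ (an additive $\log k$ change), this is at most $\delta$. Summing over the $k$ rounds gives $\tilde O(\nf{k}{\eps^2})$ samples, and $\tilde O(k^5)$ at $\eps=\nf1{k(k+1)}$.

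On the good event, I would verify the defining inequality of \ejrpe{} level by level. For $\ell=1$ the argument is the greedy coverage argument. Suppose some $c\notin W$ has $q(c\mid W)\ge\frac1{k+1}+\eps$. Since $W_{i-1}\subseteq W$, coverage is monotone, so $q(c\mid W_{i-1})\ge\frac1{k+1}+\eps$ in every round. Accurate estimates then imply that each chosen $c_i$ had true marginal coverage greater than $\frac1{k+1}$. These marginals are disjoint slices of the population, so together with $c$'s residual mass they total more than $k\cdot\frac1{k+1}+\frac1{k+1}+\eps>1$, which is a contradiction. This yields the $\ell=1$ clause, that is, \jre{}.

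For $\ell\ge2$ I would try a charging argument in the spirit of \Cref{prop:pav-score-to-ejrp-guarantee}. The aim is to show that a candidate $c\notin W$ with $\probover{A\sim\prof}{c\in A\wedge\abs{A\cap W}<\ell}\ge\ell(\frac1{k+1}+\eps)$ forces some greedy pick to have had marginal coverage below that of $c$ at the time it was chosen. I expect this step to be the main obstacle, and I believe it cannot be carried through for the coverage objective. Greedy coverage never rewards a second approved candidate for a voter, and this breaks the statement at $\ell=2$. Take $k=4$. Give $60\%$ of voters the ballot $\{a,b\}$, and split the rest evenly among $\{c\},\{d\},\{e\},\{f\}$. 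Greedy (even with exact estimates) selects $a$ and then three singletons. The $\{a,b\}$-voters then have $\abs{A\cap W}=1<2$ and mass $0.6\ge2(\frac15+\eps)$ for small $\eps$, which violates \ejrpe{}.

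So the concrete plan is as follows. I would prove the concentration step and the $\ell=1$ step as above, which give the sample bound and the \jre{} guarantee. I would then use this example to flag that the $\ell\ge2$ clauses of \ejrpe{} do not follow for \samplegreedy. The stated claim should therefore be read as the \jr{}/\jre{} guarantee, as in the algorithm's own output specification, rather than proved for full \ejrpe{}.
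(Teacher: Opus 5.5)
Your proof of the \jre{} clause and the sample bound takes essentially the paper's approach: fresh samples each round, a union bound with per-round failure probability $\delta/k$ giving $O(\frac{k}{\eps^2}\log\frac{mk}{\delta})$ samples, and then the greedy-for-\jr{} counting argument, which you carry out more explicitly than the paper does. Your counterexample is also correct: greedy picks only one of $a,b$, so the $\ell=2$ clause fails even for Hare \ejrp{}, since $0.6 \geq 2\left(\frac{1}{5}+\frac{1}{20}\right)$; the paper's own proof likewise only establishes the \jr{} guarantee in the algorithm's output specification, so the \ejrpe{} in the statement should be read as \jre{}.
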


\begin{proof}[Proof of \Cref{prop:sample-greedy-correct}]
	This is simply an implementation of the greedy algorithm for \jr{} using empirical marginal coverage estimates.

	For each of $k$ steps, apply \Cref{lem:separating-means-from-samples-new} with $\delta' = \delta/k$, as in the proof of \Cref{prop:sample-ls-pav-correct}.
	By using fresh samples for each step and union bounding over the failure probability for each of the $k$ steps, we use $ O\left(k/\eps^2 \cdot \log \frac{mk}{\delta} \right)$ samples. Since $k \leq m$, the claim follows.
\end{proof}

%%%
\subsection{A direct combination of \pav and \lspav}
\label{app:upper-bounds-pav-pivot}

Here we show that a direct combination of \samplepav and \samplelspav suffices to improve upon the sample complexity of standard rules, as presented in \Cref{app:upper-bounds-known}.

\begin{algorithm}[h]
	\caption{\samplewslspav{}}
	\label{alg:sample-ws-ls-pav}
	\KwIn{Candidates $C$,  $k$, tolerance $\eps$, failure probability $\delta$}
	\KwOut{committee $W$ satisfying $\ejrp{}_\eps$ w.p. $1-\delta$}

	Sample $\cA = (A^1, \ldots, A^t) \sim \cP^t$ i.i.d. for $t = 2 (\log k + 1)^2 \frac{\eps^{-2}}{k^{2/3}} (k \log \abs{C} + \log \nf{1}{\delta} + 1)$ \label{line:sample-ws-ls-pav-first-samples} \;
	\For{$W \in \binom{C}{k}$}{
		$\hat s_W \leftarrow \pavsc(W; \cA)$ \;
	}
	$W \gets \arg\max_{W \in \binom{C}{k}} \hat s_W$ \label{line:sample-ws-ls-pav-step-output-committee} \;
	\For{$3\cdot k^{1/3}$ steps}{
		Sample $\cA = (A^1, \ldots, A^t) \sim \cP^t$ i.i.d. for $t =18 (\log k + 1)^2 \eps^{-2} ( \log \nf{\abs{C}}{\delta} + 2 \log k + 2)$ \label{line:sample-ws-ls-pav-second-samples} \;
		$Q \gets \left\{ (c,c') \in (C \setminus W) \times W: \: \Delta(W, c, c'; \cA) \geq t \cdot \frac{2\eps}{3} \right\}$ \;
		\eIf{$Q \neq \emptyset$}{
			$W \gets W \cup \{c\} \setminus \{c'\}$ for arbitrary $(c, c') \in Q$ \;
		}{
		\textbf{break} \;
		}
	}
	\Return $W$ \;
\end{algorithm}

\begin{theorem}
	\label{thm:ws-ls-pav-correctness}
	\samplewslspav{} (\Cref{alg:sample-ws-ls-pav}) satisfies \ejrpe{} with probability at least $1-\delta$.
	Furthermore, it does so using $O\left( \eps^{-2} k^{1/3} \left(\log^2 k + \log \nf{m}{\delta} \right)\right)$ samples.
\end{theorem}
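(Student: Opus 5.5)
The proof combines the \samplepav{} analysis (\Cref{prop:sample-pav-correct}) with the \samplelspav{} analysis (\Cref{prop:sample-ls-pav-correct}). The warm start is run at a coarse tolerance, and the remaining score gap is then closed with single-candidate pivots. The parameters are balanced so that both phases use $\tilde O(\eps^{-2}k^{1/3})$ samples. The sample count is read directly off \Cref{alg:sample-ws-ls-pav}. \Lineref{line:sample-ws-ls-pav-first-samples} uses $O(\eps^{-2}k^{-2/3}\log^2 k\,(k\log m+\log\nf1\delta))=O(\eps^{-2}k^{1/3}\log^2 k\log\nf m\delta)$ samples. The loop uses $3k^{1/3}$ rounds of $O(\eps^{-2}\log^2 k\,(\log\nf m\delta+\log k))$ samples each. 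Summing gives the stated bound, up to how the $\log^2 k$ factor is grouped with $\log\nf m\delta$.

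For correctness, I would define two failure events, mirroring the proof of \Cref{thm:cs-pav-correctness}. The first is $\cE_{\pav{}}$: the warm-start committee has true score below $s^*-\eps_P$, where $\eps_P\defeq \eps k^{1/3}$. The effective tolerance of the first sample is $\eps k^{1/3}$, since $t\propto \eps_P^{-2}$ with $\eps_P^2=\eps^2k^{2/3}$. The second is $\cE_s$ for each pivot step $s$: either some $\eps$-good single pivot exists but every pivot appears worse than $2\eps/3$, or the chosen pivot has true gain below $\eps/3$.

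$\Pr[\cE_{\pav{}}]\le\delta/2$ follows verbatim from the Hoeffding plus union-bound argument over $\binom mk$ committees in \Cref{prop:sample-pav-correct}, with range $H_k\le\log k+1$. For each step, fresh samples make the current $W$ independent of them. Hoeffding with range $2H_k$, union bounded over at most $km$ pivots, gives per-step failure probability at most $\delta/(6k^{1/3})$ under the chosen $t$. The $2\log k$ term in \lineref{line:sample-ws-ls-pav-second-samples} absorbs both $\log k$ from $km$ and $\log(6k^{1/3})$ from the union over steps. Hence the total failure probability over the $3k^{1/3}$ steps is at most $\delta/2$.

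On the non-failure event, every executed pivot raises the true \pav{} score by at least $\eps/3$. The score starts within $\eps_P=\eps k^{1/3}$ of $s^*$, so at most $3k^{1/3}$ pivots can occur, and fewer than that if the inequality is strict. The remaining step, which I expect to be the only delicate point, is the termination argument. I must ensure the loop exits via \textbf{break} rather than by exhausting its budget: exhausting the budget would require $3k^{1/3}$ pivots totalling gain at least $\eps k^{1/3}=\eps_P$. That would force the final score to be at least $s^*$ and, with the strict inequalities of the non-failure event, contradict optimality. If this off-by-one is tight, I would slightly enlarge the loop budget or shrink $\eps_P$ by a constant. Once \textbf{break} is reached, no single pivot $(c,c')$ has true gain at least $\eps$, so \Cref{prop:pav-score-to-ejrp-guarantee} gives \ejrpe{}.
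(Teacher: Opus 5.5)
Your proposal follows the paper's proof essentially step for step. The shared steps are the warm start at tolerance $\eps k^{1/3}$ with failure budget $\nf{\delta}{2}$, and per-step Hoeffding plus a union bound over $km$ pivots with budget $\delta/(6k^{1/3})$. They continue with the count of at most $3k^{1/3}$ pivots of gain at least $\eps/3$, and then \Cref{prop:pav-score-to-ejrp-guarantee}. Your explicit check that the loop cannot exhaust its budget, together with the fix if the count is off by one, is more careful than the paper, which simply asserts that the loop halts with $Q=\emptyset$.

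Your caveat about the sample count is also correct. \lineref{line:sample-ws-ls-pav-first-samples} alone draws at least $2(\log k+1)^2\eps^{-2}k^{1/3}\log m$ ballots. So what both you and the paper actually establish is $O(\eps^{-2}k^{1/3}\log^2 k\,\log\nf{m}{\delta})$; the additive grouping in the statement holds only in the loose $\tilde O$ sense that the paper's own proof uses.
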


Recalling $\ejrp{}$ corresponds to $\ejrp{}_\eps$ for $\eps = \nf{1}{k^2}$, this implies \jr{} has sample complexity $\tilde O(k^{13/3})$.

Why this particular division between the accuracy of \samplepav and subsequent \samplelspav pivots?
Intuitively, if $\eps = k^{-\alpha}$ for some fixed $\alpha$, suppose we deploy \samplepav to accuracy $\eps' = k^{-\beta}$ before switching to pivots.
Choosing $\beta = \alpha$ reduces to just running \samplepav, while $\beta = 0$ corresponds to relying on \samplelspav.

What is the sample complexity as a function of $\beta$?
If the committee $W$ on line 5
is within $k^{-\beta}$ of optimal \pav{} score, then we must do at most $\eps'/\eps = k^{\alpha - \beta}$ pivots of gain $\eps$ before arriving at a approximately \pav{}-maximal committee.
Therefore our overall sample complexity is
\[
\max\left(\tilde O \left(\frac{k}{\eps'^2}\right), \tilde O \left(\frac{\eps'}{\eps} \cdot \frac{1}{\eps^2}\right)  \right)
= \max\left(\tilde O \left(k^{1 + 2 \beta}\right), \tilde O \left(k^{\alpha - \beta + 2\alpha} \right)  \right).
\]
The first term is increasing in $\beta$ and the second is decreasing.
Balancing, $\beta = \alpha - \nf{1}{3}$, or $\eps' = k^{1/3} \eps$.

\begin{proof}[Proof of \Cref{thm:ws-ls-pav-correctness}]
	The sample complexity claim is clear from the way in which \Cref{alg:sample-ws-ls-pav} is structured.
	The \pav{} step uses  $\tilde O(\eps^{-2} k^{1/3})$ samples (line 1),
	while each of the $2 k^{1/3}$ local search pivot steps uses $\tilde O(\eps^{-2})$ samples (line 7).
	Summing yields the stated sample bound.

	The main task is to prove correctness.
	This amounts to showing two things: first, that both the approximate \samplepav{} step and all of the pivot steps are sufficiently good with respect to their impact on $\pavsc(W)$ for our running committee $W$.
	Second, that if all steps are sufficiently good then we terminate with a committee satisfying \ejrpe{}.
	We will tackle these in order.

	First, let the event $\cE_{\pav{}}$ denote the failure event for the first step, and let $\{\cE_\ell\}$ denote the failure events for the pivot steps $\ell \in [3k^{1/3}]$.
	Our goal  will be to show that the probability of a failure over the course of the algorithm is at most $\delta$.
	First, we will define $\cE_{\pav{}}$ to be the event that $\pavsc(W) \leq \max_{W'} \pavsc(W') - \eps'$, where we will define $\eps' = \eps \cdot k^{1/3}$.
	Upper bounding this event closely follows the analysis for \samplepav (\Cref{prop:sample-pav-correct}).
	As in the analysis of \samplelspav (\Cref{prop:sample-ls-pav-correct}), our per-round failure event $\cE_\ell$ will be that either (a) we neglect to choose an $\eps/3$-good pivot when an $\eps$-good pivot is available, or (b) we choose a pivot that is less than $\eps/3$-good.

	The second step is to show that if neither 	$\cE_{\pav{}}$ nor any of the $\cE_\ell$ occur, then \samplewslspav{} satisfies \ejrpe{}.
	To this end, suppose that neither $\cE_{\pav{}}$ nor any of the $\cE_\ell$ hold.
	Then the committee $W$ chosen in line 5 satisfies $\hat s_W \geq \hat s_{W^*} - \eps'$, and at most $\eps'/2\eps = 3k^3$ many $\eps/3$-good pivots can be performed starting from $W$.
	Since none of the $\cE_\ell$ hold, so long as an $\eps$-good pivot is possible we will make a $\eps/3$-good pivot, and when we halt no $\eps$-good pivots will be possible.
	Taken together, this means that when we halt it is with $Q=\emptyset$, and no $\eps$-good pivots remain.
	Therefore by \Cref{prop:pav-score-to-ejrp-guarantee} our chosen $W$ satisfies \ejrpe{}.

	It therefore remains only to bound the likelihood of failure.
	We begin with $\cE_{\pav{}}$.
	Observe that if $\cE_{\pav{}}$ holds then either (i) the $\pavsc$-maximizing committee $W^*$ has sample $\hat s_{W^*} \leq \pavsc(W^*) - \nf{\eps'}{2}$ or (ii) at least one of the low-scoring committees $W$ with $\pavsc(W) \leq \pavsc(W^*) - \eps'$ has sample score $\hat s_{W} \geq \pavsc(W^*) - \nf{\eps'}{2}$.
	Let $s^* \defeq \max_{W} \pavsc(W)$ for convenience.
	Then by Hoeffding, the probability of (i) over the samples $\cA = (A^1, \ldots, A^t)$ is
	\begin{equation}
		\probover{\cA \sim \cP^t}{\hat s_{W^*} \leq s^* - \nf{\eps'}{2}}
		\leq \prob{t \cdot \hat s_{W^*} \leq t\cdot \expect{\hat s_{W^*}} - t\cdot \nf{\eps'}{2}}
		\leq \exp\left(- \frac{\eps'^2 t^2/2 }{t H_k^2 } \right) = \exp\left(- \frac{\eps'^2 t }{2 H_k^2 } \right),		 \notag
	\end{equation}
	where $H_k \defeq \sum_{j=1}^k \nf{1}{j}$ is the $k$th harmonic number, and a single sample's impact on $t \cdot \hat s_W$ is between $-H_k$ and $H_k$.
	Similarly, for any low-score $W$ the probability of (ii) is at most
	\begin{equation}
		\probover{\cA\sim \cP^t}{\hat s_{W} \geq s^*  - \nf{\eps'}{2}}
		\leq \prob{t \cdot \hat s_{W} \geq t \cdot \expect{\hat s_W} + t\cdot \nf{\eps'}{2}}
		\leq \exp\left(- \frac{\eps'^2 t}{2 H_k^2} \right).		 \notag
	\end{equation}
	Union bounding over all $W$ and introducing a failure probability $\delta/2$ yields
	\[
	\probover{\cA\sim \cP^t}{\cE_{\pav{}}}
	\leq \binom{m}{k} \cdot \exp\left(-\frac{ \eps'^2 t}{2 H_k^2}\right)
	\leq \exp\left(k \log m - \frac{ \eps'^2 t}{2H_k^2}\right)
	\leq \nf{\delta}{2}.
	\]
	Substituting $\eps' = \eps k^{1/3}$ and $H_k \leq \log k + 1$ and rearranging, we find that this is satisfied provided
	\[
	t \geq 2 (\log k + 1)^2\frac{1}{\eps^2 k^{2/3}} \left( k \log m + \log \nf{1}{\delta}, \right).
	\]
	which holds for our chosen number of samples on line 1.

	With our remaining failure probability budget of $\nf{\delta}{2}$, we now turn to the events $\{\cE_\ell\}$.
	Within each round $\ell$, the number of pivots under consideration is $\abs{(C \setminus W) \times C} \leq km$.
	And within this round, according to our goals we can identify the sufficient conditions for success that (a) no $\eps$-good pivots appear worse than $2\eps/3$, and (b) no $\eps/3$-bad pivots appear better than $2\eps/3$.
	For fixed $W$ and a pivot $q = (c, c')$, let $\hat\Delta_q \defeq \Delta(W, c, c', \cA)$ denote the empirical estimate of its $\pavsc$ impact calculated in line 8.
	In the first case, for a $\eps$-good pivot $q$ we can use Hoeffding to bound the probability it is not included in $Q$:
	\begin{equation}
		\probover{\cA\sim \cP^t}{\hat \Delta_{q} \leq \eps - \nf{\eps}{3} }
		\leq  \prob{t \cdot \hat \Delta_{q} \leq t \cdot \expect{\hat \Delta_{q}} - t \cdot \nf{\eps}{3} }
		\leq \exp\left(- \frac{2 \eps^2 t^2/9}{t (2H_k)^2} \right)
		= \exp\left(- \frac{ \eps^2 t}{18 H_k^2} \right).		 \notag
	\end{equation}
	We can similarly use Hoeffding to upper bound the probability of a $\nf{\eps}{3}$-bad pivot $q$ appearing in $Q$.
	Union bounding over both this and the at most $km$ $\nf{\eps}{3}$-bad pivots, we have that the probability of failing to include an $\eps$-good pivot in $Q$ or including a $\nf{\eps}{3}$-bad pivot in $Q$ is at most
	\[
	\sum_{\eps\text{-good } q} \prob{q \not \in Q} + \sum_{\nf{\eps}{3}\text{-bad } q} \prob{q \in Q}
	\leq  km \cdot \exp\left(- \frac{\eps^2 t}{18 H_k^2} \right)
	\leq \delta',
	\]
	where $\delta'$ is a per-round failure probability.
	To make sure our failure is at most $\nf{\delta}{2}$ over all $3k^{1/3}$ rounds, we set $\delta' = \delta / (6 k^{1/3})$ and rearrange to obtain
	\[
	t \geq \frac{18 H_k^2}{\eps^2} \left(\frac{4}{3} \log k + \log 6 + \log \nf{m}{\delta} \right).
	\]
	Since the number of samples we choose on line 7 satisfies this bound, our failure probability over all pivot steps is at most $\nf{\delta}{2}$.

	This establishes that the failure probability of \Cref{alg:sample-ws-ls-pav} is at most $\delta$ and concludes the proof.
\end{proof}

%%%
\section{Supplementary Material and Proofs for \texorpdfstring{\Cref{sec:lowerbounds}}{Section 5}}
\label{app:lower-bounds}

\smallepslb*

\begin{proof}[Proof of \Cref{thm:small-eps-lower-bound}]
	We will use the indistinguishable family of instances in \Cref{fig:partition-all-large}.
	Consider that the instance $\cI_i$ is chosen from $\cI$ by sampling $i \sim [k+1]$ uniformly at random.
	We will abuse notation slightly and
	As we have just argued, for samples $\cA = (A^1, \ldots, A^t) \sim \prof_i^t$ a sample-based rule $f:  (2^{\cC})^t \rightarrow \comms$ succeeds at finding a \jre{} committee if and only if the corresponding classifier $\Psi:  (2^{\cC})^t \rightarrow [k+1]$ correctly identifies which $\cI_i$ the samples are drawn from.

	As in the second part of the proof of the lower bound in \Cref{lem:separating-means-from-samples-new}, we will consider only two sources and our proof will follow a standard approach.
	Concretely, consider the strictly easier problem for $\Psi$ of identifying the source $i \sim \{1,2\}$, and assume its output is in $ \{1,2\}$ accordingly.
	The first step is to lower bound the failure probability of any deterministic $\Psi$ in terms of $\dtv(\cP_1^t, \cP_2^t)$.
	Let $F_1 \defeq f^{-1}(1)$ and $F_2 \defeq f^{-1}(2)$; note that $F_1  \sqcup F_2$ partitions the space of observed samples $(2^{\cC})^t$.
	Then letting the failure probability be $\delta$, we have
	\begin{align}
		\delta  = \probover{i \sim \{1,2\}}{\Psi(\cA) \neq i}
		& = \frac{1}{2}\cdot \probover{\cA\sim \prof_1^t}{\Psi(\cA) = 2} + \frac{1}{2} \cdot \probover{\cA\sim \prof_2^t}{\Psi(\cA) = 1} \notag \\
		& = \frac{1}{2} - \frac{1}{2} \left(\prof_1^t(F_1)- \prof_2^t(F_1)\right) \notag \\
		& \geq \frac{1}{2} -  \frac{1}{2}\cdot \dtv\left(\cP_1^t, \cP_2^t \right), \label{eq:tv-bounds-error-LB}
	\end{align}
	where the last step follows from the definition of $\dkl$.
	From here, we will bound $\dtv$ in terms of $\dkl$ in order to relate the statistical distance for general $t$ to the distance for a single sample.
	By Pinsker's inequality,
	\footnote{If we wished to establish a dependence on $\log \nf{1}{\delta}$, we would use the Bretagnolle–Huber bound or Tsybakov's bound, as in the proof of \Cref{lem:separating-means-from-samples-new}. But for our purposes Pinsker's inequality is enough.}
	\begin{align}
		\dtv\left(\cP_1^t, \cP_2^t \right)
		&\leq \sqrt{\frac{1}{2} \dkl(\cP_1^t \| \cP_2^t)}   \leq \sqrt{\frac{1}{2} t\cdot \dkl(\cP_1 \| \cP_2)},  \label{eq:pinsker-for-product-dist-LB}
	\end{align}
	where the second step uses the helpful fact that $\dtv$ \emph{tensorizes}, meaning that for product distributions $\dkl(\cP^t \| \cQ^t) \leq t \cdot \dkl(\cP \| \cP)$.
	Finally, what is the KL divergence between $\prof_1$ and $\prof_2$?
	Letting $r \defeq \frac{1}{k+1} + \eps$ and $\eps' = (k+1) \eps$, we have
	\begin{align}
		\dkl(\prof_1 \| \prof_2) &= \sum_{i \in [k+1]} \prof_1({i}) \cdot \log \left( \frac{\prof_1({i})}{\prof_2({i})} \right) \notag \\
		&= r \cdot \log \left(\frac{r}{r - \eps'}\right) + (r-\eps')  \cdot \log \left(\frac{r - \eps'}{r} \right)  \notag \\
		&= \eps' \cdot \log \left(\frac{1}{1 - \nf{\eps'}{r}}\right)  \notag \\
		&\leq \eps' \cdot 2\frac{\eps'}{r} ,  \label{eq:kl-upper-bound-for-LB}
	\end{align}
	provided that $\nf{\eps'}{r} \leq \nf{1}{2}$, which holds so long as $\eps \leq (k+1)^{-2}/2$.

	Finally, combining \eqref{eq:tv-bounds-error-LB} and \eqref{eq:pinsker-for-product-dist-LB} with this bound \eqref{eq:kl-upper-bound-for-LB} and simplifying slightly yields
	\[
	\delta \geq \frac{1}{2} - \frac{1}{2} \left(\frac{t}{2}  \eps^2 (k+1)^3 \right)^{1/2}.
	\]
	For constant $\delta < \nf{1}{2}$ this implies $t \geq 2(1 - 2\delta)^2 \eps^{-2} (k+1)^{-3}$, and so $t = \Omega(\eps^{-2} k^{-3})$ as claimed.
\end{proof}

\optimalrulestructure*

\begin{proof}[Proof of \Cref{lem:optimal-part-rule-structure}]
	Given $t$ samples $\cA = (A^1, \ldots, A^t) \sim \cJ_i$ for unknown $i$, let the empirical support of each candidate $j \in [k+1]$ be denoted by $\bar\mu_j \defeq \frac{1}{t} \sum_{\ell \in [t]} \1\{j \in A^\ell\}$, and observe $ \expect{\bar\mu_j} = \frac{1}{k+1} + \eps$ for $j = i$ and $ \expect{\bar\mu_j} = \frac{1}{k+1} - \frac{\eps}{k}$ for $j \neq i$.
	Next observe that in order for $f^*$ to maximize its probability of choosing a \jre{} committee, it must once again maximize its probability of correctly identifying the unknown index $i$.
	This is because, for chosen output committee $f^*(\cA)$, if $i \in f^*(\cA)$ then $f^*$ satisfies \jre{}, and if $i \not \in f^*(\cA)$ then it does not.

	We now constrain the structure of $f^*$.
	First, since all permutations of any given $\cA = (A^1, \ldots, A^t)$ are equally likely, it is without loss of generality to assume that the optimal rule $f^*$ is a function only of $(\bar\mu_1, \ldots, \bar\mu_{k+1})$.
	Similarly, since all $\cJ_i \in \cJ$ are equally likely to be chosen, all indexing permutations of $\bar \mu = (\bar\mu_1, \ldots, \bar\mu_{k+1})$ are equally likely.
	Therefore it is without loss of generality to assume that $f^*$ is a function of the order of the values in $\bar\mu$, with ties broken lexicographically.

	Let the order statistic $\bar\mu'_{(1)}$ be the empirical minimum support over all candidates/blocs.
	Also let $\bar n_j \defeq t \cdot \bar\mu'_j$ for all $j \in [k+1]$ be the corresponding histogram of support arising from the samples $\cA$, and $\bar n_{(1)}, \ldots \bar n_{(k+1)}$ its corresponding order statistics.
	For a given sequence of samples and corresponding $\bar \mu$, let $\sigma: [k+1] \rightarrow [k+1]$ map the order statistics to the indices that realize them; that is, $\sigma(j)$ is the coordinate for which $\bar\mu_{\sigma (j) } = \bar\mu_{(j)}$.
	Also, let $g^*$ denote the index that is \emph{unchosen} by $f^*$, so that $f^*(\cA) = [k+1] \setminus \{g^*(\cA)\}$.
	We will show that $g^*(\cA) = \sigma(1)$ without loss of generality.

	To do this, assume for the sake of contradiction that $f^*$ does something else on a profile which is not the same up to lexicographic tiebreaking.
	In other words, there is a support histogram $\bar \mu^\bot$ such that $g^*(\bar \mu^\bot) = \sigma^{-1}(\ell)$, but $\bar\mu^\bot_{(1)} < \bar\mu^\bot_{(\ell)}$, meaning that the empirical support of candidate $g^*(\bar \mu^\bot)$ on $\bar \mu^\bot$ is strictly non-minimal.
	Consider the rule $f'$ and corresponding $g'$ that agrees with $f^*$ except on $\cA$ for which $\bar\mu(\cA) = \bar\mu^\bot$.
	We will show that by instead choosing $g'(\bar\mu^\bot) = \sigma^{-1}(1)$ and outputting $f'(\bar \mu^\bot) = [k+1] \setminus \sigma^{-1}(1)$, $f'$ strictly improves upon $f^*$ in terms of error probability.

	Note that for any $\bar\mu =\bar\mu(\cA)$ derived from samples from unknown $\cJ_i$, both $f^*$ and $f'$ succeed so long as neither $g^*(\bar\mu) = i$ nor $g^*(\bar\mu) =i$, where $i$ is the distinguished index for $\cJ_i$.
	Therefore the difference in errors $\err(f^*) - \err(f')$ is given by
	\begin{align}
		&\phantom{{}={}} \probover{i \sim [k+1], \cA \sim \prof_i^t}{ g^*(\bar\mu^\bot) = i \:\land\: \bar\mu(\cA) = \bar\mu^\bot}
		- \probover{i \sim [k+1], \cA \sim \prof_i^t}{ g'(\bar\mu^\bot) = i \:\land\: \bar\mu(\cA) = \bar\mu^\bot}   	 \notag \\
		&=\frac{1}{k+1} \sum_i \left(
		\probover{\cA \sim \prof_i^t}{ \sigma(\ell) = i \:\vert\: \bar\mu(\cA) = \bar\mu^\bot}
		- \probover{\cA \sim \prof_i^t}{ \sigma(1) = i \:\vert\: \bar\mu(\cA) = \bar\mu^\bot}
		\right) \probover{\cA\sim\prof_i^t}{\bar\mu(\cA) = \bar\mu^\bot}	 \notag \\
		&=\frac{1}{k+1} \left(
		\probover{\cA\sim\prof_{\sigma(\ell)}^t}{\bar n(\cA) = \bar n^\bot}
		- \probover{\cA\sim\prof_{\sigma(\ell)}^t}{\bar n(\cA) = \bar n^\bot} \right).	 \notag
	\end{align}
	Therefore it suffices to show this difference is nonnegative.
	We do this by observing that $\bar n = t\cdot \bar \mu$ follows the multinomial distribution $\Multin(p_{i,1}, \ldots, p_{i,k+1})$ for each $\cJ_i$, where $p_{i,j} = \frac{1}{k+1} + \eps$ for $j = i$ and $p_{i,j} = \frac{1}{k+1} - \nf{\eps}{k}$ for $j \neq i$.
	Therefore
	\begin{align}
		\frac{\probover{\cA\sim\prof_{\sigma(\ell)}^t}{\bar n(\cA) = \bar n^\bot}}
		{ \probover{\cA\sim\prof_{\sigma(1)}^t}{\bar n(\cA) = \bar n^\bot}}
		&= \frac{
			\frac{t!}{\prod_j \bar n_j !} \prod_j p_{\sigma(\ell),j}^{\bar n_j}
		}{
			\frac{t!}{\prod_j \bar n_j !} \prod_j p_{\sigma(1),j}^{\bar n_j}
		}
		= \frac{
			p_{\sigma(\ell),\sigma(1)}^{\bar n_{(1)}} p_{\sigma(\ell),\sigma(\ell)}^{\bar n_{(\ell)}}
		}{
			p_{\sigma(1),\sigma(1)}^{\bar n_{(1)}} p_{\sigma(1),\sigma(\ell)}^{\bar n_{(\ell)}}
		}
		= \left(\frac{\frac{1}{k+1} + \eps}{\frac{1}{k+1} - \nf{\eps}{k}}\right)^{\bar n_{(\ell)}- \bar n_{(1)}}
		> 1. \notag
	\end{align}
	Hence the rule $f'$ has strictly smaller error than $f^*$, a contradiction.
	We conclude that for $\cJ_i \sim \cJ$, the \jre{}-optimal rule always chooses all but the sample-support-minimizing candidate as its output.
\end{proof}

\complexitymdependence*

\begin{proof}[Proof of \Cref{thm:sample-complexity-depends-on-m}]
	Let $T = [k^2]$ be the voter types and let $S \subseteq T$ be the voter types observed within the sample $\cA$, and let $\bar S \defeq T \setminus S$.
	If $t \leq k^{d}/2$ then necessarily $\abs{S} \leq \abs{T}/2$.
	Each candidate $c\in\cands$ has an incidence $S_c = \{A \in T: c \in A\}$ on the sampled set of types and likewise on $\bar S$, with $\abs{S_c} + \abs{\bar S_c} = k^{d-1}$.

	Suppose that on these samples $f^*$ chooses a collection of winning candidates $W$ with observed incidences $S_W \defeq \{S_c\}_{c \in W}$ and unobserved incidences $\bar S_W$.
	Let $\cW\vert_{S_W}$ be the collection of all $W \in \cW$ that share these observed incidences.
	Since $f^*$ cannot distinguish between them from $\cA$ alone, all $W \in \cW\vert_{S_W}$ are equally likely to be its output.
	But since every $k$-subset of voter types corresponds to a candidate, the if $W \sim \cW\vert_{S_W}$ on $\bar S$ then the incidences of each of the $c \in W$ on $\bar S$ are independent random variables.
	If $r_c \defeq k - \abs{S_c}$ for each $c \in \cands$, then the expected number of voters in $\bar S$ approving of at least one candidate in $W$ is given by
	\[
	\abs{\{A \in T: A \cap W \neq \emptyset\}} = \sum_{A \in T} \1\{ A \cap W \neq \emptyset \} = \sum_{A \in T} R_A,
	\]
	Where $R_A \defeq \lor_{c \in W} R_{A,c}$ is the event that $A\in T$ is in the incidence of $c$.
	For each $A$, the $R_{A,c}$ over all $c$ are independent events, with $\probover{W}{R_{A,c}} = \frac{r_c}{\abs{\bar S}} \leq r_c \frac{2}{k^d} $.
	Since they are independent,
	\[
	\probover{W}{R_A} = 1 - \prob{\land_c \neg R_{A,c}} = 1-\prod_c \left(1-\prob{R_{A,c}}\right) \leq 1 - \exp\left( -2 \sum_c r_c \frac{2}{k^d}\right) \leq 1-e^{-4},
	\]
	provided $4r_c \leq k^d$ (e.g. if $k \geq 4$).
	This then holds since $1-x \geq e^{-2x}$ for $x \in [0,1/2]$ and since $\sum_{c \in W} r_c \leq k^d$.
	Therefore there is at least a constant chance of each voter type $A \in \bar S$ remaining uncovered by the choice of $W$.
	Since the events $\{R_A\}_{A \in \bar S}$ are negatively associated, the number of uncovered $A$ will exhibit strong concentration around its mean $\abs{\bar S} \probover{W}{R_A}$.
	In particular, let $U_W$ denote the number of voter types in $\bar S$ not incident to any $c \in W$. Then by Hoeffding,
	\[
	\probover{W}{ U_W \geq \left(1 - \nf{1}{2} \cdot e^{-4}\right) \cdot \abs{\bar S}} \leq \exp\left(- \Omega\left(k^d\right)\right).
	\]
	For $k$ sufficiently large that $\abs{S} \cdot \frac{1}{2 e^4} \geq \frac{k^d}{4 e^4} \geq k$, this means even  the $W$ chosen by $f^*$ will fail \jr{} with high probability.
	This concludes the proof.
\end{proof}

%%%
\section{Proofs for \texorpdfstring{\Cref{sec:vcdim}}{Section 6}}

\bwclargecandub*
\begin{proof}[Proof of \Cref{thm:large-candidate-ub}]
	We will reduce to \Cref{thm:cs-pav-correctness} in the form of \Cref{cor:sample-complexity-ejrp-relaxed-ub}.

	The key insight is that if we assume that we take the large candidate $c$, then we may satisfy \jre{} for any $\eps >0$ on the profile overall by (a) treating the remainder of the profile that does not support $c$ as its own standalone profile, (b) finding a committee $W'$ with $k' = k-1$ and some $\eps' > \eps$ on this smaller profile, and then (c) choosing $W = W' \cup \{c\}$ as our final committee for the original profile.

	In particular, $W$ satisfies \jre{} if there are no unchosen candidates with support at least $\frac{1}{k+1} + \eps$.
	If $W'$ has no unchosen candidate with this much support, then $W$ satisfies \jre{}.
	But for $W'$ \emph{with respect to its subinstance}, this prohibitively large candidate support size is
	\[
	\frac{1}{1-\beta}\left(\frac{1}{k'+1} + \eps'\right) = \frac{1}{k+1} + \eps.
	\]
	Observing that $k' = k-1$ and solving for $\eps'$, we get
	\[
	\eps' =\eps + \frac{1}{k+1}\left(\frac{\beta}{1-\beta}\right) - \frac{1}{k^2}.
	\]
	For any $\beta >0$ there is some constant $c_1$ such that $\eps' \geq \eps + c_1 \cdot \frac{1}{k}$.
	Plugging this $\eps'$ into \Cref{thm:cs-pav-correctness} gives the claimed bound (and also gives guarantees for $\beta = o_k(1)$ if so desired).
\end{proof}

\bwclargecovgub*

To show this, we will use the following:
\begin{lemma}[Bernstein's inequality]
	\label{lem:bernstein}
	If $X_i$ for $i\in [n]$ are independent with $\abs{X_i} \leq 1$ and $\expect{X_i} = 0$, then for any $t > 0$
	\[
	\prob{ X > t } \leq \exp\left( - \frac{t^2 }{2 \sigma^2 + 2 t/3} \right),
	\]
	where $X \defeq \sum_i X_i$ and $\sigma^2 \defeq \expect{X^2} = \sum_i \expect{X_i^2}$.
\end{lemma}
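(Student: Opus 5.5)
This is Bernstein's inequality, a standard concentration bound; I would prove it by the Chernoff/Cramér method. For any $\lambda > 0$, Markov's inequality applied to $e^{\lambda X}$ gives
\[
\prob{X > t} \leq e^{-\lambda t}\, \expect{e^{\lambda X}} = e^{-\lambda t} \prod_i \expect{e^{\lambda X_i}},
\]
where the factorization uses independence. The work then splits into bounding each moment generating function and optimizing over $\lambda$.

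\textbf{Bounding each moment generating function.} Expand
\[
e^{\lambda X_i} = 1 + \lambda X_i + \sum_{r\ge 2} \frac{\lambda^r X_i^r}{r!}.
\]
Since $\abs{X_i} \leq 1$, we have $\abs{X_i^r} \leq X_i^2$ for $r \geq 2$. Taking expectations and using $\expect{X_i} = 0$ gives
\[
\expect{e^{\lambda X_i}} \leq 1 + \sigma_i^2 \sum_{r\geq 2} \frac{\lambda^r}{r!}, \qquad \sigma_i^2 \defeq \expect{X_i^2}.
\]
Next bound the factorials by $r! \geq 2\cdot 3^{r-2}$. For $0 < \lambda < 3$ this turns the series into a geometric one:
\[
\sum_{r\ge2} \frac{\lambda^r}{r!} \leq \frac{\lambda^2}{2(1-\lambda/3)}.
\]
Applying $1+x \leq e^x$ and multiplying over $i$ yields
\[
\expect{e^{\lambda X}} \leq \exp\left(\frac{\lambda^2 \sigma^2}{2(1-\lambda/3)}\right),
\]
with $\sigma^2 = \sum_i \sigma_i^2$. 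This also equals $\expect{X^2}$, because the $X_i$ are independent and mean zero, so the cross terms vanish.

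\textbf{Choosing $\lambda$.} Combining the two steps,
\[
\prob{X>t} \leq \exp\left(-\lambda t + \frac{\lambda^2\sigma^2}{2(1-\lambda/3)}\right).
\]
I would not optimize exactly. Instead, plug in
\[
\lambda = \frac{t}{\sigma^2 + t/3}.
\]
This choice satisfies $\lambda < 3$, and it gives $1-\lambda/3 = \sigma^2/(\sigma^2+t/3)$. Substituting, the exponent simplifies to
\[
-\frac{t^2}{\sigma^2+t/3} + \frac{t^2}{2(\sigma^2+t/3)} = -\frac{t^2}{2\sigma^2 + 2t/3},
\]
which is exactly the claimed bound.

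The only delicate step is the algebra with this choice of $\lambda$, specifically checking that the quadratic term collapses cleanly. The rest is routine.
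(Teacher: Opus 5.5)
The paper gives no proof of this lemma. It quotes Bernstein's inequality as a standard tool for the proof of \Cref{thm:large-covg-ub}, so there is no argument of the paper's to compare against. Your Cram\'er--Chernoff derivation is the textbook proof and is correct: the factorial bound $r!\ge 2\cdot 3^{r-2}$, the geometric-series estimate for $0<\lambda<3$, and the algebra with $\lambda = t/(\sigma^2+t/3)$ all check out.

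You should patch one degenerate case. When $\sigma^2=0$, your choice gives $\lambda=3$, not $\lambda<3$, and $1-\lambda/3=0$, so the substitution divides by zero. In that case each $X_i=0$ almost surely, so $\Pr(X>t)=0$ and the bound holds trivially.

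The paper also applies the lemma to a lower tail, in bad event (a) of that proof. That version follows by running your argument on $-X_i$, which satisfy the same hypotheses.
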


We now turn to the proof.

\begin{proof}[Proof of \Cref{thm:large-covg-ub}]
	Let $W^*$ be a voter-coverage-maximizing committee.
	Since $\gamma < 1$, any committee $W$ for which $\cov(W) \geq \cov(W^*) - \frac{1-\gamma}{k+1}$ will leave at most a $\nf{1}{k+1}$ proportion of voters uncovered and therefore satisfy \jre{} for any $\eps > 0$.
	Taking $\eps = \frac{1-\gamma}{k+1} = \Theta(k^{-1})$, we have by \Cref{lem:eps-mkc-sample-upper-bound} that such a $W$ can be found using $\tilde O (k^3)$ samples.
	To improve this to $\tilde O(k^2)$ samples, we will continue to use \samplemaxkcover (\Cref{alg:eps-mkc-brute-force}) and provide a tighter analysis via Bernstein's inequality (\Cref{lem:bernstein}).

	In particular, we will repeat the proof of \Cref{lem:eps-mkc-sample-upper-bound} using Bernstein instead of Hoeffding.
	Recall that by assumption we have $1 - \nf{1}{k+1} < 1 - \nf{\gamma}{k+1} < p_{W^*}$, where $p_{W} \defeq \probover{A \sim \prof}{A \cap W \neq \emptyset}$.
	We will define the threshold $q \defeq 1 - \frac{(\gamma + 1)/2}{k+1}$ and let the bad events be either (a) $W^*$ has sample coverage $\hat p_{W^*} \leq q$, or (b) $\hat p_{W} \geq q$ for any $W$ for which $p_W \leq 1-\nf{1}{k+1}$.

	For any $W$ and each approval in the samples $(A_1, \ldots, A_t) = \cA\sim \prof^t$, let $Z_A^W \defeq \1\{A \cap W \neq \emptyset \} -  p_W$.
	This is merely $\Bern(p_W)$ shifted to have mean $0$, and therefore its variance is $\sigma_i^2 = p_W(1-p_W) \leq (1-p_W)$.
	Let $Z^W \defeq \sum_{A\in \cA} Z_A^W$ and let $c \defeq (1-\gamma)/2$.
	Then by Bernstein, for bad event (a) we have
	\begin{align*}
		\probover{\cA\sim\prof^t}{\hat p_{W^*} \leq q} &= \prob*{Z^W \leq -t \cdot \frac{c}{k+1}} \leq  \exp\left(\frac{- t^2 \left( \frac{c}{k+1}\right)^2}{2 t p_{W^*}(1-p_{W^*}) + \frac{2}{3} t \frac{c}{k+1}} \right) \leq \exp\left( -c'\cdot \frac{t}{k}\right)
	\end{align*}
	for some constant $c'$, where we used that $1-p_{W^*} \leq \nf{1}{k+1}$.

	For a bad event of type (b), consider a $W$ for which $p_W \leq 1-\nf{1}{k+1}$.
	For convenience, let $r \defeq 1 - \nf{1}{k+1} - p_W$.
	Then again by Bernstein,
	\begin{align*}
		\probover{\cA\sim\prof^t}{\hat p_{W} \geq q}
		&= \prob*{Z^W \geq t \cdot \left(r + \frac{c}{k+1}\right)}
		\leq  \exp\left(\frac{- t^2 \left(r + \frac{c}{k+1}\right)^2}{2 t p_{W}(1-p_{W}) + \frac{2}{3} t \left(r + \frac{c}{k+1}\right)} \right) \\
		& \leq \exp\left( - c''\cdot t \left( r + \frac{c}{k+1} \right) \right) \leq \exp\left( -c'''\cdot \frac{t}{k}\right)
	\end{align*}
	for some constants $c''$ and $c'''$, where we used that $1-p_W = r + \frac{1}{k+1}$.

	The remainder of the proof proceeds along the same lines as that of \Cref{lem:eps-mkc-sample-upper-bound}, yielding a final bound of $O(k (k \log m  + \log \nf{1}{\delta}))$.
\end{proof}

\hardinstancestructure*

\begin{proof}[Proof of \Cref{claim:hard-instance-characterization}]
	We start with the first property.
	Given a budget of $\tilde O(k^3)$ samples, we can run \Cref{alg:sample-cs-pav} with a minimal tolerance of $\eps = k^{-3/2}$; let us call the resulting committee $W$, and suppose that none of the failure events bounded by $\delta$ occur over the course of the run of \Cref{alg:sample-cs-pav}.

	The remainder of this proof consists of analyzing the structure of $\cP$ under the conditions that $W$ is a successful output according to \Cref{alg:sample-cs-pav} but nevertheless fails to satisfy \jr{}.
	If $W$ fails \jr{}, then there is some $c' \not \in W$ with marginal coverage at least $\nf{1}{k}$.
	In particular, since coverage is submodular, for all $c \in W$ we have
	\[
	\nf{1}{k} \leq \cvg(W \cup \{c'\}) - \cvg(W) \leq \cvg(W \setminus \{c\} \cup \{c'\}) - \cvg(W \setminus \{c\}).
	\]
	On the other hand, if $W$ is a successful output of \Cref{alg:sample-cs-pav} then no pivots with improvement better than $k^{-3/2}$ remain: therefore for all $c \in W$,
	\[
	k^{-3/2} \geq \cvg(W \setminus \{c\} \cup \{c'\}) = \cvg(W \setminus \{c\} \cup \{c'\}) - \cvg(W \setminus \{c\}) - \left( \cvg(W) - \cvg(W \setminus \{c\})\right).
	\]
	Combining these inequalities yields
	\[
	\cvg(W) - \cvg(W \setminus \{c\}) \geq \frac{1}{k} - k^{-3/2}.
	\]
	This establishes the first property.

	Finally, the first property implies the second: if all candidates in $W$ have exclusive support at least $\frac{1}{k} - \frac{1}{k^{3/2}}$, then the probability $A \sim \prof$ is not an exclusive supporter is at most
	\[
	    1 - \sum_{c \in W} \probover{A\sim\prof}{A \cap W = \{c\}} \leq 1 - k\left(\frac{1}{k} - \frac{1}{k^{3/2}} \right) =  k^{-1/2}.
	\]
	This completes the proof.
\end{proof}

%%%
\section{Supplementary Material for \texorpdfstring{\Cref{sec:experiments}}{Section 8}}
\label{app:experiments}

%%%
\begin{figure}[h]
	\begin{subfigure}{.48\textwidth}
		\centering
		\includegraphics[width=\linewidth]{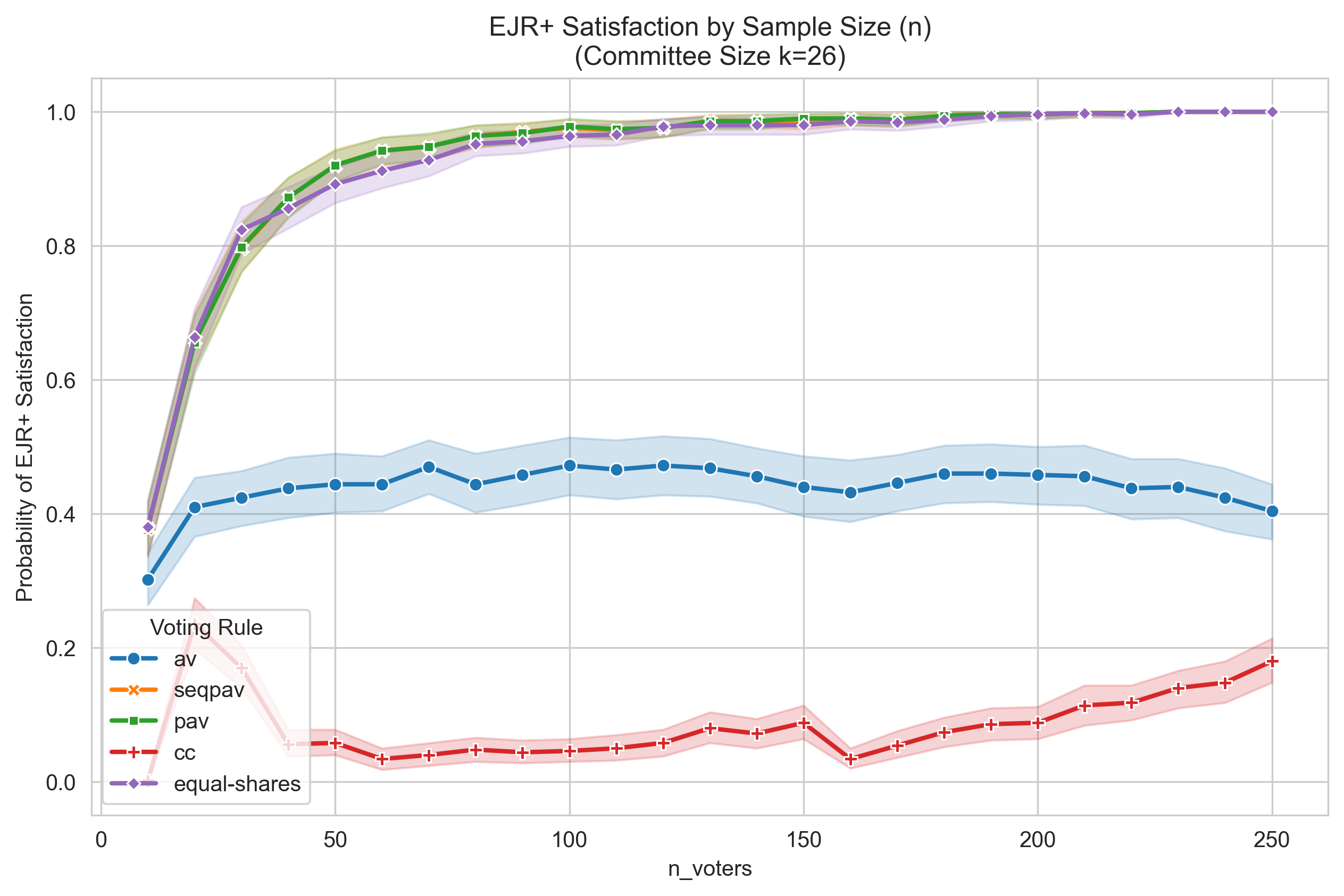}
		\caption{Probability of \ejrp{} for $k=26$}
		\label{fig:mokotow_k26_ejrp}
	\end{subfigure}%
	\hfill
	\begin{subfigure}{.48\textwidth}
		\includegraphics[width=\linewidth]{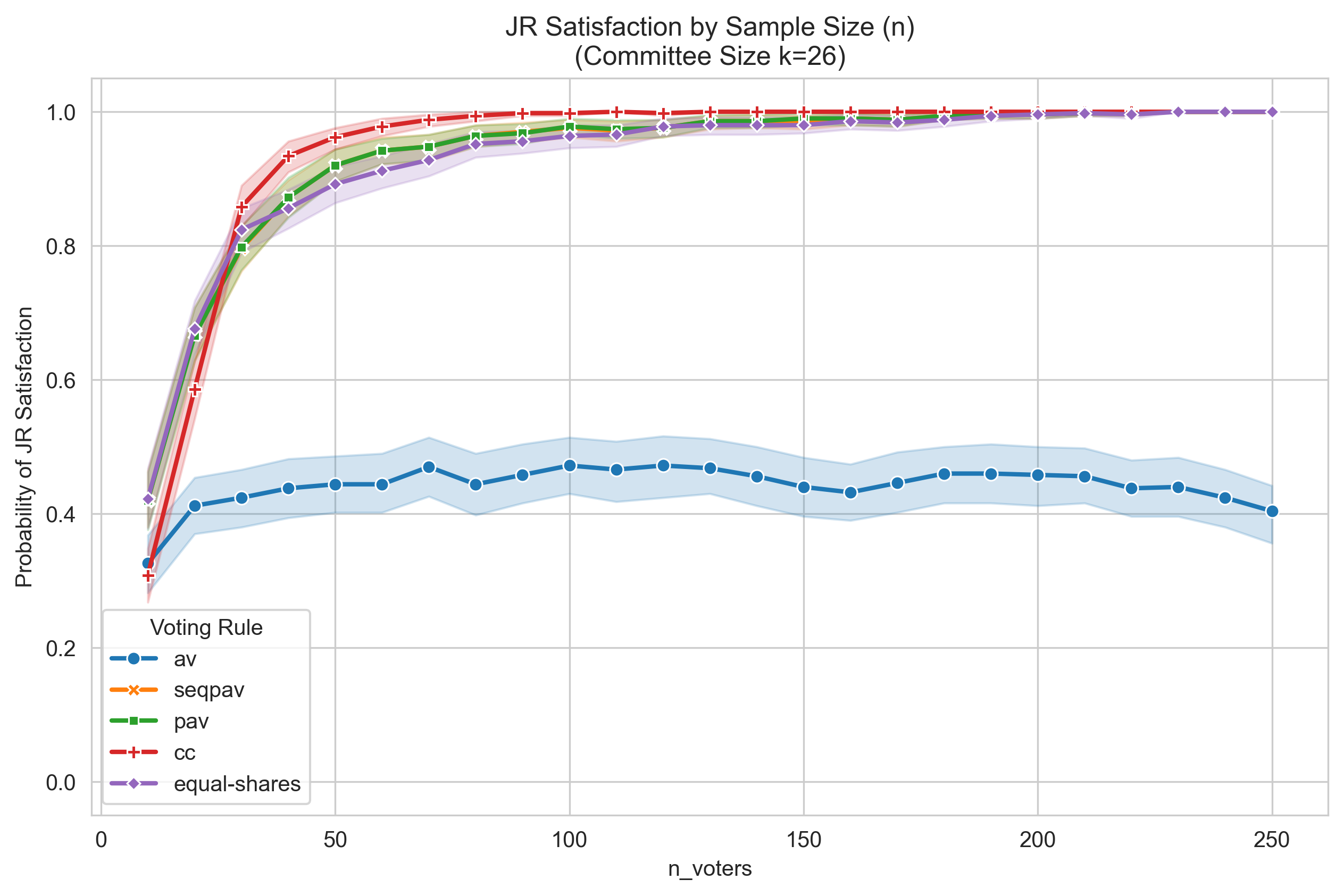}
		\caption{Probability of \jr{} for $k=26$}
		\label{fig:mokotow_k26_jr}
	\end{subfigure}
	\caption{Success probability of satisfying \jr{} and \ejrp{} for various rules on approvals from the ``Poland Warszawa 2022 Mokot\'ow'' participatory budgeting election, as the number of samples increases. Empirical probabilities are out of 500 trials for various rules selecting a committee of size $k=26$, out of 97 candidates overall.}
	\label{fig:mokotow_increasing_n_k26}
\end{figure}

%%%
\begin{figure}[h]
	\begin{subfigure}{.48\textwidth}
		\centering
		\includegraphics[width=\linewidth]{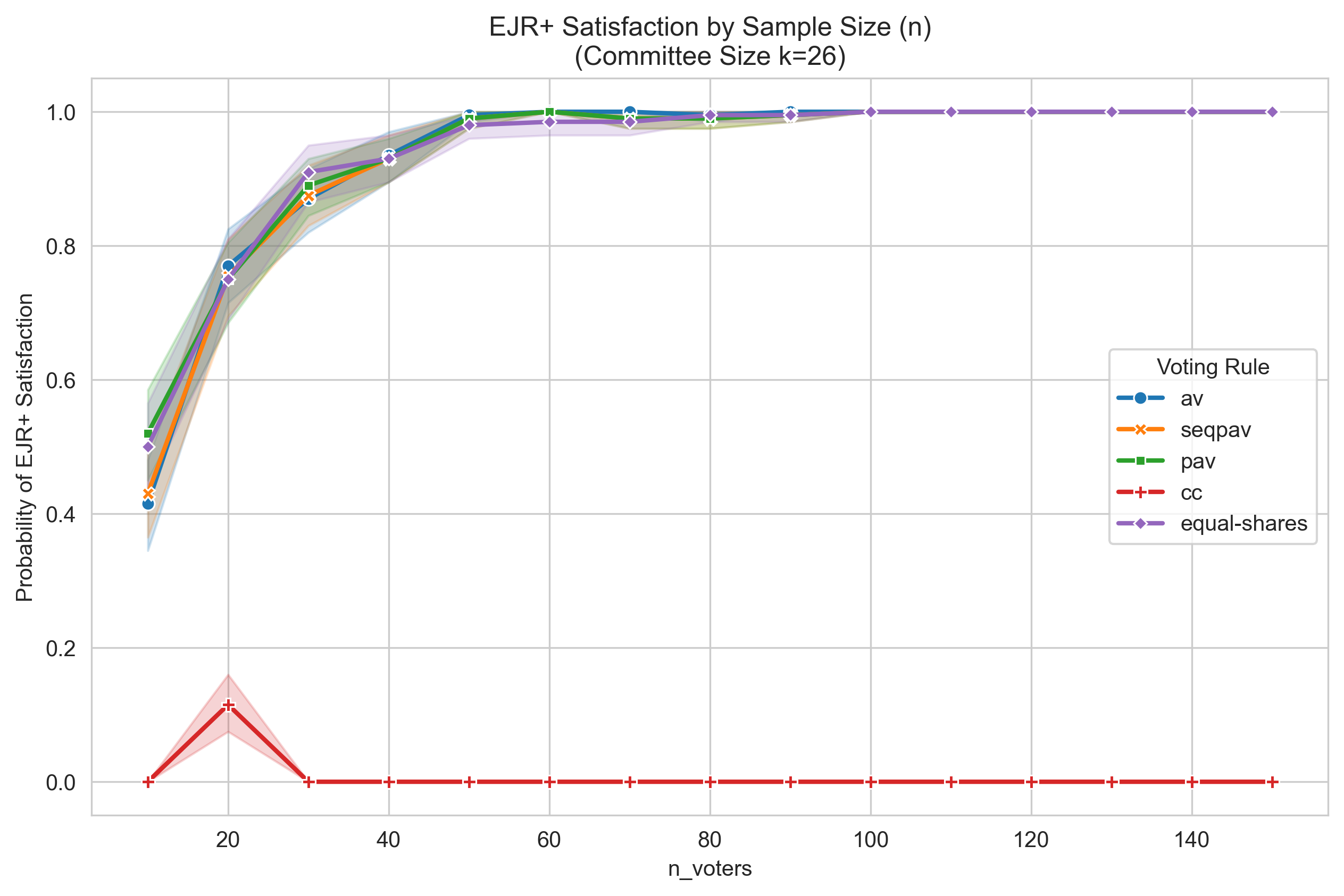}
		\caption{Probability of \ejrp{} for $k=26$}
		\label{fig:amsterdam_622_ejrp}
	\end{subfigure}%
	\hfill
	\begin{subfigure}{.48\textwidth}
		\includegraphics[width=\linewidth]{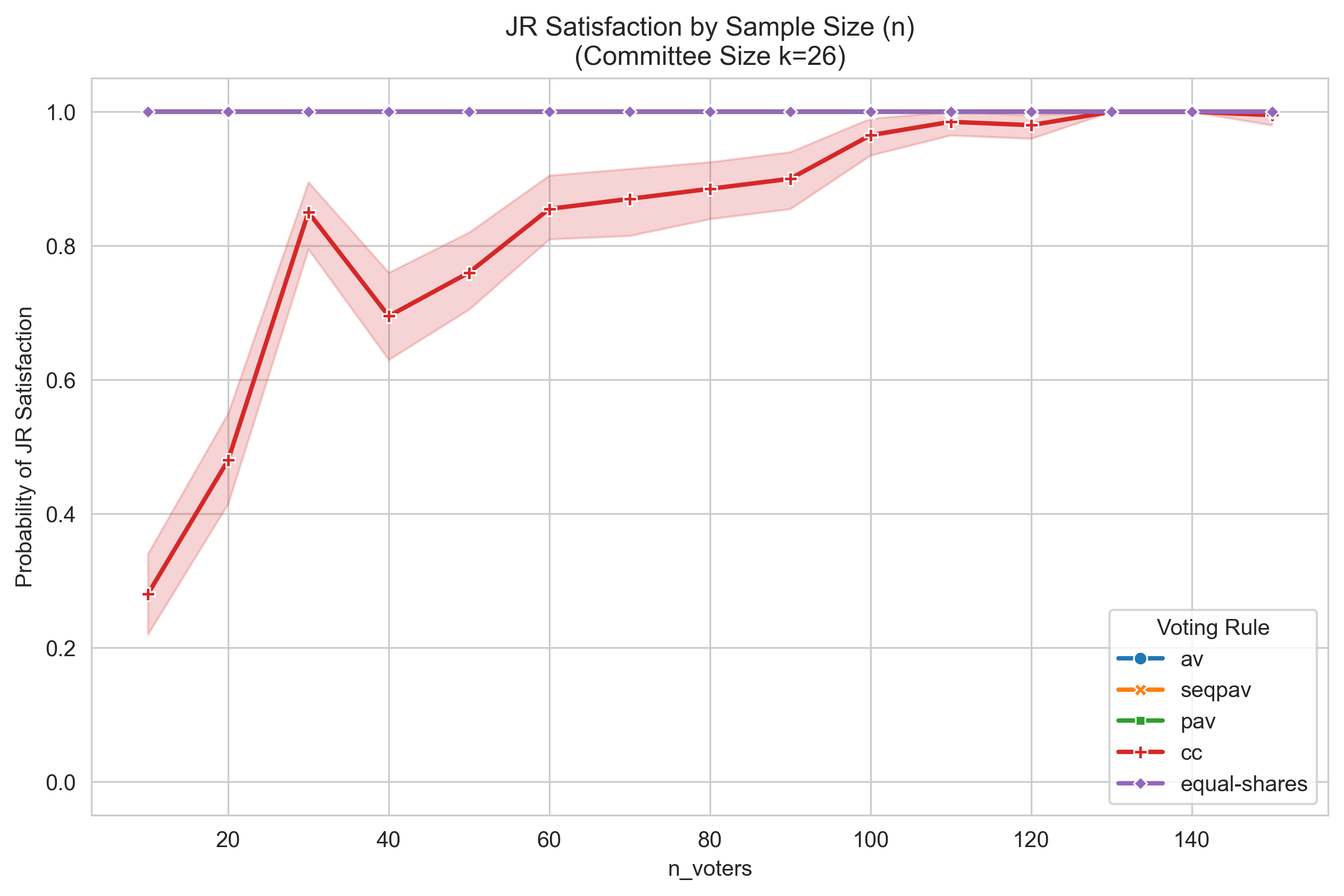}
		\caption{Probability of \jr{} for $k=26$}
		\label{fig:amsterdam_622_k26_jr}
	\end{subfigure}
	\caption{Success probability of satisfying \jr{} and \ejrp{} for various rules on approvals from the ``Netherlands Amsterdam 622'' participatory budgeting election, as the number of samples increases. Empirical probabilities are out of 500 trials for various rules selecting a committee of size $k=26$, out of 67 candidates overall.}
	\label{fig:amsterdam_622_increasing_n_k26}
\end{figure}

%%%
\begin{figure}[h]
	\begin{subfigure}{.48\textwidth}
		\centering
		\includegraphics[width=\linewidth]{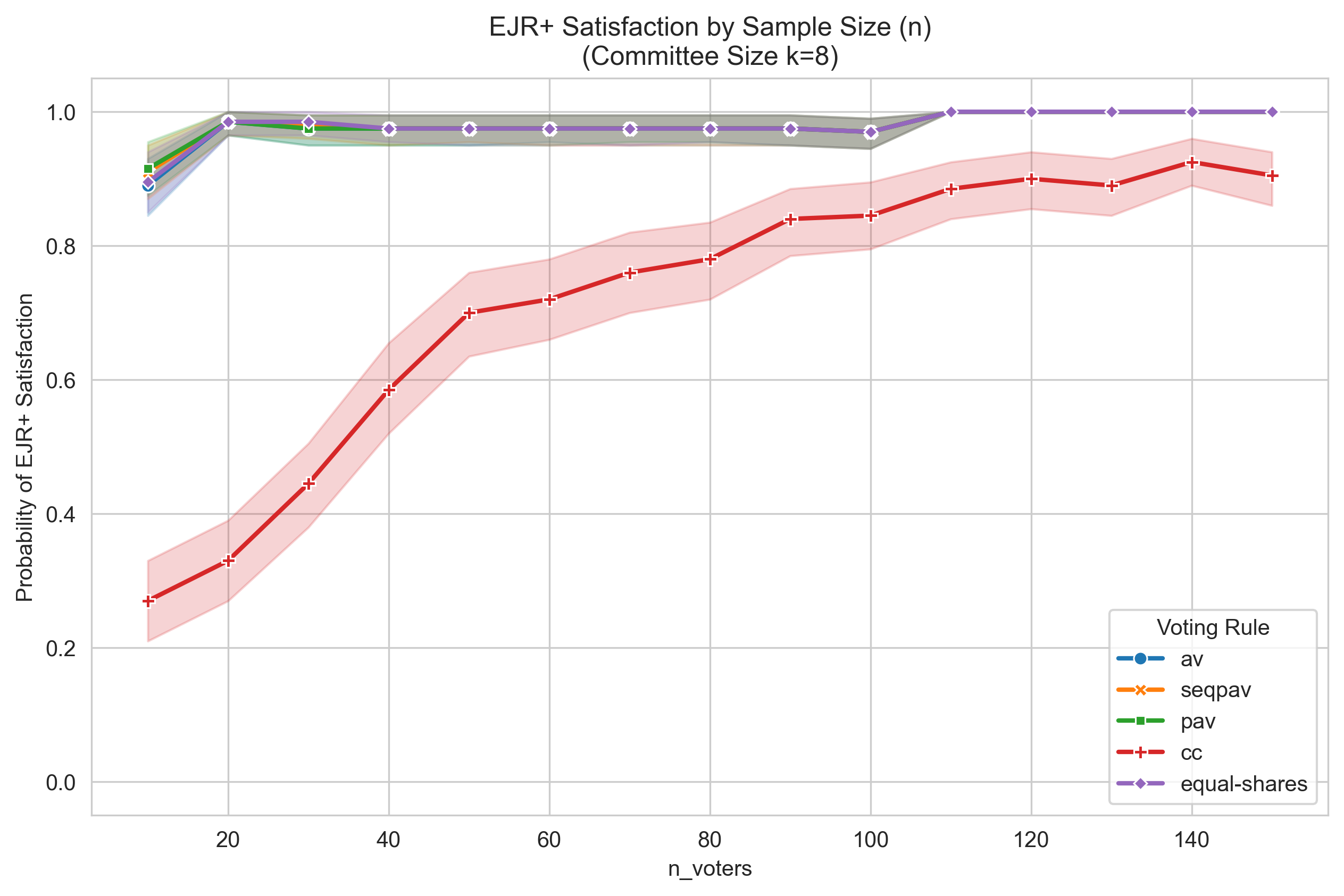}
		\caption{Probability of \ejrp{} for $k=8$}
		\label{fig:amsterdam_622_ejrp}
	\end{subfigure}%
	\hfill
	\begin{subfigure}{.48\textwidth}
		\includegraphics[width=\linewidth]{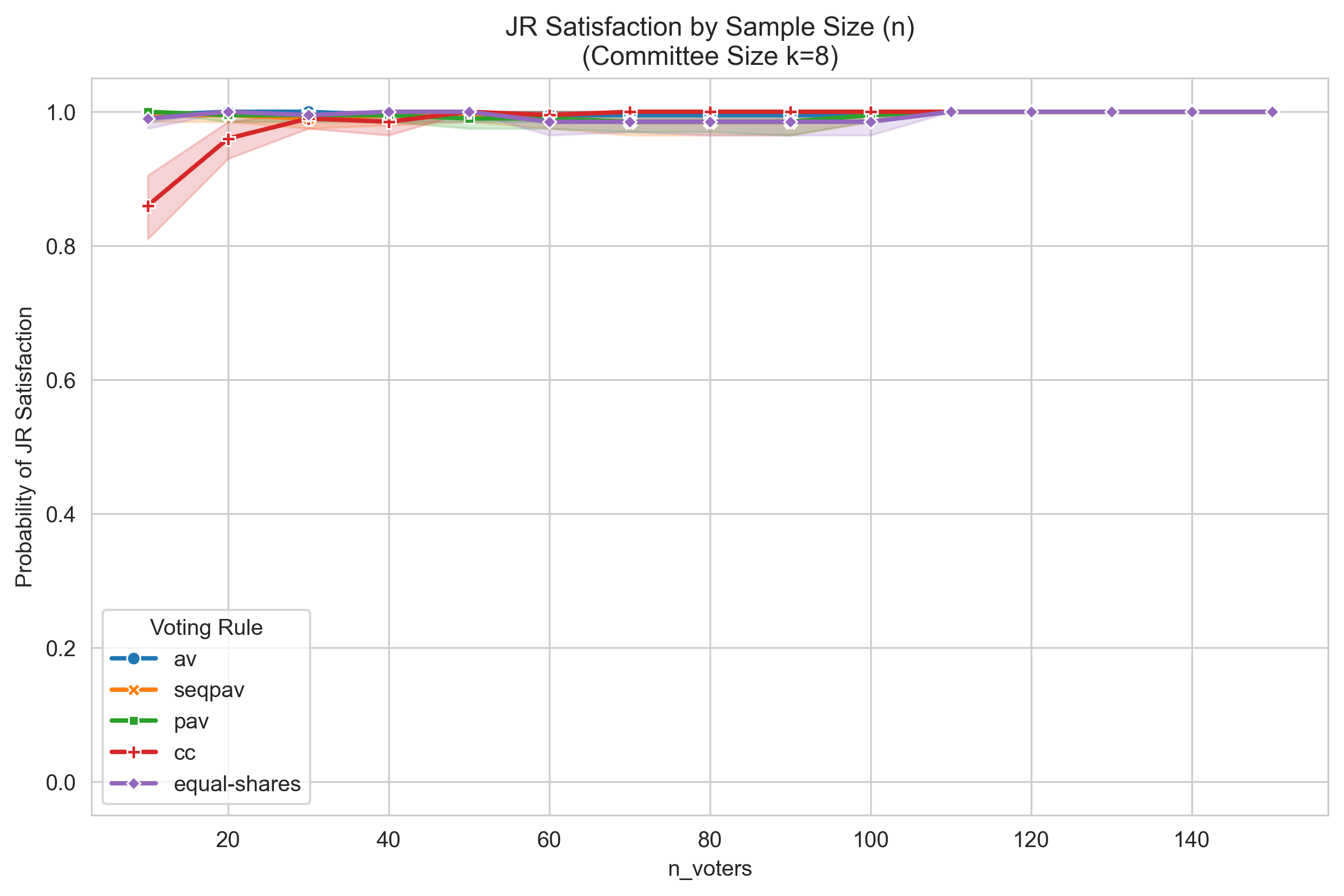}
		\caption{Probability of \jr{} for $k=8$}
		\label{fig:amsterdam_622_k26_jr}
	\end{subfigure}
	\caption{Success probability of satisfying \jr{} and \ejrp{} for various rules on approvals from the ``Netherlands Amsterdam 621'' participatory budgeting election, as the number of samples increases. Empirical probabilities are out of 500 trials for various rules selecting a committee of size $k=8$, out of 60 candidates overall.}
	\label{fig:amsterdam_621_increasing_n_k8}
\end{figure}

%%%
\begin{figure}[h]
	\begin{subfigure}{.48\textwidth}
		\centering
		\includegraphics[width=\linewidth]{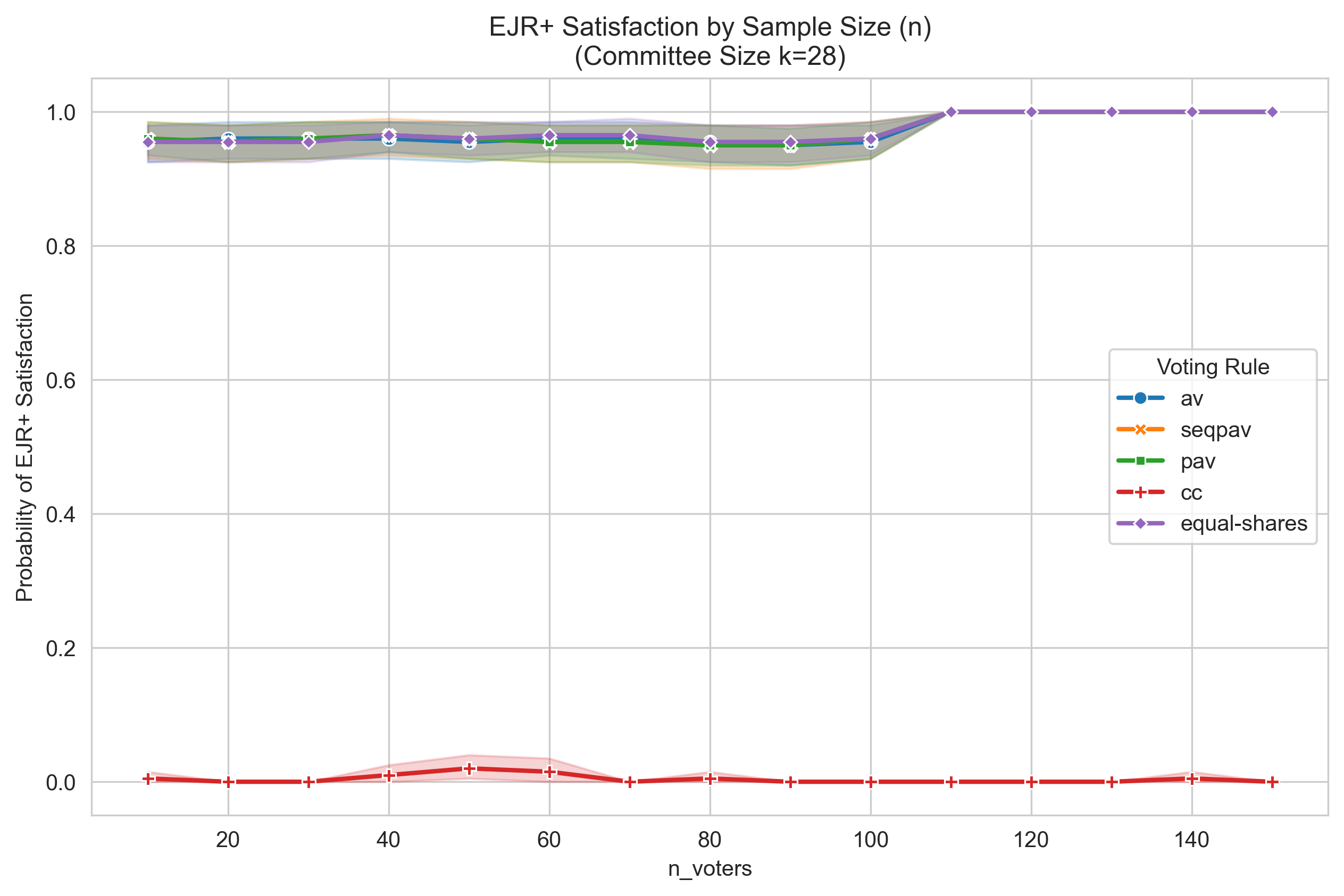}
		\caption{Probability of \ejrp{} for $k=28$}
		\label{fig:amsterdam_631_k28_ejrp}
	\end{subfigure}%
	\hfill
	\begin{subfigure}{.48\textwidth}
		\includegraphics[width=\linewidth]{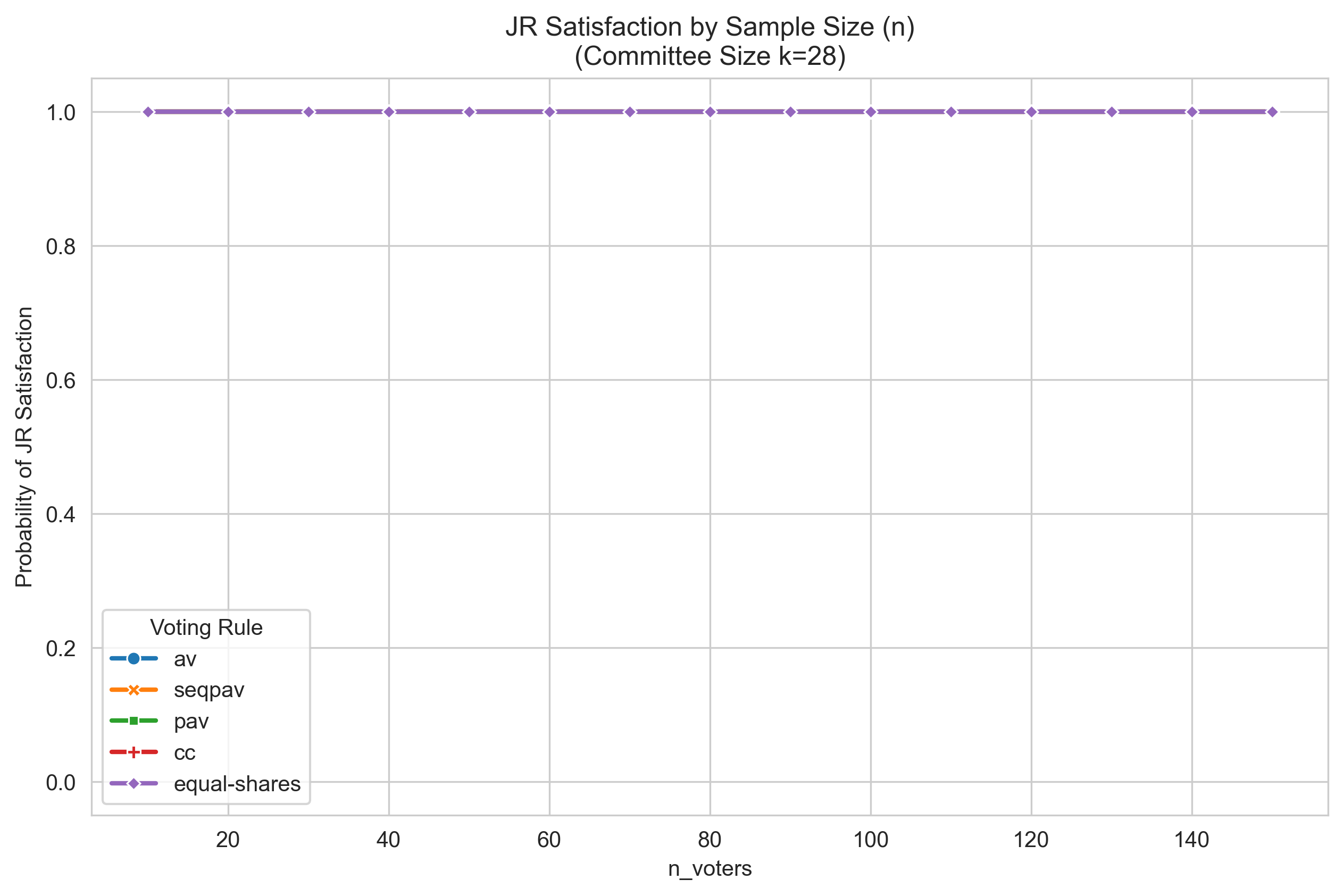}
		\caption{Probability of \jr{} for $k=28$}
		\label{fig:amsterdam_631_k28_jr}
	\end{subfigure}
	\caption{Success probability of satisfying \jr{} and \ejrp{} for various rules on approvals from the ``Netherlands Amsterdam 631'' participatory budgeting election, as the number of samples increases. Empirical probabilities are out of 500 trials for various rules selecting a committee of size $k=28$, out of 62 candidates overall.}
	\label{fig:amsterdam_631_increasing_n_k28}
\end{figure}

%%%
\begin{figure}
	\begin{subfigure}{.45\textwidth}
		\centering
		\includegraphics[width=\linewidth]{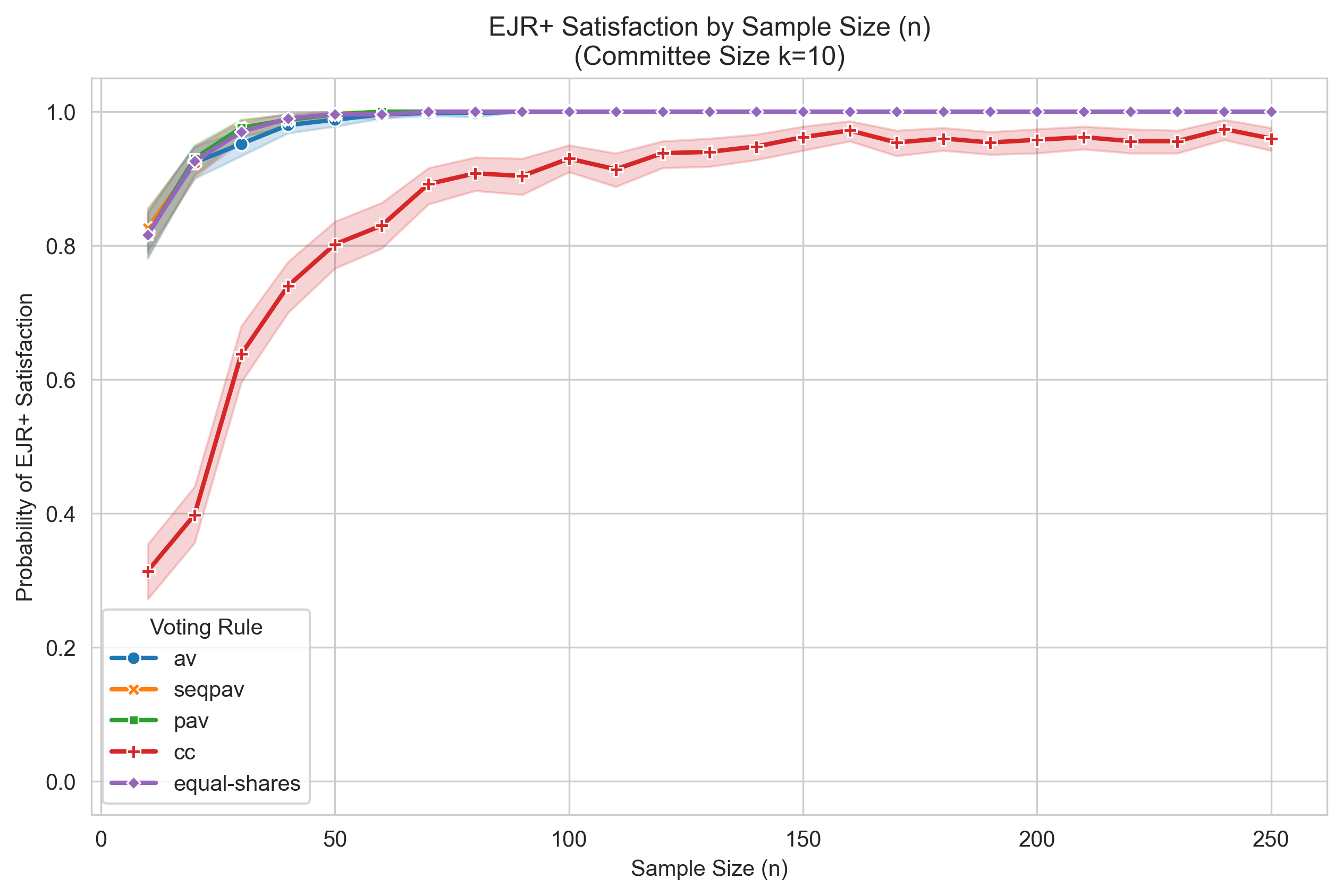}
		\caption{Probability of \ejrp{} for $k=10$}
		\label{fig:warsza_k10_ejrp}
	\end{subfigure}%
	\hfill
	\begin{subfigure}{.45\textwidth}
		\includegraphics[width=\linewidth]{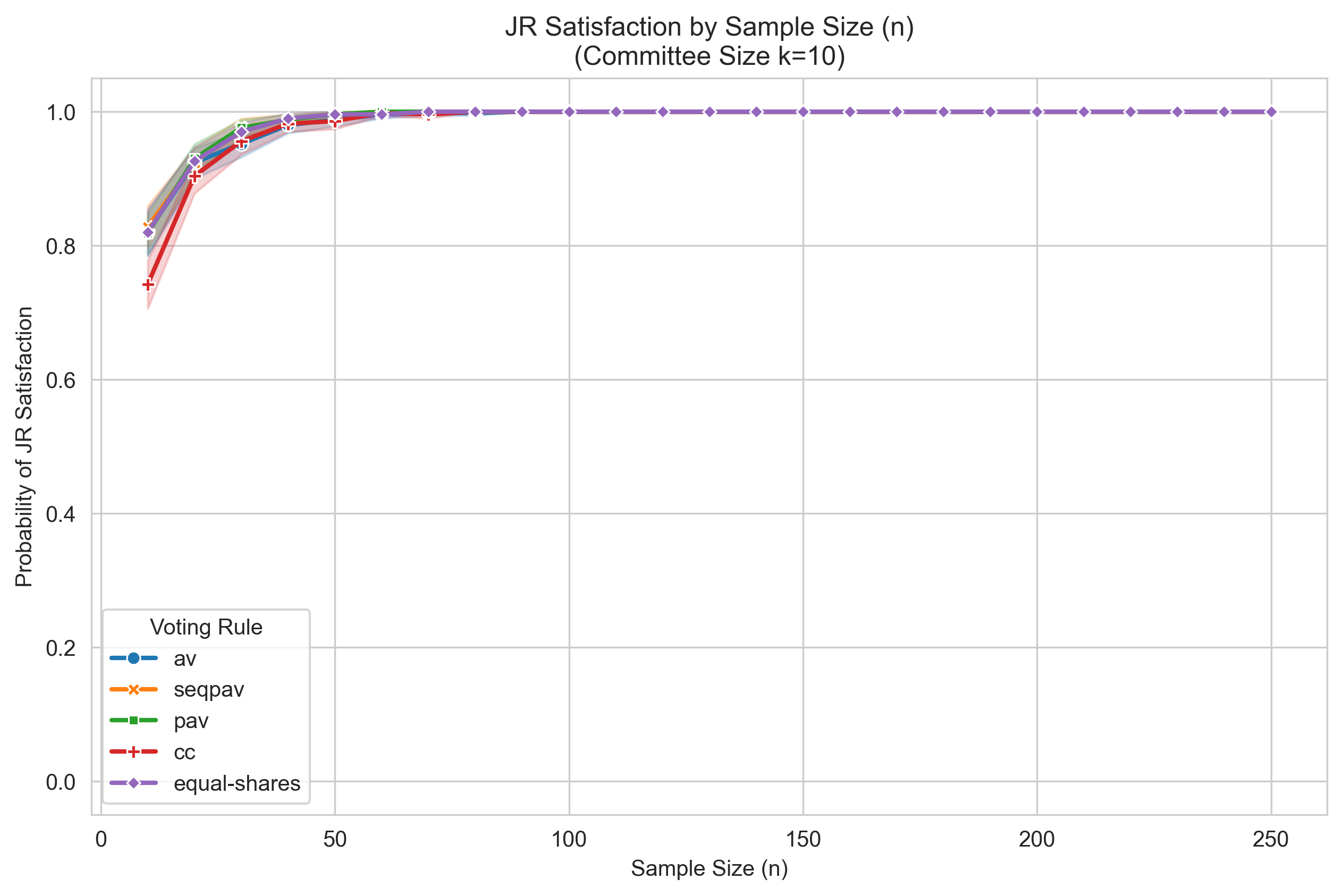}
		\caption{Probability of \jr{} for $k=10$}
		\label{fig:warsza_k10_jr}
	\end{subfigure}
	\\
	\begin{subfigure}{.45\textwidth}
		\centering
		\includegraphics[width=\linewidth]{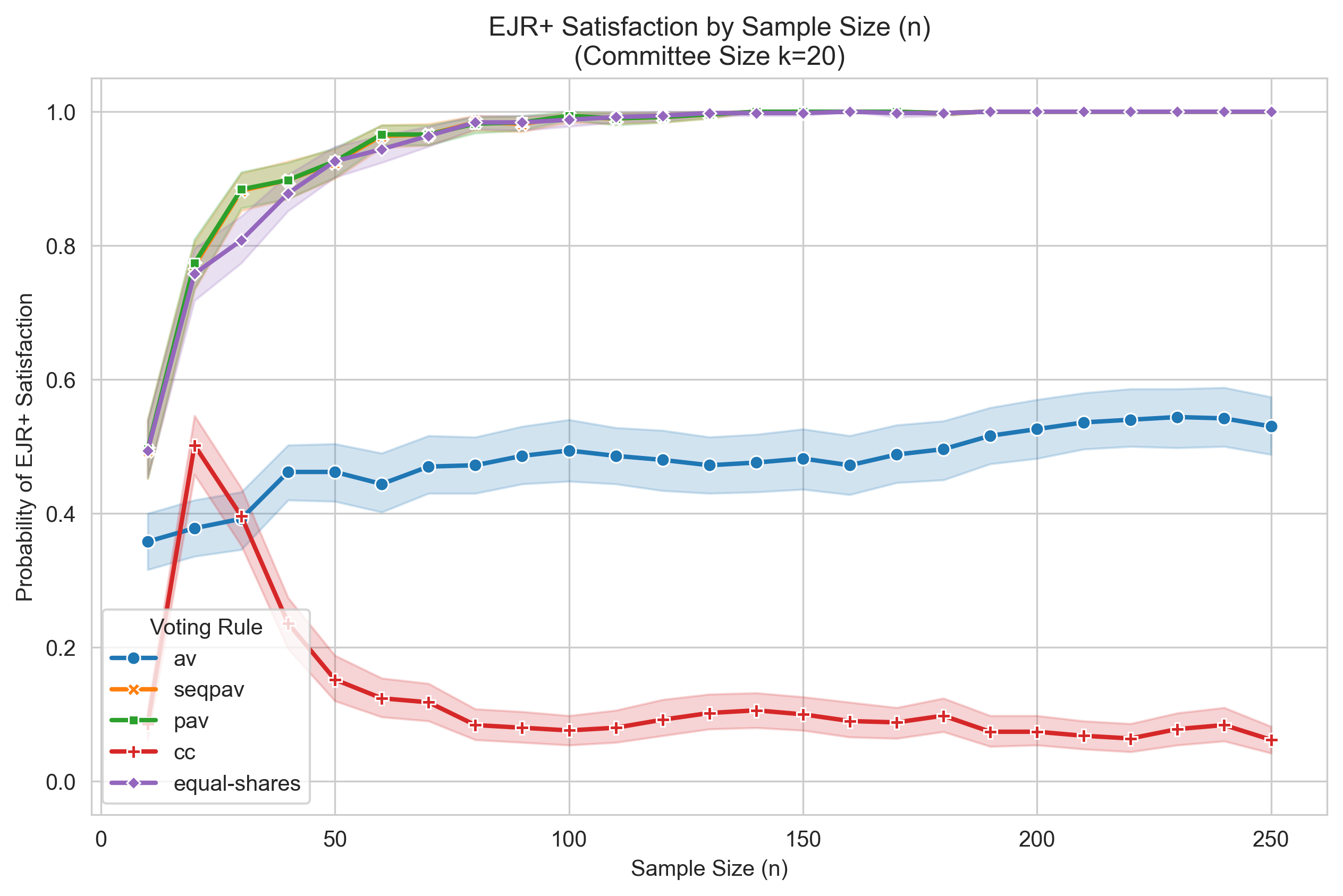}
		\caption{Probability of \ejrp{} for $k=20$}
		\label{fig:warsza_k10_ejrp}
	\end{subfigure}%
	\hfill
	\begin{subfigure}{.45\textwidth}
		\includegraphics[width=\linewidth]{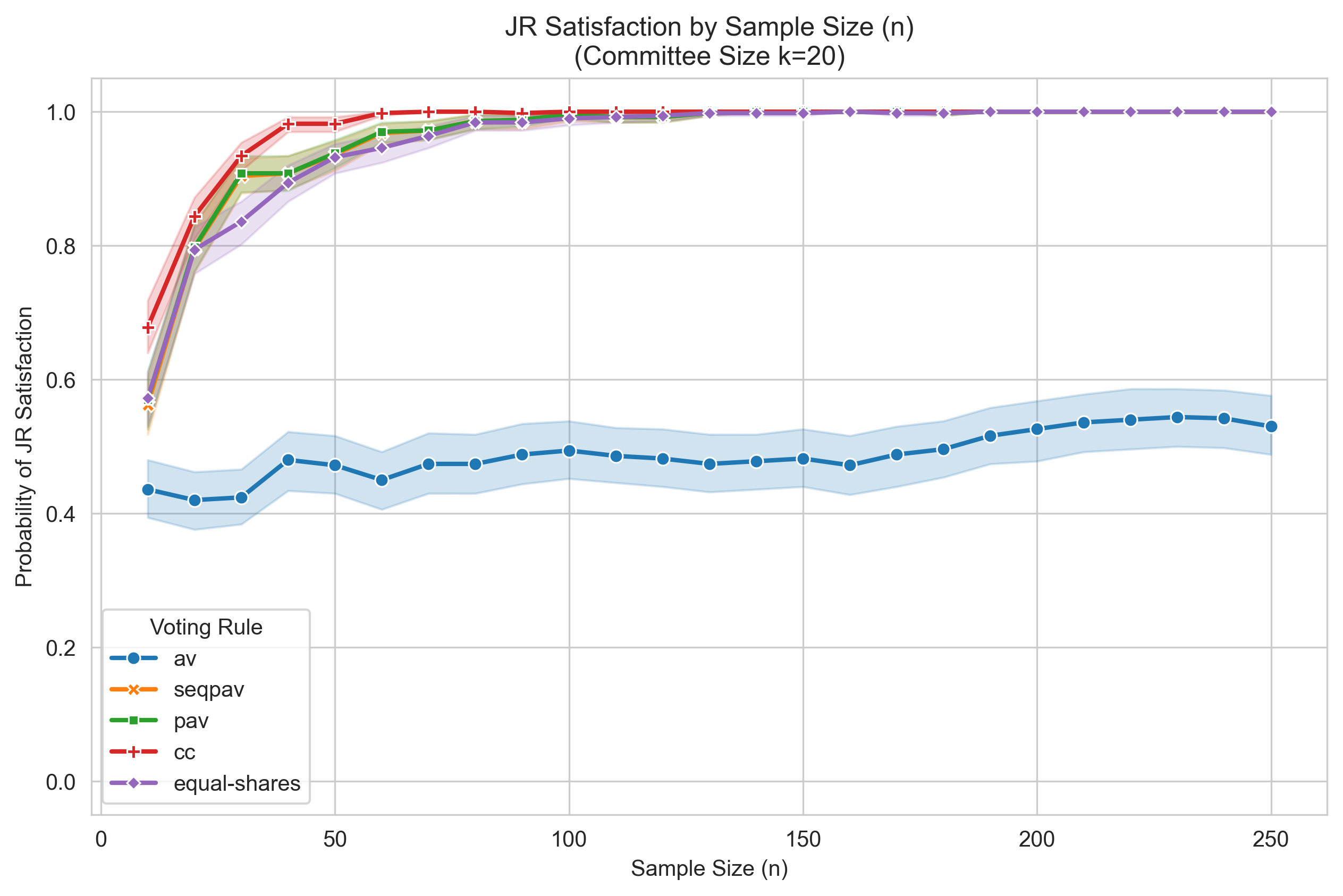}
		\caption{Probability of \jr{} for $k=20$}
		\label{fig:warsza_k10_jr}
	\end{subfigure}
	\\
		\begin{subfigure}{.45\textwidth}
		\centering
		\includegraphics[width=\linewidth]{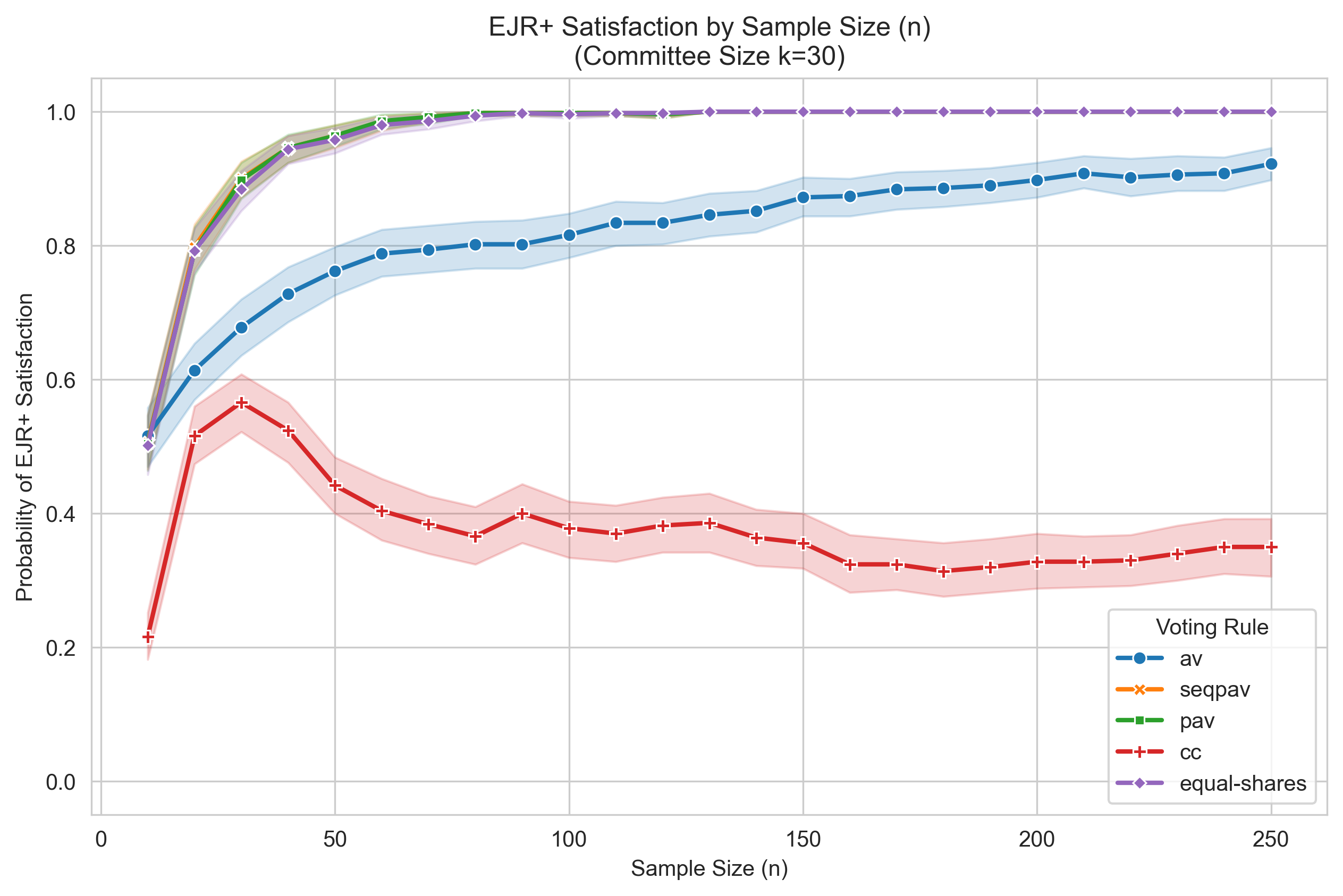}
		\caption{Probability of \ejrp{} for $k=30$}
		\label{fig:warsza_k10_ejrp}
	\end{subfigure}%
	\hfill
	\begin{subfigure}{.45\textwidth}
		\includegraphics[width=\linewidth]{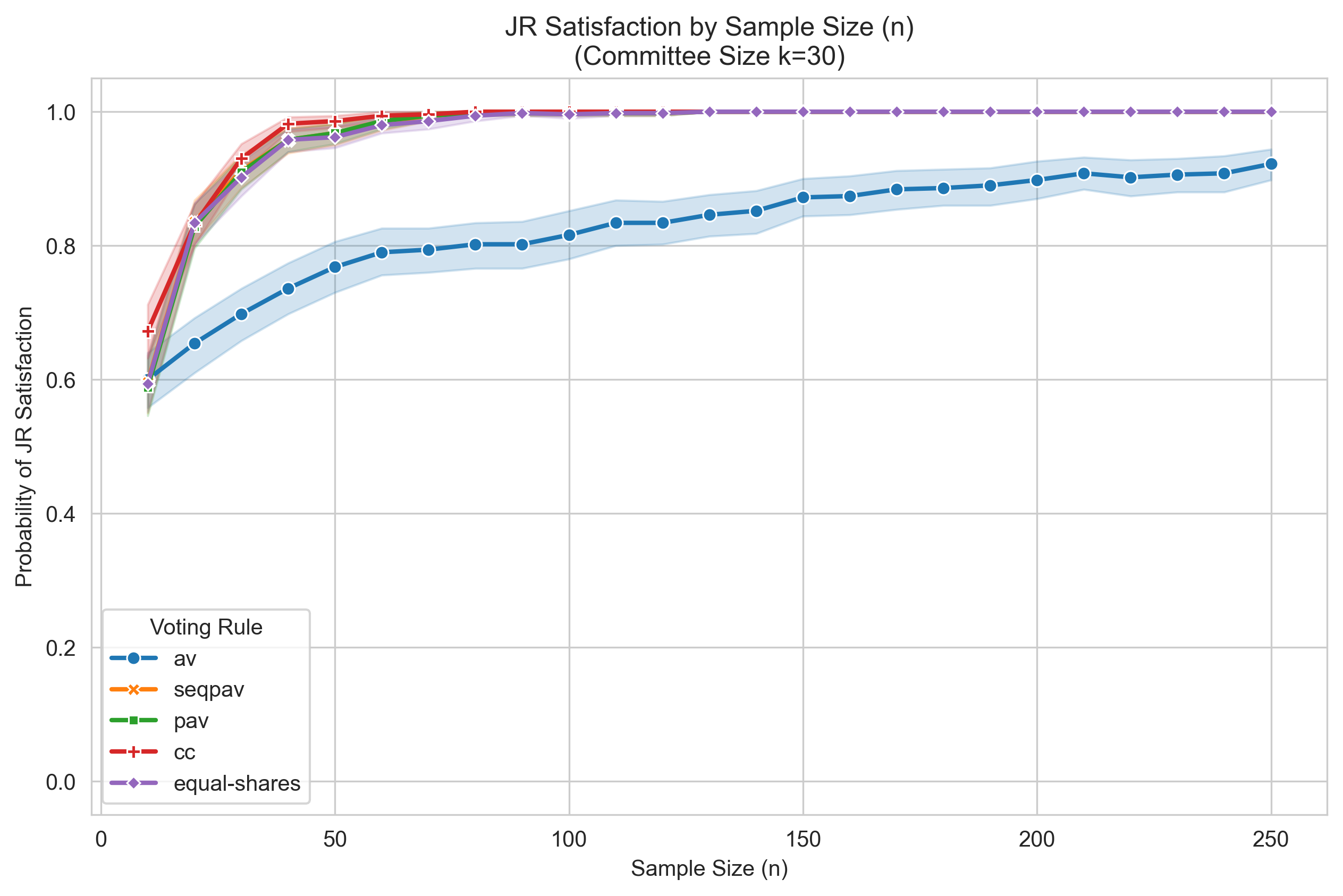}
		\caption{Probability of \jr{} for $k=30$}
		\label{fig:warsza_k10_jr}
	\end{subfigure}
	\caption{Success probability of satisfying \jr{} and \ejrp{} for various rules on approvals from the ``Poland Warszawa 2017 rejon B'' participatory budgeting election, as the number of samples increases. Empirical probabilities are out of 500 trials each for committee sizes $k \in \{10,20,30\}$, out of 53 candidates overall.}
	\label{fig:warsza_increasing_n}
\end{figure}

\end{document}